\DeclareMathOperator{\K}{\mathrm{C}}
\DeclareMathOperator{\KS}{\mathrm{C}}
\DeclareMathOperator{\CT}{\mathrm{CT}}
\DeclareMathOperator{\KT}{\mathrm{CT}}
\DeclareMathOperator{\KP}{\mathrm{K}}
\DeclareMathOperator{\poly}{\mathrm{poly}}
\DeclareMathOperator{\mm}{\mathbf{m}}
\newtheoremstyle{roman} 
     {} 
     {} 
     {\upshape} 
     {} 
     {\bfseries} 
     {.} 
     {0.5em} 
     {} 
\newtheorem{thm}{Theorem}
\theoremstyle{roman}
\newtheorem{example}{Example}
\newtheorem{definition}{Definition}
\newtheorem{remark}{Remark}
\newtheorem{proposition}{Proposition}
\newtheorem{lemma}{Lemma}
\newtheorem{corollary}{Corollary}
\newcommand{\cnd}{\mskip 1.5mu | \mskip 1.5 mu}
\newcommand{\bo}{\{0,1\}}
\newcommand{\calb}{\mathcal B}
\newcommand{\calc}{\mathcal C}
\newcommand{\cald}{\mathcal D}
\newcommand{\eps}{\varepsilon}
\newcommand{\ver}[1]{#1}
\let\le=\leqslant
\let\ge=\geqslant
\title{Algorithmic statistics: forty years later\thanks{The work was in part funded by RFBR according to the research project grant 16-01-00362-a (N.V.) and by RaCAF ANR-15-CE40-0016-01 grant (A.S.)}}
\author{Nikolay Vereshchagin\thanks{N. Vereshchagin is with Moscow State University, National Research University Higher School of Economics and Yandex}, Alexander Shen\thanks{A. Shen is with LIRMM, CNRS \& Univ. Montpellier, 161 rue Ada, 34095, France}}
\date{}
\begin{document}

\maketitle
\begin{abstract}

Algorithmic statistics has two different (and almost orthogonal) motivations. From the philosophical point of view, it tries to formalize how the statistics works and why some statistical models are better than others. After this notion of a ``good model'' is introduced, a natural question arises: it is possible that for some piece of data there is no good model? If yes, how often these bad (\emph{non-stochastic}) data appear ``in real life''?

Another, more technical motivation comes from algorithmic information theory. In this theory a notion of complexity of a finite object (=amount of information in this object) is introduced; it assigns to every object some number, called its \emph{algorithmic complexity} (or \emph{Kolmogorov complexity}). Algorithmic statistic provides a more fine-grained classification: for each finite object some curve is defined that characterizes its behavior. It turns out that several different definitions give (approximately) the same curve.\footnote{Road-map: Section~\ref{sec:stoch} considers the notion of $(\alpha,\beta)$-stochasticity; Section~\ref{sec:two-part} considers two-part descriptions and the so-called ``minimal description length principle''; Section~\ref{sec:bcl} gives one more approach: we consider the list of objects of bounded complexity and measure how far some object is from the end of the list, getting some natural class of ``standard descriptions'' as a by-product; finally, Section~\ref{sec:depth} establishes a connection between these notions and resource-bounded complexity. The rest of the paper deals with an attempts to make theory close to practice by considering restricted classes of description (Section~\ref{sec:restricted-type}) and strong models (Section~\ref{sec:strong-models}).}

In this survey we try to provide an exposition of the main results in the field (including full proofs for the most important ones), as well as some historical comments. We assume that the reader is familiar with the main notions of algorithmic information (Kolmogorov complexity) theory. An exposition can be found in~\cite[chapters 1, 3, 4]{usv} or~\cite[chapters 2, 3]{LiVi97}, see also the survey~\cite{shen-vovk}.

A short survey of main results of algorithmic statistics was given in~\cite{vs-vovk} (without proofs); see also the last chapter of the book~\cite{usv}.

\end{abstract}

\tableofcontents

\section{Statistical models}
\label{sec:statistical-models}

Let us start with a (very rough) scheme. Imagine an experiment that produces some bit string $x$. We know nothing about the device that produced this data, and cannot repeat the experiment. Still we want to suggest some statistical model that fits the data (``explains'' $x$ in a plausible way). This model is a probability distribution on some finite set of binary strings containing $x$. What do we expect from a reasonable model?

There are, informally speaking, two main properties of a good model. First, the model should be ``simple''. If a model contains so many parameters that it is more complicated than the data itself, we would not consider it seriously. To make this requirement more formal, one can use the notion of Kolmogorov complexity.\footnote{We assume that the reader is familiar with basic notions of algorithmic information theory (complexity, a priory probability). See~\cite{shen-vovk} for a concise introduction, and \cite{LiVi97,usv} for more details.} Let us assume that measure $P$ (used as a model) has finite support and rational values. Then $P$ can be considered as a finite (constructive) object, so we can speak about Kolmogorov complexity of $P$. The requirement then says that complexity of $P$ should be much smaller than the complexity of the data string $x$ itself.

For example, if a data string $x$ contains $n$ bits, we may consider a model that corresponds to $n$ independent fair coin tosses, i.e., the uniform distribution $P$ on the set of all $n$-bit strings. Such a distribution is a constructive object that is completely determined by the value of $n$, so its complexity is $O(\log n)$, while the complexity of most $n$-bit strings is close to $n$ (and therefore is much larger than the complexity of $P$, if $n$ is large enough).

Still this simple model looks unacceptable if, for example, the sequence $x$ consists of $n$ zeros, or, more generally, if the frequency of ones in $x$ deviates significantly from $1/2$, or if zeros and ones alternate. This feeling was one of the motivations for the development of algorithmic randomness notions: why some bit sequences of length $n$ look plausible as outcomes of $n$ fair coin tosses while other do not, while all the $n$-bit sequences have the same probability $2^{-n}$ according to the model? This question does not have a clear answer in the classical probability theory, but the algorithmic approach to randomness says that plausible strings should be incompressible: the complexity of such a string (the minimal length of a program producing it) should be close to its length.

This answer works for a uniform distribution on $n$-bit strings; for arbitrary $P$ it should be modified. It turns out that for arbitrary $P$ we should compare the complexity of $x$ not with its length but with the value $(-\log P(x))$ (all logarithms are binary); if $P$ is the uniform distribution on $n$-bit strings, the value of $(-\log P(x))$ is $n$ for all $n$-bit strings $x$. Namely, we consider the difference between $(-\log P(x))$ and complexity of $x$ as \emph{randomness deficiency} of $x$ with respect to $P$. We discuss the exact definition in the next section, but let us note here that this approach looks natural: different data strings require different models.

\emph{Disclaimer}. The scheme above is oversimplified in many aspects. First, it rarely happens that we have no a priori information about the experiment that produced the data. Second, in many cases the experiment can be repeated (the same experimental device can be used again, or a similar device can be constructed). Also we often deal with a data stream: we are more interested, say, in a good prediction of oil prices for the next month than in a construction of model that fits well the prices in the past. All these aspects are ignored in our simplistic model; still it may serve as an example for more complicated cases. One should stress also that algorithmic statistics is more theoretical than practical: one of the reasons is that complexity is a non-computable function and is defined only asymptotically, up to a bounded additive term. Still the notions and results from this theory can be useful not only as philosophical foundations of statistics but as a guidelines when comparing statistical models in practice (see, for example,~\cite{deRV}).

More practical approach to the same question is provided by machine learning that deals with the same problem (finding a good model for some data set) in the ``real world''. Unfortunately, currently there is a big gap between the algorithmic statistics and machine learning: the first one provides nice results about mathematical models that are quite far from practice (see the discussion about ``standard models'' below), while machine learning is a tool that sometimes works well without any theoretical reasons. There are some attempts to close this gap (by considering models from some class or resource-bounded versions of the notions), but much more remains to be done.

\emph{A historical remark}. The principles of algorithmic statistics are often traced back to Occam's razor principle often stated as ``Don't multiply postulations beyond necessity'' or in a similar way. Poincare writes in his \emph{Science and Method'} (Chapter 1, \emph{The choice of facts}) that ``this economy of thought, this economy of effort, which is, according to Mach, the constant tendency of science, is at the same time a source of beauty and a practical advantage''. Still the mathematical analysis of these ideas became possible only after a definition of algorithmic complexity was given in 1960s (by Solomonoff, Kolmogorov and then Chaitin): after that the connection between randomness and incompressibility (high complexity) became clear. The formal definition of $(\alpha,\beta)$-stochasticity (see the next section) was given by Kolmogorov (the authors learned it from his talk given in 1981~\cite{kolmogorov81}, but most probably it was formulated earlier in 1970s; the definition appeared in print in~\cite{shen83}). For the other related approaches (the notions of logical depth and sophistication, minimal description length principle) see the discussion in the corresponding sections (see also~\cite[Chapter 5]{LiVi97}.)

\section{$(\alpha,\beta)$-stochasticity}\label{sec:stoch}

\subsection[Prefix complexity, a priori probability and randomness deficiency]{Prefix complexity, a priori probability and\\ randomness deficiency}

Preparing for the precise definition of $(\alpha,\beta)$-stochasticity, we need to fix the version of complexity used in this definition. There are several versions (plain and prefix complexities, different types of conditions), see \cite[Chapter 6]{usv}. For most of the results the choice between these versions is not important, since the difference between the different versions is small (at most $O(\log n)$ for strings of length $n$), and we usually allow errors of logarithmic size in the statements.

We will use the notion of \emph{conditional prefix complexity}, usually denoted by $\KP(x\cnd c)$. Here $x$ and $c$ are finite objects; we measure the complexity of $x$ when $c$ is given. This complexity is defined as the length of the minimal prefix-free program that, given $c$, computes $x$.\footnote{We do not go into details here, but let us mention one common misunderstanding: the set of programs should be prefix-free for each $c$, but these sets may differ for different $c$ and the union is not required to be prefix-free.} The advantage of this definition is that it has an equivalent formulation in terms of a priori probability~\cite[Chapter 4]{usv}: if $\mm(x\cnd c)$ is the conditional a priori probability, i.e., the maximal lower semicomputable function of two arguments $x$ and $c$ such that $\sum_x \mm(x\cnd c)\le 1$ for every $c$, then
$$
\KP(x\cnd c)=-\log\mm(x\cnd c)+O(1).
$$
In particular, if a probability distribution $P$ with finite support and rational values (we consider only distributions of this type) is considered as a condition, we may compare $\mm$ with function $(x,P)\mapsto P(x)$ and conclude that $\mm(x\cnd P)\ge P(x)$ up to an $O(1)$-factor, so $\KP(x\cnd P)\le -\log P(x)$. So if we define the randomness deficiency as
$$
d(x\cnd P)=-\log P(x)-\KP(x\cnd P),
$$
we get a non-negative (up to $O(1)$ additive term) function. One may also explain in a different way why $\KP(x\cnd P)\le - \log P(x)$: this inequality is a reformulation of a standard result from information theory (Shannon--Fano code, Kraft inequality).

Why do we define the deficiency in this way? The following proposition provides some additional motivation.

\begin{proposition} The function $d(x\cnd P)$ is (up to $O(1)$-additive term) the maximal lower semicomputable function of two arguments $x$ and $P$ such that
$$
 \sum_x 2^{d(x\cnd P)}\cdot P(x)\le 1 \eqno(*)
$$
for every $P$.
\end{proposition}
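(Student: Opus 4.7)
The plan is to establish the two directions separately: first verify that $d(x\cnd P)$ itself is a lower semicomputable function satisfying the inequality $(*)$, and then show that any other such function $f(x,P)$ is bounded above by $d(x\cnd P)$ up to $O(1)$.

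For the first direction, I would observe that $d(x\cnd P) = -\log P(x) - \KP(x\cnd P)$ is lower semicomputable in $(x,P)$: the term $-\log P(x)$ is computable given $P$ and $x$ (since $P$ has rational values), and $-\KP(x\cnd P)$ is lower semicomputable because $\KP(x\cnd P)$ is upper semicomputable. To check $(*)$, I would compute
\[
\sum_x 2^{d(x\cnd P)} P(x) = \sum_x 2^{-\KP(x\cnd P)} \asymp \sum_x \mm(x\cnd P) \le 1,
\]
using the equivalence $\KP(x\cnd P) = -\log \mm(x\cnd P) + O(1)$ stated just before the proposition. The $O(1)$ factor on the right-hand side only shifts $d$ by an additive constant, which is absorbed into the permitted $O(1)$ error.

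For the second direction, suppose $f(x,P)$ is lower semicomputable and satisfies $\sum_x 2^{f(x,P)} P(x) \le 1$ for every $P$. Consider the function $g(x,P) := 2^{f(x,P)} P(x)$. This is lower semicomputable in both arguments (product of a lower semicomputable nonnegative function with a computable one), and the hypothesis says $\sum_x g(x,P) \le 1$ for all $P$. Hence $g$ is one of the functions among which $\mm(\cdot\cnd\cdot)$ is maximal, so $g(x,P) \le c \cdot \mm(x\cnd P)$ for some constant $c$ independent of $x,P$. Taking logarithms,
\[
f(x,P) + \log P(x) \le \log \mm(x\cnd P) + O(1) = -\KP(x\cnd P) + O(1),
\]
which rearranges to $f(x,P) \le d(x\cnd P) + O(1)$, as required.

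The only subtle point is making sure that $g(x,P) = 2^{f(x,P)} P(x)$ is indeed eligible to be dominated by $\mm(x\cnd P)$, i.e., that it is lower semicomputable jointly in $(x,P)$ and has sum at most $1$ over $x$ for each fixed $P$; both facts are immediate from the definitions, so there is no real obstacle. Everything else is bookkeeping of additive $O(1)$ terms coming from the translation between $\mm$ and $\KP$.
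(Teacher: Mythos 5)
Your proposal is correct and follows essentially the same argument as the paper: both directions match — verifying $(*)$ for $d$ via the identity $2^{d(x\cnd P)}P(x)\asymp\mm(x\cnd P)$, and proving maximality by turning a candidate $f$ into the semimeasure $2^{f(x,P)}P(x)$ dominated by $\mm(x\cnd P)$. The bookkeeping of $O(1)$ terms is handled appropriately, so nothing further is needed.
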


Here $x$ is a binary string, and $P$ is a probability distribution on binary strings with finite support and rational values. By lower semicomputable functions we mean functions that can be approximated from below by some algorithm (given $x$ and $P$, the algorithm produces an increasing sequence of rational numbers that converges to $d(x\cnd P)$; no bounds for the convergence speed are required). Then, for a given $P$, the function $x\mapsto 2^{d(x\cnd P)}$ can be considered as a random variable on the probability space with distribution $P$. The requirement $(*)$ says that its expectation is at most $1$. In this way we guarantee (by Markov inequality) that only a $P$-small fraction of strings have large deficiency: the $P$-probability of the event $d(x\cnd P)>c$ is at most $2^{-c}$. It turns out that there exists a maximal function $d$ satisfying $(*)$ up to $O(1)$ additive term, and our formula gives the expression for this function in terms of prefix complexity.

\begin{proof}
The proof uses standard arguments from Kolmogorov complexity theory. The function $\KP(x\cnd P)$ is upper semicomputable, so $d(x\cnd P)$ is lower semicomputable. We can also note that
$$
   \sum_x 2^{d(x\cnd P)}\cdot P(x) = \sum_x \frac{\mm(x\cnd P)}{P(x)}\cdot P(x)=\sum_x \mm(x\cnd P)\le1,
$$
so the deficiency function satisfies~$(*)$.

To prove the maximality, consider an arbitrary function $d'(x\cnd P)$ that is lower semicomputable and satisfies $(*)$. Then consider a function $m(x\cnd P)=2^{d'(x\cnd P)}\cdot P(x)$ (the function equals $0$ if $x$ is not in the support of $P$). Then $m$ is lower semicomputable, $\sum_x m(x\cnd P)\le 1$ for every $P$, so $m(x\cnd P)\le \mm(x\cnd P)$ up to $O(1)$-factor; this implies that $d'(x\cnd P)\le d(x\cnd P)+O(1)$.
\end{proof}

For the case where $P$ is the uniform distribution on $n$-bit strings, using $P$ as a condition is equivalent to using $n$ as the condition, so
$$
d(x\cnd P) = n -\KP(x\cnd n)
$$
in this case, and small deficiency means that complexity $\KP(x\cnd n)$ is close to the length $n$, so $x$ is incompressible.\footnote{Initially Kolmogorov suggested to consider $n -\KS(x)$ as ``randomness deficiency'' in this case, where $\KS$ stands for the plain (not prefix) complexity. One may also consider $n-\KS(x\cnd n)$. But all three deficiency functions mentioned are close to each other for strings $x$ of length $n$; one can show that the difference between them is bounded by $O(\log d)$ where $d$ is any of these three functions. The proof works by comparing the expectation and probability-bounded characterizations as explained in~\cite{bienvenu-gacs-et-al}.}

\subsection{Definition of stochasticity}

\begin{definition}
A string $x$ is called $(\alpha,\beta)$-stochastic if there exists some probability distribution $P$ (with rational values and finite support) such that $\KP(P)\le \alpha$ and $d(x\cnd P)\le\beta$.
\end{definition}

By definition every $(\alpha,\beta)$-stochastic string is $(\alpha',\beta')$-stochastic for $\alpha'\ge\alpha$, $\beta'\ge\beta$. Sometimes we say informally that a string is ``stochastic'' meaning that it is $(\alpha,\beta)$-stochastic for some reasonably small thresholds $\alpha$ and $\beta$ (for example, one can consider $\alpha,\beta=O(\log n)$ for $n$-bit strings).
\smallskip

Let us start with some simple remarks.

\begin{itemize}

\item Every simple string is stochastic. Indeed,  if $P$ is concentrated on $x$ (singleton support), then $\KP(P)\le \KP(x)$ and $d(x\cnd P)=0$ (in both cases with $O(1)$-precision), so $x$ is always $(\KP(x)+O(1),O(1))$-stochastic.

\item On the other end of the spectrum: if $P$ is a uniform distribution on $n$-bit strings, then $\KP(P)=O(\log n)$, and most strings of length $n$ have $d(x\cnd P)=O(1)$, so most strings of length $n$ are $(O(\log n), O(1))$-stochastic. The same distribution also witnesses that every $n$-bit string is $(O(\log n), n+O(1))$-stochastic.

\item It is easy to construct stochastic strings that are between these two extreme cases. Let $x$ be an incompressible string of length $n$. Consider the string $x0^n$ (the first half is $x$, the second half is zero string). It is $(O(\log n), O(1))$-stochastic: let $P$ be the uniform distribution on all the strings of length $2n$ whose second half contains only zeros.

\item For every distribution $P$ (with finite support and rational values, as usual) a random sampling according to $P$ gives us a $(\KP(P), c)$-stochastic string with probability at least $1-2^{-c}$. Indeed, the probability to get a string with deficiency greater than $c$ is at most $2^{-c}$ (Markov inequality, see above).

\end{itemize}

After these observations one may ask whether non-stochastic strings exists at all --- and how they can be constructed? A non-stochastic string should have non-negligible complexity (our first observation), but a standard way to get strings of high complexity, by coin tossing or other random experiment, can give only stochastic strings (our last observation).

We will see that non-stochastic strings do exist in the mathematical sense; however, the question whether they appear in the ``real world'', is philosophical. We will discuss both questions soon, but let us start with some mathematical results.

First of all let us note that with logarithmic precision we may restrict ourselves to uniform distributions on finite sets.

\begin{proposition}\label{prop:models-to-sets-1}
Let $x$ be an $(\alpha,\beta)$-stochastic string of length $n$. Then there exist a finite set $A$ containing $x$ such that $\KP(A)\le\alpha+O(\log n)$ and $d(x\cnd U_A)\le \beta+O(\log n)$, where $U_A$ is the uniform distribution on $A$.
\end{proposition}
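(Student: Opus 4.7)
The plan is to convert the witnessing distribution $P$ into a uniform distribution by slicing its support into level sets on which $P$ is essentially uniform, and then selecting the slice that contains $x$. Fix a witness $P$ for the $(\alpha,\beta)$-stochasticity of $x$, and define the level sets
$$
A_i = \{y : 2^{-i-1} < P(y) \le 2^{-i}\}, \qquad i = 0,1,2,\ldots
$$
Then $x \in A_i$ for exactly one $i$; set $A := A_i$. Two easy bookkeeping facts will drive everything: since probabilities sum to at most $1$, we have $|A| \le 2^{i+1}$, and since $x \in A_i$ we have $i \le -\log P(x) < i+1$.

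For the complexity bound, note that $A$ is computable from the pair $(P,i)$, so $\KP(A) \le \KP(P) + O(\log i) \le \alpha + O(\log i)$. To see that $\log i = O(\log n)$, I would first observe that we may assume $\beta \le n + O(\log n)$ without loss of generality (since every $n$-bit string is $(O(\log n), n+O(1))$-stochastic via the uniform distribution on $n$-bit strings, as was pointed out just before the proposition). Combined with $-\log P(x) = d(x\cnd P) + \KP(x\cnd P) \le \beta + n + O(\log n)$ and $i \le -\log P(x)$, this gives $i = O(n)$ and hence $\KP(A) \le \alpha + O(\log n)$ as required.

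For the deficiency bound I would argue $\KP(x\cnd A) \ge \KP(x\cnd P) - O(\log n)$. Indeed, since $A$ is recovered from $(P,i)$ by a fixed algorithm, we have $\KP(x\cnd P,i) \le \KP(x\cnd A) + O(1)$; prepending a self-delimiting description of $i$ (of length $O(\log i) = O(\log n)$) in turn gives $\KP(x\cnd P) \le \KP(x\cnd A) + O(\log n)$. Combining this with $\log|A| \le i+1 \le -\log P(x) + 1$ yields
$$
d(x\cnd U_A) = \log|A| - \KP(x\cnd A) \le -\log P(x) - \KP(x\cnd P) + O(\log n) = d(x\cnd P) + O(\log n) \le \beta + O(\log n).
$$
The only genuinely delicate point is the bookkeeping for $\log i$: we need $i$ to be polynomially bounded in $n$, which forces the preliminary reduction to $\beta = O(n)$ and the use of $\KP(x\cnd P)\le n+O(\log n)$. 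Everything else is a straightforward application of the chain rule for conditional prefix complexity.
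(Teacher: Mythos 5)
Your proof is correct and follows essentially the same route as the paper: the paper's witness is the superlevel set $\{y \mid P(y)\ge 2^{-k}\}$ with $k\approx -\log P(x)$ rather than your dyadic slice $A_i$, but in both cases the set is determined by $P$ plus $O(\log n)$ extra bits, has cardinality at most roughly $2^{-\log P(x)}$, and the same bookkeeping (WLOG $\beta\le n$, hence the index is $O(n)$) yields the bounds. No gaps.
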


Since $\KP(A)=\KP(U_A)$ (with $O(1)$-precision, as usual), this proposition means that we may consider only uniform distributions in the definition of stochasticity, and get an equivalent (up to logarithmic change in the parameters) definition.  According to this modified definition, a string $x$ in $(\alpha,\beta)$-stochastic if there exists a finite set $A$ such that $\KP(A)\le \alpha$ and $d(x\cnd A)\le \beta$, where $d(x\cnd A)$ is now defined as $\log \#A - \KP(x\cnd A)$. Kolmogorov originally proposed the definition in this form (but used plain complexity).

\begin{proof}
Let $P$ be the (finite) distribution that exists due to the definition of $(\alpha,\beta)$-stochas\-tic\-i\-ty of $x$. We may assume without loss of generality that $\beta\le n$ (as we have seen, all strings of length $n$ are $(O(\log n), n+O(1))$-stochastic, so for $\beta>n$ the statement is trivial). Consider the set $A$ formed by all strings that have sufficiently large $P$-probability. Namely, let us choose minimal $k$ such that $2^{-k}\le P(x)$ and consider the set $A$ of all strings such that $P(x)\ge 2^{-k}$. By construction $A$ contains $x$. The size of $A$ is at most $2^k$, and $-\log P(x)=k$ with $O(1)$-precision. According to our assumption, $d(x\cnd P)=k-\KP(x\cnd P)\le n$, so $k=d(x\cnd P)+\KP(x\cnd P)\le O(n)$. Then $$\KP(x\cnd A)\ge \KP(x\cnd P,k)\ge \KP(x\cnd P)-O(\log n),$$ since $A$ is determined by $P,k$, and the additional information in $k$ is $O(\log k)=O(\log n)$ since $k=O(n)$ by our assumption. So the deficiency may increase only by $O(\log n)$ when we replace $P$ by $U_A$, and $$\KP(A)\le\KP(P,k)\le\KP(P)+O(\log n)$$ for the same reasons.
\end{proof}

\begin{remark}
Similar argument can be applied if $P$ is a computable distribution (may be, with infinite support) computed by some program $p$, and we require $\KP(p)\le \alpha$ and $-\log P(x)-\KP(x\cnd p)\le \beta$. So in this way we also get the same notion (with logarithmic precision). It is important, however, that program $p$ \emph{computes} the distribution $P$ (given some point $x$ and some precision $\varepsilon>0$, it computes the probability of $x$ with error at most $\varepsilon$). It is \emph{not} enough for $P$ to be an output distribution for a randomized algorithm $p$ (in this case $P$ is called the semimeasure lower \emph{semicomputed} by $p$; note that the sum of probabilities may be strictly less than $1$ since the computation may diverge with positive probability). Similarly, it is very important in the version with finite sets $A$ (and uniform distributions on them) that the set $A$ is considered as a finite object: $A$ is simple if there is a short program that prints the list of all elements of $A$. If we allowed the set $A$ to be presented by an algorithm that enumerates $A$ (but never says explicitly that no more elements will appear), then situation would change drastically: for every string of complexity $k$ the finite set $S_k$ of strings that have complexity at most $k$, would be a good explanation for $x$, so all objects would become stochastic.
\end{remark}

\subsection{Stochasticity conservation}\label{subsec:stoch-cons}

We have defined stochasticity for binary strings. However, the same definition can be used for arbitrary finite (constructive) objects: pairs of strings, tuples of strings, finite sets of strings, graphs, etc. Indeed, complexity can be defined for all these objects as the complexity of their encodings; note that the difference in complexities for different encodings is at most $O(1)$. The same can be done for finite sets of these objects (or probability distributions), so the definition of $(\alpha,\beta)$-stochasticity makes sense.

One can also note that computable bijection preserves stochasticity (up to a constant that depends on the bijection, but not on the object). In fact, a stronger statement is true: every total computable mapping preserves stochasticity. For example, consider a stochastic pair of strings $(x,y)$. Does it imply that $x$ (or $y$) is stochastic? It is indeed the case: if $P$ is a distribution on pairs that is a reasonable model for $(x,y)$, then its projection (marginal distribution on the first components) should be a reasonable model for $x$. In fact, projection can be replaced by any \emph{total} computable mapping.

\begin{proposition}\label{prop:stoch-cons}
Let $F$ be a total computable mapping whose arguments and values are strings. If $x$ is $(\alpha,\beta)$-stochastic, then $F(x)$ is $(\alpha+O(1),\beta+O(1))$-stochastic. Here the constant in $O(1)$ depends on $F$ but not on $x,\alpha,\beta$.
\end{proposition}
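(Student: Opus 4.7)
The plan is to take $Q := F_*(P)$, the pushforward of a witness distribution $P$ for $(\alpha,\beta)$-stochasticity of $x$, and check that $Q$ witnesses $(\alpha + O(1), \beta + O(1))$-stochasticity of $F(x)$. Since $P$ has finite support and rational values and $F$ is a fixed total computable map, $Q$ is again a finite-support rational distribution and is computable from $P$, so $\KP(Q) \le \KP(P) + O(1) \le \alpha + O(1)$ for free.

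The real task is to prove $d(F(x) \cnd Q) \le d(x \cnd P) + O(1)$. Write $y := F(x)$. I would aim for the lower bound
$$\KP(y \cnd Q) \ge \KP(x \cnd P) + \log P(x) - \log Q(y) - O(1).$$
Once this is in hand, substituting into $d(y \cnd Q) = -\log Q(y) - \KP(y \cnd Q)$ cancels the $-\log Q(y)$ terms and leaves $d(y \cnd Q) \le -\log P(x) - \KP(x \cnd P) + O(1) = d(x \cnd P) + O(1) \le \beta + O(1)$, which is what we need.

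To establish the lower bound, I would first work with the condition $P$ (which contains $Q$) and then pass from $P$ to $Q$ at no cost. The passage is immediate: $Q$ is computable from $P$ via $F$, so $\KP(y \cnd Q) \ge \KP(y \cnd P) - O(1)$. For $\KP(y \cnd P)$ itself, apply the chain rule for conditional prefix complexity with condition $P$: since $y$ is determined by $x$, $\KP(x \cnd P) \le \KP(y \cnd P) + \KP(x \cnd y, P) + O(1)$. The second summand is controlled by the Kraft--Shannon--Fano inequality applied to the conditional distribution $x' \mapsto P(x')/Q(y)$ supported on the preimage $F^{-1}(y) \cap \mathrm{supp}(P)$, giving $\KP(x \cnd y, P) \le \log Q(y) - \log P(x) + O(1)$. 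Rearranging yields the desired bound on $\KP(y \cnd P)$, and then on $\KP(y \cnd Q)$.

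The main subtlety, and the only place where totality of $F$ is used, is the Shannon--Fano step: the conditional distribution $x' \mapsto P(x')/Q(y)$ must be presented as a genuinely computable object given $(P, y)$. This requires enumerating $F^{-1}(y) \cap \mathrm{supp}(P)$ in finite time from a finite description of $P$, which works because $\mathrm{supp}(P)$ is finite and totality of $F$ lets us test each element of $\mathrm{supp}(P)$ for membership in $F^{-1}(y)$. Had $F$ been merely partial computable, the conditional distribution would only be lower semicomputable rather than computable and the Shannon--Fano step would no longer be free.
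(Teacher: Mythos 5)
Your proof is correct, but it takes a different route from the paper's. The paper invokes the characterization of $d(\cdot\cnd\cdot)$ as the \emph{maximal} lower semicomputable function satisfying $\sum_u 2^{d(u\cnd S)}S(u)\le 1$: it defines $d'(u\cnd S)=d(F(u)\cnd F(S))$, verifies the expectation bound by grouping the terms over the fibers of $F$, and concludes $d'\le d+O(1)$ by maximality --- one application of universality that yields the inequality for all $u$ and $S$ simultaneously, making visible that this is just conservation of randomness deficiency under total maps. You instead argue pointwise with standard inequalities: subadditivity $\KP(x\cnd P)\le\KP(y\cnd P)+\KP(x\cnd y,P)+O(1)$ together with the conditional Shannon--Fano bound $\KP(x\cnd y,P)\le\log Q(y)-\log P(x)+O(1)$ for the fiber distribution $x'\mapsto P(x')/Q(y)$ on $F^{-1}(y)\cap\mathrm{supp}(P)$, then pass from condition $P$ to condition $Q$. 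The grouping identity $\sum_{x'\colon F(x')=y}P(x')=Q(y)$ that powers the paper's computation is exactly what makes your fiber distribution sum to $1$, so the two proofs are close cousins; yours is more modular and pinpoints where the hypotheses enter, the paper's is shorter and generalizes immediately (e.g.\ to Remark~\ref{rem:cons-f}). One side remark of yours is slightly off: totality of $F$ is not used \emph{only} in the Shannon--Fano step --- it is already needed (and you implicitly use it) to make $Q=F_*(P)$ well defined and computable from $P$, which is precisely what fails in the paper's example of the decompressor applied to $x^*$ (the witness $P$ may put mass where $F$ diverges); also, for the coding-theorem bound lower semicomputability of the conditional family would in fact suffice. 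Neither point affects the validity of your argument as written.
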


\begin{proof}
Let $P$ be the distribution such that $\KP(P)\le\alpha$ and $d(x\cnd P)\le \beta$; it exists according to the definition of stochasticity. Let $Q=F(P)$ be the image distribution. In other words, if $\xi$ is a random variable with distribution $P$, then $F(\xi)$ has distribution $Q$. It is easy to see that $\KP(Q)\le\KP(P)+O(1)$, where the constant depends only on $F$. Indeed, $Q$ is determined by $P$ and $F$ in a computable way. It remains to show that $d(F(x)\cnd Q)\le d(x\cnd P)+O(1)$.

The easiest way to show this is to recall the characterization of deficiency as the maximal lower semicomputable function such that
$$
\sum_u 2^{d(u\cnd S)}S(u)\le 1
$$
for every distribution $S$. We may consider another function $d'$ defined as
$$
  d'(u\cnd S) = d(F(u)\cnd F(S))
$$
It is easy to see that
$$
\sum_u 2^{d'(u\cnd S)} S(u)=\sum_u 2^{d(F(u)\cnd F(S))}S(u)=\sum_v 2^{d(v\cnd F(S))}{}\cdot{}[F(S)](v)\le 1
$$
(in the second equality we group all the values of $u$ with the same $v=F(u)$). Therefore the maximality of $d$ guarantees that $d'(u\cnd S)\le d(u\cnd S)+O(1)$, so we get the required inequality.

This proof can be also rephrased using the definition of stochasticity with a priori probability. We need to show that for $y=P(x)$ and $Q=F(P)$ we have
$$
\frac{\mm(y\cnd Q)}{Q(y)}\le O(1)\cdot\frac{\mm(x\cnd P)}{P(x)}
$$
or
$$
\frac{\mm(F(x)\cnd F(P))\cdot P(x)}{Q(F(x))}\le O(\mm(x\cnd P)).
$$
It remains to note that the left hand side is a lower semicomputable function of $x$ and $P$ whose sum over all $x$ (for every $P$) is at most $1$. Indeed, if we group all terms with the same $F(x)$, we get the sum $\sum_y \mm(y\cnd F(P))\le 1$, since the sum of $P(x)$ over all $x$ with $F(x)=y$ equals $Q(y)$.
\end{proof}

\begin{remark}
In this proof it is important that we use the definition with distributions. If we replace is with the definition with finite sets, the results remains true with logarithmic precision, but the argument becomes more complicated, since the image of the uniform distribution may not be a uniform distribution. So if a set $A$ is a good model for $x$, we should not use $F(A)$ as a model for $F(x)$. Instead, we should look at the maximal $k$ such that $2^k\le \#F^{-1}(y)$, and consider the set of all $y'$ that have at least $2^k$ preimages in $A$.
\end{remark}

\begin{remark}\label{rem:non-total}
It is important in Proposition~\ref{prop:stoch-cons} that $F$ is a total function. If $x$ is some non-stochastic object and $x^*$ is the shortest program for $x$, then $x^*$ is incompressible and therefore stochastic. Still the interpreter (decompressor) maps $x^*$ to $x$. We discuss the case of non-total $F$ below, see Section~\ref{subsec:depth-appl}.
\end{remark}

\begin{remark}\label{rem:cons-f}
A similar argument shows that $d(F(x)\cnd F(P))\le d(x\cnd P)+\KP(F)+O(1)$ (for total $F$), so both $O(1)$-bounds in Proposition~\ref{prop:stoch-cons} may be replaced by $\KP(F)+O(1)$ where $O(1)$-constant does not depend on $F$ anymore.
\end{remark}

\subsection{Non-stochastic objects}

Note that up to now we have not shown that non-stochastic objects exist at all. It is easy to show that they exist for rather large values of $\alpha$ and $\beta$ (linearly growing with $n$).

\begin{proposition}[\cite{shen83}]\label{prop:existence-nonstochastic}
For some $c$ and all $n$:

(1)~if $\alpha+2\beta<n-c\log n$, then there exist $n$-bit strings that are not $(\alpha,\beta)$-stochastic;

(2)~however, if $\alpha+\beta>n+c\log n$, then every $n$-bit string is $(\alpha,\beta)$-stochastic.
\end{proposition}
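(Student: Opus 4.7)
For Part (2), the plan is to exhibit an explicit finite-set witness. Given any $x \in \{0,1\}^n$, take $A$ to be the set of all $n$-bit strings that agree with $x$ on the first $n - \beta$ positions. Then $|A| = 2^\beta$, and $A$ is determined by $(\beta, x_1\ldots x_{n-\beta})$, so $\KP(A) \le (n - \beta) + O(\log n)$; the deficiency is at most $\log |A| = \beta$ automatically. For a suitable constant $c$, the hypothesis $\alpha + \beta > n + c\log n$ gives $\KP(A) \le \alpha$, and $A$ witnesses $(\alpha, \beta)$-stochasticity.

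For Part (1), my plan is a counting argument: I aim to bound the number of $(\alpha, \beta)$-stochastic $n$-bit strings by $2^{\alpha + 2\beta + O(\log n)}$, which is strictly less than $2^n$ under the hypothesis, and therefore exhibits a non-stochastic string. By Proposition~\ref{prop:models-to-sets-1} I restrict to finite-set witnesses, paying only $O(\log n)$ in the parameters. For each stochastic $x$, fix a witness $A_x$ with $\KP(A_x) \le \alpha$, $x \in A_x$, and $\log|A_x| - \KP(x \cnd A_x) \le \beta$, and consider the pair $(p_x, q_x)$, where $p_x$ is a shortest prefix-free description of $A_x$ and $q_x$ is a shortest prefix-free program producing $x$ from $A_x$. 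Then $|p_x| \le \alpha$, and the deficiency condition together with $\KP(x\cnd A_x) \le \log|A_x|+O(1)$ places $|q_x|$ in a window of width $\beta + O(1)$ just below $\log|A_x|$. The pair $(p_x, q_x)$ determines $x$, so I estimate the number of valid pairs by applying Kraft's inequality twice (once to descriptions of $A$, once to conditional programs whose length falls in the deficiency window) together with the complexity window $\KP(x) \le \KP(A_x) + \KP(x \cnd A_x) + O(\log n)$ pinned by the chain rule. Grouping by the parameter pair $(\KP(A_x), \log|A_x|)$ and summing over the $O(n^2)$ cells yields the claimed $2^{\alpha + 2\beta + O(\log n)}$ bound.

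The main technical obstacle is the second Kraft estimate: naively, the number of programs $q$ of length in $[\log|A| - \beta, \log|A|+O(1)]$ producing elements of $A$ is as large as $|A|$, so the improvement to roughly $2^{2\beta + O(\log n)}$ per $A$ (one factor of $2^\beta$ from the length window for $|q_x|$ and another from the matching complexity window for $\KP(x)$) has to come from combining both sides of the deficiency bound with the consistency condition that $q_x$ actually produces an $x \in A$. Tracking the $O(\log n)$ errors that arise from the chain rule and the symmetry of information for prefix complexity is delicate; choosing $c$ large enough that these errors get absorbed into the $c\log n$ slack in the hypothesis is what forces the $2\beta$ (rather than $\beta$) in the threshold.
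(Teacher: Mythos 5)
Your part (2) is fine and is essentially the paper's own argument: you partition $\{0,1\}^n$ into $2^{n-\beta}$ cylinders of size $2^\beta$, each of complexity at most $n-\beta+O(\log n)\le\alpha$, and the deficiency of $x$ in its cylinder is trivially at most $\log\#A=\beta$.

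Part (1), however, rests on a counting claim that is false, so the approach cannot be repaired. The number of $(\alpha,\beta)$-stochastic $n$-bit strings is not bounded by $2^{\alpha+2\beta+O(\log n)}$, nor by anything subexponential in $n$ for small parameters: taking $A=\{0,1\}^n$, of complexity $O(\log n)$, all but a $2^{-\beta}$ fraction of $n$-bit strings satisfy $d(x\cnd A)\le\beta$, so already for $\alpha=O(\log n)$ and constant $\beta$ there are $\Theta(2^n)$ stochastic strings (this is stressed in the paper itself: most strings are $(O(\log n),O(1))$-stochastic). This is precisely the obstacle you flag in your ``second Kraft estimate,'' and it cannot be overcome by any consistency condition: for a huge simple $A$ the roughly $\#A$ strings whose conditional programs lie in the window genuinely are stochastic, so there is nothing to prune. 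Non-stochastic strings are not obtained by showing stochastic strings are few; they are constructed. The paper enumerates all finite sets of complexity at most $\alpha$ and size at most $2^{\alpha+\beta}$; these cover at most $2^{2\alpha+\beta+O(1)}<2^n$ strings, and $u$ is taken to be the first uncovered $n$-bit string. The crucial ingredient your outline lacks is that this $u$ is itself simple: it is computable from $n,\alpha,\beta$ and the program of length at most $\alpha$ with maximal running time, so $\KP(u\cnd A)\le\KP(u)+O(1)\le\alpha+O(\log n)$ for \emph{every} condition $A$. Consequently any $A\ni u$ with $\KP(A)\le\alpha$ must have $\log\#A>\alpha+\beta$ (otherwise $u$ would have been covered), which forces $d(u\cnd A)=\log\#A-\KP(u\cnd A)\ge\beta-O(\log n)$. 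Note also that this argument naturally requires $2\alpha+\beta<n-c\log n$ (the count is $2^{\alpha}$ sets times $2^{\alpha+\beta}$ elements each), i.e., the doubled parameter sits on $\alpha$, not on $\beta$; so engineering a counting bound to match $\alpha+2\beta$ is not what drives the proof. What is needed, and what your proposal does not supply, is a low-complexity witness obtained by diagonalizing against all small simple sets, which inherently uses busy-beaver/halting-type information rather than Kraft-type counting.
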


Note that the term $c\log n$ allows us to use the definition with finite sets (i.e., uniform distributions on finite sets) instead of arbitrary finite distributions, since both versions are equivalent with $O(\log n)$-precision.

\begin{proof}
The second part is obvious (and is added just for comparison): if $\alpha+\beta=n$, then all $n$-bit strings can be split into $2^\alpha$ groups of size $2^\beta$ each. Then the complexity of each group is $\alpha+O(\log n)$, and the randomness deficiency of every string in the corresponding group is at most $\beta+O(1)$. It is slightly bigger than the bounds we need, but we have reserve $c\log n$, and $\alpha$ and $\beta$ can be decreased, say, by $(c/2)\log n$ before using this argument.

\emph{The first part}: Consider all  finite sets $A$ of strings that have complexity at most $\alpha$ and size at most $2^{\alpha+\beta}$. Since $\alpha+(\alpha+\beta)<n$, they cannot cover all $n$-bit strings. Consider then the first (say, in the lexicographical order) $n$-bit string $u$ not covered by any of these sets. What is the complexity of $u$? To specify $u$, it is enough to give $n,\alpha,\beta$ and the program of size at most $\alpha$ (from the definition of Kolmogorov complexity) that has maximal running time among programs of that size. Then we can wait until this program terminates and look at the outputs of all programs of size at most $\alpha$ after the same number of steps, select sets of strings of size at most $\alpha+\beta$, and take the first $u$ not covered by these sets. So the complexity of $u$ is at most $\alpha+O(\log n)$ (the last term is needed to specify $n,\alpha,\beta$). The same is true for conditional complexity with arbitrary condition, since it is bounded by the unconditional complexity. So  the randomness deficiency of $u$ in every set $A$ of size $2^{\alpha+\beta}$ is at least $\beta-O(\log n)$.  We see that $u$ is not $(\alpha,\beta-O(\log n))$-stochastic. Again the $O(\log n)$-term can be compensated by $O(\log n)$-change in $\beta$ (we have $c\log n$ reserve for that).
\end{proof}

\begin{remark}
There is a gap between lower and upper bounds provided by Proposition~\ref{prop:existence-nonstochastic}. As we will see later, the upper bound~(2) is tight with $O(\log n)$-precision, but we need more advanced technique (properties of two-part descriptions, Section~\ref{sec:two-part}) to prove this.
\end{remark}

Proposition~\ref{prop:existence-nonstochastic} shows that non-stochastic objects exist for rather large values of $\alpha$ and $\beta$ (proportional to $n$). This, of course, is a mathematical existence result; it does not say anything about the possibility to observe non-stochastic objects in the ``real world''. As we have discussed, random sampling (from a simple distribution) may produce a non-stochastic object only with a negligible probability;  \emph{total} algorithmic transformations (defined by programs of small complexity) also cannot not create non-stochastic object from stochastic ones. What about non-total algorithmic transformations? As we have discussed in Remark~\ref{rem:non-total}, a non-total computable transformation may transform a stochastic object into a non-stochastic one, but does it happen with non-negligible probability?

Consider a randomized algorithm that outputs some string. It can be considered as a deterministic algorithm applied to random bit sequence (generated by the internal coin of the algorithm). This deterministic algorithm may be non-total, so we cannot apply the previous result. Still, as the following result shows, randomized algorithms also generate non-stochastic objects only with small probability.

To make this statement formal, we consider the sum of $\mm(x)$ over all non-stochastic $x$ of length $n$. Since the a priori probability $\mm(x)$ is the upper bound for the output distribution of any randomized algorithm, this implies the same bound (up to $O(1)$-factor) for every randomized algorithm. The following theorem gives an upper bound for this sum:

\begin{proposition}[see~\cite{muchnik-meta}, Section 10]\label{prop:nonstochastic-counting}
$$
\sum \{\,\mm(x)\mid \text{$x$ is a $n$-bit string that is not $(\alpha,\alpha)$-stochastic}\ \} \le 2^{-\alpha+O(\log n)}
$$
for every $n$ and $\alpha$.
\end{proposition}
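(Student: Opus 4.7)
The plan is to decompose the target sum by Kolmogorov complexity level and, at each level, count non-stochastic strings via the deficiency characterization. First I would invoke Proposition~\ref{prop:models-to-sets-1} to restrict candidate witnesses to uniform distributions on finite sets, absorbing the $O(\log n)$ slack in the parameters into the final bound. I would also observe that the singleton $\{x\}$ is a valid witness with $\KP(\{x\})\le\KP(x)+O(1)$ and deficiency $O(1)$, so every $n$-bit $x$ with $\KP(x)\le\alpha-O(1)$ is automatically $(\alpha,\alpha)$-stochastic. Consequently, every non-stochastic $n$-bit $x$ satisfies $\KP(x)>\alpha$, and hence $\mm(x)\le 2^{-\alpha+O(1)}$.

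Next, using $\mm(x)\asymp 2^{-\KP(x)}$, I would decompose dyadically:
\[
\sum_{x\text{ non-stoch}}\mm(x) \;\asymp\; \sum_{k>\alpha} 2^{-k}\,N_k,\qquad N_k:=\#\{x:|x|=n,\,\KP(x)=k,\,x\text{ non-stoch}\}.
\]
The task then reduces to proving $N_k\le 2^{\,k-\alpha+O(\log n)}$ for every $k\in(\alpha,n+O(\log n)]$; summing contributes $O(n)\cdot 2^{-\alpha+O(\log n)} = 2^{-\alpha+O(\log n)}$. To bound $N_k$, for each level I would exhibit a simple finite set $B_k$ of size $\le 2^{k+1}$ containing every $n$-bit $y$ with $\KP(y)\le k$; the deficiency inequality applied to the candidate model $U_{B_k}$ would force any non-stochastic $x\in B_k$ to satisfy $\KP(x\mid B_k)<\log|B_k|-\alpha$, bounding the number of such $x$ by $2^{k-\alpha+O(1)}$.

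The hard part will be the construction of $B_k$ in the regime $k>\alpha$. The natural choice $\{y:|y|=n,\KP(y)\le k\}$ has the right size, but its Kolmogorov complexity is $\sim k+O(\log n)$, since specifying it as a finite object requires its cardinality $|B_k|$, which effectively encodes halting information on programs of length $\le k$; when $k>\alpha$, this disqualifies $B_k$ as an $\alpha$-complexity witness in the stochasticity definition. My strategy for handling this would be to perform the deficiency accounting conditionally on the pair $(n,k)$: conditional on these, specifying $B_k$ costs only $\KP(|B_k|\mid n,k)+O(1)$ additional bits, and that cost can be folded into the $O(\log n)$ slack in the final bound, provided one carefully relates the conditional and unconditional complexities of $x$ and of the witness. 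Making this conditional bookkeeping rigorous is the only real technical step; everything else reduces to routine dyadic summation.
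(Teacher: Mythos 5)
Your reduction is fine up to the level bound: non\mbox{-}stochastic strings indeed have $\KP(x)>\alpha-O(1)$, the dyadic split is legitimate, and everything would follow from $N_k\le 2^{k-\alpha+O(\log n)}$. But your proof of that bound has a real gap, in two places. First, the inference ``$x$ not $(\alpha,\alpha)$-stochastic $\Rightarrow$ $\KP(x\cnd B_k)<\log\#B_k-\alpha$'' is unjustified: non-stochasticity only asserts large deficiency with respect to models of complexity at most $\alpha$, and $U_{B_k}$ is not such a model precisely because $\KP(B_k)\approx k>\alpha$. So nothing forces non-stochastic elements of $B_k$ to be atypical in $B_k$; to extract any consequence you must first manufacture from $B_k$ a model of complexity at most $\alpha+O(\log n)$, and that is exactly the content you have not supplied. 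Second, the proposed rescue --- doing the bookkeeping conditional on $(n,k)$ --- cannot close this hole: $\#B_k$ is an $\Omega$-like quantity, and $\KP(\#B_k\cnd n,k)$ is about $k-O(\log k)$, not $O(\log n)$ (compare Proposition~\ref{prop:omegas}); conditioning on $n$ and $k$ strips only logarithmically many bits from the halting information you yourself identified, whereas the cost you need to absorb is of order $k-\alpha$, which is the very quantity at stake.

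The skeleton can be repaired, but the missing lemma is the $\Omega$-prefix / standard-description argument of Section~\ref{sec:bcl} rather than any conditioning trick: in the enumeration of all strings of complexity at most $k$, every $x$ that is followed by at least $2^{k-\alpha}$ strings has an $((\alpha+O(\log n))*(k-\alpha+O(\log n)))$-description (Theorem~\ref{thm:tail-characterization}), whose optimality --- and hence randomness --- deficiency is $O(\log n)$ when $\KP(x)=k$; therefore the non-$(\alpha,\alpha)$-stochastic strings of complexity $k$ sit among the last $2^{k-\alpha+O(\log n)}$ elements of the list, which gives the desired bound on $N_k$. This is essentially the route the paper takes later in Section~\ref{subsec:halting-information} (Propositions~\ref{prop:dilemma} and~\ref{prop:information-rare}). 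The paper's own proof of Proposition~\ref{prop:nonstochastic-counting} avoids levels altogether: it enumerates lower bounds for $\mm$ on $n$-bit strings until their sum passes the threshold determined by the first $\alpha$ bits of $\omega=\sum_{l(x)=n}\mm(x)$, normalizes the resulting finite measure into a single model of complexity $\alpha+O(\log n)$, and shows that any string whose deficiency with respect to this model exceeds $\alpha+c\log n$ has at least half of its $\mm$-mass in the un-enumerated remainder, whose total is $O(2^{-\alpha})$.
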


\begin{proof}
Consider the sum of $\mm(x)$ over \emph{all} strings of length $n$. This sum is some real number $\omega\le 1$. Let $\tilde\omega$ be the number represented by first $\alpha$ bits in the binary representation of $\omega$, minus $2^{-\alpha}$. We may assume that $\alpha\le O(n)$, otherwise all strings of length $n$ are $(\alpha,\alpha)$-stochastic.

Now construct a probability distribution as follows. All terms in a sum for $\omega$ are lower semicomputable, so we can enumerate increasing lower bounds for them. When the sum of these lower bounds exceeds  $\tilde\omega$, we stop and get some measure $P$ with finite support and rational values. Note that we have a measure, not a distribution, since the sum of $P(x)$ for all $x$ is less than $1$ (it does not exceed $\omega$). So we normalize $P$ (by some factor) to get a distribution $\tilde P$ proportional to $P$. The complexity of $\tilde P$ is bounded by $\alpha+O(\log n)$ (since $\tilde P$ is determined by $\tilde\omega$ and $n$). Note that the difference between $P$ (without normalization factor) and a priori probability $\mm$ (the sum of differences over all strings of length $n$) is bounded by $O(2^{-\alpha})$. It remains to show that for $\mm$-most strings the distribution $\tilde P$ is a good model.

Let us prove that the sum of a priori probabilities of all $n$-bit strings $x$ that have $d(x\cnd \tilde P)>\alpha+ c \log n$ is bounded by $O(2^{-\alpha})$, if $c$ is large enough. Indeed, for those strings we have
    $$
    -\log \tilde P(x) - \KP(x\cnd \tilde P) > \alpha+c \log n.
    $$
The complexity of $\tilde P$ is bounded by $\alpha+O(\log n)$ and therefore $\KP(x)$ exceeds $\KP(x\cnd \tilde P)$ at most by $\alpha+O(\log n)$, so
     $
-\log \tilde P(x) - \KP(x) > 1
     $
(or  $\tilde P(x)< \mm(x)/2$)
for those strings, if $c$ is large enough (it should exceed the constants hidden in $O(\log n)$ notation). The difference $1$ is enough for the estimate below, but we could have arbitrary constant or even logarithmic difference by choosing larger value of~$c$.

Prefix complexity can be defined in terms of a priori probability, so we get
     $$
 \log(\mm(x)/\tilde P(x))>1
     $$
for all $x$ that have deficiency exceeding $\alpha+c\log n$ with respect to $\tilde P$. The same inequality is true for $P$ instead of $\tilde P$, since $P$ is smaller. So for all those $x$ we have $P(x)<\mm(x)/2$, or $(\mm(x)-P(x))>\mm(x)/2$. Recalling that the sum of $\mathbf{m}(x)-P(x)$ over all $x$ of length $n$ does not exceed $O(2^{-\alpha})$ by construction of $\tilde\omega$, we conclude that the sum of $\mathbf{m}(x)$ over all strings of randomness deficiency (with respect to $\tilde P$) exceeding $\alpha+c\log n$ is at most $O(2^{-\alpha})$.

So we have shown that the sum of $\mm(x)$ for all $x$ of length $n$ that are not $(\alpha+O(\log n),\alpha+O(\log n))$-stochastic, does not exceed $O(2^{-\alpha})$. This differs from our claim only by $O(\log n)$-change in $\alpha$.
\end{proof}

Bruno Bauwens noted that this argument can be modified to obtain a stronger result where $(\alpha,\alpha)$-stochasticity is replaced by $(\alpha+O(\log n),O(\log n))$-stochasticity. Instead of one measure $P$, one should consider a family of measures. Let us approximate $\omega$ and look when the approximations cross the thresholds corresponding to $k$ first bits of the binary expansion of $\omega$. In this way we get $P=P_1+P_2+\ldots+P_\alpha$, where $P_i$ has total weight at most $2^{-i}$, and complexity at most $i+O(\log n)$. Let us show that all strings $x$ where $P(x)$ is close to $\mm(x)$ (say, $P(x)\ge \mm(x)/2$) are $(\alpha+O(\log n), O(\log n))$-stochastic, namely, one of the measures $P_i$ multiplied by $2^i$ is a good explanations for them. Indeed, for such $x$ and some $i$ the value of $P_i(x)$ coincides with $\mm(x)$ up to polynomial (in $n$) factor, since the sum of all $P_i$ is at least $\mm(x)/2$. On the other hand, $\mm(x\cnd 2^iP_i)\le 2^i \mm(x) \approx 2^i P_i(x)$, since the complexity of $2^iP_i$ is at most $i+O(\log n)$. Therefore the ratio $\mm(x\cnd P_i)/(2^iP_i(x))$ is polynomially bounded, and  the model $2^iP_i$ has deficiency $O(\log n)$. This better bound also follows from the Levin's explanation, see below.

This result shows that non-stochastic objects rarely appear as outputs of randomized algorithms. There is an explanation of this phenomenon (that goes back to Levin): non-stochastic objects provide a lot of information about halting problem, and the probability of appearance of an object that has a lot of information about some sequence $\alpha$, is small (for any fixed $\alpha$). We discuss this argument below, see Section~\ref{subsec:halting-information}.
\smallskip

It is natural to ask the following general question. For a given string $x$, we may consider the set of all pairs $(\alpha,\beta)$ such that $x$ is $(\alpha,\beta)$-stochastic. By definition, this set is upwards-closed: a point in this set remains in it if we increase $\alpha$ or $\beta$, so there is some boundary curve that describes the trade-off between $\alpha$ and $\beta$. What curves could appear in this way? To get an answer (to characterizes all these curves with $O(\log n)$-precision), we need some other technique, explained in the next section.

\section{Two-part descriptions}\label{sec:two-part}

Now we switch to another measure of the quality of a statistical model. It is important both for philosophical and technical reasons. The philosophical reason is that it corresponds to the so-called ``minimal description length principle''. The technical reason is that it is easier to deal with; in particular, we will use it to answer the question asked at the end of the previous section.

\subsection{Optimality deficiency}

Consider again some statistical model. Let $P$ be a probability distribution (with finite support and rational values) on strings. Then we have
$$
\KP(x)\le \KP(P)+\KP(x\cnd P)\le \KP(P)+(-\log P(x))
$$
for arbitrary string $x$ (with $O(1)$-precision). Here we use that (with $O(1)$-precision):
\begin{itemize}
\item $\KP(x\cnd P)\le -\log P(x)$, as we have mentioned;

\item the complexity of the pair is bounded by the sum of complexities: $\KP(u,v)\le \KP(u)+\KP(v)$;

\item $\KP(v)\le \KP(u,v)$ (in our case, $\KP(x)\le KP(x,P)$).
\end{itemize}

If $P$ is a uniform distribution on some finite set $A$, this inequality can be explained as follows.  We can specify $x$ in two steps:
\begin{itemize}
\item first, we specify $A$;
\item then we specify the ordinal number of $x$ in $A$ (in some natural ordering, say, the lexicographic one).
\end{itemize}
In this way we get $\KP(x)\le \KP(A)+\log \#A$ for every element $x$ of arbitrary finite set $A$. This inequality holds with $O(1)$-precision. If we replace the prefix complexity by the plain version, we can say that $\KS(x)\le\KS(A)+\log\#A$ with precision $O(\log n)$ for every string $x$ of length at most $n$: we may assume without loss of generality that both terms in the right hand side are at most $n$, otherwise the inequality is trivial.

The ``quality'' of a statistical model $P$ for a string $x$ can be measured by the difference between sides of this inequality: for a good model the ``two-part description'' should be almost minimal. We come to the following definition:

\begin{definition}
The \emph{optimality deficiency} of a distribution $P$ considered as the model for a string $x$ is  the difference
$$
\delta(x,P)=(\KP(P)+(-\log P(x)))-\KP(x).
$$
\end{definition}
As we have seen, $\delta(x,P)\ge 0$ with $O(1)$-precision.

If $P$ is a uniform distribution on a set $A$, the optimality deficiency $\delta(x,P)$ will also be denoted by $\delta(x,A)$, and
$$
\delta(x,A)=(\KP(A)+\log\#A) - \KP(x).
$$
The following proposition shows that we may restrict our attention to finite sets as models (with $O(\log n)$-precision):

\begin{proposition}\label{prop:models-to-sets-2}
Let $P$ be a distribution considered as a model for some string $x$ of length~$n$. Then there exists a finite set $A$ such that
$$
\KP(A)\le \KP(P)+O(\log n);\quad \log\#A\le -\log P(x)+O(1) \eqno(*)
$$
\end{proposition}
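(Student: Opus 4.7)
The plan is to mimic the passage from distributions to sets used in the proof of Proposition~\ref{prop:models-to-sets-1}: gather all strings whose $P$-mass is at least a threshold just below $P(x)$. Concretely, I would let $k$ be the minimal integer with $2^{-k}\le P(x)$, so that $k\le -\log P(x)+1$, and set
\[
A \;=\; \{\,y : P(y)\ge 2^{-k}\,\}.
\]
Then $x\in A$ by the choice of $k$, and since the total $P$-mass is at most $1$ while each element of $A$ carries mass at least $2^{-k}$, we get $\#A\le 2^k$, hence $\log\#A\le k\le -\log P(x)+O(1)$, which is the second required inequality.

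For the complexity bound, observe that $A$ can be computed algorithmically from $P$ together with the integer $k$: given these, enumerate the (finite, rational-valued) support of $P$ and collect the strings with $P(y)\ge 2^{-k}$. Therefore $\KP(A)\le \KP(P)+\KP(k)+O(1)\le \KP(P)+2\log k+O(1)$, and the issue reduces to absorbing $\log k$ into the $O(\log n)$ slack. In the ``reasonable'' regime $-\log P(x)\le n$ this is automatic, since then $k\le n+1$ and $\log k=O(\log n)$.

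The main obstacle is the degenerate regime where $P(x)$ is astronomically small (or even zero), so that $k$ is huge compared to $n$ and the previous estimate $\KP(k)=O(\log k)$ is no longer $O(\log n)$. I would dispose of this case by a simple case split: if $-\log P(x)>n$, one can discard $P$ entirely and take $A=\{0,1\}^n$, which automatically contains $x$, satisfies $\log\#A=n<-\log P(x)$, and has $\KP(A)=\KP(n)+O(1)=O(\log n)\le \KP(P)+O(\log n)$. Combining this trivial choice in the degenerate regime with the level-set construction in the reasonable regime yields a set $A$ satisfying both inequalities, completing the argument. (As a side remark, one could alternatively unify the two regimes by intersecting the level set with $\{0,1\}^n$, but the case split is cleaner.)
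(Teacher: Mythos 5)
Your proposal is correct and follows essentially the same route as the paper: the same level set $A=\{y\mid P(y)\ge 2^{-k}\}$ with $k$ the threshold just below $P(x)$, the same complexity bound via $\KP(A)\le\KP(P)+\KP(k)+O(1)$, and the same fallback to $A=\{0,1\}^n$ when $-\log P(x)$ exceeds $n$ (the paper phrases that degenerate case via the optimality deficiency being larger than $n$, but the chosen set and the verification are the same).
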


This proposition will be used in many arguments, since it is often easier to deal with sets as statistical models (instead of distributions). Note that the inequalities~$(*)$ evidently imply that
$$
\delta(x,A)\le \delta(x,P)+O(\log n),
$$
so arbitrary distribution $P$ may be replaced by a uniform one ($U_A$) with a logarithmic-only change in the optimality deficiency.

\begin{proof}
We use the same construction as in Proposition~\ref{prop:models-to-sets-1}. Let $2^{-k}$ be the maximal power of $2$ such that $2^{-k}\le P(x)$, and let $A=\{x\mid P(x)\ge 2^{-k}\}$. Then $k=-\log P(x)+O(1)$. We may assume that $k=O(n)$: if $k$ is much bigger than $n$, then $\delta(x,P)$ is also bigger than $n$ (since the complexity of $x$ is bounded by $n+O(\log n)$), and in this case the statement is trivial (let $A$ be the set of all $n$-bit strings).

Now we see that that $A$ is determined by $P$ and $k$, so $\KP(A)\le \KP(P)+\KP(k)\le \KP(P)+O(\log n)$. Note also that $\#A \le 2^k$, so $\log\# A\le -\log P(x)+O(1)$.
\end{proof}

Let us note that in a more general setting~\cite{milovanov-stacs} where we consider several strings as outcomes of the repeated experiment (with independent trials) and look for a model that explains all of them, a similar result is not true: not every probability distribution can be transformed into a uniform one.

\subsection{Optimality and randomness deficiencies}

Now we have two ``quality measures'' for a statistical model $P$: the randomness deficiency $d(x\cnd P)$ and the optimality deficiency $\delta(x,P)$. They are related:

\begin{proposition}\label{prop:randomness-optimality}
$$d(x\cnd P)\le \delta(x,P)$$
with $O(1)$-precision.
\end{proposition}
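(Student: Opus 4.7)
The plan is to unfold both definitions and reduce the inequality to a standard Kolmogorov complexity estimate. Writing out the definitions gives
\[
d(x\cnd P) - \delta(x,P) = \bigl(-\log P(x) - \KP(x\cnd P)\bigr) - \bigl(\KP(P) + (-\log P(x)) - \KP(x)\bigr),
\]
and the $-\log P(x)$ terms cancel, leaving
\[
d(x\cnd P) - \delta(x,P) = \KP(x) - \KP(P) - \KP(x\cnd P).
\]
So the claim $d(x\cnd P) \le \delta(x,P) + O(1)$ is equivalent to the standard inequality
\[
\KP(x) \le \KP(P) + \KP(x\cnd P) + O(1).
\]

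To establish this inequality I would chain two of the basic facts already mentioned in the discussion leading up to the optimality deficiency. First, $\KP(P,x) \le \KP(P) + \KP(x\cnd P) + O(1)$: given a shortest prefix-free program for $P$ and a shortest prefix-free program for $x$ given $P$, concatenate them (using the prefix-free property to read off the first part) to produce $(P,x)$. Second, $\KP(x) \le \KP(P,x) + O(1)$, since from the pair $(P,x)$ one can simply output the second coordinate. Combining these two bounds gives the required inequality, and substituting back yields $d(x\cnd P) \le \delta(x,P) + O(1)$.

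There is no real obstacle here: the argument is bookkeeping with the two inequalities $\KP(u,v)\le \KP(u)+\KP(v\cnd u)+O(1)$ and $\KP(v)\le \KP(u,v)+O(1)$, both of which are quoted in the paragraph just before the definition of $\delta$ (in the slightly weaker form $\KP(u,v) \le \KP(u)+\KP(v)$ together with the conditional version implicit in $\KP(x\cnd P) \le -\log P(x)$). One could alternatively derive the same bound via a priori probabilities: from $\mm(x\cnd P) \cdot \mm(P) \le O(1)\cdot \mm(x)$ (a semicomputable nonnegative function of $x$ whose sum is bounded, so dominated by $\mm(x)$ up to a constant), taking $-\log$ gives exactly $\KP(x) \le \KP(P) + \KP(x\cnd P) + O(1)$.
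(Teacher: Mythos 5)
Your proof is correct and follows essentially the same route as the paper: unfolding both deficiencies, cancelling $-\log P(x)$, and reducing the claim to the chain $\KP(x)\le \KP(x,P)\le \KP(P)+\KP(x\cnd P)+O(1)$, which is exactly the inequality the paper invokes. The alternative remark via a priori probabilities is a fine extra, but the argument is the same bookkeeping as in the original proof.
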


\begin{proof}
By definition
\begin{align*}
d(x\cnd P) &= - \log P(x) - \KP(x\cnd P);\\
\delta(x,P)&= -\log P(x) + \KP(P) - \KP(x).
\end{align*}
It remains to note that $\KP(x)\le \KP(x,P)\le \KP(P)+\KP(x\cnd P)$ with $O(1)$-precision.
\end{proof}

Could $\delta(x,P)$ be significantly larger than $d(x\cnd P)$? Look at the proof above: the second inequality $\KP(x,P)=\KP(P)+\KP(x\cnd P)$ is an equality with logarithmic precision. Indeed, the exact formula (Levin--G\'acs formula for the complexity of a pair with $O(1)$-precision) is
$$
\KP(x,P)=\KP(P)+\KP(x\cnd P,\KP(P)).
$$
Here the term $\KP(P)$ in the condition changes the complexity by $O(\log \KP(P))$, and we may ignore models $P$ whose complexity is much greater than the complexity of $x$.

On the other hand, in the first inequality the difference between $\KP(x,P)$ and $\KP(x)$ may be significant. This difference equals $\KP(P\cnd x)$ with logarithmic accuracy and, if it is large, then $\delta(x,P)$ is much bigger than $d(x\cnd P)$. The following example shows that this is possible. In this example we deal with sets as models.

\begin{example}\label{ex:stochasticity-optimality}
Consider an incompressible string $x$ of length $n$, so $\KP(x)=n$ (all equalities with logarithmic precision). A good model for this string is the set $A$ of all $n$-bit strings.  For this model we have $\#A=2^n$, $\KP(A)=0$ and $\delta(x,A)=n+0-n=0$ (all equalities have logarithmic precision). So $d(x\cnd P)=0$, too.  Now we can change the model by excluding some other $n$-bit string. Consider a $n$-bit string $y$ that is incompressible and independent of $x$: this means that $\KP(x,y)=2n$. Let $A'$ be $A\setminus\{y\}$.

The set $A'$ contains $x$ (since $x$ and $y$ are independent, $y$ differs from $x$). Its complexity is $n$ (since it determines $y$). The optimality deficiency is then $n + n - n =n$, but the randomness deficiency is still small: $d(x\cnd A')=\log \#A'-\KP(x\cnd A')= n - n = 0$ (with logarithmic precision). To see why $\KP(A'\cnd x)=n$, note that $x$ and $y$ are independent, and the set $A'$ has the same information as $(n,y)$.
\end{example}

One of the main results of this section (Theorem~\ref{th:deficiencies}) clarifies the situation: it implies that if optimality deficiency of a model is significantly larger than its randomness deficiency, then this model can be improved and another model with better parameters  can be found. More specifically, the complexity  of the new model is smaller than the complexity of the original one while both the randomness deficiency and optimality deficiency of the new model are not worse than the randomness deficiency of the original one. This is one of the main results of algorithmic statistics,  but first let us explore systematically the properties of two-part descriptions.

\subsection{Trade-off between complexity and size of a model}
\label{sub:trade-off}

It is convenient to consider only models that are sets (=uniform distribution on sets). We will call them \emph{descriptions}. Note that by Propositions~\ref{prop:models-to-sets-1} and~\ref{prop:models-to-sets-2} this restriction does not matter much since we ignore logarithmic terms. For a given string $x$ there are many different descriptions: we can have a simple large set containing $x$, and at the same time some more complicated, but smaller one. In this section we study the trade-off between these two parameters (complexity and size).

\begin{definition}\label{def:px}
A finite set $A$ is an $(i*j)$-description\footnote{This notation may look strange; however, we speak so often about finite sets of complexity at most $i$ and cardinality at most $2^j$ that we decided to introduce some short name and notation for them.} of $x$ if $x\in A$, complexity $\KP(A)$ is at most $i$, and $\log\#A\le j$. For a given $x$ we consider the set $P_x$ of all pairs $(i,j)$ such that $x$ has some $(i*j)$-description; this set will be called  \emph{the profile} of $x$.
\end{definition}

Informally speaking, an $(i*j)$-description for $x$ consists of two parts: first we spend $i$ bits to specify some finite set $A$ and then $j$ bits to specify $x$ as an element of $A$.

What can be said about $P_x$ for a string $x$ of length $n$ and complexity $k=\KP(x)$? By definition, $P_x$ is closed upwards and contains the points $(0,n)$ and $(k,0)$. Here we omit terms $O(\log n)$: more precisely, we have a $(O(\log n)*n)$-description that consists of all strings of length $n$, and a $((k+O(1))*0)$-description $\{x\}$. Moreover, the following proposition shows that  we can move the information from the second part of the description into its first part (leaving the total length almost unchanged). In this way we make the set smaller (the price we pay is that its complexity increases).

\begin{proposition}[\cite{kolm,gtv,shen99}]
     \label{prop:description-shift}
Let $x$ be a string and $A$ be a finite set that contains $x$. Let $s$ be a non-negative integer such that $s\le \log\# A$. Then there exists a finite set $A'$ containing $x$ such that $\#A' \le \# A/ 2^s$ and $\KP(A')\le \KP(A)+s + O(\log s)$.
\end{proposition}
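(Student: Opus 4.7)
The plan is to partition $A$ into $2^s$ canonical blocks of nearly equal size and let $A'$ be the block containing $x$. Concretely, I would enumerate $A$ in lexicographic order as $a_0,a_1,\ldots,a_{N-1}$ with $N=\#A$, and for $j=0,1,\ldots,2^s-1$ set
$$
B_j = \bigl\{\,a_i \;:\; \lfloor jN/2^s\rfloor \le i < \lfloor (j{+}1)N/2^s\rfloor \,\bigr\}.
$$
Each $B_j$ has size at most $\lceil N/2^s\rceil$, which gives the required bound $\#A' \le \#A/2^s$ with the customary $O(1)$ slack. Let $j^\star$ be the unique index with $x\in B_{j^\star}$ and take $A' := B_{j^\star}$.

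For the complexity estimate I would construct a prefix-free description of $A'$ by concatenating three blocks: (i) a shortest prefix-free program for $A$, of length $\KP(A)$; (ii) a self-delimiting encoding of the integer $s$, of length $O(\log s)$; (iii) the $s$-bit binary representation of $j^\star$, of length exactly $s$ (well-defined once the decoder knows $s$). A decoder reads block (i) to recover $A$, reads block (ii) to learn $s$, then reads exactly $s$ further bits to obtain $j^\star$, reconstructs the canonical partition of $A$, and outputs $B_{j^\star}=A'$. This yields
$$
\KP(A') \le \KP(A) + s + O(\log s),
$$
as claimed.

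The argument is essentially bookkeeping: there is no real obstacle. The only point that deserves attention is why the overhead is $O(\log s)$ rather than $O(1)$: since the decoder does not know $s$ in advance, we must transmit it in a self-delimiting way before the $s$-bit index $j^\star$, and this costs $O(\log s)$ bits. Everything else, including the mild rounding in the block sizes and the fact that $A$ itself is a finite object whose prefix description can be concatenated with the rest, is standard in the prefix-complexity formalism set up earlier in the paper.
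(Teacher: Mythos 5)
Your proof is correct and follows essentially the same route as the paper's: split the lexicographically ordered list of $A$ into $2^s$ nearly equal blocks, take the block containing $x$, and describe it by a program for $A$ plus a self-delimitingly prefixed $s$-bit part number, which accounts for the $s+O(\log s)$ overhead. The minor rounding issue when $\#A$ is not a multiple of $2^s$ is handled (or rather waved away) at the same level of precision as in the paper.
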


\begin{proof}
List all the elements of $A$ in some (say, lexicographic) order. Then we split the list into $2^s$ parts (first $\#A/2^s$ elements, next $\#A/2^s$ elements etc.; we omit evident precautions for the case when $\#A$ is not a multiple of $2^s$). Then let $A'$ be the part that contains $x$. It has the required size. To specify $A'$, it is enough to specify $A$ and the part number; the latter takes at most $s$ bits. (The logarithmic term is needed to make the encoding of the part number self-delimiting.)
\end{proof}

This statement can be illustrated graphically. As we have said, the set $P_x$ is ``closed upwards'' and contains with each point $(i,j)$ all points on the right (with bigger $i$) and on the top (with bigger $j$).  It contains points $(0,n)$ and $(\KP(x),0)$; Proposition~\ref{prop:description-shift} says that we can also move down-right adding $(s,-s)$ (with logarithmic precision). We will see that movement in the opposite direction is not always possible. So, having two-part descriptions with the same total length, we should prefer the one with bigger set (since it always can be converted into others, but not vice versa).

The boundary of $P_x$ is some curve connecting the points $(0,n)$ and $(k,0)$. This curve (introduced by Kolmogorov in 1970s, see~\cite{kolmmmo}) never gets into the triangle $i+j< \KP(x)$ and always goes down (when moving from left to right) with slope at least $-1$ or more.

\begin{figure}[h]
\begin{center}\includegraphics{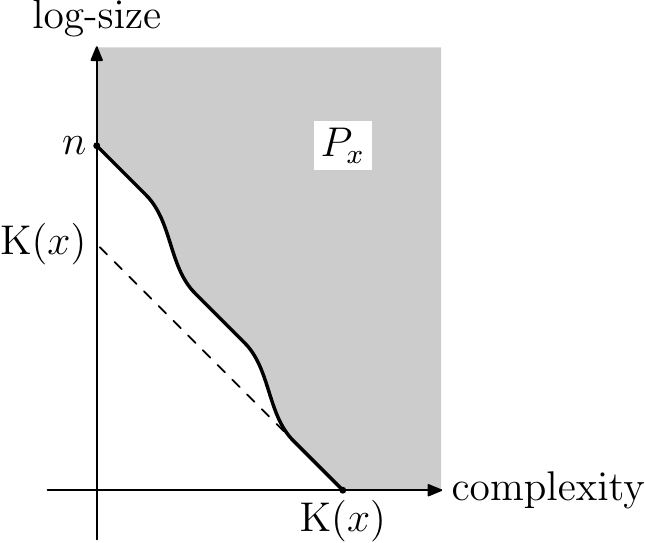}\end{center}
\caption{The set $P_x$ and its boundary curve}\label{mdl.1.eps}
\end{figure}

This picture raises a natural question: which boundary curves are possible and which are not? Is it possible, for example, that the boundary goes along the dotted line on Figure~\ref{mdl.1.eps}? The answer is positive: take a random string of desired complexity and add trailing zeros to achieve desired length. Then the point $( 0,\KP(x))$ (the left end of the dotted line) corresponds to the set $A$ of all strings of the same length having the same trailing zeros. We know that the boundary curve cannot go down slower than with slope $-1$ and that it lies above the line $i+j=\KP(x)$, therefore it follows the dotted line (with logarithmic precision).

A more difficult question: is it possible that the boundary curve starts from $( 0,n)$, goes with the slope $-1$ to the very end and then goes down rapidly to $(\KP(x),0)$ (Figure~\ref{mdl.2}, the solid line)? Such a string $x$, informally speaking, would have essentially only two types of statistical explanations: a set of all strings of length $n$ (and its parts obtained by Proposition~\ref{prop:description-shift}) and the exact description, the singleton $\{x\}$.

\begin{figure}[h]
\begin{center}\includegraphics{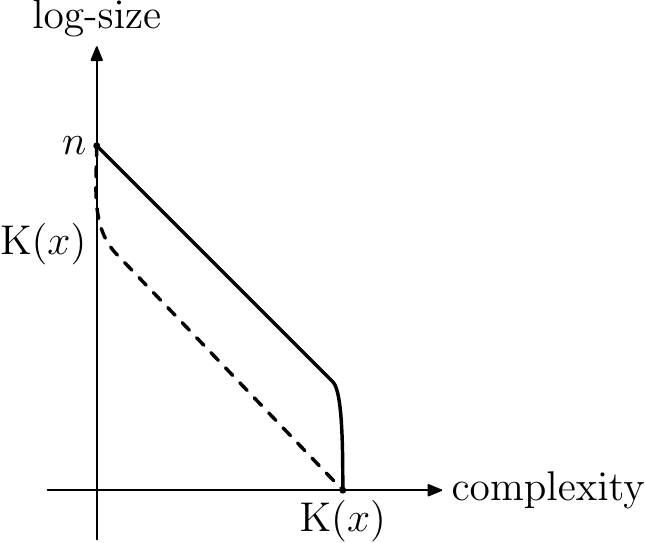}\end{center}
\caption{Two opposite possibilities for a boundary curve}\label{mdl.2}
\end{figure}

It turns out that not only these two opposite cases are possible, but also all intermediate curves (provided they decrease with slope $-1$ or faster, and are simple enough), at least with logarithmic precision. More precisely, the following statement holds:

\begin{thm}[\cite{vv}]\label{stat-any-curve}
Let $k\le n$ be two integers and let $t_0 > t_1 > \ldots > t_k$ be a strictly decreasing sequence of integers such that $t_0\le n$ and $t_k=0$; let $m$ be the complexity of this sequence. Then there exists a string $x$ of complexity $k+O(\log n)+O(m)$ and length $n+O(\log n)+O(m)$ for which the boundary curve of $P_x$ coincides with the line $(0,t_0)$--$(1,t_1)$--\ldots--$(k,t_k)$ with $O(\log n)+O(m)$ precision: the distance between the set $P_x$ and the set $T=\{( i,j)\mid (i<k)\Rightarrow (j>t_i)\}$ is bounded by $O(\log n)+O(m)$.
\end{thm}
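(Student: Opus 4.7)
By Proposition~\ref{prop:models-to-sets-2} it suffices to work with uniform distributions on finite sets. The goal then becomes: construct a string $x$ of length $n + O(\log n + m)$ together with a decreasing chain $A_0 \supseteq A_1 \supseteq \cdots \supseteq A_k = \{x\}$ such that $|A_i| = 2^{t_i}$, $\KP(A_i) \le i + O(\log n + m)$, and moreover $x$ lies in no finite set $B$ with $\KP(B) \le i$ and $\log|B| \le t_i - \Delta$ for $\Delta = O(\log n + m)$ and every $i < k$. The existence of the chain forces $P_x$ to contain each staircase point $(i, t_i)$ up to logarithmic slack, while the avoidance condition keeps $P_x$ from dipping significantly below the staircase.

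\textbf{Construction via on-line diagonalization.} I would follow \cite{vv}. Dovetail the enumeration of all prefix programs; as each one halts, record its output set $B$ together with the complexity bound given by the program length. Maintain a running candidate chain $A_0^{(s)}\supseteq\dots\supseteq A_k^{(s)}=\{x^{(s)}\}$ at stage $s$, initialized to a canonical nested chain inside $\{0,1\}^n$ determined by $(n,\vec t)$ alone, e.g.\ built from lexicographic intervals. Whenever a newly revealed $B$ with program length $\le j$ and $\log|B|\le t_j-\Delta$ captures the current $x^{(s)}$, the chain is \emph{trapped} at level $j$ by $B$; locate the largest $r\le j$ for which $A_r^{(s)}$ still has at least $2^{t_{r+1}}$ elements outside $B$ and outside all previously handled bad sets of level $\ge r+1$, and re-choose $A_{r+1}^{(s)},\dots,A_k^{(s)}$ inside this residue.

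\textbf{Why a shift is always available, and complexity bookkeeping.} The aggregate "bad volume" at level $\ge r+1$ that can intersect $A_r^{(s)}$ is controlled by a weighted counting combining Kraft's inequality (so that $\sum_p 2^{-|p|}$ over all triggering prefix programs is bounded by $1$) with the strictly decreasing integer sequence $t_0>t_1>\dots>t_k$. Once $\Delta$ is chosen as a suitable multiple of $\log n + m$, this weighted volume is only a fraction of $|A_{r+1}|$, leaving room to re-route the chain. Each final set $A_i$ is then determined by $(n,\vec t)$ together with the shift history at levels $\le i$; since each shift is attributable to a distinct prefix program of length $\le i$, Kraft's inequality again yields a self-delimiting encoding of the history of total length $i+O(\log n+m)$, so $\KP(A_i)\le i+O(\log n+m)$, and in particular $\KP(x)=\KP(A_k)\le k+O(\log n+m)$.

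\textbf{Main obstacle.} The delicate point is the volume bound guaranteeing that a shift is always possible with \emph{uniform} slack $\Delta=O(\log n+m)$ independent of the level. A crude union bound over all bad sets produces a slack that grows with the level, which is too weak for the claimed precision. The resolution, and the technical core of \cite{vv}, is to amortize the counting across levels: the prefix property of the triggering programs, combined with the slope-$(-1)$ constraint of the staircase, lets one attach weight at most $2^{-\Delta}$ to each bad set relative to the surrounding set in the chain. Once this amortized volume estimate is established, verifying that the on-line game terminates and produces a well-defined chain whose $A_i$'s have the stated complexity is routine bookkeeping.
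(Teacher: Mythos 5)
The construction you sketch is not the one needed here, and the place where you defer to \cite{vv} is exactly where your version breaks. The paper's proof maintains no nested chain: it first fixes $x$ as the first string of suitable length outside the union of all ``bad'' sets (complexity at most $i$, size at most $2^{t_i}$), which is possible because the slope condition makes the total bad volume at most $\poly(n)2^{t_0}$; then, for each level $i$ \emph{separately}, it builds a good set by repeatedly taking the first $2^{t_i}$ not-yet-deleted strings. The number of refills is bounded by $\poly(n)2^i$ via an elementary two-case count: refills that are the first after a new bad set of complexity $\le i$ appears (at most $O(2^i)$ of them), and refills caused entirely by elements of bad sets of complexity $j>i$, whose total volume is at most $\poly(n)2^{i+t_i}$ because $j+t_j\le i+t_i$. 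No Kraft-type amortization is used or needed. In your nested version the analogous bound --- roughly $2^i$ re-choices of $A_i$ --- is precisely what is missing, and the one justification you give, that ``each shift is attributable to a distinct prefix program of length $\le i$'', is false: a shift rebuilding $A_i$ can be forced by the \emph{cumulative} coverage of the small ambient set $A_{i-1}$ by many bad sets of complexity larger than $i$ (no single short program is responsible), or by one large bad set of small complexity. The relevant control is the volume inequality $j+t_j\le i+t_i$, not Kraft's inequality, and with nesting the bookkeeping is genuinely delicate --- it is the same difficulty that, for restricted families where nesting-like constructions are unavoidable, costs the paper a precision loss of $O(\sqrt{n\log n})$ (Theorem~\ref{thm:family-curve}). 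Without this bound your argument establishes neither that the on-line process stabilizes nor that $\KP(A_i)\le i+O(\log n+m)$, i.e.\ neither half of the theorem.

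A second concrete defect: your re-routing rule only makes the rebuilt chain avoid the triggering set $B$ and previously handled bad sets \emph{of level $\ge r+1$}. A previously handled bad set of lower level (which may be far larger than $A_r$) is not avoided, and since only newly revealed sets trigger shifts, the final $x$ can end up inside such a set. Then the avoidance property you rely on --- no $(i*(t_i-\Delta))$-description of $x$ for $i<k$ --- fails, and with it both the claim that $P_x$ stays near $T$ and the lower bound $\KP(x)\ge k-O(\log n+m)$, which you need (via the singleton $\{x\}$) for the complexity estimate. Repairing this by avoiding all previously handled bad sets changes the residue estimates and brings you back to the unproven counting above. The simplest fix is to drop the nesting altogether: choose $x$ first and build each level's description independently, as the paper does; nesting buys nothing in the unrestricted setting, since any finite set of surviving strings is an admissible description.
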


(We say that the distance between two subsets $P,Q\subset\mathbb{Z}^2$ is at most $\varepsilon$ if $P$ is contained in the $\varepsilon$-neighborhood of $Q$ and vice versa.)

\begin{proof}
For every $i$ in the range $0\ldots k$ we list all the sets of complexity at most $i$ and size at most $2^{t_i}$. For a given $i$ the union of all these sets is denoted by $S_i$. It contains at most $2^{i+t_i}$ elements. (Here and later we omit constant factors and factors polynomial in $n$ when estimating cardinalities, since they correspond to $O(\log n)$ additive terms for lengths and complexities.) Since the sequence $t_i$ strictly decreases (this corresponds to slope $-1$ in the picture), the sums $i+t_i$ do not increase, therefore each $S_i$ has at most $2^{t_0}\le 2^n$ elements. The union of all $S_i$ therefore also has at most $2^n$ elements (up to a polynomial factor, see above). Therefore, we can find a string of length $n$ (actually $n+O(\log n)$) that does not belong to any $S_i$. Let $x$ be a first such string in some order (e.g., in the lexicographic order).

By construction, the set $P_x$ lies above the curve determined by $t_i$. So we need to estimate the complexity of $x$ and prove that $P_x$ follows the curve (i.e., that $T$ is contained in the neighborhood of $P_x$).

Let us start with the upper bound for the complexity of $x$. The list of all objects of complexity at most $k$ plus the full table of their complexities have complexity $k+O(\log k)$, since it is enough to know $k$ and the number of terminating programs of length at most~$k$. Except for this list, to specify $x$ we need to know $n$ and the sequence $t_0,\ldots,t_k$, whose complexity is $m$.

The lower bound: the complexity of $x$ cannot be less than $k$ since all the singletons of this complexity were excluded (via $S_k$).

It remains to show that for every $i\le k$ we can put $x$ into a set $A$ of complexity $i$ (or slightly bigger) and size $2^{t_i}$ (or slightly bigger). For this we enumerate a sequence of sets of correct size and show that one of the sets will have the required properties; if this sequence of sets is not very long, the complexity of its elements is bounded. Here are the details.

We start by taking the first $2^{t_i}$ strings of length $n$ as our first set $A$. Then we start enumerating all finite sets of complexity at most $j$ and of size at most $2^{t_j}$ for all $j=0,\ldots,k$, and get an enumeration of all sets $S_j$. Recall that all elements of all $S_j$ should be deleted (and the minimal remaining element should eventually be $x$). So, when a new set of complexity at most $j$ and of size at most $2^{t_j}$ appears, all its elements are included in $S_j$ and deleted. Until all elements of $A$ are deleted, we have nothing to worry about, since $A$ is covering the minimal remaining element. If (and when) all elements of $A$ are deleted, we replace $A$ by a new set that consists of first $2^{t_i}$ undeleted (yet) strings of length $n$. Then we wait again until all the elements of this new $A$ are deleted, if (and when) this happens, we take $2^{t_i}$ first undeleted elements as new $A$, etc.

The construction guarantees the correct size of the sets and that one of them covers $x$ (the minimal non-deleted element). It remains to estimate the complexity of the sets we construct in this way.

First, to start the process that generates these sets, we need to know the length $n$ (actually something logarithmically close to $n$) and the sequence $t_0,\ldots,t_k$. In total we need $m+O(\log n)$ bits. To specify each version of $A$, we need to add its version number. So we need to show that the number of different $A$'s that appear in the process is at most $2^i$ or slightly bigger.

A new set $A$ is created when all the elements of the old $A$ are deleted. These changes can be split into two groups. Sometimes a new set of complexity $j$ appears with $j\le i$. This can happen only $O(2^i)$ times since there are at most $O(2^i)$ sets of complexity at most $i$. So we may consider the other changes (excluding the first changes after each new large set was added). For those changes all the elements of $A$ are gone due to elements of $S_j$ with $j>i$. We have at most $2^{j+t_j}$ elements in $S_j$. Since $t_j+j\le t_i+i$, the total number of deleted elements only slightly exceeds $2^{t_i+i}$, and each set $A$ consists of $2^{t_i}$ elements, so we get about $2^i$ changes of $A$.
\end{proof}

\begin{remark}
It is easy to modify the proof to get a string $x$ of length exactly $n$. Indeed, we may consider slightly smaller bad sets: decreasing the logarithms of their sizes by $O(\log n)$, we can guarantee that the total number of elements in all bad sets is less than $2^n$. Then there exists a string of length $n$ that does not belong to bad sets. In this way the distance between $T$ and $P_x$ may increase by $O(\log n)$, and this is acceptable.
\end{remark}

Theorem~\ref{stat-any-curve} shows that the value of the complexity of $x$ does not describe the properties of $x$ fully; different strings of the same complexity $x$ can have different boundary curves of $P_x$. This curve can be considered as an ``infinite-dimensional'' characterization of $x$.

Strings $x$ with minimal possible $P_x$ (Figure~\ref{mdl.2}, the upper curve) may be called \emph{antistochastic}. They have quite unexpected properties. For example, if we replace some bits of an antistochastic string $x$ by stars (or some other symbols indicating erasures) leaving only $\KP(x)$ non-erased bits, then the  string $x$ can be reconstructed from the resulting string $x'$ with logarithmic advice, i.e., $\KP(x\cnd x')=O(\log n)$. This and other properties of antistochastic strings were discovered in~\cite{milovanov-antistochastic}.

\subsection{Optimality and randomness deficiency}\label{subsec:opt-rand}

In this section we establish the connection between optimality and randomness deficiency. As we have seen, the optimality deficiency can be bigger than the randomness deficiency (for the same description), and the difference is
$
\delta(x,A)-d(x\cnd A)=\KP(A)+\KP(x\cnd A) - \KP(x).
$
The Levin--G\'acs formula for the complexity of pair ($\KP(u,v)=\KP(u)+\KP(v\cnd u)$ with logarithmic precision, for $O(1)$-precision one needs to add $\KP(u)$ in the condition, but we ignore logarithmic size terms anyway) shows that the difference in question can be rewritten as
$$
\delta(x,A)-d(x\cnd A)=\KP(A,x)-\KP(x)=\KP(A\cnd x).
$$
So if the difference between deficiencies for some $(i*j)$-description $A$ of $x$ is big, then $\KP(A\cnd x)$ is big. All the $(i*j)$-descriptions of $x$ can be enumerated if $x$, $i$, and $j$ are given. So the large value of $\KP(A\cnd x)$ for some $(i*j)$-description $A$ means that there are many $(i*j)$-descriptions of $x$, otherwise $A$ can be reconstructed from $x$ by specifying $i,j$ (requires $O(\log n)$ bits) and the ordinal number of $A$ in the enumeration. We will prove that if there are many $(i*j)$-descriptions for some $x$, then there exist a description with better parameters.

Now we explain this in more detail. Let us start with the following remark. Consider all strings that have $(i*j)$-descriptions for some fixed $i$ and $j$. They can be enumerated in the following way: we enumerate all finite sets of complexity at most $i$, select those sets that have size at most $2^j$, and include all elements of these sets into the enumeration. In this construction
\begin{itemize}
\item the complexity of the enumerating algorithm is logarithmic (it is enough to know $i$ and $j$);
\item we enumerate at most $2^{i+j}$ elements;
\item the enumeration is divided into at most $2^i$ ``portions'' of size at most $2^j$.
\end{itemize}
It is easy to see that any other enumeration process with these properties enumerates only objects that have $(i*j)$-descriptions (again with logarithmic precision). Indeed, each portion is a finite set that can be specified by its ordinal number and the enumeration algorithm, the first part requires $i+O(\log i)$ bits, the second is of logarithmic size according to our assumption.

\begin{remark}\label{rem:portion}
The requirement about the portion size is redundant. Indeed, we can change the algorithm by splitting large portions into pieces of size $2^j$ (the last piece may be incomplete). This, of course, increases the number of portions, but if the total number of enumerated elements is at most $2^{i+j}$, then this splitting adds at most $2^i$ pieces. This observation looks (and is) trivial, still it plays an important role in the proof of the following proposition.
\end{remark}

\begin{proposition}\label{prop:improving-descriptions}
If a string $x$ of length $n$ has at least $2^k$ different $(i,j)$-descriptions, then $x$ has some $(i*(j-k))$-description and even some $((i-k)*j)$-description.
\end{proposition}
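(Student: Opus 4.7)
The plan is to work with the set of ``frequently described'' strings and apply the enumeration principle that was just highlighted in Remark~\ref{rem:portion}. Let $B$ denote the set of binary strings that have at least $2^k$ distinct $(i*j)$-descriptions. The key quantitative observation is a pigeonhole bound: the number of finite sets $A$ with $\KP(A)\le i$ is at most $O(2^i)$, and each $(i*j)$-description has at most $2^j$ elements, so the total number of incidences (string $y$, description $A$ containing $y$) is at most $O(2^{i+j})$. Since each $y\in B$ contributes at least $2^k$ incidences, we get $|B|\le O(2^{i+j-k})$.

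Next I would check that $B$ is enumerable from $i,j,k$. Indeed, while enumerating all programs that produce finite sets of size at most $2^j$ (these are the $(i*j)$-descriptions), we can maintain for each encountered string a counter of how many such sets contain it; as soon as some $y$'s counter reaches $2^k$, we add $y$ to the enumeration of $B$. This uses only $O(\log n)$ bits of information (for $i,j,k$, which are all $O(n)$).

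Now I apply the principle from the paragraph preceding Remark~\ref{rem:portion} twice, to the enumeration of $B$. For the first claim, I split the enumeration into consecutive portions of size $2^{j-k}$; because $|B|\le O(2^{i+j-k})$, the number of portions is at most $O(2^i)$. Each portion is then a finite set of size $\le 2^{j-k}$ that can be specified by $i,j,k$ and its ordinal number among the portions, i.e., by $i+O(\log n)$ bits. The element $x\in B$ lies in one such portion, which is therefore an $(i*(j-k))$-description of $x$ (with the usual $O(\log n)$ slack in the complexity parameter that pervades this section). For the second claim I repeat the same construction with portions of size $2^j$; now the number of portions is at most $O(2^{i-k})$, each portion is a set of size $\le 2^j$, and its complexity is bounded by $(i-k)+O(\log n)$, which gives the desired $((i-k)*j)$-description.

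The only step that requires any care is the counting bound $|B|\le O(2^{i+j-k})$, which is really where the two claims come from: once this is in hand, the portion-splitting machinery of Remark~\ref{rem:portion} converts it mechanically into the two descriptions asked for. There is no deep obstacle; the work is just making sure that the various polynomial overheads (number of programs, portion numbering, encoding of $i,j,k$) are absorbed into the $O(\log n)$ precision that is standard throughout the section.
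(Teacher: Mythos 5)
Your counting bound $\#B\le O(2^{i+j-k})$ and the enumerability of the set $B$ of ``rich'' strings are correct and match the paper's starting point, but the step you call mechanical is exactly where the real difficulty lies, and as written it has a gap. A description must be a \emph{finite set printable by a halting program} of the stated complexity (this is the point stressed in the remark after Proposition~\ref{prop:models-to-sets-1}: sets given only by an open-ended enumeration are not admissible models). When you cut the enumeration of $B$ into consecutive blocks of fixed size $2^{j-k}$ (or $2^j$), the block containing $x$ may be the final block that never fills up: since new elements of $B$ can keep appearing at unpredictable times and the enumeration has no recognizable end, the program ``run the enumeration and output the elements in positions $t\cdot2^{j-k}+1,\dots,(t+1)\cdot2^{j-k}$'' need not halt, so the ordinal number of the block does not specify it; closing that block would require extra information of roughly $\Omega$-type (e.g.\ the exact number of $(i*j)$-descriptions), far exceeding the $O(\log n)$ slack. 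This is not a corner case one can absorb: for instance, all $2^k$ descriptions of $x$ could intersect only in $x$, so that $B$ itself is smaller than one block. The observation before Remark~\ref{rem:portion} applies only to enumerations whose portions are closed at algorithmically recognizable moments, and your fixed-size blocks are not synchronized with any such moments.

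For the first claim the repair is small and turns your argument into the paper's: output the newly rich strings in batches, one batch per newly enumerated $(i*j)$-description (so at most $\approx 2^i$ portions, each completed as soon as that description is processed), and then invoke Remark~\ref{rem:portion} to split into pieces of size $2^{j-k}$. For the second, stronger claim the gap is more serious: synchronizing with descriptions gives $\approx 2^i$ portions, not the $\approx 2^{i-k}$ you need for complexity $i-k$, and your proposal contains no mechanism to reduce the portion count while keeping portions recognizably closed. The paper handles this with an extra device: when a string becomes rich, dump into the current portion all ``half-rich'' strings (those with at least $2^{k-1}$ descriptions so far); this at most doubles the number of enumerated elements, and after a dump at least $2^{k-1}$ new descriptions must appear before any string can become rich again, so only $O(2^{i-k})$ portions occur. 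Some idea of this kind is needed; the uniform-block version of your argument does not yield the $((i-k)*j)$-description.
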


Again we omit logarithmic term: in fact one should write $((i+O(\log n))*(j-k+O(\log n)))$, etc. The word ``even'' in the statement refers to Proposition~\ref{prop:description-shift} that shows that indeed the second claim is stronger.

\begin{proof}
Consider the enumeration of all objects having $(i*j)$-descriptions in $2^i$ portions of size $2^j$ (we ignore logarithmic additive terms and respective polynomial factors) as explained above. After each portion (i.e., new $(i*j)$-description) appears, we count the number of descriptions for each enumerated object and select objects that have at least $2^k$ descriptions. Consider a new enumeration process that enumerates only these ``rich'' objects (rich = having  many descriptions). We have at most $2^{i+j-k}$ rich objects (since they appear in the list of size $2^{i+j}$ with multiplicity $2^k$), enumerated in $2^i$ portions (new portion of rich objects may appear only when a new portion appears in the original enumeration). So we apply the observation above to conclude that all rich objects have $(i*(j-k))$-descriptions.

To get the second (stronger) statement we need to decrease the number of portions (while not increasing too much the number of enumerated objects). This can be done using the following trick: when a new rich object (having $2^k$ descriptions) appears, we enumerate not only rich objects, but also ``half-rich'' objects, i.e., objects that currently have at least $2^k/2$ descriptions. In this way we enumerate more objects --- but only twice more. At the same time, after we dumped all half-rich objects, we are sure that next $2^k/2$ new $(i*j)$-descriptions will not create new rich objects, so the number of portions is divided by $2^k/2$, as required.
\end{proof}

Let us say more accurately how we deal with logarithmic terms. We may assume that $i,j=O(n)$, otherwise the claim is trivial. Then we allow polynomial (in $n$) factors and $O(\log n)$ additive terms in all our considerations.

\begin{remark}
If we unfold this construction, we see that new descriptions (of smaller complexity) are not selected from the original sequence of descriptions but constructed from scratch. In Section~\ref{sec:restricted-type} we deal with much more complicated case where we restrict ourselves to descriptions from some class (say, Hamming balls). Then the proof given above does not work, since the description we construct is not a ball even if we start with ball descriptions. Still some other (much more ingenious) argument can be used to prove a similar result for the restricted case.
\end{remark}

Now we are ready to prove the promised results (see the discussion after Example~\ref{ex:stochasticity-optimality}).

\begin{thm}~\label{thm:improving-descriptions}
If a string $x$ of length $n$ is $(\alpha,\beta)$-stochastic, then there exists some finite set $B$ containing $x$ such that $\KP(B)\le \alpha+O(\log n)$ and $\delta(x,B)\le \beta+O(\log n)$.
\end{thm}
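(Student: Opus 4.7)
My plan is to reduce to the set-based setting and then apply the counting argument of Proposition~\ref{prop:improving-descriptions} to an initial set model $A$, producing the desired $B$ by trading complexity against a large value of $\KP(A\cnd x)$.

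First, by Proposition~\ref{prop:models-to-sets-1}, from the distribution $P$ witnessing $(\alpha,\beta)$-stochasticity of $x$ we obtain a finite set $A\ni x$ with $\KP(A)\le \alpha + O(\log n)$ and $d(x\cnd A)=\log\#A - \KP(x\cnd A)\le \beta+O(\log n)$. Set $i:=\KP(A)$ and $j:=\log\#A$, so $A$ is an $(i*j)$-description of $x$. If $\delta(x,A)\le \beta+O(\log n)$ we are done with $B:=A$. Otherwise, using the Levin--G\'acs identity $\KP(A,x)=\KP(x)+\KP(A\cnd x)$ (with logarithmic precision), observe the key identity
$$
\delta(x,A)-d(x\cnd A)=\KP(A)+\KP(x\cnd A)-\KP(x)=\KP(A\cnd x)+O(\log n).
$$
So the gap between the two deficiencies is exactly the complexity of $A$ given $x$.

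Let $k:=\KP(A\cnd x)-O(\log n)$. The set of all $(i*j)$-descriptions of $x$ can be enumerated from $x,i,j$, so if there were fewer than $2^k$ of them, $A$ could be specified by $x,i,j$ and its ordinal number, giving $\KP(A\cnd x)\le k+O(\log n)$, a contradiction. Hence $x$ has at least $2^k$ different $(i*j)$-descriptions. Proposition~\ref{prop:improving-descriptions} then produces an $((i-k)*j)$-description $B$ of $x$, i.e.\ a finite set $B\ni x$ with
$$
\KP(B)\le i-k+O(\log n),\qquad \log\#B\le j.
$$

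It remains to verify both requirements. For complexity: $\KP(B)\le i+O(\log n)=\KP(A)+O(\log n)\le \alpha+O(\log n)$. For optimality deficiency:
$$
\delta(x,B)=\KP(B)+\log\#B-\KP(x)\le (i-k)+j-\KP(x)+O(\log n).
$$
Substituting $k=\KP(A\cnd x)+O(\log n)$ and rearranging,
$$
(i+j-\KP(x))-\KP(A\cnd x)=\delta(x,A)-\KP(A\cnd x)=d(x\cnd A)+O(\log n)\le \beta+O(\log n),
$$
which gives $\delta(x,B)\le \beta+O(\log n)$, as required.

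The main subtlety I anticipate is bookkeeping: one must make sure that all $O(\log n)$ slacks (from Proposition~\ref{prop:models-to-sets-1}, from the Levin--G\'acs formula, from the enumeration argument giving many descriptions, and from Proposition~\ref{prop:improving-descriptions}) combine into a single $O(\log n)$ loss. The ``choose $k$'' step is the critical one: one must pick $k$ small enough that the counting argument certifies at least $2^k$ descriptions, yet large enough that the resulting $B$ improves both parameters simultaneously. Taking $k=\KP(A\cnd x)$ minus a logarithmic safety margin achieves both.
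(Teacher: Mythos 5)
Your proposal is correct and follows essentially the same route as the paper: reduce to a set model via Proposition~\ref{prop:models-to-sets-1}, identify the gap $\delta(x,A)-d(x\cnd A)$ with $\KP(A\cnd x)$, deduce that $x$ has many $(i*j)$-descriptions, and improve the description via Proposition~\ref{prop:improving-descriptions}. The only (harmless) deviation is that you invoke the stronger clause of that proposition to get an $((i-k)*j)$-description, whereas the paper uses the simple clause giving an $(i*(j-k))$-description; both yield $\delta(x,B)\le d(x\cnd A)+O(\log n)\le\beta+O(\log n)$, and your variant is exactly the step the paper reserves for Theorem~\ref{th:deficiencies}.
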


\begin{proof}
Since $x$ is $(\alpha,\beta)$-stochastic, there exists some finite set $A$ such that $\KP(A)\le\alpha$ and $d(x\cnd A)\le \beta$. Let $i=\KP(A)$ and $j=\log\# A$, so $A$ is an $(i*j)$-description of $x$. We may assume without loss of generality that both $\alpha$ and $\beta$ (and therefore $i$ and $j$) are $O(n)$, otherwise the statement is trivial. The value $\delta(x,A)$ may exceed $d(x\cnd A)$, as we have discussed at the beginning of this section. So we assume that
$$
k=\delta(x,A)-d(x\cnd A)>0;
$$
if not, we can let $B=A$. Then, as we have seen, $\KP(A\cnd x)\ge k-O(\log n)$, and there are at least $2^{k-O(\log n)}$ different $(i*j)$-descriptions of $x$. According to Proposition~\ref{prop:improving-descriptions}, there exists some finite set $B$ that is an $(i*(j-k+O(\log n)))$-description of $x$. Its optimality deficiency $\delta(x,B)$ is $(k-O(\log n))$-smaller (compared to $A$) and therefore $O(\log n)$-close to $d(x\cnd A)$.	
\end{proof}

In this argument we used the simple part of Proposition~\ref{prop:improving-descriptions}. Using the stronger statement about complexity decrease, we get the following result:

\begin{thm}[\cite{vv}]\label{th:deficiencies}
Let $A$ be a finite set containing a string $x$ of length $n$ and let $k=\delta(x,A)-d(x\cnd A)$. Then there is a finite set $B$ containing $x$ such that $\KP(B)\le \KP(A)-k+O(\log n)$ and $\delta(x,B)\le d(x\cnd A)+O(\log n)$.
\end{thm}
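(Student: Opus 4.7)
The plan is to re-run the argument of Theorem~\ref{thm:improving-descriptions} but invoke the \emph{stronger} half of Proposition~\ref{prop:improving-descriptions}, the one with ``even'', which trades a factor of $2^k$ in the number of descriptions for a saving of $k$ bits in the complexity of the model (rather than in its size).

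Set $i=\KP(A)$ and $j=\log\#A$, so that $A$ is an $(i*j)$-description of $x$. We may assume $\KP(A),\log\#A,\KP(x)=O(n)$; otherwise the statement is trivial. As noted at the start of Section~\ref{subsec:opt-rand}, the Levin--G\'acs symmetry-of-information identity gives
$$
\delta(x,A)-d(x\cnd A)=\KP(A,x)-\KP(x)=\KP(A\cnd x)+O(\log n),
$$
so $\KP(A\cnd x)\ge k-O(\log n)$. On the other hand, given $x$, $i$, and $j$ (the last two costing only $O(\log n)$ bits to specify) one can enumerate every $(i*j)$-description of $x$: enumerate all finite sets of complexity $\le i$ and keep only those of size $\le 2^j$ containing $x$. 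Hence if $N$ is the total number of $(i*j)$-descriptions of $x$, then $\KP(A\cnd x)\le\log N+O(\log n)$, and therefore $N\ge 2^{k-O(\log n)}$.

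Now apply the stronger half of Proposition~\ref{prop:improving-descriptions}: $x$ admits a finite set $B$ with $\KP(B)\le i-k+O(\log n)=\KP(A)-k+O(\log n)$ and $\log\#B\le j$. Its optimality deficiency is therefore
$$
\delta(x,B)=\KP(B)+\log\#B-\KP(x)\le \KP(A)-k+\log\#A-\KP(x)+O(\log n),
$$
and the right-hand side equals $\delta(x,A)-k+O(\log n)=d(x\cnd A)+O(\log n)$ by the definition of $k$. Both required inequalities are met.

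The real content --- turning many $(i*j)$-descriptions of a common $x$ into one model of strictly smaller complexity --- has already been absorbed into Proposition~\ref{prop:improving-descriptions} (in particular into the portion-halving trick that underlies its second, stronger claim). So there is no genuine obstacle remaining here; what one has to be careful about is the symmetry-of-information rewriting $\delta-d = \KP(A\cnd x)+O(\log n)$, and the bookkeeping of the $O(\log n)$ correction terms that arise when encoding $i,j$ and when invoking Levin--G\'acs on objects whose complexities are polynomially bounded in~$n$.
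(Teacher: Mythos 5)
Your proposal is correct and follows essentially the same route as the paper: the paper proves Theorem~\ref{th:deficiencies} exactly by rerunning the argument of Theorem~\ref{thm:improving-descriptions} (the rewriting $\delta(x,A)-d(x\cnd A)=\KP(A\cnd x)+O(\log n)$, hence at least $2^{k-O(\log n)}$ many $(i*j)$-descriptions) and then invoking the stronger ``even'' clause of Proposition~\ref{prop:improving-descriptions} to obtain an $((i-k)*j)$-description $B$, whose deficiency is computed just as you do. The only cosmetic point is the degenerate case $k=O(\log n)$, where one simply takes $B=A$; this is absorbed by the logarithmic error terms.
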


\begin{proof}
Indeed, if $B$ is an $((i-k)*j)$-description of $x$ (up to logarithmic terms, as usual), then its optimality deficiency is again $(k-O(\log n))$-smaller (compared to $A$) and therefore $O(\log n)$-close to $d(x\cnd A)$.
\end{proof}

Note that the statement of the theorem implies that $d(x\cnd B)\le d(x\cnd A)+O(\log n)$.

Theorem~\ref{thm:improving-descriptions} and Proposition~\ref{prop:randomness-optimality} show that we can replace the randomness deficiency in the definition of $(\alpha,\beta)$-stochastic strings by the optimality deficiency (with logarithmic precision). More specifically, for every string $x$ of length $n$ consider the sets\label{def:qx}
$$
Q_x=\{( \alpha,\beta)\mid \text{$x$ is
$(\alpha,\beta)$-stochastic}\},
$$
and
$$
\tilde Q_x=\{( \alpha,\beta)\mid \text{there exists $A\ni x$ with }
\KP(A)\le \alpha,\ \delta(x,A)\le\beta)\}.
$$
Then these sets are at most $O(\log n)$ apart (each is contained in the $O(\log n)$-neighborhood of the other one).

This remark, together with the existence of antistochastic strings of given complexity and length, allows us to improve the result about the existence of non-stochastic objects (Proposition~\ref{prop:existence-nonstochastic}).

\begin{proposition}[\protect{\cite[Theorem IV.2]{gtv}}]\label{prop:existence-nonstochastic-strong}
For some $c$ and for all $n$: if $\alpha+\beta<n-c\log n$, there exist strings of length $n$ that are not $(\alpha,\beta)$-stochastic.
\end{proposition}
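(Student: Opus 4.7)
The plan is to combine Theorem~\ref{stat-any-curve} (producing an antistochastic string with a prescribed boundary) with Theorem~\ref{thm:improving-descriptions} (allowing us to switch from randomness deficiency to optimality deficiency in the definition of stochasticity, with logarithmic loss). Concretely, I will construct a string whose profile $P_x$ hugs the line $i+j=n$ all the way down to some chosen abscissa $k$, chosen so that $\alpha$ sits strictly to the left of $k$ and $k$ sits strictly below $n-\beta$.

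Choose an integer $k$ with $\alpha + c_1\log n \le k \le n - \beta - c_1\log n$, which is possible whenever $\alpha+\beta < n - 2c_1\log n$. Apply Theorem~\ref{stat-any-curve} to the strictly decreasing sequence $t_0 = n,\ t_1 = n-1,\ \ldots,\ t_{k-1} = n-k+1,\ t_k = 0$. This sequence is determined by $n$ and $k$, so its complexity is $O(\log n)$. The theorem (together with the remark allowing length exactly $n$) yields a string $x$ of length $n$ and complexity $k + O(\log n)$ whose profile $P_x$ is $O(\log n)$-close to the set $T = \{(i,j) : (i<k)\Rightarrow (j>t_i)\}$.

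Now suppose for contradiction that $x$ is $(\alpha,\beta)$-stochastic. By Theorem~\ref{thm:improving-descriptions} there exists a finite set $A \ni x$ with $\KP(A)\le \alpha + O(\log n)$ and $\delta(x,A) \le \beta + O(\log n)$. Writing $i=\KP(A)$ and $j=\log\#A$, the pair $(i,j)$ lies in $P_x$, and since $i \le \alpha + O(\log n) < k$ (by our choice of $k$ with enough slack), the profile-hugging property of $x$ forces $j \ge t_i - O(\log n) = n - i - O(\log n)$. Hence $i + j \ge n - O(\log n)$, giving
$$
\delta(x,A) = i + j - \KP(x) \ge n - k - O(\log n) \ge \beta + c_1\log n - O(\log n).
$$
For $c_1$ chosen larger than the hidden constants, this contradicts $\delta(x,A) \le \beta + O(\log n)$. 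Taking $c$ correspondingly large, we conclude that no $n$-bit string satisfying the construction is $(\alpha,\beta)$-stochastic whenever $\alpha+\beta < n - c\log n$.

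The main obstacle is purely bookkeeping: tracking the various $O(\log n)$ slacks so that the inequality $\alpha < k < n-\beta$ survives with enough margin for Theorem~\ref{stat-any-curve} to place $\alpha$ strictly inside the slope-$(-1)$ regime of the profile. No substantive new combinatorial or complexity-theoretic argument is needed beyond the two results already cited.
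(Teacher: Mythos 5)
Your proposal is correct and follows essentially the same route as the paper: you build an antistochastic string (via Theorem~\ref{stat-any-curve} with the slope-$(-1)$ profile) whose complexity $k$ is wedged between $\alpha$ and $n-\beta$ with logarithmic margins, and then use Theorem~\ref{thm:improving-descriptions} to convert assumed $(\alpha,\beta)$-stochasticity into a low-complexity description whose optimality deficiency must exceed $\beta$, a contradiction. The paper's proof does exactly this with the specific choice $k=\alpha+c'\log n$ and phrases the clash geometrically as two separated regions in the $(i,j)$-plane, so the only difference is presentational bookkeeping.
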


\begin{figure}[h]
\begin{center}
\includegraphics[scale=1]{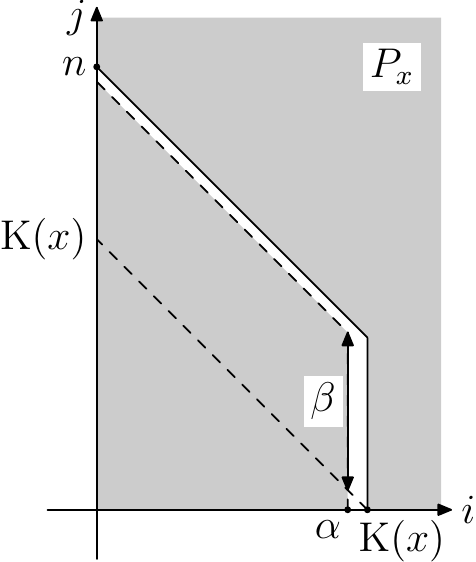}
\end{center}
\caption{Non-stochastic strings revisited. Left gray area corresponds to descriptions $A$ with $\KP(A)\le\alpha$ and $\delta(x,A)\le\beta$.}\label{mdl-9}
\end{figure}

\begin{proof}
Assume that integers $n,\alpha,\beta$ are given  such that $\alpha+\beta<n-c\log n$ (where the constant $c$ will be chosen later).  Let $x$ be an antistochastic string of length $n$  that has complexity $\alpha+d$ where $d$ is some positive number (see below about the choice of $d$). More precisely, for every given $d$ there exists a string $x$ whose complexity is $\alpha + d+O(\log n)$, length is $n+O(\log n)$, and the set $P_x$ is $O(\log n)$-close to the upper gray area (Figure~\ref{mdl-9}).

Assume that $x$ is $(\alpha,\beta)$-stochastic. Then (Theorem~\ref{thm:improving-descriptions}) the string $x$ has an  $(i*j)$-description with $i\le\alpha$ and $i+j\le\KP(x)+\beta$ (with logarithmic precision). The set of pairs $(i,j)$ satisfying these inequalities is shown as the lower gray area.   We have to choose $c$ in such a way that for some $d$  these two gray are disjoint and even separated by a gap of logarithmic size  (since they are known only with $O(\log n)$-precision). Note first that for $d=c'\log n$ with large enough $c'$ we guarantee the vertical gap (the vertical segments of the boundaries of two gray areas are far apart). Then we select $c$ large enough to guarantee that the diagonal segments of the boundaries of two gray areas are far apart ($\alpha+\beta<n$ with logarithmic margin).
\end{proof}

The transition from randomness deficiency to optimality deficiency (Theorem~\ref{thm:improving-descriptions}) has the following geometric interpretation.

\begin{thm}\label{thm:def-opt}
The sets $Q_x$ and $P_x$ are related to each other via an affine transformation $(\alpha,\beta)\mapsto (\alpha, \KP(x)-\alpha+\beta)$, as Figure~\ref{mdl.8} shows.\footnote{Technically speaking, this holds only for $\alpha\le\KP(x)$. For $\alpha>\KP(x)$ both sets contain all pairs with first component $\alpha$.}
\end{thm}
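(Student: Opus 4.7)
My plan is to exploit the set $\tilde Q_x$ from Section~\ref{subsec:opt-rand}, i.e., the set of pairs $(\alpha,\beta)$ for which some $A\ni x$ satisfies $\KP(A)\le\alpha$ and $\delta(x,A)\le\beta$. Proposition~\ref{prop:randomness-optimality} together with Theorem~\ref{thm:improving-descriptions} has already established that $Q_x$ and $\tilde Q_x$ agree within $O(\log n)$, so it suffices to show that $\tilde Q_x$ and $P_x$ correspond (with logarithmic precision) under the affine map $(\alpha,\beta)\mapsto(\alpha,\KP(x)-\alpha+\beta)$, whose inverse is $(i,j)\mapsto(i,i+j-\KP(x))$.

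The direction $P_x\to\tilde Q_x$ is immediate from the definitions: an $(i*j)$-description $A$ of $x$ satisfies $\delta(x,A)=\KP(A)+\log\#A-\KP(x)\le i+j-\KP(x)+O(1)$, placing $(i,i+j-\KP(x))$ in $\tilde Q_x$.

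For the harder direction $\tilde Q_x\to P_x$, let $(\alpha,\beta)\in\tilde Q_x$ be witnessed by $B\ni x$ with $\KP(B)\le\alpha$ and $\delta(x,B)\le\beta$. Expanding $\delta$ gives $\log\#B=\delta(x,B)+\KP(x)-\KP(B)\le\KP(x)-\KP(B)+\beta$. This size bound is only tight if the complexity bound is tight, so when the slack $s=\alpha-\KP(B)$ is positive I convert it into a size reduction via Proposition~\ref{prop:description-shift}, obtaining $B'$ with $\KP(B')\le\alpha+O(\log s)$ and $\log\#B'\le\log\#B-s\le\KP(x)-\alpha+\beta$; hence $(\alpha,\KP(x)-\alpha+\beta)\in P_x$ with the required logarithmic precision. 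The boundary case $\alpha>\KP(x)$ mentioned in the footnote is handled separately: the singleton $\{x\}$ is a $(\KP(x)*0)$-description and a perfect model, so every pair with first coordinate $\alpha\ge\KP(x)$ lies in both sets at once.

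The main obstacle is verifying that Proposition~\ref{prop:description-shift} is legitimately applied, i.e., that $s\le\log\#B$. Since $B\ni x$ we have $\log\#B\ge\KP(x)-\KP(B)-O(1)$; combined with the assumption $\alpha\le\KP(x)$ this yields $s=\alpha-\KP(B)\le\KP(x)-\KP(B)\le\log\#B+O(1)$, just enough within the $O(\log n)$ tolerance. This check is exactly why the theorem's footnote restricts the statement to $\alpha\le\KP(x)$ and treats the regime $\alpha>\KP(x)$ as a trivial boundary case.
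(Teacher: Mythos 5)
Your proposal is correct and follows essentially the same route as the paper: replace $Q_x$ by $\tilde Q_x$ (justified by Proposition~\ref{prop:randomness-optimality} and Theorem~\ref{thm:improving-descriptions}), observe that the correspondence is definitional up to the mismatch of $\le$-closures, and repair that mismatch by converting complexity slack into a size reduction via Proposition~\ref{prop:description-shift}. Your explicit check that the slack $s=\alpha-\KP(B)$ does not exceed $\log\#B$ (using $\alpha\le\KP(x)$) is a detail the paper leaves implicit, but it is the same argument.
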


\begin{figure}[h]
\begin{center}\includegraphics{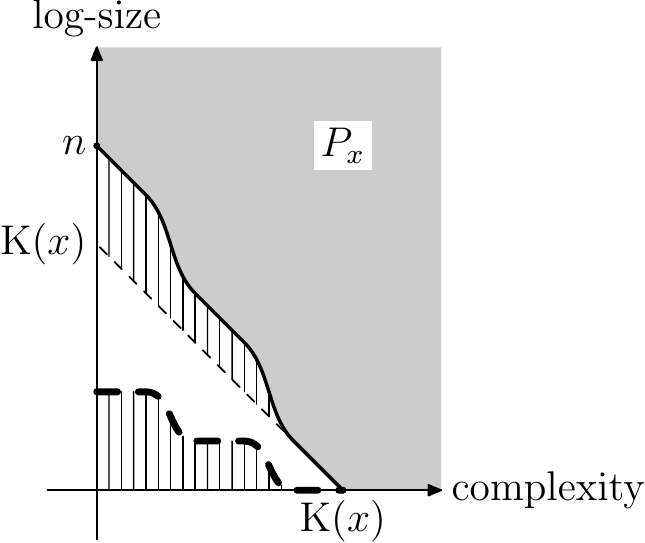}\end{center}
\caption{The set $P_x$ and the boundary of the set $Q_x$ (bold dotted line); on every vertical line two intervals have the same length.}\label{mdl.8}
\end{figure}

As usual, this statement is true with logarithmic accuracy: the distance between the image of the set $Q_x$ under this transformation and the set $P_x$ is claimed to be $O(\log n)$ for string $x$ of length $n$.

\begin{proof}
As we have seen, we may use the optimality deficiency instead of randomness deficiency, i.e., use the set $\tilde Q_x$ in place of $Q_x$. The preimage of the pair $(i,j)$ under our affine transformation is the pair $(i,i+j-\KP(x))$. Hence we have to prove that a pair $(i,j)$ is in $P_x$ if and only if the pair $(i,i+j-\KP(x))$ is in $\tilde Q_x$. Note that $\KP(A)=i$ and $\log\#A=j$ is equivalent to $\KP(A)=i$ and $\delta(x,A)=i+j-\KP(x)$ just by definition of $\delta(x,A)$. (See Figure~\ref{mdl.8}: the optimality deficiency of a description $A$ with $\KP(A)=i$ and $\log\#A =j$ is the vertical distance between $(i,j)$ and the dotted line.)

But there is some technical problem:  in the definition of $P_x$ we used inequalities $\KP(A)\le i$ and $\log\#A\le j$, not the equalities $\KP(A)=i$ and $\log\#A=j$. The same applies to the definition of $\tilde Q_x$. So we have two sets that correspond to each other, but their $\le$-closures could be different. Obviously, $\KP(A)\le i$ and $\log\#A\le j$ imply $\KP(A)\le i$ and $\KP(A)+\log\#A -\KP(x)\le i+j-\KP(x)$, but not vice versa.

In other words, the set of pairs $(\KP(A),\log\#A)$ satisfying the latter inequalities (see the right set on Figure~\ref{mdl-10-11}) is bigger than the set of pairs $(\KP(A),\log\#A)$ satisfying the former inequalities (see the left  set   on Figure~\ref{mdl-10-11}).
\begin{figure}[h]
\begin{center}
\includegraphics[scale=1]{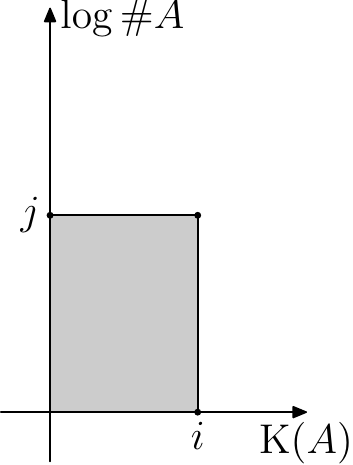} \qquad\qquad
\includegraphics[scale=1]{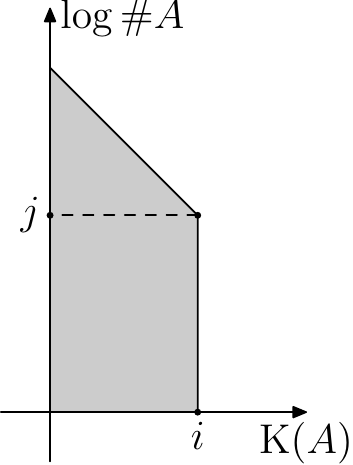}
\end{center}
\caption{The left picture shows (for given $i$ and $j$) the set of all pairs $(\KP(A),\log\#A)$ such that $\KP(A)\le i$ and $\log\#A\le j$; the right picture shows the pairs $(\KP(A),\log\#A)$ such that $\KP(A)\le i$ and $\delta(x,A)\le i+j-\KP(x)$.}\label{mdl-10-11}
\end{figure}
Now Proposition~\ref{prop:description-shift} helps: we may use it to convert any set with parameters from the right region into a set with parameters from the left region.
\end{proof}

\begin{remark}
Let us stress again that Theorem~\ref{thm:improving-descriptions} claims only that the \emph{existence} of a set $A\ni x$  with $\KP(A)\le\alpha$ and $d(x\cnd A)\le\beta$ is  equivalent to  the existence of a set $B\ni x$ with $\KP(B)\le\alpha$ and $\delta(x\cnd A)\le\beta$ (with logarithmic accuracy). The theorem does \emph{not} claim that for \emph{every} set $A\ni x$ with complexity at most $\alpha$ the  inequalities $d(x\cnd A)\le \beta$ and  $\delta(x,A)\le \beta$  are equivalent (with logarithmic accuracy). Indeed, the Example~\ref{ex:stochasticity-optimality} shows that this is not true: the first inequality does not imply the second one in general case. However, Theorems~\ref{thm:improving-descriptions} and~\ref{th:deficiencies} show that this can happen only for non-minimal descriptions (for which the description with smaller complexity and the same optimality deficiency) exists. Later we will see that all the minimal descriptions of the same (or almost the same) complexity have almost the same information. Moreover, if $A$ and $B$ are minimal descriptions and the complexity of $A$ is less than that  of $B$ then $\KS(A\cnd B)$ is small.
\end{remark}

For the people with taste for philosophical speculations the meaning of Theorems~\ref{thm:improving-descriptions} and~\ref{th:deficiencies} can be advertised as follows. Imagine several scientists that compete in providing a good explanation for some data $x$. Each explanation is a finite set $A$ containing $x$ together with a program $p$ that computes $A$.

How should we compare different explanations? We want the randomness deficiency $d(x\cnd A)$ of $x$ in $A$ to be negligible (no features of $x$ remain unexplained). Among these descriptions we want to find the simplest one (with the shortest $p$). That is, we look for a set $A$ corresponding to the point where the bold dotted line on Fig.~\ref{mdl.8} touches the horizontal axis. (In fact, there is always some trade-off between the parameters, not the specific exact point where the curve touches the horizontal axis, but we want to keep the discussion simple though imprecise.)

However, this approach meets the following obstacle: we are unable to compute randomness deficiency $d(x\cnd A)$. Moreover, the inventor of the model $A$ has no ways to convince us that the deficiency is indeed negligible if it is the case (the function $d(x\cnd A)$ is not even upper semicomputable). What could be done? Instead, we may look for an explanation with (almost) minimal sum $\log\#A+|p|$ (minimum description length principle). Note that this quantity is known for competing explanation proposals. Theorems~\ref{thm:improving-descriptions} and~\ref{th:deficiencies} provide the connection between these two approaches.

Returning to mathematical language, we have seen in this section that two approaches (based on $(i*j)$-descriptions and $(\alpha,\beta)$-stochasticity) produce essentially the same curve, though in different coordinates. The other ways to get the same curve will be discussed in Sections~\ref{sec:bcl} and~\ref{sec:depth}.

\subsection{Historical remarks}
The idea to consider $(i*j)$-descriptions with optimal parameters can be traced back to Kolmogorov. There is a short record for his talk given in 1974~\cite{kolmmmo}. Here is the (full) translation of this note:
\begin{quote}
For every constructive object $x$ we may consider a function $\Phi_x(k)$ of an integer argument $k\ge 0$ defined as a logarithm of the minimal cardinality of a set of complexity at most $k$ containing $x$. If $x$ itself has a simple definition, then $\Phi_x(1)$ is equal to one [a typo: cardinality equals $1$, and logarithm equals $0$] already for small $k$. If such a simple definition does not exist, $x$ is ``random'' in the negative sense of the word ``random''. But $x$ is positively ``probabilistically random'' only if the function $\Phi$ has a value $\Phi_0$ for some relatively small $k$ and then decreases approximately as $\Phi(k)=\Phi_0 -(k-k_0)$. [This corresponds to approximate $(k_0,0)$-stochasticity.]
\end{quote}

Kolmogorov also gave a talk in 1974~\cite{kolm}; the content of this talk was reported by Cover~\cite[Section 4, page 31]{cover1985}. Here $l(p)$ stands for the length of a binary string $p$ and $|S|$ stands for the cardinality of a set $S$.
\begin{quote}
4. \textbf{Kolmogorov's $H_k$ Function}

Consider the function $H_k\colon \{0,1\}^k\to N$,  $H_k(x)=\min_{p\colon l(p)\le k} \log |S|$, where the minimum is taken over all subsets $S\subseteq\{0,1\}^n$, such that $x\in S$, $U(p)=S$, $l(p)\le k$. This definition was introduces by Kolmogorov in a talk at the Information Symposium, Tallinn, Estonia, in 1974. Thus $H_k(x)$ is the log of the size of the smallest set containing $x$ over all sets specifiable by a program of $k$ or fewer bits. Of special interest is the value
$$
k^*(x)=\min\{k\colon H_k(x)+k=K(x)\}.
$$
Note that $\log |S|$ is the maximal number of bits necessary to describe an arbitrary element $x\in S$. Thus a program for $x$ can be written in two stages: ``Use $p$ to print the indicator function for $S$; the desired sequence is the $i$th sequence in a lexicographic ordering of the elements of this set''. This program has length $l(p)+\log |S|$, and $k^*(x)$ is the length of the shortest program $p$ for which this $2$-stage description is as short as the best $1$-stage description $p^*$. We observe that $x$ must be maximally random with respect to $S$ --- otherwise the $2$-stage description could be improved, contradicting the minimality of $K(x)$. Thus $k^*(x)$ and its associated program $p$ constitute a minimal sufficient description for $x$. $\langle\ldots\rangle$

Arguments can be provided to establish that $k^*(x)$ and its associated set $S^*$ describe all of the ``structure'' of $x$. The remaining details about $x$ are conditionally maximally complex. Thus $pp^{**}$, the program for $S^*$, plays the role of a sufficient statistic.

\end{quote}
In both places Kolmogorov speaks about the place when the boundary curve of $P_x$ reaches its lower bound determined by the complexity of $x$.

Later the same ideas were rediscovered and popularized by many people. Koppel in~\cite{koppel87} reformulates the definition using total algorithms. Instead of a finite set $A$ he considered a total program $P$ that terminates on all strings of some length. The two-part description of some $x$ is then formed by this program $P$ and the input $D$ for this program that is mapped to $x$. In our terminology this corresponds to the set $A$ of all values of $P$ on the strings of the same length as $D$. He writes then~\cite[p.~1089]{koppel87}
\begin{quote}
\textbf{Definition 3.} The $c$-sophistication of a finite string $S$ [is defined as] $$\text{SOPH}_c(S)=\min\{|P|\mid \exists D\text{ s. t. } (P,D)\text{ is a $c$-minimal description of $\alpha$}\}.$$
\end{quote}
There is a typo in this paper: $S$ should be replaced by $\alpha$ (two times). Before in Definition 1 the description is called $c$-minimal if $|P|+|D|\le H(\alpha)+c$ (here $P$ and $D$ are the program and and its input, respectively, $H$ stands for complexity).

Though  this paper (as well as the subsequent papers~\cite{koppel,koppel-atlan}) is not technically clear (e.g., it does not say what are the requirements for the algorithm $U$ used in the definition, and in~\cite{koppel,koppel-atlan} only universality is required, which is not enough: if $U$ is not optimal, the definition does not make sense), the philosophic motivation for this notion is explained clearly~\cite[p.~1087]{koppel87}:
\begin{quote}
The total complexity of an object is defined as the size of its most concise description. The total complexity of an object can be large while its ``meaningful'' complexity is low; for example, a random object is by definition maximally complex but completely lacking in structure.

$\langle\ldots\rangle$ The ``static'' approach to the formalization of meaningful complexity is ``sophistication'' defined and discussed by Koppel and Atlan [reference to unpublished paper ``Program-length complexity, sophistication, and induction'' is given, but later a paper of same authors~\cite{koppel-atlan} with a similar title appeared]. Sophistication is a generalization of the ``H-function'' or ``minimal sufficient statistic'' by Cover and Kolmogorov~$\langle\ldots\rangle$ The sophistication of an object in the size of that part of that object which describes its structure, i.e. the aggregate of its projectible properties.
\end{quote}

One can also mention the formulation of ``minimal description length'' principle by Rissanen~\cite{rissanen78}; the abstract of this paper says: ``Estimates of both integer-valued structure parameters and real-valued system parameters may be obtained from a model based on the shortest data description principle''; here ``integer-valued structure parameters'' may correspond to the choice of a statistical hypothesis (description set) while ``real-valued system parameters'' may correspond to the choice of a specific element in this set. The author then says that ``by finding the model which minimizes the description length one obtains estimates of both the integer-valued structure parameters and the real-valued system parameters''.

We do not try here to follow the development of these and similar ideas. Let us mention only that the traces of the same ideas (though even more vague) could be found in 1960s in the classical papers of Solomonoff~\cite{sol1,sol2} who tried to use shortest descriptions for inductive inference (and, as a side product, gave the definition of complexity later rediscovered by Kolmogorov~\cite{kolm65}). One may also mention a ``minimum message length principle'' that goes back to~\cite{wallace-boulton68}; the idea of two-part description is explained in~\cite{wallace-boulton68} as follows:
\begin{quote}
If the things are now classified then the measurements can be recorded by listing the following:

1. The class to which each thing belongs.

2. The average properties of each class.

3. The deviations of each thing from the average properties of its parent class.

If the things are found to be concentrated in a small area of the region of each class in the measurement space then the deviations will be small, and with reference to the average class properties most of the information about a thing is given by naming the class to which it belongs. In this case the information may be recorded much more briefly than if a classification had not been used. We suggest that the best classification is that which results in the briefest recording of all the attribute information.
\end{quote}
Here the ``class to which thing belongs'' corresponds to a set (statistical model, description in our terminology); the authors say that if this set is small, then only few bits need to be added to the description of this set to get a full description of the thing in question.

The main technical results of this sections  (Theorems~\ref{stat-any-curve},~\ref{thm:improving-descriptions}, and~\ref{th:deficiencies}) are taken from~\cite{vv} (where some historical account is provided).

\section{Bounded complexity lists}\label{sec:bcl}

In this section we show one more classification of strings that turns out to be  equivalent (up to coordinate change) to the previous ones: for a given string $x$ and $m\ge \KS(x)$ we look how close $x$ is to the end in the enumeration of all strings of complexity at most $m$. For technical reasons it is more convenient to use plain complexity $\KS(x)$ instead of the prefix version $\KP(x)$. As we have mentioned, the difference between them is only logarithmic, and we mainly ignore terms of that size.

\subsection{Enumerating strings of complexity at most $m$}

Consider some integer $m$, and all strings $x$ of (plain) complexity at most $m$. Let $\Omega_m$ be the number of those strings. The following properties of $\Omega_m$ are well known and often used (see, e.g., \cite{bienvenu-desfontaines-shen}).

\begin{proposition}\label{prop:omegas}
\leavevmode
\begin{itemize}
\item $\Omega_m=\Theta(2^m)$ (i.e., $c_12^m\le \Omega_m\le c_2 2^m$ for some positive constants $c_1,c_2$ and for all~$m$;
\item $\KS(\Omega_m)=m+O(1)$.
\end{itemize}
\end{proposition}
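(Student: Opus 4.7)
The plan is to handle each bullet separately by standard counting and enumeration arguments.

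For $\Omega_m=\Theta(2^m)$, the upper bound is immediate: there are fewer than $2^{m+1}$ binary strings of length at most $m$, so the number of strings produced by programs of length $\le m$ is $O(2^m)$. For the lower bound, every string $y$ of length $m-c$ (for a constant $c$ that absorbs the length overhead of the identity decompressor) satisfies $\KS(y)\le m$: just prepend a fixed constant-size prefix that tells the universal machine to copy its input. Since there are $2^{m-c}$ such strings, $\Omega_m\ge 2^{m-c}$.

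For the upper bound $\KS(\Omega_m)\le m+O(1)$, I would use the first bullet: $\Omega_m\le c_2\cdot 2^m$, so the integer $\Omega_m$ has at most $m+O(1)$ bits, and plain complexity of an integer is bounded by its bit length up to an additive constant (no self-delimiting overhead is required for plain complexity of a single object). For the lower bound $\KS(\Omega_m)\ge m-O(1)$, which is the real content, the plan is a standard halting-problem encoding argument. Suppose $\KS(\Omega_m)\le m-c$; let $p$ be a shortest program for $\Omega_m$. From $\Omega_m$ one recovers $m$ (since $m=\lfloor\log \Omega_m\rfloor+O(1)$ by the first bullet). Now dovetail the universal machine on all inputs of length $\le m$ and collect distinct outputs; once exactly $\Omega_m$ distinct outputs have appeared, by definition the collected set equals the set of strings of complexity $\le m$, and the enumeration can be halted. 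Output the lexicographically first string (say, of length $m+1$) not in this set; such a string exists because the set has size $O(2^m)$ while there are $2^{m+1}$ candidates. By construction this string has complexity strictly greater than $m$, yet it has been produced from $p$ by a fixed algorithm, so its complexity is at most $|p|+O(1)\le m-c+O(1)$. For $c$ large enough this is a contradiction.

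I expect the main obstacle, if any, to be keeping the halting-problem argument clean: one has to be careful that $m$ can be extracted from $\Omega_m$ up to $O(1)$ (which follows from the \emph{already-proved} first bullet, so the two parts of the proposition should be established in this order), and that the "first missing string" has complexity genuinely exceeding $m$, not merely $\ge m$. Everything else is routine counting; no new idea beyond the classical fact that knowing the number of halting programs up to a given length is equivalent to solving the halting problem at that length.
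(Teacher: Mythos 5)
Your proof is correct, but it is organized differently from the paper's. For the bound $\Omega_m=\Omega(2^m)$ you use direct counting (all $2^{m-c}$ strings of length $m-c$ have plain complexity at most $m$), whereas the paper gets this bound as a by-product of the same incompressibility argument it uses for $\KS(\Omega_m)\ge m-O(1)$: if $\Omega_m$ were an $(m-d)$-bit number, one could describe a string of complexity greater than $m$ by a self-delimiting code for $d$ followed by $\Omega_m$ in binary, i.e.\ in $m-d+O(\log d)$ bits, forcing $d=O(1)$; replacing the binary representation by a shortest program for $\Omega_m$ then gives the complexity lower bound. Your counting argument is more elementary and arguably cleaner for the first bullet. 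For the second bullet the underlying idea is the same (knowing $\Omega_m$ lets one enumerate all strings of complexity at most $m$ and then output a string of larger complexity), but you recover $m$ from $\Omega_m$ itself via $m=\lfloor\log\Omega_m\rfloor+O(1)$, while the paper recovers $m$ from the format of the description (the explicitly encoded $d$ plus the length $m-d$ of the remaining part). Your route has the small wrinkle you yourself flag: $\Omega_m$ determines $m$ only up to an additive constant, so strictly you need a constant-length prefix encoding the offset $m-\lfloor\log\Omega_m\rfloor$ (or a case analysis over the constantly many candidates); since this costs only $O(1)$ bits it does not affect the bound, but it should be said explicitly, because with a wrong guess of $m$ the stopping rule ``wait for exactly $\Omega_m$ outputs'' no longer guarantees that the enumerated set is the whole set of strings of complexity at most $m$. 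Your ordering requirement (first bullet before second) is a consequence of this choice and is absent from the paper's proof, which derives both lower bounds simultaneously.
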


\begin{proof}
The number of strings of complexity at most $m$ is bounded by the total number of programs of length at most $m$, which is $O(2^m)$. On the other hand, if $\Omega_m$ is an $(m-d)$-bit number, we can specify a string of complexity greater than $m$ using $m-d+O(\log d)$ bits: first we specify $d$ in a self-delimiting manner using $O(\log d)$ bits, and then append $\Omega_m$ in binary. This information allows us to reconstruct $d$, then $m$ and $\Omega_m$, then enumerate strings of complexity at most $m$ until we have $\Omega_m$ of them (so all strings of complexity at most $m$ are enumerated), and then take the first string $x_m$ that has not been enumerated. As $m<\KS(x_m)\le m-d+O(\log d)$, the value of $d$ is bounded by a constant and hence $\Omega_m$ is an $(m-O(1))$-bit number.

In this argument the binary representation of $\Omega_m$ can be replaced by its program, so $\KS(\Omega_m)\ge m-O(1)$. The upper bound $m+O(1)$ is obvious, since $\Omega_m=O(2^m)$.
\end{proof}

Given $m$, we can enumerate all strings of complexity at most $m$. How many steps needs the enumeration algorithm to produce all of them? The answer is provided by the so-called \emph{busy beaver numbers}; let us recall their definition in terms of Kolmogorov complexity (see~\cite[section 1.2.2]{usv} for details).

By definition, the number $B(m)$ is the maximal integer of complexity at most $m$. It is not hard to see that $\KS(B(m))= m+O(1)$. Indeed, $\KS(B(m))\le m$ by definition. On the other hand, the complexity of the next number $B(m)+1$ is greater than $m$ and at the same time is bounded by $\KS(B(m))+O(1)$.

Note that $B(m)$ can be undefined for small $m$ (if there are no integers of complexity at most $m$) and that $B(m+1)\ge B(m)$ for all $m$. For some $m$ this inequality may not be strict. This happen, for example, if the optimal algorithm used to define Kolmogorov complexity is defined only on strings of, say, even lengths; this restriction does not prevent it from being optimal, but then $B(2n)=B(2n+1)$ for all $n$, since there are no objects of complexity exactly $2n+1$. However, for some  constant $c$ we have $B(m+c)>B(m)$ for all $m$. Indeed, consider a program $p$ of length at most $m$ that prints  $B(m)$. Transform it to a program $p'$ that runs $p$ and then adds $1$ to the result. This program witnesses that $\KS(B(m)+1)\le m+c$ for some constant $c$. Hence $B(m+c)\ge B(m)+1$.

Now we define $B'(m)$ as follows. As we have said, the set of all strings of complexity at most $m$ can be enumerated given $m$. Fix some enumeration algorithm $A$ (with input $m$) and some computation model. Then let $B'(m)$ be the number of steps used by this algorithm to enumerate all the strings of complexity at most $m$.

\begin{proposition}\label{prop:busy-beavers}
The numbers $B(m)$ and $B'(m)$ coincide up to $O(1)$-change in $m$. More precisely, we have
$$
B'(m)\le B(m+c), \qquad
B(m)\le B'(m+c)
$$
for some $c$ and for all $m$.
\end{proposition}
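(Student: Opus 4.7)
The plan is to prove the two inequalities separately; each uses a different idea.

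For the first inequality $B'(m) \le B(m+c)$, I would argue that $B'(m)$ is itself an integer of complexity at most $m+O(1)$, from which the inequality follows by the definition of $B$. Given the pair $(m, \Omega_m)$, one can compute $B'(m)$ by running the fixed enumeration algorithm $A$ on input $m$ and counting steps until $\Omega_m$ distinct strings have been produced (we are told the algorithm exists and $\Omega_m$ by Proposition~\ref{prop:omegas} tells us exactly when the enumeration is complete). Hence $\KS(B'(m) \mid m, \Omega_m) = O(1)$. Since $\Omega_m = \Theta(2^m)$, the value of $m$ is determined by $\Omega_m$ up to $O(1)$ choices, so $\KS(m, \Omega_m) \le \KS(\Omega_m) + O(1)$. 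Combining with the bound $\KS(\Omega_m) \le m + O(1)$ from Proposition~\ref{prop:omegas}, we get $\KS(B'(m)) \le m + O(1)$, and then $B'(m) \le B(m+c)$ by definition of $B$.

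For the second inequality $B(m) \le B'(m+c)$, the naive idea of using a program that outputs $B(m)$ only gives $B'(m+c) \ge \log B(m)$, because outputting an integer of value $B(m)$ takes $\Theta(\log B(m))$ steps, not $B(m)$ steps. The fix is to pass to a string whose \emph{length} is $B(m)$. Consider the string $s = 0^{B(m)}$, a run of $B(m)$ zeros. From a shortest program for $B(m)$ we can construct a program that outputs $s$, so $\KS(s) \le \KS(B(m)) + O(1) \le m + O(1)$, and thus $s$ has complexity at most $m+c$ for some constant $c$. Therefore $A$ on input $m+c$ must eventually output $s$. But \emph{any} program producing $s$ must run for at least $|s| = B(m)$ steps merely to write its bits, so $s$ is not enumerated before step $B(m)$. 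Hence $B'(m+c) \ge B(m)$.

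The main obstacle in this argument is the second inequality: one has to realize that the relevant connection between $B$ and $B'$ is not through the output \emph{value} of the busy-beaver program (which would only give a logarithmic relationship) but through the \emph{length} of an output string of comparable complexity; the string $0^{B(m)}$ is a short, canonical witness for which outputting time equals output length, translating a large value of $B(m)$ directly into a large running time of any enumeration.
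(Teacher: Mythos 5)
Your proof is correct. For the first inequality you argue essentially as the paper does: $B'(m)$ is computable from $(m,\Omega_m)$, whose complexity is $m+O(1)$, so $\KS(B'(m))\le m+O(1)$ and $B'(m)\le B(m+c)$ by definition of $B$. For the second inequality you take a genuinely different route. The paper shows that \emph{every} integer $N\ge\max\{m,B'(m)\}$ has complexity greater than $m-c$ (given $N$ one simulates the enumeration for $N$ steps and outputs a string that has not appeared, hence has complexity $>m$), so $B(m-c)<\max\{m,B'(m)\}$; combined with the side remark that $m\le B(m-c)$ for all large enough $m$, and after shifting constants, this yields $B(m)\le B'(m+c)$. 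You instead exhibit the explicit witness $0^{B(m)}$: its complexity is at most $\KS(B(m))+O(1)\le m+O(1)$, so it must occur in the enumeration on input $m+c$, and merely writing it down already costs at least $B(m)$ steps, whence $B'(m+c)\ge B(m)$. Your argument is shorter and avoids both the auxiliary claim $m\le B(m-c)$ and the constant juggling; its only extra cost is a (mild) assumption on the computation model, namely that outputting a string of length $L$ takes at least $L$ steps. That holds for Turing machines and all the standard models the paper has in mind, whereas the paper's argument uses nothing about the model beyond the ability to simulate the enumeration for a prescribed number of steps, so it is slightly more robust; it also reproves along the way the useful fact that all sufficiently large numbers below $B'(m)$ have complexity close to $m$.
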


\begin{proof}
To find $B'(m)$, it is enough to know $m$-bit binary string that represents $\Omega_m$ (this string also determines $m$). Therefore $\KS(B'(m))\le m+c$ for some constant $c$. As $B(m+c)$ is the largest number of complexity $m+c$ or less, we have $B'(m)\le B(m+c)$.

On the other hand, if some integer $N$ exceeding both $m$ and $B'(m)$ is given, we can run the enumeration algorithm $A$ within $N$ steps for each input smaller than $N$. Consider the first string that has not been enumerated. Its complexity is greater than $m$, so $\KS(N)>m-c$ for some constant $c$. Thus the complexity of every number $N$ starting from $\max\{m,B'(m)\}$ is greater than $m-c$, which means that $\max\{m,B'(m)\}>B(m-c)$. It remains to note that for all large enough $m$ we have $m\le B(m-c)$, as the complexity of $m$ is $O(\log m)$. Thus for all large enough $m$ the number $B'(m)$ (and not $m$) must be bigger than $B(m-c)$. Replacing here $m$ by $m+c$ and increasing the constant $c$ if needed, we conclude that $B'(m+c)>B(m)$ for all $m$.
\end{proof}

A similar argument shows that $B(n)$ coincides (up to $O(1)$-change in the argument) with the maximal computation time of the universal decompressor (from the definition of plain Kolmogorov complexity) on inputs of size at most $m$, see~\cite[section~1.2.2]{usv}

The next result says how many strings require long time to be enumerated.

\begin{proposition}\label{prop:enumeration-tail}
After $B'(m-s)$ steps of the enumeration algorithm on input $m$ there are $2^{s+O(\log m)}$ strings that are not yet enumerated.
\end{proposition}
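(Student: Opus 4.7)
The plan is to establish both the upper and lower bounds implicit in ``$2^{s+O(\log m)}$'' by exploiting, respectively, the speed at which the high-order bits of $\Omega_m$ become available and the complexity lower bound $\KS(\Omega_m)=m+O(1)$ from Proposition~\ref{prop:omegas}. Let $K(t)$ denote the number of distinct strings enumerated by algorithm $A$ on input $m$ after $t$ steps; then $K$ is non-decreasing with limit $\Omega_m$, and the tail has size $|T|=\Omega_m-K(B'(m-s))$.

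For the upper bound $|T|\le 2^{s+O(\log m)}$, let $L=\lceil\log_2(\Omega_m+1)\rceil=m+O(1)$ and, for $k\le L$, set $N_k=\lfloor \Omega_m/2^{L-k}\rfloor$, so that $N_k\cdot 2^{L-k}\le \Omega_m<(N_k+1)\cdot 2^{L-k}$. Given $(m,k,N_k)$ one can compute the threshold $M=N_k\cdot 2^{L-k}$ and then the time $\tau_k=\min\{t:K(t)\ge M\}$ by simulating $A$ on input $m$ until the count passes $M$; this terminates because $K(t)\to \Omega_m\ge M$. The triple $(m,k,N_k)$ admits a self-delimiting encoding of length $k+O(\log m)$, so $\KS(\tau_k)\le k+O(\log m)$, whence $\tau_k\le B(k+O(\log m))\le B'(k+O(\log m))$ by Proposition~\ref{prop:busy-beavers}. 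By construction, $\Omega_m-K(\tau_k)<2^{L-k}$. Choosing $k=m-s-c\log m$ with $c$ large enough that $k+O(\log m)\le m-s$, the monotonicity of $K$ gives $|T|\le \Omega_m-K(\tau_k)<2^{L-k}=2^{s+O(\log m)}$.

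For the lower bound $|T|\ge 2^{s-O(\log m)}$, observe that $K(B'(m-s))$ is computable from $(m,B'(m-s))$ and that $\KS(B'(m-s))=m-s+O(1)$, so $\KS(K(B'(m-s)))\le m-s+O(\log m)$. Writing $\Omega_m=K(B'(m-s))+|T|$ and using the subadditivity of plain complexity for pairs, $\KS(\Omega_m)\le \KS(K(B'(m-s)))+\KS(|T|)+O(\log m)$. Since $\KS(\Omega_m)=m+O(1)$, this forces $\KS(|T|)\ge s-O(\log m)$; combining with $\KS(n)\le \log n+O(\log\log n)$ and the just-established upper bound on $|T|$ (which makes $\log\log|T|=O(\log m)$), we obtain $\log|T|\ge s-O(\log m)$.

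The main obstacle is controlling the complexity of $\tau_k$ in the upper bound: one must encode $(m,k,N_k)$ self-delimitingly so that the $k$ bits of $N_k$ dominate the length, and then verify that $B'(k+O(\log m))$ can be absorbed inside $B'(m-s)$ once $k$ carries a sufficient $\log m$ reserve. The transfer between $B$ and $B'$ is routine via Proposition~\ref{prop:busy-beavers}, and everything else is arithmetic within the logarithmic tolerance the paper adopts throughout.
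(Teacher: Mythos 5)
Your proof is correct, but it reorganizes both halves of the argument relative to the paper. For the upper bound the paper argues by contradiction: if more than $2^{s+d}$ strings remained, it splits the enumeration into portions of size $2^{s+d}$, observes that the portion containing the last string enumerated within $B'(m-s)$ steps is completed at a time $N\ge B'(m-s)$ that is describable by about $m-s-d$ bits, and gets $d=O(\log m)$ from the monotonicity of $B$; you instead use the first $k\approx m-s-O(\log m)$ bits of $\Omega_m$ to compute a time $\tau_k$ of complexity at most $k+O(\log m)$ by which all but $2^{m-k+O(1)}$ strings have appeared --- essentially the mechanism the paper deploys later in Proposition~\ref{prop:omega-equivalence} --- and then place $\tau_k$ below $B'(m-s)$. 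Two small points of care in that step: the decoder of $(m,k,N_k)$ must know the bit length $L$, which is cleanest if you fix $L=m+1$ (legitimate since $\Omega_m<2^{m+1}$) instead of the exact length of $\Omega_m$; and the absorption should be routed as $\tau_k\le B(k+O(\log m))\le B(m-s-c)\le B'(m-s)$, using monotonicity of $B$ together with Proposition~\ref{prop:busy-beavers}, since monotonicity of $B'$ itself is never established --- your $\log m$ reserve in $k$ easily covers the constants. For the lower bound the paper produces a string of complexity greater than $m$ from $\Omega_{m-s}$ plus the tail count, whereas you reconstruct $\Omega_m$ itself as the sum of the count at time $B'(m-s)$ and the tail size and invoke $\KS(\Omega_m)=m+O(1)$; these are interchangeable incompressibility arguments (note that only the upper bound $\KS(B'(m-s))\le m-s+O(1)$ is actually proved in the paper, and it is all you use --- the equality you state is slightly stronger than needed). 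What your version buys is a direct, non-contradiction upper bound and a uniform reliance on the incompressibility of $\Omega_m$; the paper's portion trick has the advantage of being reusable almost verbatim in the subsequent list-position results (Propositions~\ref{prop:pos-def} and~\ref{prop:tail-monotonicity}).
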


We assume  that the algorithm enumerates strings (for every input $m$) without repetitions. Note also that here $B'$ can be replaced by $B$, since they differ at most by a constant change in the argument.

\begin{proof} To make the notation simpler
we omit $O(1)$- and $O(\log m)$-terms in this argument. Given $\Omega_{m-s}$, we can determine $B'(m-s)$. If we also know how many strings of complexity at most $m$ appear after $B'(m-s)$ steps, we can wait until that many strings appear and then find a string of complexity greater than $m$. If the number of remaining strings is smaller than $2^{s-O(\log m)}$, we get a prohibitively short description of this high complexity string.

On the other hand, let $x$ be the last element that has been enumerated in $B'(m-s)$ steps. If there are significantly more than $2^s$ elements after $x$, say, at least $2^{s+d}$ for some $d$, we can split the enumeration in portions of size $2^{s+d}$ and wait until the portion containing $x$ appears. By assumption this portion is full. The number $N$ of steps needed to finish this portion is at least $B'(m-s)$ . This number $N$ and its successor $N+1$ can be reconstructed from the portion number that contains about $m-s-d$ bits. Thus the complexity of $N+1$ is at most $m-s-d+O(\log m)$. Hence we have
$$
B(m-s-d+O(\log m)) > N \ge B'(m-s).
$$
By Proposition~\ref{prop:busy-beavers}
we can replace $B'$ by $B$ here:
$$
B(m-s-d+O(\log m))> B(m-s).
$$
(with some other constant in $O$-notation). Since $B$ is a non-decreasing function, we get $d=O(\log m)$.
\end{proof}

\subsection{$\Omega$-like numbers}

G.~Chaitin introduced the ``Chaitin $\Omega$-number''
$
\Omega=\sum_k \mm(k);
$
it can also be defined as the probability of termination if the optimal prefix decompressor is applied to a random bit sequence~(see~\cite[section 5.7]{usv}).%
\footnote{This number depends on the choice of the prefix decompressor, so it is not a specific number but a class of numbers. The elements of this class can be equivalently characterized as random lower semicomputable reals in $[0,1]$, see~\cite[section~5.7]{usv}.}
The numbers $\Omega_n$ are finite versions of Chaitin's $\Omega$-number. The information contained in $\Omega_n$ increases as $n$ increases; moreover, the following proposition is true. In this proposition we consider $\Omega_n$ as a bit string (of length $n+O(1)$) identifying the number $\Omega_n$ and its binary representation.

\begin{proposition}\label{prop:omega-equivalence}
Assume that $k\le m$. Consider the string $(\Omega_m)_k$ consisting of the first $k$ bits of $\Omega_m$. It is $O(\log m)$-equivalent to $\Omega_k$: both conditional complexities $\KS(\Omega_k\cnd(\Omega_m)_k)$ and $\KS((\Omega_m)_k\cnd\Omega_k)$ are $O(\log m)$.
\end{proposition}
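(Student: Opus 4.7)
The plan is to prove each of the two conditional-complexity bounds by exhibiting an $O(\log m)$-bit program that uses Proposition~\ref{prop:enumeration-tail} to control the enumeration algorithm $A$ on input $m$. I will need both halves of the proof of Proposition~\ref{prop:enumeration-tail}: the stated upper bound $R\le 2^{s+O(\log m)}$ on the number $R$ of un-enumerated strings after $B'(m-s)$ steps of $A(m)$, and the matching lower bound $R\ge 2^{s-O(\log m)}$ that appears in the first paragraph of its proof.

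For $\KS((\Omega_m)_k\cnd \Omega_k)=O(\log m)$, I will first recover $B'(k)$ from the input $\Omega_k$ by running $A(k)$ until it has produced $\Omega_k$ strings. Then, with $m$ hardcoded into the program using $O(\log m)$ bits, I will run $A(m)$ for $B'(k)$ steps and record the resulting count $C$. Proposition~\ref{prop:enumeration-tail} with $s=m-k$ gives $0\le \Omega_m-C\le 2^{(m-k)+O(\log m)}$, so $\Omega_m$ lies in an interval of length $2^{(m-k)+O(\log m)}$ above $C$. The top $k$ bits of any number in such an interval take only $2^{O(\log m)}$ distinct values (carries can propagate at most this far), so hardcoding an $O(\log m)$-bit pointer that picks out the true $(\Omega_m)_k$ among the candidates completes the description.

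For the reverse bound $\KS(\Omega_k\cnd (\Omega_m)_k)=O(\log m)$, the idea is to use $(\Omega_m)_k$ as a lower bound on $\Omega_m$ and wait until the enumerated count of $A(m)$ attains that bound, at which point the elapsed time already exceeds $B'(k)$ and therefore all strings of complexity $\le k$ have been output. To force the time strictly past $B'(k)$, I will hardcode $m$ together with the next $c\log m$ bits of $\Omega_m$ for a sufficiently large constant $c$, a total of $O(\log m)$ bits, obtaining a sharpened lower bound $\tilde\Omega\le \Omega_m$ with $\Omega_m-\tilde\Omega<2^{(m-k)-c\log m+O(1)}$. Running $A(m)$ until $\tilde\Omega$ strings have been enumerated takes some time $T$, at which moment the remaining count equals $\Omega_m-\tilde\Omega$. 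The lower bound $R(B'(k))\ge 2^{(m-k)-O(\log m)}$ from the proof of Proposition~\ref{prop:enumeration-tail} then forces $T>B'(k)$ once $c$ exceeds the implicit constant in that $O(\log m)$. Since by time $T$ every program of length $\le k$ that ever halts has already halted inside the shared simulation, I can count among the $\tilde\Omega$ enumerated strings exactly those produced by a program of length $\le k$, and this count is $\Omega_k$.

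The one point that requires care will be the constant chasing: I must choose a single $c$, independent of $m$, so that $c\log m$ strictly dominates the $O(\log m)$ slack appearing in both halves of Proposition~\ref{prop:enumeration-tail}, and I must check that the potential one-bit carry ambiguities on the $(\Omega_m)_k$ side are absorbed into the $O(\log m)$ pointer. This bookkeeping is mechanical and poses no genuine conceptual obstacle. The content of the proposition is really that $\Omega_k$ and $(\Omega_m)_k$ are, modulo $O(\log m)$ free bits, two equivalent encodings of the same snapshot of $A(m)$ at time $B'(k)$: the first gives the count of enumerated strings of complexity $\le k$ at that moment, the second gives the top $k$ bits of the eventual total count $\Omega_m$, which is close to the count at time $B'(k)$.
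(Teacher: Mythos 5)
Your overall route is the same as the paper's: both directions are proved by ``direct'' arguments through Proposition~\ref{prop:enumeration-tail}. For $\KS((\Omega_m)_k\cnd\Omega_k)$ you recover $B'(k)$ from $\Omega_k$, run the enumeration of strings of complexity at most $m$ for $B'(k)$ steps, and use the upper bound of Proposition~\ref{prop:enumeration-tail} together with an $O(\log m)$-bit pointer to fix the top $k$ bits; this is exactly the paper's direct argument for that direction (the paper also notes it follows from symmetry of information, which you do not need). For $\KS(\Omega_k\cnd(\Omega_m)_k)$, your device of hardcoding the next $c\log m$ bits of $\Omega_m$ to force the stopping time $T$ strictly past $B'(k)$ is a careful spelling-out of the step the paper passes over with ``we ignore $O(\log m)$-terms,'' and it is correct, since the lower bound $2^{(m-k)-O(\log m)}$ on the number of not-yet-enumerated strings at step $B'(k)$ is indeed established in the proof of Proposition~\ref{prop:enumeration-tail}.

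The one step that is not justified as written is your final extraction of $\Omega_k$. From $T>B'(k)$ you infer that ``by time $T$ every program of length $\le k$ that ever halts has already halted inside the shared simulation,'' and you then count, among the strings enumerated on input $m$, those produced by a program of length at most $k$. But $B'(k)$ is by definition the number of steps of the enumeration algorithm \emph{on input $k$}, i.e., of a different run with its own schedule; nothing in the definitions relates the halting times of individual length-$\le k$ programs inside the input-$m$ run to $B'(k)$, and no ``shared simulation'' is guaranteed (the enumerator is an arbitrary fixed algorithm, and even for a natural universal dovetailing a length-$\le k$ program may halt long after all strings of complexity at most $k$ have appeared, so your stated claim can simply be false at time $T$). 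The inference you actually need --- and the paper's step --- is simpler: you can compute $T$ (the number of steps until $\tilde\Omega$ strings appear), you know $k$ from the length of $(\Omega_m)_k$ up to $O(1)$ advice, and since $T>B'(k)$ you may run the enumeration algorithm on input $k$ for $T$ steps; by then all strings of complexity at most $k$ have appeared, and their number is $\Omega_k$. With that one-line replacement your argument is correct and coincides with the paper's proof.
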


\begin{proof}
This is essentially the reformulation of the previous statement (Proposition~\ref{prop:enumeration-tail}).

Run the algorithm that enumerates strings of complexity at most $m$. Knowing $(\Omega_m)_k$, we can wait until less than $2^{m-k}$ strings are left in the enumeration of strings of complexity at most $m$; we know that this happens after more than $B(k)$ steps, and in this time we can enumerate all strings of complexity at most $k$ and compute $\Omega_k$. (In this argument we ignore $O(\log m)$-terms, as usual.)

Now the second inequality follows by the symmetry of information property. Indeed, since $\KS(\Omega_k)=k+O(1)$ and $\KS((\Omega_m)_k)\le k+O(1)$, the inequality $\KS(\Omega_k\cnd(\Omega_m)_k)=O(\log m)$ implies the inequality $\KS((\Omega_m)_k\cnd\Omega_k)=O(\log m)$.

A direct argument is also easy.  Knowing $\Omega_k$ and $k$, we can find the list of all the strings of complexity at most $k$ and the number $B'(k)$. Then we make $B'(k)$ steps in the enumeration of the list of strings of complexity at most $m$. Proposition~\ref{prop:enumeration-tail} then guarantees that at that moment $\Omega_m$ is known with error about $2^{m-k}$, so the first $k$ bits of $\Omega_m$ can be reconstructed with small advice (of logarithmic size; we omit terms of that size in the argument).
\end{proof}

There is a more direct connection with Chaitin's $\Omega$-number: one can show that the number $\Omega_m$ is $O(\log m)$-equivalent to the $m$-bit prefix of Chaitin's $\Omega$-number. Since in this survey we restrict ourselves to finite objects, we do not go into details of the proof here, see~\cite[section 5.7.7]{usv}.

\subsection{Position in the list is well defined}

We discussed how much time is needed to enumerate all strings of complexity at most $m$ and how many strings remain not enumerated before this time. Now we want to study \emph{which} strings remain not enumerated.

More precisely, let $x$ be some string of complexity at most $m$, so $x$ appears in the enumeration of all strings of complexity at most $m$. How close $x$ is to the end, that is, how many strings are enumerated after $x$? The answer depends on the enumeration, but only slightly, as the following proposition shows.

\begin{proposition}\label{prop:pos-def}
 Let $A$ and $B$ be algorithms that both for any given $m$ enumerate (without repetitions) the set of strings of complexity at most $m$. Let $x$ be some string and let $a_x$ and $b_x$ the number of strings that appear after $x$ in $A$- and $B$-enumerations. Then $|\log a_x - \log b_x|=O(\log m)$.
\end{proposition}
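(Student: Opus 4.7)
The plan is to use Propositions~\ref{prop:busy-beavers} and~\ref{prop:enumeration-tail} to translate the combinatorial quantity $\log a_x$ (and $\log b_x$) into the time at which $x$ is enumerated by the respective algorithm, and then compare the two times using the monotonicity of the busy-beaver function $B$.

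First I would apply Proposition~\ref{prop:enumeration-tail} to~$A$. Denote by $t_A(x)$ the moment when $A$ outputs $x$; at this moment the number of strings not yet enumerated drops from $a_x+1$ to~$a_x$. Since by Proposition~\ref{prop:enumeration-tail} having approximately $2^s$ strings remaining corresponds to time approximately $B'_A(m-s)$ (with $O(\log m)$-slack in~$s$), plugging in $s=\log a_x$ should yield
\[
t_A(x)\in\bigl[B'_A(m-\log a_x-c\log m),\,B'_A(m-\log a_x+c\log m)\bigr]
\]
for some constant~$c$. Proposition~\ref{prop:busy-beavers} then lets me replace $B'_A$ by the intrinsic busy-beaver function $B$ up to an $O(1)$-shift in the argument, so that $t_A(x)\in[B(m-\log a_x-c'\log m),\,B(m-\log a_x+c'\log m)]$. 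A symmetric application to~$B$ gives the analogous enclosure $t_B(x)\in[B(m-\log b_x-c'\log m),\,B(m-\log b_x+c'\log m)]$.

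To conclude, I would assume without loss of generality that $t_A(x)\le t_B(x)$ and use the chain
\[
B(m-\log a_x-c'\log m)\le t_A(x)\le t_B(x)\le B(m-\log b_x+c'\log m).
\]
Since $B$ is non-decreasing, this yields $\log b_x\le\log a_x+2c'\log m$. The opposite case $t_B(x)\le t_A(x)$ symmetrically gives $\log a_x\le\log b_x+O(\log m)$; combining the two cases produces the desired $|\log a_x-\log b_x|=O(\log m)$.

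The hard part will be cleanly organising the case analysis so that both sides of the inequality follow from the same situation: the chain above, taken literally, only bounds one of the two differences $\log b_x-\log a_x$ and $\log a_x-\log b_x$ by $O(\log m)$ per case. To finish, one has to exploit the fact that $t_A(x)$ and $t_B(x)$ are intrinsic to $x$ up to an $O(\log m)$-shift in the argument of~$B$: any enumeration of $S_m$ must have output every string of $\KS\le k$ by time $B(k+O(\log m))$, and anything output by time $B(k)$ has complexity at most $k+O(\log m)$. This algorithm-invariance of the ``enumeration level'' $B^{-1}(t_A(x))$ forces the two intervals above to overlap up to $O(\log m)$-shift, delivering both directions of the inequality at once.
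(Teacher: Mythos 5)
Your bracketing of the two output times via Propositions~\ref{prop:enumeration-tail} and~\ref{prop:busy-beavers} is essentially sound (modulo the small point that deducing $\log b_x\le\log a_x+O(\log m)$ from $B(u)\le B(v)$ needs the near-strict monotonicity $B(m+c)>B(m)$, which the paper does establish, not mere monotonicity). But, as you note yourself, the case split on $t_A(x)\le t_B(x)$ yields only one of the two required inequalities per case, and the patch you offer does not close the gap. Of your two bridging claims, the first (``every string of complexity at most $k$ is output by time $B(k+O(\log m))$'') is true, since the output time of $x$ is computable from $(x,m)$ and hence has complexity at most $k+O(\log m)$; but it gives only an \emph{upper} bound on both $t_A(x)$ and $t_B(x)$ in terms of $\KS(x)$, which together with your enclosures yields $\log a_x,\log b_x\ge m-\KS(x)-O(\log m)$ and nothing more. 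The second claim (``anything output by time $B(k)$ has complexity at most $k+O(\log m)$'') is false: an enumeration may begin by dumping all strings of length at most $m-O(1)$ (each has complexity at most $m$), which takes time $\poly(2^m)\le B(O(\log m))$, and among these $2^{m-O(1)}$ strings most have complexity close to $m$. Proposition~\ref{prop:enumeration-tail} controls how many strings \emph{remain} at time $B'(m-s)$, not the complexity of the possibly exponentially many strings already output.

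Consequently the asserted ``algorithm-invariance of the enumeration level $B^{-1}(t_A(x))$'' is not established; given your enclosures it is in fact equivalent to the proposition being proved, so invoking it is circular. Raw running times of two different enumerations are not comparable beyond the busy-beaver bounds you already used, and nothing in your argument excludes the scenario $\log a_x\gg\log b_x+O(\log m)$ with $t_A(x)<t_B(x)$, which is perfectly consistent with Propositions~\ref{prop:busy-beavers} and~\ref{prop:enumeration-tail} applied to each algorithm separately. The missing idea --- and the paper's actual proof --- is to transfer information between the two enumerations through $\Omega_m$ rather than through time: the first $m-\log b_x$ bits of $\Omega_m$ determine, via the $B$-enumeration, a set of already-enumerated strings that contains $x$; waiting in the $A$-enumeration until all of them appear leaves at most $a_x$ strings, whose number can be specified with about $\log a_x$ extra bits, so $\Omega_m$ gets a description of length $m-\log b_x+\log a_x+O(\log m)$, contradicting $\KS(\Omega_m)=m+O(1)$ (Proposition~\ref{prop:omegas}) unless $\log b_x\le\log a_x+O(\log m)$; the converse inequality follows by exchanging the roles of $A$ and $B$.
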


We may also assume that $A$ and $B$ are algorithms of complexity $O(\log m)$ without input that enumerate strings of complexity at most $m$.

\begin{proof}
Assume that $a_x$ is small: $\log a_x \le k$. Why $\log b_x$ cannot be much larger than $k$? Given the first $m-\log b_x$ bits of $\Omega_m$ and $B$, we can compute a finite set of strings $B'$ that contains $x$ and consists only of strings of complexity at most $m$. Then we can wait until all strings from $B'$ appear in $A$-enumeration. After then at most $2^k$ strings are left, and we need $k$ bits to count them. In this way we can describe $\Omega_m$ by $m-\log b_x+k+O(\log m)$ bits; however, Proposition~\ref{prop:omegas} says that $\KS(\Omega_m)=m+O(1)$. Hence
$\log b_x \le k+O(\log m)$.

The other inequality is proven by a symmetric argument.
\end{proof}

In this theorem $A$ and $B$ enumerate exactly the same strings (though in different order). However, the complexity function is essentially defined with $O(1)$-precision only: different optimal programming languages lead to different versions. Let $\KS$ and $\tilde{\KS}$ be two (plain) complexity functions; then $\tilde{\KS}(x)\le\KS(x)+c$ for some $c$ and for all $x$. Then the list of all $x$ with $\KS(x)\le m$ is contained in the list of all $x$ with $\tilde{\KS}(x)\le m+c$. The same argument shows that the number of elements after $x$ in the first list cannot be much larger than the number of elements after $x$ in the second list. The reverse inequality is not guaranteed, however, even for the same version of complexity (small increase in the complexity bound may significantly increase the number of strings after $x$ in the list). We will return to this question in Section~\ref{subsec:rel-px}, but let us note first that some increase is guaranteed.

\begin{proposition}\label{prop:tail-monotonicity}
If for a string $x$ there are at least $2^s$ elements after $x$ in the enumeration of all strings of complexity at most $m$, then for every $d\ge 0$ there are at least $2^{s+d-O(\log m)}$ strings after $x$ in the enumeration of all strings of complexity at most $m+d$.
\end{proposition}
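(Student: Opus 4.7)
The plan is to apply Proposition~\ref{prop:enumeration-tail} twice, at a common absolute step count, once to the enumeration of strings of complexity at most $m$ and once to the enumeration of strings of complexity at most $m+d$. The underlying combinatorial fact is that at step $B'(m-s)$ the first enumeration has about $2^s$ strings still unenumerated, while the same absolute step, rewritten as $B'((m+d)-(s+d))$, leaves about $2^{s+d}$ strings unenumerated in the second. If one can show that $x$ has already been emitted by this step in \emph{both} enumerations, then the $\sim 2^{s+d}$ strings still missing from the second must all follow $x$, and the desired bound drops out.

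To set this up I would fix a specific enumeration algorithm $E(k)$: dovetail over pairs $(p,r)$ with $|p|\le k$ in order of the diagonal $|p|+r$, and emit $U(p)$ whenever $p$ halts within $r$ steps. Then $y$ is emitted at step $\sigma_k(y)=\min\{|p|+T(p):|p|\le k,\ U(p)=y\}$, which is non-increasing in $k$; in particular $\sigma_{m+d}(x)\le \sigma_m(x)$. One checks that the total running time of $E(k)$ agrees with $B(k)$ up to a constant shift of the argument, so Proposition~\ref{prop:enumeration-tail} is applicable to $E$, and Proposition~\ref{prop:pos-def} guarantees that tails computed with $E$ differ from tails in any other enumeration by at most a polynomial factor — which is absorbed into the allowed $O(\log m)$ slack.

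Granted this setup the argument is short. The remaining count is monotonically non-increasing in the step, and Proposition~\ref{prop:enumeration-tail} bounds it by $2^{s+c_0\log m}$ already at step $B'(m-s)$, so for a suitable constant $c$ fewer than $2^s$ strings remain by step $B'(m-s+c\log m)$. If $x$ were emitted later than this step, the number of strings following $x$ would drop below $2^s$, contradicting the hypothesis. Hence $\sigma_m(x)\le B'(m-s+c\log m)$, and by the monotonicity of $\sigma$ in $k$ the same inequality holds for $\sigma_{m+d}(x)$.

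Finally, rewrite the same step as $B'((m+d)-(s+d-c\log m))$ and invoke the lower-bound half of Proposition~\ref{prop:enumeration-tail} applied to $E(m+d)$: at least $2^{s+d-O(\log m)}$ strings of complexity at most $m+d$ are still unenumerated at that step, and, $x$ having been emitted, every one of them lies after $x$. This yields the desired bound on the tail of $x$ in the $(m+d)$-enumeration. The step I expect to need the most care with is reconciling the two properties demanded of $E$: a naive round-based parallel simulation of all length-$\le k$ programs would satisfy $\sigma_{m+d}\le \sigma_m$ only when ``time'' is measured in rounds, but each round of the $(m+d)$-enumeration costs $2^d$ times as many basic operations as a round of the $m$-enumeration, so step counts could not be identified across the two bounds. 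The diagonal dovetail avoids this tension because its step is defined in a $k$-independent unit.
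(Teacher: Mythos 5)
Your proof is correct in substance, but it takes a genuinely different route from the paper's. The paper disposes of this proposition with a single compression argument on $\Omega_{m+d}$: if fewer than $2^{s+d-c\log m}$ strings followed $x$ in the bigger list, then the first $m-s$ bits of $\Omega_m$ (which, by the hypothesis on the tail of $x$, pin down an initial segment of the $m$-list containing $x$) together with a short count of what remains in the $(m+d)$-list after that segment has appeared would describe $\Omega_{m+d}$ in fewer than $m+d$ bits, contradicting Proposition~\ref{prop:omegas}. You instead use Proposition~\ref{prop:enumeration-tail} twice at a common absolute time, which forces you to fix an enumeration whose emission time $\sigma_k$ is monotone in the complexity bound $k$; you correctly identify that this monotonicity, and the $k$-independent time unit needed for it, is the crux, and the diagonal dovetail does deliver it. Two caveats are worth recording. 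First, Proposition~\ref{prop:enumeration-tail} is stated for the fixed enumeration algorithm and its own step count, so ``applying'' it to your $E$ with diagonal-valued steps really means re-running its short proof for $E$; this goes through because the proof only uses that the analogue of $B'$ is determined by the corresponding $\Omega$ and is sandwiched by $B$ up to $O(1)$-shifts of the argument as in Proposition~\ref{prop:busy-beavers}, and your $E$ has both properties, but it is an obligation your sketch only gestures at. Second, your final step invokes Propositions~\ref{prop:enumeration-tail} and~\ref{prop:pos-def} for the $(m+d)$-list, so the error terms you collect are $O(\log(m+d))$ rather than the stated $O(\log m)$, which is weaker when $d$ is superpolynomial in $m$; this can be repaired by the standard trick of recovering $d$ from the length of the description inside a compression step (the same bookkeeping the paper's own one-line proof silently relies on), or one can note that the weaker bound suffices for all later uses, where $m+d$ does not exceed the length of $x$ by more than a logarithmic term. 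What your route buys is a picture that anticipates the busy-beaver time scale of Section~\ref{sec:depth} — tails in all lists read off one absolute clock; what the paper's route buys is brevity and the sharper error term with no auxiliary enumeration to set up.
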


\begin{proof}
Essentially the same argument works here: if there are much less than $2^{s+d}$ strings after $x$ in the bigger list, then this bigger list can be determined by $2^{m-s}$ bits needed to cover $x$ in the smaller list and less than $s+d$ bits needed to count the elements in the bigger list that follow the last covered element.
\end{proof}

The last proposition can be restated in the following way. Let us fix some complexity function and and some algorithm that, given $m$, enumerates all strings of complexity at most $m$. Then, for a given string $x$, consider the function that maps every $m\ge \KS(x)$ to the logarithm of the number of strings after $x$ in the enumeration with input $m$. Proposition~\ref{prop:tail-monotonicity} says that $d$-increase in the argument leads at least to $(d-O(\log m))$-increase of this function (but the latter increase could be much bigger). As we will see, this function is closely related to the set $P_x$ (and therefore $Q_x$): it is one more representation of the same boundary curve.

\subsection{The relation to $P_x$}\label{subsec:rel-px}

To explain the relation, consider the following procedure for a given binary string $x$. For every $m\ge \KS(x)$ draw the line $i+j=m$  on $(i,j)$-plane. Then draw the point on this line with second coordinate $s$ where $s$ is the logarithm of the number of elements after $x$ in the enumeration of all strings of complexity at most $m$. Mark also all points on this line on the right of (=below) this point. Doing this for different $m$, we get a set (Figure~\ref{mdl-e-12}).
\begin{figure}[h]
\begin{center}
\includegraphics[scale=1]{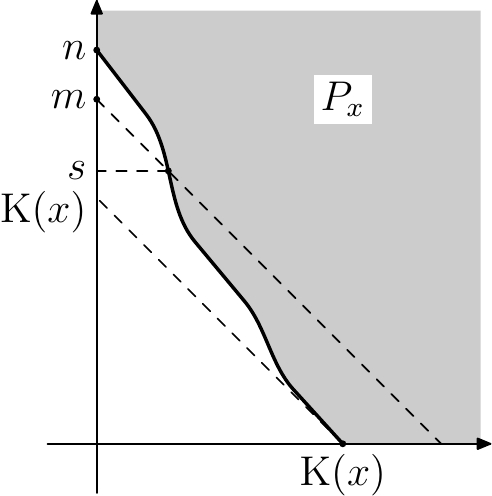}
\caption{For each $m$ between $\KP(x)$ and $n$ (length of $x$) we count elements after $x$ in the list of strings having complexity at most $m$; assuming there is about $2^s$ of them, we draw point $(m-s,s)$ and get a point on some curve. This curve turns out to be the boundary of $P_x$ (with logarithmic precision).}
\label{mdl-e-12}
\end{center}
\end{figure}
Proposition~\ref{prop:tail-monotonicity} guarantees that this set is upward closed with logarithmic precision: if some point $(i,j)$ belongs to this set, then the point $(i,j+d)$ is in $O(\log(i+j))$-neighborhood of this set. This implies that the point $(i+d,j)$ is also in the neighborhood, since our set is closed by construction in the direction $(1,-1)$.

It turns out  that this set coincides with $P_x$ \textup(Definition~\ref{def:px}\textup) with $O(\log n)$-precision for a string $x$ of length $n$ (this means, as usual, that each of the two sets is contained in the $O(\log n)$-neighborhood of the other one):

\begin{thm}\label{thm:tail-characterization}
Let $x$ be a string of length $n$. If $x$ has a $(i*j)$-description then $x$ is at least $2^{j-O(\log n)}$-far from the end of $(i+j+O(\log n))$-list. Conversely, if there are at least $2^j$ elements that follow $x$ in the $(i+j)$-list then $x$ has a $((i+O(\log n))*j)$-description.
\end{thm}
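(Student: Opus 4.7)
The two halves of the theorem rely on complementary techniques. The converse direction is a chunking argument: split the $(i+j)$-list into consecutive blocks of size $2^j$ and take the block containing $x$ as the required description. The direct direction turns an $(i*j)$-description into a short, fast prefix program for $x$, then uses Proposition~\ref{prop:enumeration-tail} to count how many strings of complexity $\le i+j+O(\log n)$ are still un-enumerated at the moment $x$ is output.

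\emph{Direct direction.} Let $A$ be the $(i*j)$-description of $x$ and set $m:=i+j+c\log n$ for a sufficiently large constant $c$. Concatenating a prefix program of length $\le i$ for $A$ with a $(j+O(\log n))$-bit self-delimiting encoding of the lexicographic index of $x$ inside $A$ yields a prefix program $p$ for $x$ with $|p|\le m$. Its individual halting time is at most $B(i+O(1))$ (to compute $A$) plus $O(2^j)$ (to enumerate $A$ and pick the chosen element); since $\KP(2^j)=O(\log n)$, the second term is absorbed and the total halting time is at most $B(i+O(\log n))$. Fix any enumeration of the $m$-list; by Proposition~\ref{prop:pos-def}, the count of successors of $x$ is independent of this choice up to $O(\log n)$. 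Even in a dovetailed enumeration, the wall-clock step at which $x$ appears is at most this halting time multiplied by $2^m$, and the product is an integer of complexity $\le i+O(\log n)$, hence at most $B(i+O(\log n))$. By Proposition~\ref{prop:busy-beavers}, this step is at most $B'(m-s)$ for $s:=j-O(\log n)$, and Proposition~\ref{prop:enumeration-tail} combined with monotonicity of the un-enumerated count forces at least $2^{s-O(\log m)}=2^{j-O(\log n)}$ strings to remain un-enumerated at this moment. All of them appear strictly after $x$.

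\emph{Converse direction.} Assume at least $2^j$ strings of the $m$-list (with $m=i+j$) follow $x$, and fix any enumeration. Partition it into consecutive blocks $B_0,B_1,\dots$ of size $2^j$ (the last possibly partial); since $\Omega_m=\Theta(2^m)$, the total number of blocks is at most $O(2^i)$. Let $B_k$ be the block containing $x$. The assumption that $\ge 2^j$ elements follow $x$ forces $x$ to lie outside the last block, so $B_k$ is a full block of size $2^j$ and $k$ requires only $i+O(1)$ bits. The set $B_k$ is computable from $(m,j,k)$ by running the fixed enumeration until block $k$ is complete; hence $\KP(B_k)\le i+O(\log n)$, and together with $x\in B_k$ and $\#B_k=2^j$ this produces the desired $((i+O(\log n))*j)$-description.

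\emph{Main difficulty.} The delicate step is the early-halting-time estimate in the direct direction: we need wall-clock time $\le B(i+O(\log n))$ rather than the trivial $B(m)$. This is made possible by the observation that the wall-clock time of a dovetailed enumeration equals the individual halting time of $p$ times the simple factor $2^m$, and multiplying a busy-beaver number by a simple integer does not raise its complexity beyond $i+O(\log n)$, so it stays below $B(i+O(\log n))$. Every appeal to Proposition~\ref{prop:busy-beavers} and Proposition~\ref{prop:enumeration-tail} costs an additional $O(\log n)$, which accounts for the $O(\log n)$-slack appearing in both halves of the theorem.
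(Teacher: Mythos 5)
Your converse direction coincides with the paper's own argument: chop the $(i+j)$-list into consecutive $2^j$-blocks, observe that the hypothesis of $2^j$ successors forces the block containing $x$ to be a completed (full) one, and describe that block by its index among the $O(2^i)$ blocks, giving an $((i+O(\log n))*j)$-description. Your direct direction, however, takes a genuinely different route. The paper argues straight from the description $A$: knowing $A$ (at most $i$ bits) one waits until all of $A$ has appeared in the $(i+j+O(\log n))$-list, and if significantly fewer than $2^j$ elements followed $x$ at that moment, then $\Omega_m$ for $m=i+j+O(\log n)$ could be reconstructed from $A$ plus the count of the remaining stragglers in noticeably fewer than $m$ bits, contradicting $\KS(\Omega_m)=m+O(1)$. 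You instead turn $A$ into a short program for $x$ whose halting time is bounded by a number of complexity $i+O(\log n)$, conclude via Proposition~\ref{prop:busy-beavers} that $x$ appears within the first $B'(i+O(\log n))$ steps of the enumeration on input $m$, and then invoke the lower bound of Proposition~\ref{prop:enumeration-tail} on how many strings are still missing at that moment; since the un-enumerated count only decreases, all of them follow $x$. This is sound, and it is essentially the mechanism the paper deploys later for Theorem~\ref{thm:depth}. Its cost is that it leans on the concrete dovetailing structure of the enumeration (your ``halting time times $2^m$'' step, plus the implicit constant-length and computable-time overhead when passing from your two-part program to the optimal machine underlying the list, and the tacit assumption $i,j=O(n)$ so that $O(\log m)=O(\log n)$) --- all harmless on the busy-beaver scale, where any computable function of a number of complexity $i+O(\log n)$ stays below $B(i+O(\log n))$, but it makes the proof longer than the paper's one-step compression of $\Omega_m$. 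What it buys is an explicit bridge between $(i*j)$-descriptions and early appearance time in the enumeration, anticipating the resource-bounded characterization of Section~\ref{sec:depth}.
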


\begin{proof}
We need to verify two things. First, assuming that $x$ has a $(i*j)$-description, we need to show that it is at least $2^j$-far from the end of $(i+j)$-list. (With error terms: in $(i+j+O(\log n))$-list there are at least $2^{j-O(\log n)}$ elements after $x$.) Indeed, knowing some $(i*j)$-description $A$ for $x$, we can wait until all the elements of $A$ appear in $(i+j)$-list (as usual, we omit $O(\log n)$-term: all elements of $A$ have complexity at most $i+j+O(\log n)$, so we should consider $(i+j+O(\log n))$-list to be sure that it contains all elements of $A$). In particular, $x$ has appeared at that moment. If there are (significantly) less than $2^j$ elements after $x$, then we can encode the number of remaining elements by (significantly) less than $j$ bits, and together with the description of $A$ we get less than $i+j$ bits to describe $\Omega_{i+j}$, which is impossible.

Second, assume that there are at least $2^j$ elements that follow $x$ in the $(i+j)$-list. Then, splitting this list into $2^j$-portions, we get at most $2^i$ full portions, and $x$ is covered by one of them. Each portion has complexity at most $i$ and log-size at most $j$, so we get an $(i*j)$-description for $x$. (As usual, logarithmic terms are omitted.)
\end{proof}

Now we can reformulate the properties of stochastic and antistochastic objects. Every object of complexity $k$ appears in the list of objects of complexity at most $k'$ for all $k'>k$. Each stochastic object is far from the end of these lists (except, may be, for some $k'$-lists with $k'$ very close to $k$). Each antistochastic object of length $n$ is maximally close to the end of all $k'$-lists with $k'<n$ (there are about $2^{k'-k}$ objects after $x$), except, may be, for some $k'$-lists with $k'$ very close to $n$. When $k'$ becomes greater than $n$, then even antistochastic strings are far from the end of the $k'$-list. What we have  said is just the description of the corresponding curves (Figure~\ref{mdl.2}) using Theorem~\ref{thm:tail-characterization}.

\subsection{Standard descriptions}\label{subsec:std-descriptions}

The lists of objects of bounded complexity provide a natural class of descriptions. Consider some $m$ and the number $\Omega_m$ of strings of complexity at most $m$. This number can be represented in binary:
   $$
\Omega_m = 2^a + 2^b +\ldots,
   $$
where $a>b>\ldots$. The list itself then can be split into pieces of size $2^a$, $2^b$,\ldots, and these pieces can be considered as description of corresponding objects. In this way for each string $x$ and for each $m\ge\KS(x)$ we get some description on $x$, a piece than contains $x$. Descriptions obtained in this way will be called \emph{standard} descriptions. Note that for a given $x$ we have many standard descriptions (depending on the choice of $m$). One should have in mind also that the class of standard descriptions depends on the choice of the complexity function and the enumeration algorithm, and we assume in the sequel that they are fixed.

The following results show that standard descriptions are in a sense universal. First let us note that the standard descriptions have parameters close to the boundary curve of $P_x$ (more precisely, to the boundary curve of the set constructed in the previous section that is close to $P_x$).\footnote{In general, if two sets $X$ and $Y$  in $\mathbb{N}^2$ are close to each other (each is contained in the small neighborhood of the other one), this does not imply that their boundaries are close. It may happen that one set has a small ``hole'' and the other does not, so the boundary of the first set has points that are far from the boundary of the second one. However, in our case both sets are closed by construction in two different directions, and this implies that the boundaries are also close.}

\begin{proposition}\label{prop:std-pos}
Consider the standard description $A$ of size $2^j$ obtained from the list of all strings of complexity at most $m$. Then $\KS(A)=m-j+O(\log m)$, and the number of elements in the list that follow the elements of $A$ is $2^{j+O(\log m)}$.
\end{proposition}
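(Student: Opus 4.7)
The plan is to read off both claims from the binary expansion $\Omega_m = 2^{a_1}+2^{a_2}+\cdots+2^{a_t}$ with $a_1>a_2>\cdots>a_t$: in the corresponding splitting of the enumeration of all strings of complexity at most $m$ into consecutive blocks of sizes $2^{a_1},\ldots,2^{a_t}$, the set $A$ is the $k$-th block, where $k$ is the unique index with $a_k=j$. The claim about the tail is immediate from this structure: the elements that follow $A$ in the enumeration are exactly those in the later blocks, of total cardinality $\sum_{i>k}2^{a_i}\le 2^{a_k}-1<2^j$, which is bounded by $2^{j+O(\log m)}$.

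For the complexity of $A$, my strategy is to show that $A$ is $O(\log m)$-equivalent to the prefix $(\Omega_m)_{m-j}$ consisting of the top $m-j$ bits of $\Omega_m$, and then to invoke Proposition~\ref{prop:omega-equivalence} and Proposition~\ref{prop:omegas} to conclude that this prefix has complexity $m-j+O(\log m)$. In one direction, from $m$, $j$ and $(\Omega_m)_{m-j}$ one recovers the set $\{a_1,\ldots,a_k\}$ (and in particular verifies $a_k=j$), enumerates the strings of complexity at most $m$, skips the first $\sum_{i<k}2^{a_i}$ of them, and outputs the next $2^j$; this reconstructs $A$ with $O(\log m)$ bits of overhead for specifying $m$ and $j$ in a self-delimiting way.

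In the reverse direction, given $A$ together with $m$ as $O(\log m)$ bits of advice, I would enumerate the strings of complexity at most $m$ in order and keep a running count until every element of $A$ has been produced. Because $A$ is a contiguous block in this enumeration, the count at that moment is exactly $\sum_{i\le k}2^{a_i}$, whose top $m-j$ bits are precisely $(\Omega_m)_{m-j}$. This yields $\KS((\Omega_m)_{m-j})\le\KS(A)+O(\log m)$; combined with the previous direction and the equivalences above, we obtain $\KS(A)=m-j+O(\log m)$.

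The main things I expect to watch are (i) bookkeeping of the $O(\log m)$-size overhead across both reductions, so that nothing silently accumulates into a $\Theta(\log m\cdot\log m)$ term, and (ii) the use of the fact that $A$ is a \emph{consecutive} block of the enumeration (not just an arbitrary subset of it), since this is what makes the waiting argument in the reverse reduction actually read off a prefix of $\Omega_m$ rather than some less informative quantity.
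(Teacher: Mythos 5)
Your treatment of the complexity claim is correct and essentially the paper's own argument: the upper bound $\KS(A)\le m-j+O(\log m)$ comes from reconstructing $A$ from $m$, $j$ and the top $m-j+O(1)$ bits of $\Omega_m$, and the lower bound comes from the observation that enumerating until all of $A$ has appeared recovers the high-order part of $\Omega_m$ (the paper then appends the $j$ low bits to rebuild $\Omega_m$ itself and uses $\KS(\Omega_m)=m+O(1)$, while you pass through Propositions~\ref{prop:omega-equivalence} and~\ref{prop:omegas}; both routes work, and the $O(1)$ misalignment between ``bit position $j$'' and ``the first $m-j$ bits'', caused by the length of $\Omega_m$ being $m+O(1)$ rather than exactly $m$, is absorbed by the logarithmic slack you allow).

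There is, however, a genuine gap in the tail claim. The proposition asserts that the number of elements following $A$ \emph{is} $2^{j+O(\log m)}$, which is a two-sided statement: it must also be at least $2^{j-O(\log m)}$, and this lower bound is the nontrivial half (it is what places the parameters of $A$ on the boundary curve of Figure~\ref{mdl-e-12} and is used later, e.g.\ in the discussion of stochasticity via standard descriptions). You prove only the easy upper bound $\sum_{i>k}2^{a_i}<2^j$. The missing half goes as follows: the tail $T=\sum_{i>k}2^{a_i}$ is exactly the part of $\Omega_m$ below bit position $j$, so $\Omega_m$ is determined by its first $m-j+O(1)$ bits together with $T$ (plus $m$, $j$); if $T<2^{j-d}$, this yields a description of $\Omega_m$ of length $m-d+O(\log m)$, and since $\KS(\Omega_m)=m+O(1)$ we must have $d=O(\log m)$. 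Equivalently, the incompressible string $\Omega_m$ cannot contain a run of more than $O(\log m)$ zeros just below position $j$ --- which is precisely the parenthetical remark in the paper's proof that your write-up omits.
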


This statement says that parameters of $A$ are close to the point on the line $i+j=m$ considered in the previous section (Figure~\ref{mdl-e-12}).

\begin{proof}
To specify $A$, it is enough to know the first $m-j$ bits of $\Omega_m$ (and $m$ itself). The complexity of $A$ cannot be much smaller, since knowing $A$ and the $j$ least significant bits of $\Omega_m$ we can reconstruct $\Omega_m$.

The number of elements that follow $A$ cannot exceed $2^j$ (it is a sum of smaller powers of $2$); it cannot be significantly less since it determines $\Omega_m$ together with the first $m-j$ bits of $\Omega$. (In other words, since $\Omega_m$ is an incompressible string of length $m$, it cannot have more that $O(\log m)$ zeros in a row.)
\end{proof}

This result does \emph{not} imply that every point on the boundary of $P_x$ is close to parameters of some standard description. If some part of the boundary has slope $-1$, we cannot guarantee that there are standard descriptions along this part. For example, consider the list of strings of complexity at most $m$; the maximal complexity of strings in this list is $m-c$ for some $c=O(1)$; if we take first string of this complexity, there are $2^{m+O(1)}$ strings after it, so the corresponding point is close to the vertical axis, and due to Proposition~\ref{prop:tail-monotonicity} all other standard descriptions of $x$ are also close to the vertical axis.   However, descriptions with parameters close to arbitrary points on the boundary of $P_x$ can be obtained from standard descriptions by chopping them into smaller parts, as in Proposition~\ref{prop:description-shift}. In that shopping it is natural to use the order in which the strings were enumerated. In other words, chop the list of strings of complexity at most $m$ into portions of size $2^j$. Consider all the full  portions (of size exactly $2^j$) obtained in this way (they are parts of standard descriptions of bigger size). Descriptions obtained in this way are ``universal'' in the following sense: if a pair $(i,j)$ is on the boundary of $P_x$ then there is a set $A\ni x$ of this type of complexity $i+O(\log(i+j))$ and log-cardinality $j+O(\log(i+j))$.

\smallskip
\ver{
  The following result says more: for every description  $A$ for $x$ there is a ``better'' standard description that is simple given $A$.

\begin{proposition}\label{prop:better-std}
Let $A$ be an $(i*j)$-description of a string $x$ of length $n$. Then there is 
$$
m\le \min\{n,i+j\}+O(\log n)
$$ 
such that the parameters of the standard description $B$ 
for $x$ obtained from the list of strings of complexity at most 
$m$ satisfy the inequalities 
$$
\KS(B)\le i+O(\log m),\qquad \KS(B)+\log\#B =m+O(\log m).
$$
Moreover, 
$B$ is simple given $A$, i.e., $\KS(B\cnd A)=O(\log m)$. 
\end{proposition}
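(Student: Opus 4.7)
The plan is to set $m := \min\{n,i+j\}+C\log n$ for a sufficiently large constant $C$, and take $B$ to be the standard chunk of the list of strings of complexity at most $m$ that contains $x$. The identity $\KS(B)+\log\#B = m+O(\log m)$ is immediate from Proposition~\ref{prop:std-pos}.

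To establish $\KS(B) \le i + O(\log m)$, I would split according to which term realizes the minimum in $m$. In the case $i+j \le n$, so that $m = i+j+C\log n$, Theorem~\ref{thm:tail-characterization} applied to the $(i*j)$-description $A$ itself yields at least $2^{j-O(\log n)}$ strings after $x$ in the $m$-list. Since the standard chunk sizes are distinct powers of $2$, each chunk exceeds the combined size of all later chunks, so the chunk $B$ containing $x$ satisfies $\log\#B \ge j-O(\log n)$, and Proposition~\ref{prop:std-pos} gives $\KS(B) \le m - j + O(\log n) = i + O(\log m)$. In the case $i+j > n$, so that $m = n+C\log n$, the same reasoning applied to the trivial $(O(\log n)*n)$-description $\{0,1\}^n$ of $x$ yields $\log\#B \ge n - O(\log n)$, whence $\KS(B) = O(\log m) \le i + O(\log m)$, and in fact unconditionally.

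For the conditional bound $\KS(B \cnd A) = O(\log m)$, the case $i+j > n$ is automatic, since the $B$ just produced already has $\KS(B) = O(\log m)$. In the case $i+j \le n$, given $A$ one enumerates the $m$-list and lets $N_T$ be the total count at the moment the last element of $A$ first appears; this is computable from $A$ and $m$. Writing $a_k := \log\#B$ and $p_x$ for the position of $x$ in the $m$-enumeration, the same geometric bound on chunk sizes combined with $N_T \ge p_x$ gives $\Omega_m - N_T < 2^{a_k+1}$, while Theorem~\ref{thm:tail-characterization} supplies $\Omega_m - N_T \ge 2^{j-O(\log n)}$. An $O(\log m)$-bit advice encoding $m$, $a_k$, and one carry bit then lets one recover $\lfloor \Omega_m / 2^{a_k+1} \rfloor$ from $\lfloor N_T / 2^{a_k+1} \rfloor$ (the two differ by at most $1$). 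Since $a_{k-1} \ge a_k + 1$ forces $S_{k-1} = \sum_{\ell<k} 2^{a_\ell}$ to be a multiple of $2^{a_k+1}$, one then has $S_{k-1} = \lfloor \Omega_m / 2^{a_k+1} \rfloor \cdot 2^{a_k+1}$; running the $m$-enumeration for $S_{k-1}+2^{a_k}$ steps outputs $B$ as the block at positions $[S_{k-1}, S_{k-1}+2^{a_k})$.

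The main obstacle, and the step I would check most carefully, is this last identification in the case $i+j \le n$: one must verify that the two-sided bound $2^{j-O(\log n)} \le \Omega_m - N_T < 2^{a_k+1}$ really lets a single carry bit resolve the high-order bits of $\Omega_m$ from those of $N_T$ uniformly in the parameters, and that the resulting advice indeed fits into $O(\log m)$ bits after absorbing all the logarithmic slack from Theorem~\ref{thm:tail-characterization} and Proposition~\ref{prop:std-pos}.
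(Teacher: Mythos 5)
Your construction follows essentially the same route as the paper's proof: $B$ is the standard chunk of the $m$-list containing $x$; its size is bounded from below by counting the elements enumerated after $x$ (respectively after the last element of $A$); the geometric decay of chunk sizes converts that count into a lower bound on $\log\#B$ (everything after $x$ fits into fewer than $2\#B$ elements); and $B$ is recovered from $A$ with $O(\log m)$ bits of advice using the fact that the chunk boundary $S_{k-1}$ is aligned to a multiple of $2^{a_k+1}$. The step you single out as delicate is in fact sound: $N_T\le\Omega_m$ together with $\Omega_m-N_T<2^{a_k+1}$ forces $\lfloor\Omega_m/2^{a_k+1}\rfloor-\lfloor N_T/2^{a_k+1}\rfloor\in\{0,1\}$, so one carry bit suffices; this is exactly the paper's ``last complete portion before the last element of $A$, plus one extra bit'' trick. (Two small remarks: the bound $\Omega_m-N_T\ge 2^{j-O(\log n)}$ is not literally what Theorem~\ref{thm:tail-characterization} gives, since that theorem counts elements after $x$, not after the last element of $A$; apply it to the last element of $A$, which also has $A$ as an $(i*j)$-description --- but this lower bound is not actually needed for the carry argument.)

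The one genuine discrepancy is in the error terms of the first inequality. The proposition claims $\KS(B)\le i+O(\log m)$, whereas your derivation yields only $i+O(\log n)$: you inflate $m$ to $\min\{n,i+j\}+C\log n$, and you invoke Theorem~\ref{thm:tail-characterization} as a black box, whose stated slack is $O(\log n)$ with $n$ the length of $x$. When $i+j\ll n$ (say $i+j=(\log n)^{O(1)}$), one has $\log m\ll\log n$, so $i+O(\log n)$ is strictly weaker than the claimed $i+O(\log m)$; your $i+j>n$ case is unaffected since there $m\ge n$. To reach the stated precision, in the case $i+j\le n$ choose $m=i+j+O(\log j)$ --- just large enough that every element of $A$ has complexity at most $m$ --- and run the counting argument directly against $\Omega_m$: if fewer than $2^{m-i-c\log m}$ elements of the $m$-list appear after the last element of $A$, then $\Omega_m$ would be determined by $A$, that count, and $O(\log m)$ auxiliary bits, contradicting $\KS(\Omega_m)=m+O(1)$. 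All slack is then $O(\log m)$, and the rest of your argument, including the $O(\log m)$-bit advice ($m$, $a_k$, and the carry bit) for $\KS(B\cnd A)$, goes through unchanged.
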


\begin{proof}
If $i+j\le n$ then the complexity of every element from $A$ is at most
$i+j+O(\log j)=\min\{n,i+j\}+O(\log n)$. Otherwise remove from
$A$ all strings of length different from $n$. In this way $A$ becomes $(i*j)$-description for $x$ with slightly larger $i$ than before the removal and the same or smaller $j$. Now all the elements of $A$ have complexity at most 
$n+O(1)=\min\{n,i+j\}+O(1)$. Thus w.l.o.g. we may assume that the
complexity of all strings from $A$ does not exceed some 
$m=\min\{n,i+j\}+O(\log n)$.

Consider the list of all strings of complexity at most $m$ and the standard description $B$ of $x$ obtained from this list. As we know from Proposition~\ref{prop:std-pos}, the sum of the parameters of this description is 
$m+O(\log m)$. 
We need to show that the size of $B$ is at least $2^{m-i-O(\log m)}$ and hence
the complexity of $B$ is at most $i+O(\log m)$. Why is this the case? Consider elements that appear after the last element of $A$ in the list. There are at least $2^{m-i-O(\log m)}$ of them, otherwise the total number of elements in the list could be described in much less than $m$ bits. 
Therefore there are at least $2^{m-i-O(\log m)}$ elements in the list that appear after $x$.
As $x\in B$ the number of elements that appear after $x$ is less than $2\# B$ therefore $\# B\ge 2^{m-i-O(\log m)}$ and $\K(B)\le i+O(\log m)$ .

Why $B$ is simple given $A$? Denote the size of $B$ by $2^{j'}$. Given $A$ and $m$, we can find the last element of $A$, call it $x'$, in the list of strings of complexity at most $m$. Chop the list into portions of size $2^{j'}$. Then $B$ is the last complete portion. If $B$ contains $x'$, we can find $B$ from $m$, $j'$, and $x'$ as the complete portion containing $x'$. Otherwise, $x'$ appears in the list after all the elements from $B$. In this case we can find $B$ from $m$ and $x'$ as the last complete portion before $x'$.  Thus in any  case we are able to find $B$ from $m$, $j'$, and $x'$ plus one extra bit.
\end{proof}
}

For the same reason every standard description $B$ of some $x$ is simple given $x$ (and this is not a surprise, since we know that all optimal descriptions of $x$ are simple given $x$, see Proposition~\ref{prop:improving-descriptions}).

Proposition~\ref{prop:better-std} has the following corollary which we formulate in an informal way. Let $A$ be some $(i*j)$-description with parameters on the boundary of $P_x$. Assume that on the left of this point the boundary curve decreases fast (with slope less than $-1$). Then in Proposition~\ref{prop:better-std} the value of $d$ is small, otherwise the point $(i-d,j+d)$ would be far from $P_x$. So the complexities of $A$ and the standard description $B$ are close to each other. We know also that $A$ is simple given $B$, therefore $B$ is also simple given $A$, and $A$ and $B$ have the same information (have small conditional complexities in both directions).

If \label{discussion-minimal}
we have two different descriptions $A,A'$ with approximately the same parameters on the boundary of $P_x$, and the curve decreases fast on the left of the corresponding boundary point, the same argument shows that $A$ and $A'$ have the same information. Note that the condition about the slope is important: if the point is on the segment with slope $-1$, the situation changes. For example, consider a random $n$-bit string $x$ and two its descriptions. The first one consists of all $n$-bit strings that have the same left half as $x$, the second one consists of all $n$-bit strings that have the same right half. Both have the same parameters: complexity $n/2$ and log-size $n/2$, so they both correspond to the same point on the boundary of $P_x$. Still the information in these two descriptions is different  (left and right halves of a random string are independent).

These results sound as good news. Let us recall our original goal: to formalize what is a good statistical model. It seems that we are making some progress. Indeed, for a given $x$ we consider the boundary curve $P_x$ and look at the place when it first touches the lower bound $i+j=\KS(x)$; after that it stays near this bound. In other terms, we consider models with negligible optimality deficiency, and select among them the model with minimal complexity. Giving a formal definitions, we need to fix some threshold $\varepsilon$. Then we say that a set $A$ is a \emph{$\varepsilon$-sufficient statistic} if $\delta(x, A)<\varepsilon$, and may choose the simplest one among them and call it the \emph{minimal $\varepsilon$-sufficient statistic}. If the curve goes down fast on the left of this point, we see that all the descriptions with parameters corresponding to minimal sufficient statistic are equivalent to each other.

Trying to relate these notion to practice, we may consider the following example. Imagine that we have digitized some very old recording and got some bit string $x$. There is a lot of dust and scratches on the recording, so the originally recorded signal is distorted by some random noise. Then our string $x$ has a two-part description: the first part specifies the original recording and the noise parameters (intensity, spectrum, etc.) and the second part specifies the noise exactly. May be, the first part is the minimal sufficient statistic --- and therefore sound restoration (and lossy compression in general) is a special case of the problem of finding a minimal sufficient statistic? The uniqueness result above (saying that all the minimal sufficient statistics contain the same information under some conditions) seem to support this view: different good models for the same object contain the same explanation.

Still the following observation (that easily follows from what we know) destroys this impression completely.

\begin{proposition}\label{prop:std-omega}
Let $B$ be some standard description of complexity $i$ obtained from the list of all strings of complexity at most $m$. Then $B$ is $O(\log m)$-equivalent to $\Omega_i$.
\end{proposition}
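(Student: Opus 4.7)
By Proposition~\ref{prop:omega-equivalence} it is enough to prove that $B$ is $O(\log m)$-equivalent to the $i$-bit prefix $(\Omega_m)_i$ of $\Omega_m$. Write $\Omega_m = 2^{a_1} + 2^{a_2} + \cdots$ with $a_1 > a_2 > \cdots$, and let $B$ be the $k$-th block in the chopping, of size $2^j = \#B$ with $j = a_k$. By Proposition~\ref{prop:std-pos}, $j = m - i + O(\log m)$, so the positions $m-i$ and $j$ differ only by $O(\log m)$. Throughout the argument I treat $m$, $i$, $j$ together with the $O(\log m)$ bits of $\Omega_m$ lying between positions $m-i$ and $j$ as $O(\log m)$-bit advice.

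The bound $\KS((\Omega_m)_i \mid B) = O(\log m)$ will be a direct simulation. Using $m$ from the advice, run the standard enumeration of strings of complexity at most $m$ and wait until every element of the given set $B$ has been output. Since the chopping groups the enumeration into consecutive blocks and $B$ is the $k$-th such block, at that moment the total count of enumerated strings equals $N = 2^{a_1} + \cdots + 2^{a_k}$, i.e.\ $\Omega_m$ truncated to its bits at positions $\ge j$; dividing by $2^j$ yields $\lfloor \Omega_m / 2^j \rfloor$, and adjusting its tail using the $O(\log m)$ extra bits in the advice yields $(\Omega_m)_i$.

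The bound $\KS(B \mid (\Omega_m)_i) = O(\log m)$ will be the reverse simulation. From $(\Omega_m)_i$ together with the extra bits in the advice, reconstruct $\lfloor \Omega_m / 2^j \rfloor$, multiply by $2^j$ to obtain the index $N$ at which block $B$ ends in the enumeration, then run the enumeration of strings of complexity at most $m$ and output the block of $2^j$ strings at positions $N - 2^j + 1, \ldots, N$; this block is exactly $B$. The main obstacle is almost purely bookkeeping: the $O(\log m)$ slack between $j$ and $m-i$ governs whether the bit of $\Omega_m$ at position $j$ already lies inside $(\Omega_m)_i$ or just below it, and one must thread a few extra bits of $\Omega_m$ through the advice so that the computation of $\lfloor \Omega_m / 2^j \rfloor$ from $(\Omega_m)_i$ (and vice versa) works uniformly in both cases.
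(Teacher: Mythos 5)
Your proof is correct and follows essentially the route the paper intends (the paper states the proposition without proof, as an easy consequence of Propositions~\ref{prop:std-pos} and~\ref{prop:omega-equivalence}): identify $B$ with the corresponding prefix of $\Omega_m$ by counting enumerated strings in both directions, then pass from $(\Omega_m)_i$ to $\Omega_i$ via Proposition~\ref{prop:omega-equivalence}, absorbing the $O(\log m)$ positional slack into the advice.
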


This looks like a failure. Imagine that we wanted to understand the nature of some data string $x$; finally we succeed  and find a description for $x$ of reasonable complexity and negligible randomness and optimality deficiencies (and all the good properties we dreamed of). But Proposition~\ref{prop:std-omega} says that the information contained in this description is more related to the computability theory than to specific properties of $x$. Recalling the construction, we see that the corresponding standard description is determined by some prefix of some $\Omega$-number, and is an interval in the enumeration of objects of bounded complexity. So if we start with two old recordings, we may get the same information, which is not what we expect from a restoration procedure. Of course, there is still a chance that some $\Omega$-number was recorded and therefore the restoration process indeed should provide the information about it, but this looks like a very special case that hardly should happen for any practical situation.

What could we do with this? First, we could just relax and be satisfied that we now understand much better the situation with possible descriptions for $x$. We know that every $x$ is characterized by some curve that has several equivalent definitions (in terms of stochasticity, randomness deficiency, position in the enumeration --- as well as time-bounded complexity, see Section~\ref{sec:depth} below). We know that standard descriptions cover the parts of the curve where it goes down fast, and to cover the parts where the slope is $-1$ one may use standard descriptions and their pieces; all these descriptions are simple given $x$. When curve goes down fast, the description is essentially unique (all the descriptions with the same parameters contain the same information, equivalent to the corresponding $\Omega$-number); this is not true on parts with slope $-1$. So, even if this curve is of no philosophical importance, we have a lot of technical information about possible models.

The other approach is to go farther and consider only models from some class (Section~\ref{sec:restricted-type}), or add some additional conditions and look for ``strong models'' (Section~\ref{sec:strong-models}).

\subsection{Non-stochastic objects revisited}\label{subsec:halting-information}

Now we can explain in a different way why the probability of obtaining a non-stochastic object in a random process is negligible (Proposition~\ref{prop:nonstochastic-counting}). This explanation uses the notion of mutual information from algorithmic information theory. The mutual information in two strings $x$ and $y$ is defined as
$$ I(x : y)=\KS(x)-\KS(x\cnd y) =\KS(y)-\KS(y\cnd x)=\KS(x)+\KS(y)-\KS(x,y);$$
all three expressions are $O(\log n)$-close if $x$ and $y$ are strings of length $n$ (see, e.g., \cite[Chapter 2]{usv}).

Consider an arbitrary string $x$ of length $n$; let $k$ be the complexity of $x$. Consider the list of all objects of complexity at most $k$, and the standard description $A$ for $x$ obtained from this list. If $A$ is large, then $x$ is stochastic; if $A$ is small, then $x$ contains a lot of information about $\Omega_k$ and $\Omega_n$.

More precisely, let us assume that $A$ has size $2^{k-s}$ (i.e., is $2^s$ times smaller than it could be). Then (recall Proposition~\ref{prop:std-pos}) the complexity of $A$ is $s+O(\log k)$, since we can construct $A$ knowing $k$ and the first $s$ bits of $\Omega_k$ (before the bit that corresponds to $A$). So we get $(s+O(\log k))*(k-s)$-description with optimality deficiency $O(\log k)$.

On the other hand, knowing $x$ and $k$, we can find the ordinal number of $x$ in the enumeration, so we know $\Omega_k$ with error at most $2^{k-s}$, so $\KS(\Omega_k\cnd x)\le k-s+O(\log k)$, and $I(x:\Omega_k)\ge s-O(\log k)$ (recall that $\KS(\Omega_k)=k+O(1)$). In the last statement we may replace $\Omega_k$ by $\Omega_n$ (where $n$ is the length of $x$): we know from Proposition~\ref{prop:omega-equivalence} that $\Omega_k$ is simple given $\Omega_n$, so if condition $\Omega_k$ decreases complexity of $x$ by almost $s$ bits, the same is true for condition $\Omega_n$.

Comparing arbitrary $i\le n$ with this $s$ (it can be larger than $s$ or smaller than $s$), we get the following result:

\begin{proposition}\label{prop:dilemma}
Let $x$ be a string of length $n$. For every $i\le n$
\begin{itemize}
\item either $x$ is $(i+O(\log n),O(\log n))$-stochastic,
\item or $I(x:\Omega_n)\ge i-O(\log n)$.
\end{itemize}
\end{proposition}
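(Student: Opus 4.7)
The plan is to make precise the sketch given just before the statement, using the standard description of $x$ arising from the list of strings of complexity at most $\KS(x)$, and then splitting into cases based on the size of that description.

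First, set $k = \KS(x)$, and note that $k \le n + O(\log n)$. We may assume $i \le k$, since otherwise $\{x\}$ is a $(k+O(1),O(1))$-description of $x$ and makes $x$ trivially $(i,O(1))$-stochastic. Consider the standard description $A \ni x$ obtained from the list of all strings of complexity at most $k$ and write $\log \#A = k - s$ for some $s \ge 0$. Proposition~\ref{prop:std-pos} then gives $\KS(A) = s + O(\log n)$ and guarantees that at most $2^{k-s+O(\log n)}$ strings of the list come after $A$ (and hence after $x$). In particular, the optimality deficiency satisfies $\delta(x,A) = \KS(A) + \log\#A - k = O(\log n)$, so Proposition~\ref{prop:randomness-optimality} yields $d(x \cnd A) = O(\log n)$.

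Now split on the value of $s$. If $s \le i$, then $\KS(A) \le i + O(\log n)$ while $d(x \cnd A) = O(\log n)$, so $A$ itself witnesses that $x$ is $(i + O(\log n), O(\log n))$-stochastic and we are done. If instead $s > i$, then we show $x$ shares a lot of information with $\Omega_n$. Given $x$ together with $k$ (an $O(\log n)$-bit advice), one enumerates the strings of complexity at most $k$ until $x$ appears and counts them, thereby recovering $\Omega_k$ with error at most $2^{k-s+O(\log n)}$ (the position of $x$ plus the bound on what comes after it, off by at most one bit). This gives $\KS(\Omega_k \cnd x) \le k - s + O(\log n)$, and combining with $\KS(\Omega_k) = k + O(1)$ we obtain $I(x : \Omega_k) \ge s - O(\log n) \ge i - O(\log n)$. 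Finally, Proposition~\ref{prop:omega-equivalence} says that $\Omega_k$ is $O(\log n)$-simple given $\Omega_n$ (via the first $k$ bits of $\Omega_n$), so $\KS(x \cnd \Omega_n) \le \KS(x \cnd \Omega_k) + O(\log n)$, which gives $I(x : \Omega_n) \ge I(x : \Omega_k) - O(\log n) \ge i - O(\log n)$, as required.

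The whole argument is essentially a two-line dichotomy on $s$; the only real subtlety is logarithmic bookkeeping. One needs to check that the $O(\log m)$ and $O(\log k)$ slacks supplied by Propositions~\ref{prop:std-pos} and~\ref{prop:omega-equivalence} (with $m, k \le n + O(\log n)$) uniformly collapse to $O(\log n)$, and to be careful in the passage from ``strings following $A$'' to ``strings following $x$'' (one extra bit), and in converting the symmetric-looking quantity $\KS(\Omega_k) - \KS(\Omega_k \cnd x)$ into the mutual information $I(x:\Omega_k)$ (another $O(\log n)$ via symmetry of information). No new idea beyond Propositions~\ref{prop:std-pos}, \ref{prop:omega-equivalence}, and \ref{prop:randomness-optimality} is needed.
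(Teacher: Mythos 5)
Your proof is correct and follows essentially the same route as the paper: take the standard description $A$ of $x$ from the list of strings of complexity at most $k=\KS(x)$, write $\#A=2^{k-s}$, and compare $i$ with $s$ — for $s\le i$ Proposition~\ref{prop:std-pos} plus Proposition~\ref{prop:randomness-optimality} give the stochasticity witness, and for $s>i$ the position of $x$ in the enumeration pins down $\Omega_k$ up to $2^{k-s+O(\log n)}$, so $I(x:\Omega_k)\ge s-O(\log n)$, transferred to $\Omega_n$ via Proposition~\ref{prop:omega-equivalence}. The logarithmic bookkeeping you flag is handled exactly as in the paper, so there is nothing to add.
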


Now we may use the following (simple and general) observation: for every string $u$ the probability to generate (by a randomized algorithm) an object that contains a lot of information about $u$ is negligible:

\begin{proposition}\label{prop:information-rare}
For every string $u$ and for every number $d$, we have
$$
\sum\{ \mm(x)\mid \KP(x)-\KP(x\cnd u)\ge d\} \le 2^{-d}.
$$
\end{proposition}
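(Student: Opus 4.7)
The plan is to convert the complexity gap $\KP(x)-\KP(x\cnd u)\ge d$ into a ratio inequality between a priori probabilities, and then compare two sub-probability distributions: $\mm(\cdot)$ and $\mm(\cdot\cnd u)$. Recall the coding theorem stated at the beginning of Section~\ref{sec:stoch}, namely $\KP(x\cnd c)=-\log\mm(x\cnd c)+O(1)$, applied both with no condition and with condition $u$. Substituting, the hypothesis $\KP(x)-\KP(x\cnd u)\ge d$ becomes
$$
  \log\frac{\mm(x\cnd u)}{\mm(x)}\ge d-O(1),
$$
so $\mm(x)\le 2^{-d+O(1)}\cdot \mm(x\cnd u)$ for every $x$ in the set
$$
  S_{u,d}=\{x\mid \KP(x)-\KP(x\cnd u)\ge d\}.
$$

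Now the estimate is immediate: summing over $x\in S_{u,d}$ and using that $\sum_x \mm(x\cnd u)\le 1$ (the defining property of conditional a priori probability),
$$
  \sum_{x\in S_{u,d}}\mm(x)\;\le\;2^{-d+O(1)}\sum_{x\in S_{u,d}}\mm(x\cnd u)\;\le\;2^{-d+O(1)}.
$$
The $O(1)$ constant in the exponent can be absorbed into the statement (e.g.\ by replacing $d$ by $d+O(1)$, or by interpreting the asserted bound $2^{-d}$ with the usual $O(1)$-factor tolerance characteristic of results in this paper).

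There is essentially no obstacle: the proof is a one-line application of the duality between prefix complexity and a priori probability, plus summability of $\mm(\cdot\cnd u)$. The only subtlety worth mentioning is that the inequality $\mm(x)\le 2^{-d}\mm(x\cnd u)$ uses the equivalence $\KP=-\log\mm$ in \emph{both} directions, so the $O(1)$ slack is unavoidable unless one defines $\KP$ directly as $-\log\mm$ (which is a standard convention). One may also phrase the proof in the dual complexity language: the function $f(x)=2^d\mm(x)\cdot[\![x\in S_{u,d}]\!]$ is lower semicomputable given $u$ (since $S_{u,d}$ is enumerable from $u$), and the displayed inequality above says $f(x)\le O(1)\cdot\mm(x\cnd u)$, so $\sum_x f(x)\le O(1)$, which is the claim.
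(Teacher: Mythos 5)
Your proof is correct and is essentially the paper's own argument: translate the complexity gap into $\mm(x)\le 2^{-d}\mm(x\cnd u)$ via the coding theorem and sum, using $\sum_x \mm(x\cnd u)\le 1$. The paper sidesteps the $O(1)$ slack you discuss by simply taking $\KP=-\log\mm$ as the definition (exactly the convention you mention), so the bound comes out clean.
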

In this proposition the sum is taken over all strings $x$ that have the given property (have a large mutual information with $u$). Note that we have chosen the representation of mutual information that makes the proposition easy (in particular, we have used prefix complexity). As we mentioned, other definitions differ only by $O(\log n)$ if we consider strings $x$ and $u$ of length at most $n$, and logarithmic accuracy  is enough for our purposes.

\begin{proof}
Recall the definition of prefix complexity: $\KP(x)=-\log \mm(x)$, and $\KP(x\cnd u)=-\log \mm(x\cnd u)$. So $\KP(x)-\KP(x\cnd u)\ge d$ implies $\mm(x)\le 2^{-d}\mm(x\cnd u)$, and it remains to note that $\sum_x\mm(x\cnd u)\le 1$ for every $u$.
\end{proof}

Propositions~\ref{prop:dilemma} and~\ref{prop:information-rare} immediately imply the following improved version of Proposition~\ref{prop:nonstochastic-counting} (page~\pageref{prop:nonstochastic-counting}):

\begin{proposition}\label{prop:nonstochastic-counting-improved}
$$
\sum \{\,\mm(x)\mid \text{$x$ is a $n$-bit string that is not $(\alpha,O(\log n))$-stochastic}\ \} \le 2^{-\alpha+O(\log n)}
$$
for every $\alpha$.
\end{proposition}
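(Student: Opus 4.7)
The plan is to derive this as an immediate consequence of the two preceding propositions. The idea is that Proposition~\ref{prop:dilemma} presents a dichotomy for each $n$-bit string $x$: either $x$ is already stochastic with reasonable parameters, or it encodes a lot of information about $\Omega_n$. Proposition~\ref{prop:information-rare} then says that the second alternative is rare under the a priori probability.

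More concretely, I would fix a constant $c$ so that the $O(\log n)$-terms in Proposition~\ref{prop:dilemma} are bounded by $c \log n$. Given $\alpha$, I apply that proposition with $i = \alpha - c\log n$ (the case $\alpha \le c\log n$ is trivial because the total sum of $\mm(x)$ over $n$-bit strings is $O(1) = 2^{-\alpha + O(\log n)}$). Then any $n$-bit string $x$ that is \emph{not} $(\alpha, c\log n)$-stochastic falls automatically into the second alternative of the dichotomy, giving
$$
I(x : \Omega_n) \ge \alpha - O(\log n).
$$
Writing mutual information as $\KP(x) - \KP(x \cnd \Omega_n) + O(\log n)$ (the $O(\log n)$ gap between the symmetric forms of mutual information for $n$-bit strings), this means $\KP(x) - \KP(x \cnd \Omega_n) \ge \alpha - O(\log n)$.

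Applying Proposition~\ref{prop:information-rare} with $u = \Omega_n$ and $d = \alpha - O(\log n)$ then yields
$$
\sum \bigl\{\,\mm(x) \;\big|\; x \text{ is an $n$-bit string not } (\alpha, O(\log n))\text{-stochastic}\,\bigr\} \le 2^{-\alpha + O(\log n)},
$$
which is exactly the claim. I do not expect any real obstacle here: the proof is essentially just combining two already proved statements, with only bookkeeping of $O(\log n)$ slack required to match the constants in the dichotomy with the conclusion. The only subtlety is to ensure that the form of mutual information used in Proposition~\ref{prop:dilemma} matches the one ($\KP(x) - \KP(x \cnd u)$) used in Proposition~\ref{prop:information-rare}; as both definitions agree up to an $O(\log n)$ additive term for $n$-bit strings, this is absorbed into the logarithmic slack already present in the statement.
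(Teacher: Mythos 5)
Your proposal is correct and follows essentially the same route as the paper, which derives this proposition directly by combining the dichotomy of Proposition~\ref{prop:dilemma} (with $u=\Omega_n$) with Proposition~\ref{prop:information-rare}. Your handling of the $O(\log n)$ discrepancy between the plain-complexity form of mutual information in the dichotomy and the prefix-complexity form $\KP(x)-\KP(x\cnd u)$ used in Proposition~\ref{prop:information-rare} matches the paper's own remark that these versions agree up to logarithmic terms, which the stated slack absorbs.
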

The improvement here is the better upper bound for the randomness deficiency: $O(\log n)$ instead of $\alpha+O(\log n)$.

\subsection{Historical comments}

The relation between busy beaver numbers and Kolmogorov complexity was pointed out in~\cite{gacs1984} (see Section 2.1). The enumerations of all objects of bounded complexity and their relation to stochasticity were studied in~\cite{gtv} (see Section III, E).

\section{Computational and logical depth}\label{sec:depth}

In this section we reformulate the results of the previous one in terms of bounded-time Kolmogorov complexity and discuss the various notions of computational and logical depth that appeared in the literature. (The impatient reader may skip this section; it is not technically used in the sequel).

\subsection{Bounded-time Kolmogorov complexity}

The usual definition of Kolmogorov complexity of $x$ as the minimal length $l(p)$ of a program $p$ that produces $x$ does not take into account the running time of the program $p$: it may happen that the minimal program for $x$ requires a lot of time to produce $x$ while other programs produce $x$ faster but are longer (for example,  program ``print $x$'' is rather fast). To analyze this trade-off, the following definition is used.

\begin{definition}
Let $D$ be some algorithm; its input and output are binary strings. For a string $x$ and integer $t$, define
$$
\KS^t_D=\min\{l(p)\colon \text{$D$ produces $x$ on input $p$ in at most $t$ steps}\},
$$
the time-bounded Kolmogorov complexity of $x$ with time bound $t$ with respect to $D$.
\end{definition}
This definition was mentioned already in the first paper by Kolmogorov~\cite{kolm65}:
\begin{quote}
Our approach has one important drawback: it does not take into account the efforts needed to transform the program $p$ and object $x$ [the description and the condition] to the object $y$ [whose complexity is defined]. With appropriate definitions, one may prove mathematical results that could be interpreted as the existence of an object $x$ that has simple programs (has very small complexity $K(x)$) but all short programs that produce $x$ require an unrealistically long computation. In another paper I plan to study the dependence of the program complexity $K^t(x)$ on the difficulty $t$ of its transformation into~$x$. Then the complexity $K(x)$ (as defined earlier) reappears as the minimum value of $K^t(x)$ if we remove restrictions on~$t$.
\end{quote}
Kolmogorov never published a paper he speaks about, and this definition is less studied than the definition without time bounds, for several reasons.

First, the definition is machine-dependent: we need to decide what computation model is used to count the number of steps. For example, we may consider one-tape Turing machines, or multi-tape Turing machine, or some other computational model. The computation time depends on this choice, though not drastically (e.g., a multi-tape machine can be replaced with a one-tape machine with quadratic increase in time, and most popular models are polynomially related --- this observation is used when we argue that the class P of polynomial-time computable functions is well defined).

Second, the basic result that makes the Kolmogorov complexity theory possible is the Solomonoff--Kolmogorov theorem saying that there exists an optimal algorithm $D$ that makes the complexity function minimal up to $O(1)$ additive term. Now we need to take into account the time bound, and get the following (not so nice) result.

\begin{proposition}\label{prop:time-bounded-optimal}
There exists an optimal algorithm $D$ for time-bounded complexity in the following sense: for every other algorithm $D'$ there exists a constant $c$ and a polynomial $q$ such that
$$
   \KS_{D'}^t(x)\le \KS_D^{q(t)}(x)+c
$$
for all strings $x$ and integers $t$.
\end{proposition}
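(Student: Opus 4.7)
The plan is to adapt the classical Solomonoff--Kolmogorov construction of an optimal universal decompressor to the time-bounded setting, keeping careful track of the polynomial overhead incurred by universal simulation.

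First I would fix a concrete computational model (say, multi-tape Turing machines) together with a self-delimited encoding $\pi \mapsto \bar\pi$ of binary strings (for instance, encode $|\pi|$ in unary-then-binary, prepended to $\pi$). Define the target machine $D$ on inputs of the form $\bar\pi p$ as follows: recover $\pi$ from the self-delimited prefix, interpret $\pi$ as the code of some Turing machine $M_\pi$, and simulate $M_\pi$ on input $p$, echoing its output when it halts.

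Next, for any algorithm $D'$, fix a Turing-machine code $\pi_{D'}$ computing the same function as $D'$ and set $c = |\bar\pi_{D'}|$; this depends on $D'$ but not on $x$ or $t$. If $p$ is a shortest $D'$-program producing $x$ in at most $t$ steps (so $|p| = \KS_{D'}^t(x)$), then $\bar\pi_{D'}p$ is a $D$-program of length $\KS_{D'}^t(x) + c$ producing $x$. Its running time is bounded by a polynomial $q(t)$: universal simulation on multi-tape Turing machines has $O(\log t)$ overhead per simulated step, so one may take $q(t) = O(t \log t)$ with implicit constants depending on $D'$ (number of tapes, states, alphabet size); the constant-time decoding of $\bar\pi_{D'}$ is absorbed into this polynomial. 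This pairs up the two sides of the stated comparison: a description of length $\KS_{D'}^t(x)+c$ with time budget $q(t)$ on $D$ corresponds to a description of length $\KS_{D'}^t(x)$ with time budget $t$ on $D'$, yielding the bound in the proposition.

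The main technical obstacle is choosing the computational model so that universal simulation incurs only polynomial overhead uniformly in the simulated machine: for multi-tape Turing machines this is the classical Hennie--Stearns $O(t \log t)$ simulation, while for single-tape machines one additionally pays a quadratic factor, and in both cases $q$ is polynomial with $D'$-dependent but $x,t$-independent coefficients. A minor subtlety is that the self-delimited encoding $\bar\pi$ must be decodable efficiently so that the parsing step contributes only a lower-order term; standard prefix codes handle this. Once the model is fixed and the interpreter for $D$ is spelled out, the rest is routine bookkeeping about program encoding and simulation time.
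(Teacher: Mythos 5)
Your construction is essentially the paper's own proof: the paper likewise defines $D(\hat p x)=U(p,x)$ for a self-delimiting, efficiently decodable encoding $\hat p$ of the simulated machine's code and appeals to the polynomial overhead of universal simulation, exactly as you do. Note that what you actually (and correctly) establish is $\KS_D^{q(t)}(x)\le \KS_{D'}^{t}(x)+c$ --- the meaningful optimality statement, and the one the paper's argument also yields --- whereas the displayed inequality in the proposition has the two sides transposed (as written it would fail, e.g., for a $D'$ that never halts), so you have proved what is evidently intended.
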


In this result, by ``algorithm'' we may mean a $k$-tape Turing machine, where $k$ is an arbitrary fixed number. However, the claim remains true even when $k$ is not fixed, i.e., we may allow $D'$ to have more tapes than $D$ has.

The proof remains essentially the same: we choose some simple self-delimiting encoding of binary strings $p\mapsto\hat p$ and some universal algorithm $U(\cdot,\cdot)$ and then let
$$
D(\hat p x)=U(p,x)
$$
Then the proof follows the standard scheme; the only thing we need to note is that the decoding of $\hat p$ runs in polynomial time (which is true for most natural ways of self-delimiting encoding) and that the universal algorithm simulation overhead is polynomial (which is also true for most natural constructions of universal algorithms).

A similar result is true for conditional decompressors, so the conditional time-bounded complexity can be defined as well.

For Turing machines with fixed number of tapes the statement is true for some linear polynomial $q(n)=O(n)$. For the proof we need to consider a universal machine $U$ that simulates other machines efficiently: it should move the program along the tape, so the overhead is bounded by a factor that depends on the size of the program and not on the size of the input or computation time.\footnote{%
This observation motivates Levin's version of complexity ($Kt$, see~\cite[Section 1.3, p.~21]{levin-conservation}) where the program size and logarithm of the computation time are added: linear overhead in computation time matches the constant overhead in the program size. However, this is a different approach and we do not use the Levin's notion of time bounded complexity in this survey.}

Let $t(n)$ be an arbitrary total computable function with integer arguments and values; then the function
$$
x \mapsto \KS^{t(l(x))}_{D} (x)
$$
is a computable upper bound for the complexity $\KS(x)$ (defined with the same $D$; recall that $l(x)$ stands for the length of $x$). Replacing the function $t(\cdot)$ by a bigger function, we get a smaller computable upper bound. An easy observation: in this way we can match every computable upper bound for Kolmogorov complexity.

\begin{proposition}\label{prop:any-upper-bound}
Let $\tilde{\KS}(x)$ be some total computable upper bound for Kolmogorov complexity function based on the optimal algorithm $D$ from Proposition~\ref{prop:time-bounded-optimal}. Then there exists a computable function $t$ such that $\KS^{t(l(x))}_D(x)\le \tilde{\KS}(x)$ for every $x$.
\end{proposition}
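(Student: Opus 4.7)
The plan is to build $t$ so that $t(n)$ bounds the maximal running time of the shortest program (of length at most $\tilde{\KS}(x)$) producing any $x$ of length $n$. Since $\tilde{\KS}$ is a total computable upper bound for $\KS_D$, for every $x$ there exists a program $p_x$ of length at most $\tilde{\KS}(x)$ with $D(p_x) = x$, and this computation halts in some finite number $T(x)$ of steps. The main work is to show that we can compute an upper bound for $T(x)$ uniformly in $x$, and then take the maximum over the finitely many strings of a given length.

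First I would define, for a given string $x$, a search procedure $S(x)$ as follows: enumerate all programs $p$ of length at most $\tilde{\KS}(x)$ (a finite list, because $\tilde{\KS}(x)$ is computable), and dovetail simulations of $D(p)$ on all of them. Stop as soon as some $p$ in the list halts with output $x$, and return the total number of simulation steps used. By the preceding paragraph, at least one such $p$ exists and halts, so $S(x)$ terminates on every input; hence $S$ is a total computable function with $\KS^{S(x)}_D(x) \le \tilde{\KS}(x)$.

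Next I would set
$$
t(n) = \max\{\, S(x) : x \in \{0,1\}^n \,\}.
$$
Since $\{0,1\}^n$ is finite and $S$ is total computable, $t$ is a total computable function of $n$. Then for any $x$ of length $n$ we have $\KS^{t(n)}_D(x) \le \KS^{S(x)}_D(x) \le \tilde{\KS}(x)$, which is exactly the desired inequality with $n = l(x)$.

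The only real subtlety is verifying that $S$ is truly total computable: this relies on (i) the finiteness of the program search space, guaranteed by $\tilde{\KS}$ being a total computable function of $x$, and (ii) the fact that dovetailing will find a halting witness because $\tilde{\KS}(x) \ge \KS_D(x)$ by hypothesis. No effective bound on $T(x)$ is needed in advance—we only need existence plus the ability to recognize a successful program once it halts—so the construction is straightforward once the computational model is fixed.
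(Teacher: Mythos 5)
Your proof is correct and follows essentially the same route as the paper's: wait until every string of the given length obtains a program of length at most $\tilde{\KS}(x)$ for the fixed optimal $D$, and let $t(n)$ be the maximum number of steps used over these finitely many strings. The extra care you take (dovetailing and noting that the dovetail step count only overestimates the true running time, which is harmless by monotonicity of $\KS^t_D$ in $t$) just makes explicit what the paper leaves implicit.
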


\begin{proof}
Given a number $n$, we wait until every string $x$ of length  at most $n$ gets a program that has complexity at most $\tilde{\KS}(x)$, and let $t(n)$ be the maximal number of steps used by these programs.
\end{proof}

So the choice of a computable time bound is essentially equivalent to the choice of a computable total upper bound for Kolmogorov complexity.

In the sequel we assume that some optimal (in the sense of Proposition~\ref{prop:time-bounded-optimal}) $D$ is fixed and omit the subscript $D$ in $\KS^t_D(\cdot)$. Similar notation $\KS^t(\cdot\cnd\cdot)$ is used for conditional time-bounded complexity.

\subsection{Trade-off between time and complexity}

We use the extremely fast growing sequence $B(0),B(1),\ldots$ as a scale for measuring time. This sequence grows faster than any computable function (since the complexity of $t(n)$ for any computable $t$ is at most $\log n+O(1)$, we have $B(\log n+O(1))\ge t(n)$). In this scale it does not matter whether we use time or space as the resource measure: they differ at most by an exponential function, and $2^{B(n)}\le B(n+O(1))$ (in general, $f(B(n))\le B(n+O(1))$ for every computable $f$). So we are in the realm of general computability theory even if we technically speak about computational complexity, and the problems related to the unsolved P=NP question disappear.

Let $x$ be a string of length $n$ and complexity $k$. Consider the time-bounded complexity $\KS^t(x)$ as a function of $t$. (The optimal algorithm from Proposition~\ref{prop:time-bounded-optimal} is fixed, so we do not mention it in the notation.) It is a decreasing function of $t$. For small values of $t$ the complexity $\KS^t(x)$ is bounded by $n+O(1)$ where $n$ stands for the length of $x$. Indeed, the program that prints $x$ has size $n+O(1)$ and works rather fast. Formally speaking, $\KS^t(x)\le n+O(1)$ for $t=B(O(\log n))$. As $t$ increases, the value of $\KS^t(x)$ decreases and reaches $k=\KS(x)$ as $t\to\infty$. It is guaranteed to happen for $t=B(k+O(1))$, since the computation time for the shortest program for $x$ is determined by this program.

We can draw a curve that reflects this trade-off using $B$-scale for the time axis. Namely, consider the graph of the function
$$
 i \mapsto \KS^{B(i)}(x)-\KS(x)
$$
and the set of points above this graph, i.e., the set
$$
D_x=\{(i,j)\mid \KS^{B(i)}(x)-\KS(x)\le j\}.
$$

\begin{thm}[\cite{bauwens,ABST}]\label{thm:depth}
The set $D_x$ coincides with the set $Q_x$ with $O(\log n)$-precision for a string $x$ of length~$n$.
\end{thm}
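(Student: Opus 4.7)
The plan is to factor $D_x \approx Q_x$ through the list-position characterization of $P_x$ provided by Theorem~\ref{thm:tail-characterization} and then transport across to $Q_x$ via the affine map of Theorem~\ref{thm:def-opt}. Writing $m=\KS(x)+j$, the condition $(i,j)\in D_x$ reads $\KS^{B(i)}(x)\le m$. On the other hand, Theorem~\ref{thm:def-opt} says that $(i,j)\in Q_x$ iff $(i,\KS(x)+j-i)\in P_x$ (with $O(\log n)$ precision), and Theorem~\ref{thm:tail-characterization} further says that this last membership is equivalent (again with $O(\log n)$ precision) to: at least $2^{m-i}$ strings follow $x$ in the enumeration of all strings of complexity $\le m$. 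So everything reduces to showing that $\KS^{B(i)}(x)\le m$ is equivalent, up to $O(\log n)$, to this same list-tail condition.

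First I would translate bounded-time complexity into "early appearance in the $m$-list". The $m$-list is produced by running the optimal decompressor $D$ on all inputs of length $\le m$ and collecting those that halt. By the optimality of $D$ (Proposition~\ref{prop:time-bounded-optimal}) any alternative decompressor is simulated with polynomial overhead, and since $B(i)^{O(1)}\le B(i+O(1))$ this polynomial overhead costs only a constant shift in $i$. Therefore, up to an $O(1)$ shift, $\KS^{B(i)}(x)\le m$ is equivalent to the statement "$x$ is enumerated by step $B(i)$ of the $m$-list."

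Next I would invoke Proposition~\ref{prop:enumeration-tail}: at step $B(i)$ of the $m$-list enumeration, exactly $2^{m-i\pm O(\log m)}$ strings are still unenumerated. If $x$ has been enumerated by step $B(i)$, then every still-unenumerated string lies after $x$, so at least $2^{m-i-O(\log m)}$ strings follow $x$. Conversely, suppose at least $2^{m-i}$ strings follow $x$ in the $m$-list. Then at the moment $x$ appears, at least $2^{m-i}$ strings still remain, and Proposition~\ref{prop:enumeration-tail} (applied with $s=m-i$) forces the enumeration time at that moment to be at most $B(i+O(\log m))$, so $\KS^{B(i+O(\log m))}(x)\le m$. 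Combining both directions, $(i,j)\in D_x$ is equivalent, up to $O(\log n)$ in $i$, to the statement that at least $2^{m-i}$ strings follow $x$ in the $m$-list with $m=\KS(x)+j$. Composing with Theorem~\ref{thm:tail-characterization} and Theorem~\ref{thm:def-opt} as outlined in the first paragraph yields $D_x = Q_x$ up to $O(\log n)$-precision.

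As usual we may restrict to $i,j=O(n)$, since for larger parameters both sets contain the pair trivially (all $n$-bit strings are $(O(\log n),n+O(1))$-stochastic, and the time-bounded complexity is bounded by $n+O(1)$ for very small time bounds). The main technical obstacle is the bookkeeping in the middle step: one has to reconcile the polynomial simulation overhead implicit in Proposition~\ref{prop:time-bounded-optimal} with the specific enumeration algorithm to which Proposition~\ref{prop:enumeration-tail} applies (both must be run with the same notion of ``step''), and then collect all the $O(\log m)$ slacks from Propositions~\ref{prop:enumeration-tail}, \ref{prop:busy-beavers}, and Theorem~\ref{thm:tail-characterization} into a single uniform $O(\log n)$ error. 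Modulo this bookkeeping, the proof is a direct chain of equivalences.
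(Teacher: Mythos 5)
Your proof is correct in substance, but it takes a genuinely different route from the paper's. The paper also starts from the affine correspondence of Theorem~\ref{thm:def-opt}, but then argues \emph{directly} between $(i*j)$-descriptions and time-bounded complexity: for one direction, the maximal halting time of (short programs for) the elements of a description $A$ is computable from $A$ plus an $O(\log n)$-bit advice, hence bounded by $B(i+O(\log n))$; for the other, the strings with $\KS^{B(i)}\le i+j$ are enumerated in at most $O(2^i)$ portions (one new portion each time the best candidate for $B(i)$ increases), and the portion-splitting trick of Remark~\ref{rem:portion} turns this into an $((i+O(\log n))*j)$-description. You instead factor everything through Section~\ref{sec:bcl}: Theorem~\ref{thm:tail-characterization} converts membership in $P_x$ into a tail-count in the $m$-list ($m=\KS(x)+j$), and the new content is reduced to the equivalence ``$\KS^{B(i)}(x)\le m$ iff $x$ appears early (by enumeration step $\approx B(i)$) in the $m$-list,'' settled via Propositions~\ref{prop:busy-beavers} and~\ref{prop:enumeration-tail}. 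This is a legitimate and conceptually clean reduction; what it buys is reuse of already-proved machinery, while the paper's argument is more self-contained and avoids the compatibility bookkeeping you correctly flag. Two small points where your chain needs the care you allude to: (a) Proposition~\ref{prop:enumeration-tail} as stated gives the tail size \emph{at} time $B'(m-s)$, so to bound the time at which $x$ appears you must combine its upper-bound half (with a shifted $s$) with monotonicity of the number of unenumerated strings; and (b) the passage from ``$x$ appears by enumeration step $T$'' to ``$\KS^{\sim T}(x)\le m$'' (and back) requires the $m$-list to be produced by dovetailing a decompressor whose running time is computably related to that of the time-optimal $D$, the overhead being absorbed by an $O(\log n)$ shift in the busy-beaver argument — your claimed $O(1)$ shift there is optimistic (absorbing the $2^m$ dovetailing factor already costs $O(\log n)$ in the argument of $B$), but this is harmless since the final precision is $O(\log n)$ anyway.
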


Recall that the set $Q_x$ consists of pairs $(\alpha,\beta)$ such that $x$ is $(\alpha,\beta)$-stochastic (see p.~\pageref{def:qx}).

\begin{proof}
As we know from Theorem~\ref{thm:def-opt}, the sets $P_x$ and $Q_x$ are related by an affine transformation (see Figure~\ref{mdl.8}). Taking this transformation into account, we need to prove two statements:
\begin{itemize}
\item if there exists an $(i*j)$-description $A$ for $x$, then $$\KS^{B(i+O(\log n))}(x)\le i+j+O(\log n);$$
\item if $\KS^{B(i)}(x)\le i+j$, then
$$\text{there exist an }((i+O(\log n))*(j+O(\log n)))\text{-description for $x$.}$$
\end{itemize}
Both statements are easy to prove using the tools from the previous section. Indeed, assume that $x$ has an $(i*j)$-description $A$. All elements of $A$ have complexity at most $i+j+O(\log n)$. Knowing $A$ and this complexity, we can find the minimal $t$ such that $C^t(x')\le i+j+O(\log n)$ for all $x'$ from $A$. This $t$ can be computed from $A$, which has complexity $i$, and an $O(\log n)$-bit advice (the value of complexity). Hence  $t\le B(i+O(\log n))$ and $C^t(x)\le i+j+O(\log n)$, as required.

The converse: assume that $C^{B(i)}(x)\le i+j$. Consider all the strings $x'$ that satisfy this inequality. There are at most $O(2^{i+j})$ such strings. Thus we only need to show that given $i$ and $j$ we are able to enumerate all those strings in at most $O(2^i)$ portions.

One can get a list of all those strings $x'$ if $B(i)$ is given, but we cannot compute $B(i)$ given~$i$. Recall that $B(i)$ is the maximal integer that has complexity at most $i$; new candidates for $B(i)$ may appear at most $2^i$ times. The candidates increase with time; when this happens, we get a new portion of strings that satisfy the inequality $C^{B(i)}(x)\le i+j$. So we have at most $O(2^{i+j})$ objects including $x$ that are enumerated in at most $2^i$ portions, and this implies that $x$ has an $((i+O(\log n))*j)$-description. Indeed, we make all portions of size at most $2^j$ by splitting larger portions into pieces. The number of portions increases at most by $O(2^i)$, so it remains $O(2^i)$. Each portion (including the one that contains $x$) has then complexity at most $i+O(\log n)$ since it can be computed with logarithmic advice from its ordinal number.
\end{proof}

This theorem shows that the results about the existence of non-stochastic objects can be considered as the ``mathematical results that could be interpreted as the existence of an object $x$ that has simple programs (has very small complexity $K(x)$) but all short programs that produce $x$ require an unrealistically long computation'' mentioned by Kolmogorov (see the quotation above), and the algorithmic statistics can be interpreted as an implementation of Kolmogorov's plan ``to study the dependence of the program complexity $K^t(x)$ on the difficulty $t$ of its transformation into~$x$'', at least for the simple case of (unrealistically) large values of $t$.

\subsection{Historical comments}

Section~\ref{sec:depth} has title ``logical and computational depth'' but we have not defined these notions yet.  The name ``logical depth'' was introduced by C.~Bennett in~\cite{bennett}. He explains the motivation as follows:
\begin{quote}
Some mathematical and natural objects (a random sequence, a sequence of zeros, a perfect crystal, a gas) are intuitively trivial, while others (e.g., the human body, the digits of $\pi$) contain internal evidence of a nontrivial causal history. $\langle\ldots\rangle$

We propose depth as a formal measure of value. From the earliest days of information theory it has been appreciated that information per se is not a good measure of message value. For example, a typical sequence of coin tosses has high information content but little value; an ephemeris, giving the positions of the moon and the planets every day for a hundred years, has no more information than the equations of motion and initial conditions from which it was calculated, but saves its owner the effort of recalculating these positions.  The value of a message thus appears to reside not in its information (its absolutely unpredictable parts), nor in its obvious redundancy (verbatim repetitions, unequal digit frequencies), but rather is what might be called its buried redundancy --- parts predictable only with difficulty, things the receiver could in principle have figured out without being told, but only at considerable cost in money, time, or computation. In other words, the value of a message is the amount of mathematical or other work plausibly done by its originator, which its receiver is saved from having to repeat.
\end{quote}
Trying to formalize this intuition, Bennett suggests the following possible definitions:
\begin{quote}
\textbf{Tentative Definition 0.1}: A string's depth might be defined as the execution time of its minimal program.
\end{quote}
This notion is not robust (it depends on the specific choice of the optimal machine used in the definition of complexity). So Bennett considers another version:
\begin{quote}
\textbf{Tentative Definition 0.2}: A string's depth at significance level $s$ [might] be defined as the time required to compute the string by a program no more than $s$ bits larger than the minimal program.
\end{quote}
   We see that Definition 0.2 consider the same trade-off as in Theorem~\ref{thm:depth}, but in reversed coordinates (time as a function of difference between time-bounded and limit complexities). Bennett is still not satisfied by this definition, for the following reason:
\begin{quote}
This proposed definition solves the stability problem, but is unsatisfactory in the way it treats multiple programs of the same length. Intuitively, $2^k$ distinct $(n+k)$-bit programs that compute same output ought to be accorded the same weight as one $n$-bit program $\langle\ldots\rangle$
\end{quote}
In other language, he suggests to consider a priori probability instead of complexity:
\begin{quote}
\textbf{Tentative Definition 0.3}: A string's depth at significance level $s$ might be defined as the time $t$ required for the string's time-bounded algorithmic probability $P_t(x)$ to rise to within a factor $2^{-s}$ of its asymptotic time-unbounded value $P(x)$.
\end{quote}
Here $P_t (x)$ is understood as a total weight of all self-delimiting programs that produce $x$ in time at most $t$ (each program of length $s$ has weight $2^{-s}$). For our case (when we consider busy beaver numbers as time scale) the exponential time increase needed to switch from a priori probability to prefix complexity does not matter. Still Bennett is interested in more reasonable time bounds (recall that in his informal explanation a polynomially computable sequence of $\pi$-digits was an example of a deep sequence!), and prefers a priori probability approach. Moreover, he finds a nice reformulation of this definition (almost equivalent one) in terms of complexity:

\begin{quote}
Although Definition 0.3 satisfactorily captures the informal notion of depth, we propose a slightly stronger definition for the technical reason that it appears to yield a stronger slow growth property $\langle\ldots\rangle$

\textbf{Definition 1} (Depth of Finite Strings): Let $x$ and $w$ be strings [probably $w$ is a typo: it is not mentioned later] and $s$ a significance parameter. A string's \emph{depth} at significance level $s$, denoted $D_s(x)$, will be defined as $$\min\{T(p)\colon (|p|-|p^*|<s) \land (U(p)=x)\},$$ the least time required to compute it by a $s$-incompressible program.
\end{quote}
Here $p^*$ is a shortest self-delimiting program for $p$, so its length $|p^*|$ equals $\KP(p)$.

Actually, this \emph{Definition 1} has a different underlying intuition than all the previous ones: a string $x$ is deep if \emph{all programs that compute $x$ in a reasonable time, are compressible}.  Note the before we required a different thing: that all programs that compute $x$ in a reasonable time are much longer than the minimal one. This is a weaker requirement: one may imagine a long incompressible program that computes $x$ fast. This intuition is explained in the abstract of the paper~\cite{bennett} as follows:
\begin{quote}
[We define] an object's ``logical depth'' as the time required by a standard universal Turing machine to generate it from an input that is algorithmically random.
\end{quote}
Bennett then proves a statement (called Lemma 3 in his paper) that shows that his \emph{Definition 1} is almost equivalent to \emph{Tentative Definition 0.3}: the time remains exactly the same, while $s$ changes at most logarithmically (in fact, at most by $\KP(s)$). So if we use Bennett's notion of depth (any of them, except for the first one mentioned) with busy beaver time scale, we get the same curve as in our definition.

A natural question arises: is there a direct proof that the output of an incompressible program with not too large running time is stochastic? In fact, yes, and one can prove a more general statement: the output of a  \emph{stochastic} program with reasonable running time is stochastic (see Section~\ref{subsec:depth-appl}); note that stochasticity is a weaker condition than incompressibility.

Let us mention also the notion of \emph{computational depth} introduced in~\cite{AFM}.  There are several versions mentioned in this paper;  the first one exchanges coordinates in the Bennett's tentative definition 0.2 (reproduced in~\cite{AFM} as Definition 2.5). The authors write: ``The first notion of computational depth we propose is the difference between a time-bounded Kolmogorov complexity and traditional Kolmogorov complexity'' (Definition 3.1, where time bound is some function of input length). The other notions of computation depth are more subtle (they use distinguishing complexity or Levin complexity involving the logarithm of the computation time).

The connections between computational/logical depth and sophistication were anticipated for a long time; for example, Koppel writes in~\cite{koppel}:
\begin{quote}
$\langle\ldots\rangle$ The ``dynamic'' approach to the formalization of meaningful complexity is ``depth'' defined and discussed by Bennett [1]. [Reference to an unpublished paper ``On the logical `depth' of sequences and their reducibilities to incompressible sequences''.] The depth of an object is the running-time of its most concise description. Since it is reasonable to assume that an object has been generated by its most concise description, the depth of an object can be thought of as a measure of its evolvedness.

Although sophistication is measured in integers [not clear what in meant here: sophistication of $S$ is also a function $c\mapsto SOPH_c(S)$] and depth is measured in functions, it is not difficult to translate to a common range.
\end{quote}

Strangely, the direct connection between the most basic versions of these notions (Theorem~\ref{thm:depth}) seems to be noticed only recently in~\cite[Section 3]{bauwens}, and \cite{ABST}.

\subsection{Why so many equivalent definitions?}\label{subsec:depth-appl}

We have shown several equivalent (with logarithmic precision and up to affine transformation) ways to defined the same curve:
\begin{itemize}
\item $(\alpha,\beta)$-stochasticity (Section~\ref{sec:stoch});
\item two-part descriptions and optimality deficiency, the
set $P_x$ (Section~\ref{sec:two-part});
\item position in the enumeration of objects of bounded complexity (Section~\ref{sec:bcl});
\item logical/computational depth (resource-bounded complexity, Section~\ref{sec:depth}).
\end{itemize}
One can add to this list a characterization in terms of split enumeration (Section~\ref{subsec:opt-rand}): the existence of $(i*j)$-description for $x$ is equivalent (with logarithmic precision) to the existence of a simple enumeration of at most $2^{i+j}$ objects in at most $2^{i}$ portions (see Remark~\ref{rem:portion}, p.~\pageref{rem:portion}, and the discussion before it).

Why do we need so many equivalent definitions of the same curve? First, this shows that this curve is really fundamental --- almost as fundamental characterization of an object $x$ as its complexity. As Koppel writes in~\cite{koppel87}, speaking about (some versions of) sophistication and depth:
\begin{quote}
One way of demonstrating the naturalness of a concept is by proving the equivalence of a variety of prime facie different formalizations $\langle\ldots\rangle$. It is hoped that the proof of the equivalence of two approaches to meaningful complexity, one using static resources (program size) and the other using dynamic resources (time), will demonstrate not only the naturalness of the concept but also the correctness of the specifications used in each formalization to ensure robustness and generality.
\end{quote}

\ver{
Another, more technical reason: different results about stochasticity use different equivalent definitions, and a statement that looks quite mysterious for one of them may become almost obvious for another. Let us give two examples of this type (the first one is stochasticity conservation when random noise is added, the second one is a direct proof of Bennett's characterization mentioned above). The first example is the following theorem 
from~\cite[Theorem 14]{ver2015} (though the proof there is different).

\begin{thm}\label{prop:add-noise}
  Let $x$ be some binary string, and let $y$ be another string (``noise'') that is conditionally random with respect to $x$, i.e., $\KS(y\cnd x)\approx l(y)$. Then the pair $(x,y)$ has the same stochasticity profile as $x$: the sets $Q_x$ and $Q_{(x,y)}$ are close to each other. More specifically, if
  $\KS(y\cnd x)\ge l(y)-\eps$
  then the sets $Q_x$ and $Q_{(x,y)}$ are in a
  $O(\log l(x)+\log l(y)+\eps)$-neighborhood of each other.
\end{thm}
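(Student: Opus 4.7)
The plan is to prove the two inclusions $Q_{(x,y)} \subseteq Q_x$ and $Q_x \subseteq Q_{(x,y)}$ separately. The first is free: the projection $F(x',y') = x'$ is total computable, so Proposition~\ref{prop:stoch-cons} gives that any $(\alpha,\beta)$-stochasticity witness for $(x,y)$ yields an $(\alpha + O(1), \beta + O(1))$-stochasticity witness for $x$. So the real work lies in the reverse inclusion.

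For that direction, given $(\alpha,\beta) \in Q_x$, I would try to build a model for $(x,y)$ out of a model for $x$. The obvious candidate is $B = A \times \{0,1\}^n$, with $n = l(y)$ and $A$ a witness for $x$, but the randomness deficiency of $B$ does not behave cleanly under this product: when one expands $\KS((x,y)\cnd B)$ via the chain rule, a term $\KS(y \cnd x, A)$ appears and one loses roughly $\KS(A \cnd x)$ bits. The key maneuver will be to route through \emph{optimality} deficiency, which is essentially additive in unconditional complexities and hence meshes cleanly with the product construction. Concretely, I would first invoke Theorem~\ref{thm:improving-descriptions} to replace the randomness-deficiency witness by a set $A \ni x$ with $\KS(A) \le \alpha + O(\log l(x))$ and $\delta(x,A) \le \beta + O(\log l(x))$.

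Then set $B = A \times \{0,1\}^n$, so $\KS(B) \le \KS(A) + O(\log l(y))$ and $\log \#B = \log \#A + n$. The Levin--G\'acs formula combined with the noise assumption gives
$$
\KS((x,y)) \ge \KS(x) + \KS(y \cnd x) - O(\log l(x)) \ge \KS(x) + n - \varepsilon - O(\log l(x)),
$$
so a short computation yields
$$
\delta((x,y),B) = \KS(B) + \log \#B - \KS((x,y)) \le \delta(x,A) + \varepsilon + O(\log l(x) + \log l(y)),
$$
which is at most $\beta + \varepsilon + O(\log l(x) + \log l(y))$. Proposition~\ref{prop:randomness-optimality} then converts this optimality-deficiency bound into a randomness-deficiency bound of the same size, so $(x,y)$ is $(\alpha + O(\log l(x) + \log l(y)),\, \beta + \varepsilon + O(\log l(x) + \log l(y)))$-stochastic, establishing the reverse inclusion with the required precision. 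The main obstacle is purely bookkeeping: keeping careful track of the logarithmic terms from Levin--G\'acs, from the reduction in Theorem~\ref{thm:improving-descriptions}, and from any $\KP/\KS$ conversion; the only essential use of the randomness hypothesis is in the one inequality $\KS(y\cnd x) \ge n - \varepsilon$ above.
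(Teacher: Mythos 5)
Your proof is correct, but it takes a genuinely different route from the paper's. The paper deliberately proves this theorem through the depth characterization of the stochasticity profile (Theorem~\ref{thm:depth}): it reduces the claim to $\KS^{B(i)}(x,y)-\KS(x,y)\approx \KS^{B(i)}(x)-\KS(x)$ and then establishes a time-bounded version of the Kolmogorov--Levin symmetry-of-information formula to handle the two inequalities; indeed the theorem is presented there as an advertisement for having several equivalent definitions of the same curve. You instead argue directly with models: the inclusion of $Q_{(x,y)}$ into a neighborhood of $Q_x$ is immediate from stochasticity conservation under the total projection map (Proposition~\ref{prop:stoch-cons}), and for the converse you first upgrade the witness via Theorem~\ref{thm:improving-descriptions} to a set $A\ni x$ with small \emph{optimality} deficiency, take the product model $A\times\{0,1\}^{l(y)}$, and bound its optimality deficiency using Levin--G\'acs together with the hypothesis $\KS(y\cnd x)\ge l(y)-\eps$, returning to randomness deficiency by Proposition~\ref{prop:randomness-optimality}. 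The detour through optimality deficiency is exactly the right move and is essential: as you observe, the naive randomness-deficiency computation for the product loses about $\KS(A\cnd x)$ bits, since nothing prevents an arbitrary witness $A$ from carrying information about $y$, whereas the optimality-deficiency computation involves only unconditional complexities. Your lifting construction is close in spirit to the remark the paper makes after Theorem~\ref{rem:add-noise} for the profile ($P_x$) version of one inclusion, but you carry the argument out in $Q$-coordinates in both directions, which yields a more elementary proof avoiding resource-bounded complexity altogether; what the paper's route buys in exchange is the reusable time-bounded symmetry-of-information lemma and the explicit link to logical depth.
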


Recall that we can speak about profiles of arbitrary finite objects, in particular, pairs of strings, using some natural encoding (Section~\ref{subsec:stoch-cons}).
Before giving a proof sketch, let us make two remarks.

\begin{remark}
    By Theorem~\ref{thm:def-opt} the set $P_x$ can be obtained
  by a simple transformation from the set $Q_x$ and $\K(x)$, and the other
  way around. Thus Theorem~\ref{prop:add-noise} can be re-formulated in terms of the
  profiles $P_x$ and $P_{(x,y)}$. However the statement becomes more involved:

\begin{thm}[Theorem~\ref{prop:add-noise} in terms of profiles]~\label{rem:add-noise} 
  Let $x$ be a binary string
  and let $y$ be another string such that $\KS(y\cnd x)\ge l(y)-\eps$. 
  Then the set  $P_{(x,y)}$ can be obtained from the set $P_x$ by the following
  transformation $\phi$:
  $$
  \phi(P_x)=\{(i,j+l(y))\mid i\le\K(x), (i,j)\in P_x\}\cup
  \{(i,j)\mid i>\K(x),\ i+j\ge \K(x,y)\}
  $$
More accurately, the sets $P_{(x,y)}$ and $\phi(P_x)$ are in  
an $O(\log l(x)+\log l(y)+\eps)$-neighborhood of each other.
\end{thm}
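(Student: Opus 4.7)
The plan is to derive this profile statement from Theorem~\ref{prop:add-noise} (about $Q_x$ and $Q_{(x,y)}$) by pushing everything through the affine transformation of Theorem~\ref{thm:def-opt}. All estimates below carry an implicit additive error $O(\log l(x)+\log l(y)+\eps)$, which I abbreviate $O(\log)$.

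First I would verify the basic identity
$
\K(x,y)=\K(x)+l(y)+O(\log).
$
The upper bound follows from $\K(y\cnd x)\le l(y)+O(\log l(y))$ (encode $y$ by itself) together with the Levin--G\'acs formula $\K(x,y)=\K(x)+\K(y\cnd x)+O(\log)$. The lower bound uses the hypothesis $\K(y\cnd x)\ge l(y)-\eps$ in the same formula. This identity will be what makes the ``vertical shift by $l(y)$'' in $\phi$ come out correctly.

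Next, fix $(i,j)$ with $i\le\K(x,y)$. By Theorem~\ref{thm:def-opt} (applied to $(x,y)$),
\[
(i,j)\in P_{(x,y)}\iff (i,\,i+j-\K(x,y))\in Q_{(x,y)}
\]
(with logarithmic precision). By Theorem~\ref{prop:add-noise} (again $O(\log)$-precision), $Q_{(x,y)}$ and $Q_x$ coincide, so this is equivalent to $(i,\,i+j-\K(x,y))\in Q_x$. Now I would split on the value of $i$. If $i\le\K(x)$, apply Theorem~\ref{thm:def-opt} in the other direction to convert $Q_x$ back into $P_x$: the pair $(i,\beta)\in Q_x$ corresponds to $(i,\K(x)-i+\beta)\in P_x$, and substituting $\beta=i+j-\K(x,y)$ gives second coordinate $j+\K(x)-\K(x,y)=j-l(y)+O(\log)$. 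Thus in this regime $(i,j)\in P_{(x,y)}\iff (i,j-l(y))\in P_x$, which is exactly the first clause of $\phi$. If instead $\K(x)<i\le\K(x,y)$, then (by the footnote in Theorem~\ref{thm:def-opt}) every pair with first coordinate $i$ and nonnegative second coordinate lies in $Q_x$, so the condition reduces to $i+j-\K(x,y)\ge 0$, i.e.\ $i+j\ge\K(x,y)$, matching the second clause of $\phi$. Finally, for $i>\K(x,y)$ the pair $(i,j)$ lies in $P_{(x,y)}$ for every $j\ge 0$, and $i+j\ge\K(x,y)$ holds automatically, so again the second clause applies.

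Putting the three regimes together gives $P_{(x,y)}$ and $\phi(P_x)$ coincide within $O(\log l(x)+\log l(y)+\eps)$. The only mild subtlety is to make sure the two clauses of $\phi$ agree on the seam $i=\K(x)$: at that point the shifted clause gives points $(i,j+l(y))$ with $i+j\ge\K(x)$, i.e.\ $i+(j+l(y))\ge\K(x)+l(y)\approx\K(x,y)$, which is exactly the triangle clause. So the definition of $\phi$ is self-consistent on the boundary and the two descriptions of $P_{(x,y)}$ merge without extra error. The main ``work'' in this argument is not a new construction but careful bookkeeping of the three affine transformations and their error terms; since Theorem~\ref{prop:add-noise} is assumed, there is no genuine combinatorial obstacle.
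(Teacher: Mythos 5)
Your derivation is correct and matches the paper's intended route: the paper obtains this statement from Theorem~\ref{prop:add-noise} exactly by passing through the affine correspondence of Theorem~\ref{thm:def-opt}, together with the identity $\K(x,y)=\K(x)+l(y)$ up to $O(\log l(x)+\log l(y)+\eps)$. Your case split on $i\le\K(x)$, $\K(x)<i\le\K(x,y)$, $i>\K(x,y)$ is just the detailed bookkeeping the paper leaves implicit, so there is nothing to add.
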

The transformation of $P_x$ to $P_{(x,y)}$
is shown on Fig.~\ref{ver-1}.
\begin{figure}[h]
\begin{center}\includegraphics{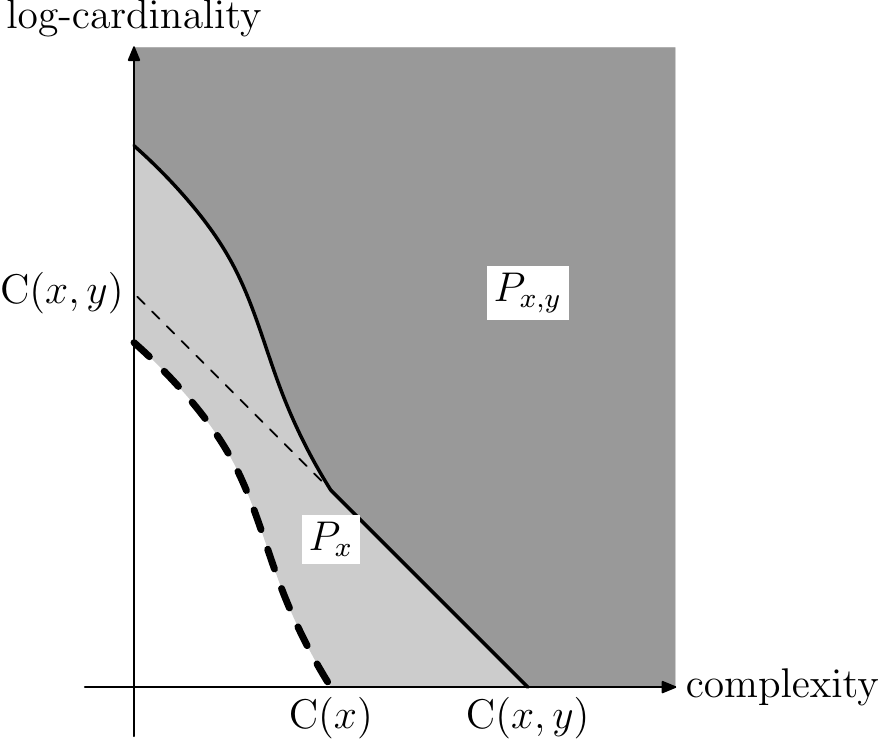}\end{center}
\caption{
  \ver{The boundary of $P_{xy}$
  is obtained by shifting the boundary of $P_x$
  vertically by $l(y)\approx \K(x,y)-\K(x)$ and adding the sloping segment
  with the endpoints $(\K(x),\K(x,y)-\K(x))$ and
  $(\K(x,y),0)$.}}\label{ver-1}
\end{figure}
\end{remark}

\begin{remark}
  An interesting special case of this theorem is obtained if we consider a string $u$ and its description $X$ with small randomness deficiency: $d(u\cnd X)\approx 0$. Let $y$ be the ordinal number of $u$ in $X$. Then the small randomness deficiency guarantees that $y$ is conditionally random with respect to $X$. Therefore the pair $(X,y)$ has the same stochasticity profile as $u$. Since this pair is mapped to $u$ by a simple total computable function, we conclude (Proposition~\ref{prop:stoch-cons}) that the stochasticity profile of $X$ is contained in the stochasticity profile of $u$ (more precisely, in its $O(\log n + d(u\cnd X))$-neighborhood).
  For profiles, there is a more simple and direct proof of the
  inclusion of $\phi(P_X)$ into a small neighborhood of  $P_{u}$:
  if $\cal{U}$ is an $(i*j)$-description for $X$, we consider the ``lifting'' of
  $\mathcal U$, i.e. the union of all elements of $\cal{U}$ that have approximately the same cardinality as $X$; in this way we obtain a $(i*(j+l(y))$-description for $u$. This shows
  that the set
  $\{(i,j+l(y))\mid i\le\K(x), (i,j)\in P_x\}$
  is included in $P_u$.
  For the set  $\{(i,j)\mid i>\K(x),\ i+j\ge \K(x,y)\}$
  the inclusion is obvious, as for all $i>\K(x)$,
  $j\ge \K(x,y)-i$
  the set   of all strings in $X$ whose index has the same
  $l(y)-j$ leading
  bits, as $u$, is a $(i*j)$-description of $u$. 
\end{remark}
}

\smallskip

\begin{proof}[A proof sketch of Theorem~\ref{prop:add-noise}]
Using the depth characterization of stochasticity profile, we need to show that
$$
\KS^{B(i)}(x,y) -\KS(x,y) \approx \KS^{B(i)}(x)-\KS(x).
$$
Here ``approximately'' means that these two quantities may differ by a logarithmic term, and also we are allowed to add logarithmic terms to $i$ (see below what does it mean). The natural idea is to rewrite this equality as
$$
\KS^{B(i)}(x,y)-\KS^{B(i)}(x) \approx \KS(x,y)-\KS(x).
$$
The right hand side is equal to $\KS(y\cnd x)$ (with logarithmic precision) due to Kol\-mo\-go\-rov--Le\-vin formula for the complexity of a pair (see, e.g.,~\cite[Chapter 2]{usv}), and $\KS(y\cnd x)$ equals $l(y)$, as $y$ is random and independent of $x$. Thus it suffices to show that the left hand side also equals $l(y)$. To this end we can prove a version of Kolmogorov--Levin formula for bounded complexity and show that the left hand side equals to $\KS^{B(i)}(y\cnd x)$. Again, since  $y$ is random and independent of $x$, $\KS^{BB(i)}(y\cnd x)$ equals $l(y)$.

This plan needs clarification. First of all, let us explain which version of Kol\-mo\-go\-rov--Le\-vin formula for bounded complexity we need. (Essentially it was published by Longpr\'e in~\cite{longpre-thesis} though the statement was obscured by considering time bound as a function of the input length.)

The equality $\KS(x,y)=\KS(x)+\KS(y\cnd x)$ should be considered as two inequalities, and each one should be treated separately.

\begin{lemma}
1. There exist some constant $c$ and some polynomial $p(\cdot,\cdot)$ such that
$$
\KS^{p(n,t)}(x,y)\le \KS^t(x)+\KS^t(y\cnd x)+c\log n
$$
for all $n$ and $t$ and for all strings $x$ and $y$ of length at most $n$.

2. There exist some constant $c$ and some polynomial $p(\cdot,\cdot)$ such that
$$
\KS^{p(2^n,t)}(x)+\KS^{p(2^n,t)}(y\cnd x) \le \KS^t(x,y)+c \log n
$$
for all $n$ and $t$ and for all strings $x$ and $y$ of length at most $n$.
\end{lemma}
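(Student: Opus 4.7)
\emph{Plan.} Both parts mirror the classical Kolmogorov--Levin formula $\KS(x,y) = \KS(x) + \KS(y \cnd x) + O(\log n)$, but with the complication that we must track computation time carefully. Part~1 is the easy ``concatenation'' direction; Part~2 is the counting argument, where the exponential time bound $p(2^n, t)$ appears because the decoder must enumerate all outputs of bounded time-$t$ complexity and this enumeration takes time at least $2^k \cdot t$ for $k$ close to $\KS^t(x,y) \le 2n + O(1)$.

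\emph{Part 1 (upper bound).} Let $p_1$ be a $\KS^t(x)$-bit program that outputs $x$ in $\le t$ steps, and $p_2$ be a $\KS^t(y \cnd x)$-bit program that outputs $y$ from $x$ in $\le t$ steps. Using a standard self-delimiting encoding $\hat{p}_1$ of $p_1$ (which costs $O(\log |p_1|) = O(\log n)$ extra bits and $\mathrm{poly}(n)$ decoding time), concatenate into the program $\hat{p}_1 p_2$, which parses $\hat{p}_1$, runs $p_1$ to obtain $x$, runs $p_2$ with $x$ to obtain $y$, and outputs the encoded pair $(x,y)$. The total running time is $t + t + \mathrm{poly}(n) \le p(n,t)$ for some fixed polynomial $p$, giving $\KS^{p(n,t)}(x,y) \le \KS^t(x) + \KS^t(y \cnd x) + c\log n$.

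\emph{Part 2 (lower bound).} Let $k = \KS^t(x,y)$; we may assume $k \le 2n + O(1)$ (otherwise the trivial bound $\KS^t(x) \le n + O(1)$, $\KS^t(y \cnd x) \le n + O(1)$ already suffices). Given $k, t, n$, enumerate the set
$$
T = \{(x', y') : \KS^t(x', y') \le k,\ |x'|, |y'| \le n\}
$$
by running every program of length $\le k$ for $t$ steps: this takes time $2^k \cdot t \cdot \mathrm{poly}(n) \le \mathrm{poly}(2^n, t)$. For each $x'$, let $S_{x'} = \{y' : (x', y') \in T\}$, and let $s(x') = \lfloor \log|S_{x'}| \rfloor$; write $s = s(x)$. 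Since $|T| \le 2^k$, the number of $x'$ with $s(x') \ge s$ is at most $2^{k-s+1}$, and in particular $x$ can be specified by $k, s, n, t$ plus its ordinal number (at most $k - s + 1$ bits) in the enumeration of those $x'$. This yields $\KS^{p(2^n,t)}(x) \le (k - s) + O(\log n)$. Symmetrically, given $x$, $k$, $t$, $n$, the set $S_x$ is enumerable in time $\mathrm{poly}(2^n, t)$ and $y$ is specified by its index there, so $\KS^{p(2^n, t)}(y \cnd x) \le s + O(\log n)$. Adding these bounds gives $k + O(\log n)$, as required.

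\emph{Main obstacle.} The delicate point is that in the unbounded version one uses that $S_x$ is enumerable from $x$ alone, whereas in the time-bounded version the enumeration of $T$ intrinsically requires time $\ge 2^k$ to explore all candidate programs. One must therefore give up on matching the time bound $t$ and accept $\mathrm{poly}(2^n, t)$; the inequality $k \le 2n + O(1)$ is what keeps this bound within the claimed form. A secondary nuisance is making the constant $c\log n$ absorb all the self-delimiting encodings of $k, s, n$ and the ``split into $2^i$ portions'' accounting --- routine but worth checking that nothing here depends on $t$.
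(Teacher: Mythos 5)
Your Part~1 is fine and is exactly the paper's argument (concatenate a self-delimiting encoding of a fast program for $x$ with a fast program for $y$ given $x$; polynomial simulation overhead). Your Part~2 follows the paper's counting scheme as well, but as written it has a genuine gap: your decoder is ``given $k,t,n$'' and you explicitly include $t$ among the data specifying $x$ (and $y$ given $x$). The quantity $t$ is an arbitrary integer, not bounded by any function of $n$, and it appears nowhere as a condition in the statement; so encoding it costs up to $\log t$ extra bits (or $\KS(t)$ bits, which is still not $O(\log n)$), and the claimed additive term $c\log n$ is lost. This is not a bookkeeping issue: in the application in the paper one takes $t=B(i)$, a busy-beaver value, so any slack depending on $t$ makes the lemma useless there.

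The missing idea is to remove $t$ from the description by enumerating \emph{in order of increasing time} rather than ``running every program of length $\le k$ for $t$ steps''. Dovetail all programs of length at most $k$; output $y'$ each time some program halts with output $(x,y')$. Every $y'$ that appears before $y$ does so within time at most $t$ (since $y$ itself appears by time $t$), hence belongs to your set $S_x$; so the ordinal number of $y$ in this enumeration needs only $s+O(1)$ bits, and the decoder needs only $k$, $n$ and that index --- no $t$. Similarly, enumerate those $x'$ whose count of companions $y'$ reaches the threshold $2^{s-1}$ at \emph{some} time; there are at most $2^{k-s+O(1)}$ of them ever, $x$ is among them by time $t$, and again its ordinal number suffices. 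The running time of these decoders is still $2^{k}\cdot t\cdot\poly(n)\le\poly(2^n,t)$, so the stated time bound survives. With this modification (which is precisely the point the paper stresses: ``we do not need to specify $t$ in advance''), your argument is complete and coincides with the paper's proof.
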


\begin{proof}[Proof of the lemma]
The proof of this time-bounded version is obtained by a straightforward analysis of the time requirements in the standard proof. The first part says that if there is some program $p$ that produces $x$ in time $t$, and some program $q$ that produces $y$ from $x$ in time $t$, then the pair $(p,q)$ can be considered as a program that produces $(x,y)$ in time $\poly(t,n)$ and has length $l(p)+l(q)+O(\log n)$ (we may assume without loss of generality that $p$ and $q$ have length $O(n)$, otherwise we replace them by shorter fast programs).

The other direction is more complicated. Assume that $\KS^t(x,y)=m$. We have to  count for a given $x$ the number of strings $y'$ such that $\KS^{t}(x,y')\le m$. These strings ($y$ is one of them) can be enumerated in time $\poly(2^n,t)$, so if there are $2^s$ of them,  then $\KS^{\poly(2^n,t)}(y\cnd x)\le s+O(\log n)$ (the program witnessing this inequality is the ordinal number of $y$ in the enumeration plus $O(\log n)$ bits of auxiliary information. Note that we do not need to specify $t$ in advance, we enumerate $y'$ in order of increasing time, and $y$ is among first $2^s$ enumerated strings.

On the other hand, there are at most $2^{m-s+O(1)}$ strings $x'$ for which this number (of different $y'$ such that $\KS^t(x',y')\le m$) is at least $2^{s-1}$, and these strings also could be enumerated in time $\poly(2^n,t)$, so $\KS^{\poly(2^n,t)}(x)\le m-s+O(\log n)$ (again we do not need to specify $t$, we just increase gradually the time bound). When these two inequalities are added, $s$ disappears and we get the desired inequality.
\end{proof}

Of course, the exponent in the lemma is disappointing (for space bound it is not needed, by the way), but since we measure time in busy beaver units, it is not a problem for us: indeed, $\poly(2^n,B(i))\le B(i+O(\log n))$, and we allow logarithmic change in the argument anyway.

Now we should apply this lemma, but first we need to give a full statement of what we want to prove. There are two parts (as in the lemma):
\begin{itemize}
\item for every $i$ there exists $j\le i+O(\log n)$ such that
 $$
   \KS^{B(j)}(x,y)-\KS(x,y) \le \KS^{B(i)}(x)-\KS(x)+\varepsilon+O(\log n)
 $$
for all strings $x$ and $y$ of length at most $n$ such that $\KS(y\cnd x)\le l(y)-\varepsilon$;
 \item for every $i$ there exists $j\le i+O(\log n)$ such that
 $$
   \KS^{B(j)}(x)-\KS(x) \le \KS^{B(i)}(x,y)-\KS(x,y)+O(\log n)
 $$
for all strings $x$ and $y$ of length at most $n$;
\end{itemize}
Both statements easily follow from the lemma. Let us start with the second statement where the hard direction of the lemma is used. As planned, we rewrite the inequality as
$$
 \KS^{B(j)}(x)+\KS(y\cnd x)\le \KS^{B(i)}(x,y)+O(\log n)
$$
using the unbounded formula. Our lemma guarantees that
$$
 \KS^{B(j)}(x)+\KS^{B(j)}(y\cnd x)\le \KS^{B(i)}(x,y)+O(\log n)
$$
for some $j\le i+O(\log n)$, and it remains to note that
$
\KS(y\cnd x) \le \KS^{B(j)}(y\cnd x).
$
For the other direction the argument is similar: we rewrite the inequality as
$$
 \KS^{B(j)}(x,y)\le \KS(y\cnd x)+\KS^{B(i)}(x)+O(\log n)
$$
and note that
$\KS(y\cnd x)\ge l(y)-\varepsilon \ge C^{B(i)}(y\cnd x)-\varepsilon$, assuming that $B(i)$ is greater than the time needed to print $y$ from its literary description (otherwise the statement is trivial). So the lemma again can be used (in the simple direction).
\end{proof}

This proof used the depth representation of the stochasticity curve; in other cases some other representation are more convenient. Our second example is the change in stochasticity profile when a simple algorithmic transformation is applied. We have seen (Section~\ref{subsec:stoch-cons}) that a total mapping with a short program preserves stochasticity, and noted that for non-total mapping it is not the case (Remark~\ref{rem:non-total}, p.~\pageref{rem:non-total}). However, if the time needed to perform the transformation is bounded, we can get some bound (first proven by A.~Milovanov in a different way):

\ver{
\begin{thm}\label{prop:stoch-nontotal}
Let $F$ be a partial computable mapping whose arguments and values are strings. If some $n$-bit string $x$ is $(\alpha,\beta)$-stochastic, and $F(x)$ is computed in time $B(i)$ for some $i$, then $F(x)$ is $(\max(\alpha,i)+O(\log n),\beta+O(\log n))$-stochastic. (The constant in $O(\log n)$-notation depends on $F$ but not on $n,x,\alpha,\beta$.)
\end{thm}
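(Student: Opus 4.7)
The plan is to use a standard description of $x$ of complexity $\max(\alpha,i)$---not the naive $\alpha$---so that the time bound $B(i)$ is already encoded in the description, reducing the non-total case to a conservation-of-randomness argument. First, observe that $(\alpha,\beta)$-stochasticity implies $(\max(\alpha,i),\beta)$-stochasticity by monotonicity. Applying Theorem~\ref{thm:improving-descriptions}, Proposition~\ref{prop:better-std}, and Proposition~\ref{prop:std-omega}, we obtain a standard description $A\ni x$ with $\KP(A)\le \max(\alpha,i)+O(\log n)$ and $d(x\cnd A)\le \beta+O(\log n)$, such that $A$ is $O(\log n)$-equivalent to $\Omega_{\max(\alpha,i)}$.

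The key observation is that, by Proposition~\ref{prop:omega-equivalence}, $\Omega_i$ is the $i$-bit prefix of $\Omega_{\max(\alpha,i)}$, and $B(i)$ is computable from $\Omega_i$ (as the maximum integer appearing in the enumeration of strings of complexity at most $i$). Hence $B(i)$ is $O(\log n)$-computable from $A$. Define $\tilde F(y):=F(y)$ if $F(y)$ halts within $B(i)$ steps and $\tilde F(y):=\bot$ otherwise; then $\tilde F$ is computable given $A$ with only $O(\log n)$ additional bits. Set $Q:=\tilde F(U_A)$, the image of the uniform distribution on $A$ (a probability distribution on $\{0,1\}^*\cup\{\bot\}$). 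Then $Q$ is computable from $A$ with logarithmic overhead, so $\KP(Q)\le \KP(A)+O(\log n)\le \max(\alpha,i)+O(\log n)$, and $F(x)$ has positive $Q$-mass since $F(x)$ halts in $B(i)$ steps and $x\in A$.

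To bound the randomness deficiency, apply a conditional version of Proposition~\ref{prop:stoch-cons}/Remark~\ref{rem:cons-f} with $A$ as the additional condition: since $\tilde F$ has complexity $O(\log n)$ conditional on $A$, conservation of randomness yields
$$d(F(x)\cnd Q,A)\le d(x\cnd U_A,A)+O(\log n)=d(x\cnd U_A)+O(\log n)\le \beta+O(\log n),$$
where the middle equality holds because $U_A$ already determines $A$. Since $\KP(y\cnd\,\cdot\,)$ only decreases when more information is added to the condition, $d(F(x)\cnd Q)\le d(F(x)\cnd Q,A)$, and hence $d(F(x)\cnd Q)\le \beta+O(\log n)$. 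Combined with the complexity bound, $Q$ witnesses the $(\max(\alpha,i)+O(\log n),\beta+O(\log n))$-stochasticity of $F(x)$.

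The main obstacle is arranging the $O(\log n)$ constant in the conservation-of-randomness step: a direct application to the partial map $F$ (or even to $\tilde F$ without the auxiliary condition) would cost $\KP(\tilde F)=i+O(1)$, which is too much. The standard-description trick is what avoids this cost: by inflating $x$'s description to complexity $\max(\alpha,i)$, we make $A$ rich enough to already contain (via $\Omega_i$) the information needed to specify $B(i)$ and hence $\tilde F$, bringing the effective complexity of the map down to $O(\log n)$.
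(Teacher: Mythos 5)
There is a genuine gap, and it sits at the heart of your construction: the claim that the standard description $A$ obtained via Theorem~\ref{thm:improving-descriptions}, Proposition~\ref{prop:better-std} and Proposition~\ref{prop:std-omega} is $O(\log n)$-equivalent to $\Omega_{\max(\alpha,i)}$. Proposition~\ref{prop:std-omega} only makes $A$ equivalent to $\Omega_{\KS(A)}$, and the chain of reductions gives merely the upper bound $\KS(A)\le\max(\alpha,i)+O(\log n)$, not equality; Proposition~\ref{prop:better-std} explicitly allows the standard description to have much smaller complexity than the model you start from, and standard descriptions with small optimality deficiency only exist at the complexities dictated by the profile of $x$. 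Concretely, take $x$ random of length $n$ (so $\alpha,\beta=O(\log n)$) and $i=n/2$: the boundary of $P_x$ has slope $-1$ everywhere, every standard description of $x$ with deficiency $O(\log n)$ has complexity $O(\log n)$, hence is equivalent to $\Omega_{O(\log n)}$ and carries no information about $B(n/2)$. Since the implication in Proposition~\ref{prop:omega-equivalence} goes from the \emph{longer} $\Omega$-prefix to the shorter one, $\Omega_{\KS(A)}$ with $\KS(A)<i$ does not determine $\Omega_i$, so your clipped map $\tilde F$ is not simple given $A$; specifying the time bound then costs $\KP(B(i))=i+O(\log i)$ extra bits, and both the bound $\KP(Q)\le\max(\alpha,i)+O(\log n)$ and the conditional conservation step collapse. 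Nor can you repair this by pushing the complexity of the model up to $\max(\alpha,i)$ via Proposition~\ref{prop:description-shift}: the piece index encodes the position of $x$ inside the description, not the missing bits of $\Omega_i$, so the resulting sets still do not compute $B(i)$. (The remaining parts of your argument — monotonicity in $\alpha$, the conditional conservation of deficiency for a map that is simple given the condition, and $d(F(x)\cnd Q)\le d(F(x)\cnd Q,A)+O(1)$ — would be fine if the equivalence held; the failure is localized but fatal.)

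The paper's proof avoids exactly this trap: it never tries to make a single model of $x$ "know" $B(i)$. Instead it passes to the split-enumeration characterization (Remark~\ref{rem:portion} and Section~\ref{subsec:opt-rand}): from an $(\alpha*(\KS(x)-\alpha+\beta))$-description one gets an enumeration of at most $2^{\KS(x)+\beta}$ strings in at most $2^\alpha$ portions containing $x$; one then enumerates all $y$ having at least $2^s$ preimages computed within the \emph{current candidate} for $B(i)$, and the unknown value of $B(i)$ is paid for only through the number of times this candidate increases, i.e.\ at most $2^i$ extra portions, giving $2^{\max(\alpha,i)+1}$ portions in total; a counting step ($\KS(x\cnd F(x))\le s$ via enumerating preimages in order of increasing time) then turns this into a description of $F(x)$ with the stated parameters. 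If you want to salvage your plan, you would have to either build a model of $x$ that provably contains $\Omega_i$ without exceeding the deficiency budget (impossible in general, as the random-$x$ example shows) or switch to an argument, like the paper's, in which the time bound enters only through a portion count rather than through the description of a model.
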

}

\begin{proof}[Proof sketch]
Let us denote $F(x)$ by $y$. By assumption there exist a $(\alpha*(\KS(x)-\alpha+\beta))$-description of $x$ (recall the definition with optimality deficiency; we omit logarithmic terms as usual). So there exists a simple enumeration of at most $2^{\KS(x)+\beta}$ objects $x'$ in at most $2^\alpha$ portions that includes $x$. Let us count $x'$ in this enumeration such that $F(x')=y$ and the computation uses time at most $B(i)$; assume there are $2^s$ of them. Then we can enumerate all $y$'s that have at least $2^s$ preimages in time $B(i)$, in $2^\alpha + 2^i$ portions. Indeed, new portions appear in two cases: (1)~a new portion appears in the original enumeration; (2)~candidate for $B(i)$ increases. The first event happens at most $2^\alpha$ times, the second at most $2^i$ times.  The total number of $y$'s enumerated is $2^{\KS(x)+\beta-s}$; it remains to note that $\KS(x)-s \le \KS(y)$. Indeed, $\KS(x)\le \KS(y)+\KS(x\cnd y)$, and $\KS(x\cnd y)\le s$, since we can enumerate all the preimages of $y$ in the order of increasing time, and $x$ is determined by $s$-bit ordinal number of $x$ in this enumeration.
\end{proof}

A special case of this proposition is Bennett's observation: if some $d$-incompressible program $p$ produces $x$ in time $B(i)$, then $p$ is $(0,d)$-stochastic,  and $p$ is mapped to $x$ by the interpreter (decompressor) in time $B(i)$, so $x$ is $(0+i,d)$-stochastic. (For simplicity we omit all the logarithmic terms in this argument, as well as in the previous proof sketch.)

\begin{remark}
One can combine Remark~\ref{rem:cons-f} (page~\pageref{rem:cons-f}) with Proposition~\ref{prop:stoch-nontotal} and show that if a program $F$ of complexity at most $j$ is applied to an $(\alpha,\beta)$-stochastic string $x$ of length $n$ and the computation terminates in time $B(i)$, then $F(x)$ is $(\max(i,\alpha)+j+O(\log n),\beta+j+O(\log n))$-stochastic, where the constant in $O(\log n)$ notation is absolute (does not depend on $F$). To show this, one may consider the pair $(x,F)$; it is easy to show (this can be done in different ways using different characterizations of the stochasticity curve) that this pair is $(\alpha+j+O(\log n), \beta+j+O(\log n))$-stochastic.
\end{remark}

Let us note also that there are some results in algorithmic information theory that are true for stochastic objects but are false or unknown without this assumption. We will discuss (without proofs) two examples of this type. The first is Epstein--Levin theorem saying that for a stochastic set $A$ its total a priori probability is close to the maximum a priori probability of $A$'s elements; see~\cite{shen-survey} for details. Here the result is (obviously) false without stochasticity assumption.

In the next example~\cite{muchnik-romashchenko} the stochasticity assumption is used in the proof, and it is not known whether the statement remains true without it: \emph{for every triple of strings $(x,y,z)$ of length at most $n$ there exists a string $z'$ such that
\begin{itemize}
\item $\KS(x\cnd z) = \KS(x\cnd z')+O(\log n)$,
\item $\KS(y\cnd z) = \KS(y\cnd z')+O(\log n)$,
\item $\KS(x,y\cnd z) = \KS(x,y\cnd z') + O(\log n)$;
\item $\KS(z') \le I((x,y):z)+O(\log n)$,
\end{itemize}
assuming that $(x,y)$ is $(O(\log n),O(\log n))$-stochastic}.

This proposition is related to the following open question on ``irrelevant oracles'': assume that the mutual information between $(x,y)$ and some $z$ is negligible. Can an oracle $z$ (an ``irrelevant oracle'') change substantially natural properties of the pair $(x,y)$ formulated in terms of Kolmogorov complexity? For instance, can such an oracle $z$ allow us to extract some common information of $x$ and $y$? In~\cite{muchnik-romashchenko} a negative answer to the latter question is given, but only for stochastic pairs $(x,y)$.

\section{Descriptions of restricted type}\label{sec:restricted-type}

\subsection{Families of descriptions}

In this section we consider the restricted case: the sets (considered as descriptions, or statistical hypotheses) are taken from some family $\mathcal{A}$ that is fixed in advance.\footnote{One can also consider some class of probability distributions, but we restrict our attention to sets (uniform distributions).} (Elements of $\mathcal{A}$ are finite sets of binary strings.) Informally speaking, this means that we have some \emph{a priori} information about the black box that produces a given string: this string is obtained by a random choice in one of the $\mathcal{A}$-sets, but we do not know in which one.

Before we had no restrictions (the family $\mathcal{A}$ was the family of all finite sets). It turns out that the results obtained so far can be extended (sometimes with weaker bounds) to other families that satisfy some natural conditions. Let us formulate these conditions.

(1)~The family $\mathcal{A}$ is enumerable. This means that there exists an algorithm that prints elements of $\mathcal{A}$ as lists, with some separators (saying where one element of $\mathcal{A}$ ends and another one begins).

(2)~For every $n$ the family $\mathcal{A}$ contains the set $\mathbb{B}^n$ of all $n$-bit strings.

(3)~There exists some polynomial $p$ with the following property: for every $A\in\mathcal{A}$, for every natural $n$ and for every natural $c<\#A$ the set of all $n$-bit strings in $A$ can be covered by at most $p(n)\cdot\#A/c$ sets of cardinality at most $c$ from $\mathcal{A}$.

The last condition is a replacement for splitting: in general, we cannot split a set $A\in\mathcal{A}$ into pieces from $A$, but at least we can cover a set $A\in\mathcal{A}$ by smaller elements of $\mathcal{A}$ (of size at most $c$) with polynomial overhead in the number of pieces, compared to the required minimum $\#A/c$ (more precisely, we have to cover only $n$-bit elements of $A$).

We assume that some family $\mathcal{A}$ that has properties (1)--(3) is fixed. For a string $x$ we denote by $P_x^\mathcal{A}$ the set of pairs $( i,j)$ such that $x$ has $(i*j)$-description \emph{that belongs to $\mathcal{A}$}. The set $P_x^\mathcal{A}$ is a subset of $P_x$ defined earlier; the bigger $\mathcal{A}$ is, the bigger is $P_x^\mathcal{A}$. The full set $P_x$ is $P_x^\mathcal{A}$ for the family $\mathcal{A}$ that contains all finite sets.

For every string $x$ the set $P_x^\mathcal{A}$ has properties close to the properties of $P_x$ proved earlier.

\begin{proposition}\label{prop:a-family}
For every string $x$ of length $n$ the following is true:

\begin{enumerate}
    \item\label{a1} The set $P_x^\mathcal{A}$ contains a pair that is $O(\log n)$-close to $( 0,n)$.

    \item\label{a2} The set $P_x^\mathcal{A}$ contains a pair that is $O(1)$-close to $( \KS(x),0)$.

    \item\label{a3} The adaptation of Proposition~\ref{prop:description-shift} is true: if $( i,j)\in P_x^\mathcal{A}$, then $( i+k+O(\log n),j-k)$ also belongs to $P_x^\mathcal{A}$ for every $k\le j$. (Recall that $n$ is the length of $x$.)
\end{enumerate}
\end{proposition}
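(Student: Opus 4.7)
The plan is to treat the three items in turn, using the three conditions on $\mathcal{A}$ almost one for each. Item~(\ref{a1}) will follow immediately from condition~(2): since $\mathbb{B}^n \in \mathcal{A}$ is a simple object (determined by $n$), the pair $(O(\log n), n)$ witnesses the claim.

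For item~(\ref{a3}) I will mimic the proof of Proposition~\ref{prop:description-shift}, replacing direct splitting by the covering guaranteed by condition~(3). Given $A \in \mathcal{A}$ with $x \in A$, $\KS(A) \le i$, $\log \#A \le j$ and some $0 < k \le j$, I will set $n = |x|$ and invoke condition~(3) with $c = 2^{j-k}$ to produce a cover of $A \cap \mathbb{B}^n$ by at most $p(n) \cdot \#A / 2^{j-k} \le p(n) \cdot 2^k$ sets of cardinality at most $2^{j-k}$ from $\mathcal{A}$. Using enumerability of $\mathcal{A}$ (condition~(1)), such a cover can be located effectively: enumerate candidate sets of small enough cardinality and, as new candidates appear, check whether some subfamily of the appropriate size already covers $A \cap \mathbb{B}^n$; condition~(3) guarantees termination. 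Let $A' \ni x$ be one of the covering sets. Then $\log \#A' \le j - k$, and $A'$ is specified by the description of $A$ ($i$ bits), the index of $A'$ in the enumerated cover ($\le \log(p(n) \cdot 2^k) = k + O(\log n)$ bits), and the parameters $n, j, k$ (another $O(\log n)$ bits), so $\KS(A') \le i + k + O(\log n)$. The case $k = 0$ is trivial.

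Item~(\ref{a2}) needs a slightly sharper argument because the desired precision is $O(1)$, not $O(\log n)$. I will apply condition~(3) to $A = \mathbb{B}^n$ with $c = 1$: the only non-empty set of cardinality at most $1$ that can cover $x$ is $\{x\}$ itself, so the mere existence of such a cover forces $\{x\} \in \mathcal{A}$. Since $\{x\}$ is a total computable function of $x$, we obtain $\KS(\{x\}) \le \KS(x) + O(1)$, and the pair $(\KS(x) + O(1), 0)$ witnesses the claim. If one prefers not to squeeze the family this tightly, one may instead take $c$ to be the smallest fixed constant for which condition~(3) yields a non-trivial cover, obtaining log-size $O(1)$ rather than $0$ — still $O(1)$-close to $(\KS(x), 0)$.

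The only real step requiring care is the passage from the existential statement in condition~(3) to an effective procedure, and this is precisely what costs the $O(\log n)$ term in item~(\ref{a3}). I do not foresee a deeper obstacle: condition~(3) is tailored to substitute for the splitting step in Proposition~\ref{prop:description-shift}, and conditions~(1)--(3) together supply exactly the ingredients that the original arguments for $P_x$ (Proposition~\ref{prop:description-shift} and the elementary upper and lower endpoints of the boundary curve) used.
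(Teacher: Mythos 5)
Your proposal is correct and follows essentially the same route as the paper's proof: item~(1) from condition~(2), item~(2) by applying condition~(3) with $c=1$ to $\mathbb{B}^n$ to force all singletons into $\mathcal{A}$, and item~(3) by enumerating $\mathcal{A}$ until the cover promised by condition~(3) appears and specifying the covering set containing $x$ by $A$, its index, and the parameters. The only detail you gloss over is your claim that the parameters $n,j,k$ cost $O(\log n)$ bits, which fails if $k$ (or $j$) greatly exceeds $n$; the paper fixes this with the one-line observation that for $k>n$ one may instead use the singleton $\{x\}\in\mathcal{A}$ as the description, so that the $O(\log k)$ term can be dropped.
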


\begin{proof}
\ref{a1}. The property (2) guarantees that the family $\mathcal{A}$ contains the set $\mathbb{B}^n$ that is an $(O(\log n)*n)$-description of $x$.

\ref{a2}.  The property (3) applied to $c=1$ and $A=\mathbb{B}^n$ says that every singleton belongs to $A$, therefore each string has $((\KS(x)+O(1))*0)$-description.

\ref{a3}. Assume that $x$ has $(i*j)$-description $A\in\mathcal{A}$. For a given $k$ we enumerate $\mathcal{A}$ until we find a family of $p(n)2^k$ sets of size $2^{-k}\#A$ (or less) in $\mathcal{A}$ that covers all strings of length $n$ in $A$. Such a family exists due to (3), and $p$ is the polynomial from~(3). The complexity of the set that covers $x$ does not exceed $i+k+O(\log n+\log k)$, since this set is determined by $A$, $n$, $k$ and the ordinal number of the set in the cover. We may assume without loss of generality that $k\le n$, otherwise $\{x\}$ can be used as $((i+k+O(\log n))*(j-k))$-description of $x$. So the term $O(\log k)$ can be omitted.
\end{proof}

For example, we may consider the family that consists of all ``cylinders'': for every $n$ and for every string $u$ of length at most $n$ we consider the set of all $n$-bit strings that have prefix $u$.  Obviously the family of all such sets (for all $n$ and $u$) satisfies the conditions (1)--(3).

We may also fix some bits of a string (not necessarily forming a prefix).  That is, for every string $z$ in ternary alphabet $\{0,1,*\}$ we consider the set of all bit strings that can be obtained from $z$ by replacing stars with some bits. This set contains $2^k$ strings, if $u$ has $k$ stars. The conditions (1)--(3) are fulfilled for this larger family, too.

A more interesting example is the family $\mathcal{A}$ formed by all balls in Hamming sense, i.e., the sets $B_{y,r}=\{x\mid  l(x)=l(y), d(x,y)\le r\}$. Here $l(u)$ is the length of binary string $u$, and $d(x,y)$ is the Hamming distance between two strings $x$ and $y$ of the same length. The parameter $r$ is called the \emph{radius} of the ball, and $y$ is its \emph{center}. Informally speaking, this means that the experimental data were obtained by changing at most $r$ bits in some string $y$ (and all possible changes are equally probable). This assumption could be reasonable if some string $y$ is sent via an unreliable channel. Both parameters $y$ and $r$ are not known to us in advance.

It turns out that the family of Hamming balls satisfies the conditions (1)--(3). This is not completely obvious. For example, these conditions imply that for every $n$ and for every $r\le n$ the set $\mathbb{B}^n$ of $n$-bit strings can be covered by $\poly(n)2^n/V$ Hamming balls of radius $r$, where $V$ stands for the cardinality of such a ball (i.e., $V=\binom{n}{0}+\ldots+\binom{n}{r}$), and $p$ is some polynomial. This can be shown by a probabilistic argument: take $N$ balls of radius $r$ whose centers are randomly chosen in $\mathbb{B}^n$. For a given $x\in\mathbb{B}^n$ the probability that $x$ is not covered by any of these balls equals $(1-V/2^n)^N < e^{-VN/2^n}$. For $N=n\ln 2\cdot 2^n/V$ this upper bound is $2^{-n}$, so for this $N$ the probability to leave some $x$ uncovered is less than~$1$. A similar argument can be used to prove (1)--(3) in the general case.

\begin{proposition}[\cite{vv10}]\label{prop:hamming-balls}
The family of all Hamming balls satisfies conditions (1)--(3) above.
\end{proposition}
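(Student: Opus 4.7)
My plan is to verify conditions (1)--(3) in turn, handling (1) and (2) by inspection and devoting the main effort to (3). For (1), the family of Hamming balls is enumerable: an algorithm runs through all pairs $(y,r)$ with $y$ a binary string and $r$ a nonnegative integer and outputs the explicit list of elements of $B_{y,r}$. For (2), the identity $\mathbb{B}^n=B_{0^n,n}$ holds since every $n$-bit string is at Hamming distance at most $n$ from $0^n$.

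The substantive work is in proving (3). Fix $A=B_{y,r}$ with $n=l(y)$ and write $V(s)=\sum_{i\le s}\binom{n}{i}$, so $\#A=V(r)$. Given $c<V(r)$, let $r'$ be the largest integer with $V(r')\le c$; since $V(r'+1)\le(n+1)V(r')$ we have $V(r')\ge c/(n+1)$, and clearly $r'<r$. Hence it suffices to cover $B_{y,r}$ by $\poly(n)\cdot V(r)/V(r')$ Hamming balls of radius $r'$, each having cardinality $V(r')\le c$. By translation invariance (replacing each string $u$ by $u\oplus y$) we may assume $y=0^n$.

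The cover will be produced by a probabilistic argument in the spirit of the sketch for $A=\mathbb{B}^n$ preceding the statement: pick $N$ i.i.d.\ random centers from an appropriate distribution on $\{0,1\}^n$, upper-bound the probability that a given $x\in B_{0^n,r}$ is missed by all $N$ balls, and apply the union bound over the $V(r)$ points of $B_{0^n,r}$ to conclude that for a suitable $N$ a valid covering exists with positive probability, whence one exists deterministically.

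The main obstacle is the choice of the sampling distribution. Uniform centers in $\{0,1\}^n$ produce a cover of the whole cube of size $\poly(n)\cdot 2^n/V(r')$; discarding balls that do not meet $B_{0^n,r}$ leaves on the order of $\poly(n)\cdot V(r+r')/V(r')$ balls, and the ratio $V(r+r')/V(r)$ is super-polynomial in $n$ for intermediate $r$ and $r'$ (e.g.\ $r=n/4$, $r'=n/8$). Centering uniformly in $B_{0^n,r}$ instead fails for $x$ near the boundary: when $d(x,0^n)=r$, the intersection $|B_{x,r'}\cap B_{0^n,r}|$ can be as small as $\binom{r}{r'}$, which is in turn much smaller than $V(r')$ when $r\ll n$. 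The technical heart of the argument is therefore the construction of a distribution on centers (or equivalently an explicit covering code for $B_{0^n,r}$ with radius $r'$) whose coverage probability is at least $V(r')/(\poly(n)\cdot V(r))$ uniformly in $x\in B_{0^n,r}$; the full construction is carried out in~\cite{vv10}.
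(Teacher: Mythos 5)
Your handling of conditions (1) and (2) is fine, and your reductions for (3) — choosing the covering radius $r'$ with $V(r')\ge c/(n+1)$, translating so that $y=0^n$, and setting up the random-centers-plus-union-bound scheme — agree with how the paper starts. But the proof stops exactly where the real work begins: you correctly observe that neither uniform centers in the cube nor uniform centers in $B_{0^n,r}$ give coverage probability of order $V(r')/(\poly(n)\,V(r))$ uniformly in $x$, and you then defer the construction of a suitable center distribution to~\cite{vv10} — which is the very statement you are asked to prove. That is a genuine gap, not a detail.

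The paper closes it with two ideas you are missing. First, it reduces covering balls to covering \emph{spheres}: $B_{0^n,r}$ is the union of at most $n+1$ spheres $S_a=\{x\mid d(x,0^n)=a\}$, so covering each $S_a$ (for $r'<a\le r$; smaller $a$ need one ball, and one may assume $a\le n/2$ after splitting the cube into two balls of radius $n/2$) costs only a polynomial factor. The gain is symmetry: if the centers are drawn from a distribution invariant under permutations of coordinates, every point of $S_a$ is covered with exactly the same probability, so the uniformity-in-$x$ requirement that blocked you disappears and only a single scalar estimate remains. Second, the centers are taken uniformly at one fixed distance $f$ from $0^n$, with $f$ chosen so that an $r'$-sphere whose center is at distance $f$ has at least a $1/\poly(n)$ fraction of its points on $S_a$. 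Such an $f$ exists either by an explicit binomial computation (choose $f$ with $f/n+r'/n-2(f/n)(r'/n)\approx a/n$ and estimate the binomial coefficients) or by the paper's averaging trick: flip $r'$ random bits and then flip the remaining bits one by one in random order; the distance to $0^n$ moves from $r'$ to $n-r'$ by unit steps, hence passes through $a$, so among the at most $n+1$ intermediate steps some step lands on $S_a$ with probability at least $1/(n+1)$. With that one estimate your union-bound framework goes through and yields a cover of $S_a$ by $\poly(n)\,\#S_a/V(r')$ balls of radius $r'$, hence a cover of $B_{0^n,r}$ by $\poly(n)\,V(r)/V(r')\le\poly(n)\,\#A/c$ sets from the family, as condition (3) requires.
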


\begin{proof}[Proof sketch]
Let $A$ be a ball of radius $a$ and let $c$ be a number less than $\#A$. We need to cover $A$ by balls of cardinality $c$ or less, using almost minimal number of balls, close to the lower bound $\#A/c$ up to a polynomial factor. Let us make some observations.

(1)~The set of all $n$-bit strings can be covered by two balls of radius $n/2$. So we can assume without loss of generality that $a\le n/2$, otherwise we can apply the probabilistic argument above.

(2)~Clearly the radius of covering balls should be maximal possible (to keep cardinality less than $c$); for this radius the cardinality of the ball equals $c$ up to polynomial factors, since the size of the ball increases at most by factor $n+1$ when its radius increases by $1$.

(3)~It is enough to cover spheres instead of balls (since every ball is a union of polynomially many spheres); it is also enough to consider the case when the radius of the sphere that we want to cover ($a$) is bigger than the radius of the covering ball ($b$), otherwise one ball is enough.

(4)~We will cover $a$-sphere by randomly chosen $b$-balls whose centers are uniformly taken at some distance $f$ from the center of $a$-sphere. (See below about the choice of $f$.) We use the same probabilistic argument as before (for the set of all strings). It is enough to show that for a $b$-ball whose center is at that distance, the polynomial fraction of points belong to $a$-sphere. Instead of $b$-balls we may consider $b$-spheres, the cardinality ratio is polynomial.

(5)~It remains to choose some $f$ with the following property:  if the center of a $b$-sphere $S$ is at a distance $f$ from the center of $a$-sphere $T$, then the polynomial fraction of $S$-points  belong to $T$. One can compute a suitable $f$ explicitly. In probabilistic terms we just change $f/n$-fraction of bits and then change random $b/n$ fraction of bits. The expected fraction of twice changed bits is, therefore, about $(f/n)(b/n)$, and the total fraction of changed bits is about $f/n+b/n-2(f/n)(b/n)$. So we need to write an equation saying that this expression is $a/n$ and the find the solution $f$. (Then one can perform the required estimate for binomial coefficients.)

However, one can avoid computations with the following probabilistic argument: start with $b$ changed bits, and then change all the bits one by one in a random order. At the end we hat $n-b$ changed bits, and $a$ is somewhere in between, so there is a moment where the number of changed bits is exactly $a$. And if the union of $n$ events covers the entire probability space, one of these events has probability at least $1/n$.
\end{proof}

When a family $\mathcal{A}$ is fixed, a natural question arises: does the restriction on models (when we consider only models in $\mathcal{A}$) changes the set $P_x$? Is it possible that a string has good models in general, but not in the restricted class? The answer is positive for the class of Hamming balls, as the following proposition shows.

\begin{proposition}\label{prop:hamming-gap}
Consider the family $\mathcal{A}$ that consists of all Hamming balls. For some positive $\varepsilon$ and for all sufficiently large $n$ there exists a string $x$ of length $n$ such that the distance between $P_x^\mathcal{A}$ and $P_x$ exceeds $\varepsilon n$.
\end{proposition}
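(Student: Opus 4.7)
The plan is to construct a specific $x$ by sticking a simple prefix in front of an incompressible suffix, and then compare its cylinder description with any putative Hamming-ball description. Fix $n$ even, pick $u\in\{0,1\}^{n/2}$ with $\KS(u)\ge n/2-O(\log n)$ (such $u$ exists by counting), and set $x=0^{n/2}u$. The cylinder $\{0^{n/2}v:v\in\{0,1\}^{n/2}\}$ is a finite set of complexity $O(\log n)$ and log-size $n/2$, so the point $(O(\log n),n/2)$ lies in $P_x$. The goal is to show that for some absolute $\varepsilon>0$ no point of $P_x^{\mathcal A}$ is within distance $\varepsilon n$ of this reference point; since $P_x^{\mathcal A}\subseteq P_x$ trivially, this already realizes the Hausdorff distance claimed.

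Suppose a Hamming ball $B_{y,r}\ni x$ witnesses some $(i,j)\in P_x^{\mathcal A}$, so $\KS(y)\le i$ and $\log V(n,r)\le j$ with $V(n,r)=\sum_{k\le r}\binom{n}{k}$. Split $y=y_1y_2$ with $|y_1|=|y_2|=n/2$. The inclusion $x\in B_{y,r}$ implies $d(u,y_2)\le d(x,y)\le r$, so $u$ lives inside the Hamming ball of radius $r$ around $y_2$ in $\{0,1\}^{n/2}$, which has only $V(n/2,r)$ elements. Specifying $u$ by $y$, $n$, $r$ and its index in this smaller ball gives
\[
\KS(u)\le\KS(y)+\log V(n/2,r)+O(\log n)\le i+\tfrac{n}{2}\,H(2r/n)+O(\log n),
\]
using the standard entropy estimate $\log V(m,\rho m)=mH(\rho)+O(\log m)$ for $\rho\le 1/2$, and where $H$ is the binary entropy function.

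Combining this with $\KS(u)\ge n/2-O(\log n)$ and with $r/n\le H^{-1}(j/n)$ extracted from the log-size bound, any such $(i,j)$ must satisfy
\[
i+\tfrac{n}{2}\,H\!\bigl(2H^{-1}(j/n)\bigr)\ge\tfrac{n}{2}-O(\log n).
\]
At the reference point $(i,j)=(0,n/2)$ this degenerates to $H(2H^{-1}(1/2))\ge 1$, but $2H^{-1}(1/2)\approx 0.22$ and $H(0.22)\approx 0.76<1$, so the inequality fails by a constant margin. By continuity of $\delta\mapsto 1-H(2H^{-1}(1/2+\delta))$ at $\delta=0$, one can choose an absolute $\varepsilon>0$ with $1-H(2H^{-1}(1/2+\varepsilon))>2\varepsilon$, and then for all sufficiently large $n$ no $(i,j)\in P_x^{\mathcal A}$ with $i\le\varepsilon n$ and $j\le n/2+\varepsilon n$ can exist. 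Points of $P_x^{\mathcal A}$ outside that rectangle are automatically $\varepsilon n$-far from $(O(\log n),n/2)$, so the Hausdorff distance is at least $\varepsilon n-O(\log n)$.

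The main obstacle is the quantitative continuity computation: pinning down an explicit $\varepsilon$ so that the entropy inequality leaves enough room to absorb the $O(\log n)$ slack, and tracking the additive logarithmic errors through the description-length bound on $\KS(u)$. The conceptual picture is simple: a cylinder reaches log-size $n/2$ with complexity $O(\log n)$, whereas a simple-center ball that covers $x$ must cover a whole sub-ball around $u$ in $\{0,1\}^{n/2}$ of size only about $2^{(n/2)H(2H^{-1}(1/2))}\approx 2^{0.38\,n}\ll 2^{n/2}$, which is too small to contain an incompressible $u$, forcing the ball to have either large complexity or much larger radius.
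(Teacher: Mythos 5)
Your proof is correct (modulo the routine entropy/precision bookkeeping you flag yourself), but it takes a genuinely different route from the paper. The paper's proof sketch is coding-theoretic: it fixes $\alpha\in(0,1/2)$, uses a probabilistic argument to produce a set $E\subset\{0,1\}^n$ of size $2^n/V$ (with $V$ the volume of a radius-$\alpha n$ ball) such that every radius-$\alpha n$ ball contains at most $n$ points of $E$, makes $E$ simple by exhaustive search, and takes $x$ random inside $E$; then any ball of that radius containing $x$ must have complexity about $\KS(x)\approx n-\log V$, while $E$ itself is an $(O(\log n)*(n-\log V))$-description, giving the linear gap (and in fact the extreme picture in which $P_x$ is maximal while $P_x^{\mathcal A}$ is as small as possible, which the paper exploits afterwards). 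Your argument replaces the list-decoding-style construction by the explicit string $0^{n/2}u$ with $u$ incompressible: the cylinder supplies the witness point $(O(\log n),n/2)$ of $P_x$, and a projection-plus-counting estimate ($d(u,y_2)\le r$, so $u$ lies in a radius-$r$ ball of $\{0,1\}^{n/2}$ of log-size about $\tfrac n2H(2r/n)$) shows any ball-description near that point would compress $u$. What your approach buys is elementarity and explicitness: no probabilistic existence step, an explicit witness point, and explicit constants (a gap of roughly $\tfrac n2(1-H(2H^{-1}(1/2)))\approx 0.12\,n$); what it gives up is the stronger structural conclusion of the paper's construction (a stochastic string whose restricted profile is essentially minimal everywhere), and your $\varepsilon$ is smaller than what the paper's example yields. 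Two small points to nail down when writing it up: the bound $\log V(n/2,r)\le\tfrac n2H(2r/n)+O(\log n)$ needs $2r/n\le 1/2$, which does hold in the regime $j\le n/2+\varepsilon n$ you consider but should be said; and $\KS(y)\le\KS(B_{y,r})+O(1)$ should be justified by recovering (some) center and radius from the ball as a finite set, e.g.\ taking the lexicographically first pair $(y,r)$ with $B_{y,r}=B$, with $r$ costing $O(\log n)$ extra bits in the two-part description of $u$.
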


\begin{proof}[Proof sketch]
Fix some $\alpha$ in $(0,1/2)$ and let $V$ be the cardinality of the Hamming ball of radius $\alpha n$. Find a set $E$ of cardinality $N=2^n/V$ such that every Hamming ball of radius $\alpha n$ contains at most $n$ points from $E$. This property is related to \emph{list decoding} in the coding theory. The existence of such a set can be proved by a probabilistic argument: $N$ randomly chosen $n$-bit strings have this property with positive probability. Indeed, the probability of a random point to be in $E$ is an inverse of the number of points, so the distribution is close to Poisson distribution with parameter~$1$, and tails decrease much faster that $2^{-n}$ needed.

Since $E$ with this property can be found by an exhaustive search, we can assume that $\KS(E)=O(\log n)$ and ignore the complexity of $E$ (as well as other $O(\log n)$ terms) in the sequel. Let $x$ be a random element in $E$, i.e., a string $x\in E$ of complexity about $\log\#E$.  The complexity of a ball $A$ of radius $\alpha n$ that contains $x$ is at least $\KS(x)$, since knowing such a ball and an ordinal number of $x$ in $A\cap E$, we can find $x$. Therefore $x$ does not have $(\log\#E,\log V)$-descriptions in $\mathcal A$. On the other hand, $x$ does have $(0,\log\#E)$-description if we do not require the description to be in $\mathcal A$; the set $E$ is such a description. The point $(\log\#E, \log V)$ is above the line $\KS(A)+\log\#A=\log\#E$, so $P_x^\mathcal A$ is significantly smaller than $P_x$.
\end{proof}

This construction gives a stochastic $x$ ($E$ is the corresponding model) that becomes maximally non-stochastic if we restrict ourselves to Hamming balls as descriptions (Figure~\ref{mdl-e-13}).

\begin{figure}
\begin{center}
\includegraphics[scale=1]{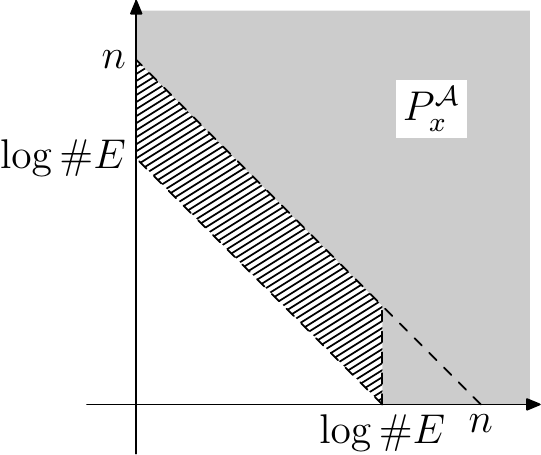}
\end{center}
\caption{Theorem~\ref{thm:improving-descriptions-1-gen} can be used (together with the argument above) to show that the border of the set $P_x^\mathcal{A}$ (shown in gray) consists of a vertical segment $\KS(A)=n-\log V$, $\log\#A\le \log V$, and the segment of slope $-1$ defined by $\KS(A)+\log\#A = n$, $\log V \le \log\#A$. The set $P_x$ contains also the hatched part.}\label{mdl-e-13}
\end{figure}

\subsection{Possible shapes of boundary curve}

Our next goal is to extend some results proven for non-restricted descriptions to the restricted case. Let $\mathcal{A}$ be a family that has properties (1)--(3). We prove a version of Theorem~\ref{stat-any-curve} where the precision (unfortunately) is significantly worse: $O(\sqrt{n\log n})$ instead of $O(\log n)$. Note that with this precision the term $O(m)$ (proportional to the complexity of the curve) that appeared in Theorem~\ref{stat-any-curve} is not needed. Indeed, if we draw the curve on the cell paper with cell size $\sqrt{n}$ or larger, then it touches only $O(\sqrt{n})$ cells, so it is determined by $O(\sqrt{n})$ bits with $O(\sqrt{n})$-precision, and we may assume without loss of generality that the complexity of the curve is $O(\sqrt{n})$.

\begin{thm}[\cite{vv10}]\label{thm:family-curve}
Let $k\le n$ be two integers and let $t_0>t_1>\ldots>t_k$ be a strictly decreasing sequence of integers such that $t_0\le n$ and $t_k=0$.. Then there exists a string $x$ of complexity $k+O(\sqrt{n\log n})$ and length $n+O(\log n)$ for which the distance between $P_x^\mathcal{A}$ and $T=\{( i,j) \mid (i\le k)\Rightarrow (j\ge t_i)\}$ is at most $O(\sqrt{n\log n})$.
\end{thm}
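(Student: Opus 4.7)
My plan is to adapt the proof of Theorem~\ref{stat-any-curve} to the restricted class $\mathcal{A}$. The lower bound --- that $x$ admits no $\mathcal{A}$-description strictly below the target curve --- transfers almost verbatim; the real work is the upper bound, where we cannot take ``the first $2^{t_i}$ undeleted strings'' as our covering set $A$ and must instead produce $A$ from the covers guaranteed by property~(3). Before anything else I reduce to a simple curve: approximating the staircase $(t_i)_{i\le k}$ on a grid with cells of side $s=\Theta(\sqrt{n\log n})$ costs only $O(\sqrt{n\log n})$ precision and yields an approximation of complexity $O((n/s)\log n)=O(\sqrt{n\log n})$, so I may assume $\KS(t_0,\ldots,t_k)=O(\sqrt{n\log n})$ and further $O(\log n)$ slack is absorbed into this error budget.

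For the lower bound, for each $i\in\{0,\ldots,k\}$ I form
$$
S_i^\mathcal{A} = \bigcup\bigl\{A\in\mathcal{A}\colon \KS(A)\le i,\ \#A\le 2^{t_i}\bigr\}.
$$
Because the sums $i+t_i$ are non-increasing (the $t_i$ strictly decrease) and $t_0\le n$, each $S_i^\mathcal{A}$ has at most $2^{i+t_i+O(1)}\le 2^{n+O(1)}$ elements, so $\bigcup_{i\le k}S_i^\mathcal{A}$ has $\le 2^{n+O(\log n)}$ elements. Shrinking the bad sets slightly (as in the remark following Theorem~\ref{stat-any-curve}) to absorb the $\log n$ factor, I take $x$ to be the first $n$-bit string outside $\bigcup_i S_i^\mathcal{A}$; this $x$ has no $\mathcal{A}$-$(i*t_i)$-description for any $i\le k$, so $P_x^\mathcal{A}$ sits above the discretized curve. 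As in the original argument, $x$ is reconstructible from the curve, the enumeration algorithm for $\mathcal{A}$, and the maximal running time of programs of length at most $k$, giving $\KS(x)\le k+O(\sqrt{n\log n})$.

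For the upper bound, fix $i\le k$ and apply property~(3) to $\mathbb{B}^n\in\mathcal{A}$ to obtain a canonical, algorithmically computed cover $\mathcal{C}_i\subseteq\mathcal{A}$ of $\mathbb{B}^n$ by at most $p(n)\cdot 2^{n-t_i}$ members of size $\le 2^{t_i}$. I simulate the enumeration of all bad sets at every level $j\le k$ in parallel, maintaining $U\subseteq\mathbb{B}^n$ (the currently undeleted strings) and a current set $A\in\mathcal{C}_i$ equal to the first element of $\mathcal{C}_i$ that contains $\min U$; whenever $A\cap U=\emptyset$ I replace $A$ by the same rule applied to the new $\min U$. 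When the simulation finishes $\min U=x$, so the terminal $A^{(R)}$ covers $x$. The counting from Theorem~\ref{stat-any-curve} applies unchanged: replacements split into those triggered by a new bad set of complexity $\le i$ ($O(2^i)$ of them) and those caused exclusively by small bad sets with $j>i$, whose total weight $\sum_{j>i}2^{j+t_j}\le O(n)\cdot 2^{i+t_i}$ triggers $O(n\cdot 2^i)$ further replacements; hence $R=2^{i+O(\log n)}$. Given $R$, $i$, $n$ and the curve, the entire simulation can be replayed to recover $A^{(R)}$, so $\KS(A^{(R)})\le \log R + O(\sqrt{n\log n}) = i+O(\sqrt{n\log n})$ with $\#A^{(R)}\le 2^{t_i}$; combined with the lower bound this gives the claimed $O(\sqrt{n\log n})$-closeness of $P_x^\mathcal{A}$ to $T$.

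The hardest part will be preserving this ``$R=2^{i+O(\log n)}$ versions'' counting in the restricted setting. A naive variant that recomputes a fresh cover of the current $U$ at every step would wreck the complexity bound, because naming a specific member of $\mathcal{C}_i$ directly costs $\log\lvert\mathcal{C}_i\rvert=(n-t_i)+O(\log n)$ bits --- catastrophic on the steep portion of the curve where $i\ll n-t_i$. The point is that $A^{(R)}$ must be encoded not by its position in $\mathcal{C}_i$ but by the version number $R$, so only the \emph{number} of updates matters, and the deterministic ``first element of $\mathcal{C}_i$ containing $\min U$'' rule keeps that count at $O(n\cdot 2^i)$ rather than one per change of $\min U$. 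Everything else --- the polynomial overhead in property~(3), the parallel enumeration across levels, and the charging of small bad sets against deletions of $A$ --- should contribute only $O(\log n)$ terms, cleanly absorbed once the curve has been discretized.
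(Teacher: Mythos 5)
Your lower-bound half (choosing $x$ as the first string not covered by the bad sets, reconstructing it from the discretized curve and the maximal running time of programs of length at most $k$) is fine and matches the paper. The genuine gap is in the upper-bound counting, and it is exactly the point the paper flags as the obstacle in the restricted setting: when your current $A$ (the first member of the fixed cover $\mathcal{C}_i$ of $\mathbb{B}^n$ containing $\min U$) dies and you select a new one, that new cover member is only guaranteed to contain the single undeleted element $\min U$; all its other elements may have been deleted long ago, for instance by one large bad set of complexity $\le i$ and size up to $\approx 2^{t_0}$ that wiped out all but one element of exponentially many cover members in a single enumeration event. After such a prelude, each appearance of a small bad set (there are up to $\approx 2^k$ of them, and singletons are cheap against the size budget $\sum_{j>i}2^{j+t_j}\le n2^{i+t_i}$) can kill the lone survivor of the current $A$ and trigger a fresh replacement. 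So the charge of $2^{t_i}$ new small-set deletions per replacement --- which is what gives $O(n\,2^i)$ versions in the unrestricted proof, because there each refill consists of $2^{t_i}$ \emph{undeleted} strings --- is unavailable here, and your version count can blow up to roughly $\min(2^{n-t_i},2^k)$, far above $2^{i+O(\log n)}$ on the steep part of the curve. The claim that ``the counting from Theorem~\ref{stat-any-curve} applies unchanged'' is precisely where the argument breaks: the ability to refill $A$ with fresh undeleted elements is the one thing property~(3) does not give you.

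This is why the paper abandons the ``fix $x$ first, then cover it'' scheme altogether. It builds a nested chain $A_1,\ldots,A_N$ of $\mathcal{A}$-sets for only $N\approx\sqrt{n/\log n}$ sizes $2^{j_1}>\cdots>2^{j_N}$, where each $A_s$ is chosen among the property-(3) cover of $A_{s-1}$ so as to \emph{maximize} the number of undeleted elements of $A_0\cap\cdots\cap A_{s-1}$ it retains (with a threshold $\nu_s=\alpha^{-s}2^{j_s}/2$, $\alpha=\poly(n)$, and a doubling trick to control cascading updates); the string $x$ is only read off at the end from the final singleton $A_N$, rather than fixed in advance as the minimal uncovered string. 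Choosing the densest cover member (rather than the one through a prescribed point) is what restores the ``many undeleted elements after each update'' invariant and hence the bound on the number of versions; the price of maintaining it across $N$ levels, each losing a $\poly(n)$ factor, is the $O(\sqrt{n\log n})$ error term --- which in your write-up appears only as a discretization convenience, another sign that the core covering argument as you state it would not survive scrutiny.
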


We will see later (Theorem~\ref{thm:improving-descriptions-1-gen}) that for every $x$ the boundary curve of $P_x^{\mathcal{A}}$ goes down at least with slope $-1$, as for the unrestricted case, so this theorem describes all possible shapes of the boundary curve.

\begin{proof}
The proof is similar to the proof of Theorem~\ref{stat-any-curve}. Let us recall this proof first. We consider the string $x$ that is the lexicographically first string (of suitable length $n'$) that is not covered by any ``bad'' set, i.e., by any set of complexity at most $i$ and size at most $2^j$, where the pair $(i,j)$ is at the boundary of the set $T$. The length $n'$ is chosen in such a way that the total number of strings in all bad sets is strictly less than $2^{n'}$. On the other hand, we need ``good sets'' that cover $x$. For every boundary point $(i,j)$ we construct a set $A_{i,j}$ that contains $x$, has complexity close to $i$ and size $2^j$. The set $A_{i,j}$ is constructed in several attempts. Initially $A_{i,j}$ is the set of lexicographically first $2^j$ strings of length $n'$. Then we enumerate bad sets and delete all their elements from $A_{i,j}$. At some step $A_{i,j}$ may become empty; then we refill it with $2^j$ lexicographically first strings that are not in the bad sets (at the moment). By construction the final $A_{i,j}$ contains the first $x$ that is not in bad sets (since it is the case all the time). And the set $A_{i,j}$ can be described by the number of changes (plus some small information describing the process as a whole and the value of $j$). So it is crucial to have an upper bound for the number of changes. How do we get this bound? We note that when $A_{i,j}$ becomes empty, it is refilled again, and all the new elements should be covered by bad sets before the new change could happen. Two types of bad sets may appear: ``small'' ones (of size less than $2^j$) and ``large ones'' (of size at least $2^j$). The slope of the boundary line for $T$ guarantees that the total number of elements in all small bad sets does not exceed $2^{i+j}$ (up to a $\poly(n)$-factor), so they may make $A_{i,j}$ empty only $2^i$ times. And the number of large bad sets is $O(2^i)$, since the complexity of each is bounded by $i$. (More precisely, we count separately the number of changes for $A_{i,j}$ that are first changes after a large bad set appears, and the number of other changes.)

Can we use the same argument in our new situation? We can generate bad sets as before and have the same bounds for their sizes and the total number of their elements. So the length $n'$ of $x$ can be the same (in fact, almost the same, as we will need now  that the union of all bad sets is less than half of all strings of length $n'$, see below).  Note that we now may enumerate only bad sets in $\mathcal{A}$, since $\mathcal{A}$ is enumerable, but we do not even need this restriction. What we cannot do is to let $A_{i,j}$ to be the set of the first non-deleted elements: we need $A_{i,j}$ to be a set from $\mathcal{A}$.

So we now go in the other direction. Instead of choosing $x$ first and then finding suitable ``good'' $A_{i,j}$ that contain $x$, we construct the sets $A_{i,j}\in\mathcal{A}$ that change in time in such a way  that (1)~their intersection always contains some non-deleted element (an element that is not yet covered by bad sets); (2) each $A_{i,j}$ has not too many versions. The non-deleted element in their intersection (in the final state) is then chosen as $x$.

Unfortunately, we cannot do this for all points $(i,j)$ along the boundary curve. (This explains the loss of precision in the statement of the theorem.) Instead, we construct ``good'' sets only for some values of $j$. These values go down from $n$ to $0$ with step $\sqrt{n\log n}$. We select $N=\sqrt{n/\log n}$ points $(i_1,j_1),\ldots,(i_N,j_N)$ on the boundary of $T$; the first coordinates $i_1,\ldots,i_N$ form a non-decreasing sequence, and the second coordinates $j_1,\ldots,j_N$ split the range $n\ldots 0$ into (almost) equal intervals ($j_1=n$, $j_N=0$). Then we construct good sets of sizes at most $2^{j_1},\ldots,2^{j_N}$, and denote them by $A_1,\ldots,A_N$. All these sets belong to the family $\mathcal{A}$. We also let $A_0$ to be the set of all strings of length $n'=n+O(\log n)$; the choice of the constant in $O(\log n)$ will be discussed later.

Let us first describe the construction of $A_1,\ldots,A_N$ assuming that the set of deleted elements is fixed. (Then we discuss what to do when more elements are deleted.) We construct $A'$ inductively (first $A_1$, then $A_2$ etc.). As we have said, $\#A'\le 2^{j_s}$ (in particular, $A_N$ is a singleton), and we keep track of the ratio
   $$(\text{the number of non-deleted strings in $A_0\cap A_1\cap\ldots\cap A'$})/2^{j_s}.$$
For $s=0$ this ratio is at least $1/2$; this is obtained by a suitable choice of $n'$ (the union of all bad sets should cover at most half of all $n'$-bit strings). When constructing the next $A'$, we ensure that this ratio decreases only by $\poly(n)$-factor. How? Assume that $A_{s-1}$ is already constructed; its size is at most $2^{j_{s-1}}$. The condition $(3)$ for $\mathcal{A}$ guarantees that $A_{s-1}$ can be covered by $\mathcal{A}$-sets of size at most $2^{j_s}$, and we need about $2^{j_{s-1}-j_s}$ covering sets (up to $\poly(n)$-factor). Now we let $A'$ be the covering set that contains maximal number of non-deleted elements in $A_0\cap\ldots\cap A_{s-1}$. The ratio can decrease only by the same $\poly(n)$-factor. In this way we get
  $$(\text{the number of non-deleted strings in $A_0\cap A_1\cap\ldots\cap A'$})\ge \alpha^{-s}2^{j_s}/2,$$
where $\alpha$ stands for the $\poly(n)$-factor mentioned above.\footnote{Note that for the values of $s$ close to $N$ the right-hand side can be less than $1$; the inequality then claims just the existence of non-deleted elements. The induction step is still possible: non-deleted element is contained in one of the covering sets.}

Up to now we assumed that the set of deleted elements is fixed. What happens when more strings are deleted? The number of the non-deleted in $A_0\cap\ldots\cap A_{s}$ can decrease, and at some point and for some $s$ can become less than the declared threshold $\nu_s=\alpha^{-s} 2^{j_s}/2$. Then we can find minimal $s$ where this happens, and rebuild all the sets $A',A_{s+1},\ldots$ (for $A'$ the threshold is not crossed due to the minimality of $s$). In this way we update the sets $A'$ from time to time, replacing them (and all the consequent ones) by new versions when needed.

The problem with this construction is that the number of updates (different versions of each $A'$) can be too big. Imagine that after an update some element is deleted, and the threshold is crossed again. Then a new update is necessary, and after this update next deletion can trigger a new update, etc. To keep the number of updates reasonable, we agree that after the update \emph{for all the new sets $A_l$} (starting from $A'$) \emph{the number of non-deleted elements in $A_0\cap\ldots\cap A_l$ is twice bigger than the threshold $\nu_l=\alpha^{-l}2^{j_l}/2$}. This can be achieved if we make the factor $\alpha$ twice bigger: since for $A_{s-1}$ we have not crossed the threshold, for $A'$ we can guarantee the inequality with additional factor $2$.

Now let us prove the bound for the number of updates for some $A'$. These updates can be of two types: first, when $A'$ itself starts the update (being the minimal $s$ where the threshold is crossed); second, when the update is induced by one of the previous sets. Let us estimate the number of the updates of the first type. This update happens when the number of non-deleted elements (that was at least $2\nu_s$ immediately after the previous update of any kind) becomes less than $\nu_s$. This means that at least $\nu_s$ elements were deleted. How can this happen? One possibility is that a new bad set of complexity at most $i_s$ (``large bad set'') appears after the last update. This can happen at most $O(2^{i_s})$ times, since there is at most $O(2^i)$ objects of complexity at most $i$. The other possibility is the accumulation of elements deleted due to ``small'' bad sets, of complexity at least $i_s$ and of size at most $2^{j_s}$. The total number of such elements is bounded by $nO(2^{i_s+j_s})$, since the sum $i_l+j_l$ may only decrease as $l$, increases. So the number of updates of $A'$ not caused by large bad sets is bounded by
      $$ n O(2^{i_s+j_s}) /\nu_s  =\frac{O(n2^{i_s+j_s})}{\alpha^{-s}2^{j_s}} = O(n\alpha^s 2^{i_s})=2^{i_s+NO(\log n)}=2^{i_s+O(\sqrt{n\log n})}$$
(recall that $s\le N$, $\alpha=\poly(n)$, and $N\approx \sqrt{n/\log n}$). This bound remains valid if we take into account the induced updates (when the threshold is crossed for the preceding sets: there are at most $N\le n$ these sets, and additional factor $n$ is absorbed by $O$-notation).

We conclude that all the versions of $A'$ have complexity at most $i_s+O(\sqrt{n\log n})$, since each of them can be described by the version number plus the parameters of the generating process (we need to know $n$ and the boundary curve, whose complexity is $O(\sqrt{n})$ according to our assumption, see the discussion before the statement of the theorem). The same is true for the final version. It remains to take $x$ in the intersection of the final sets $A'$. (Recall that $A_N$ is  a singleton, so final $A_N$ is $\{x\}$.) Indeed, by construction this $x$ has no bad $(i*j)$-descriptions where $(i,j)$ is on the boundary of $T$. On the other hand, $x$ has good descriptions that are $O(\sqrt{n\log n})$-close to this boundary and whose vertical coordinates are $\sqrt{n\log n}$-apart. (Recall that the slope of the boundary guarantees that horizontal distance is less than the vertical distance.) Therefore the position of the boundary curve for $P_x^\mathcal{A}$ is determined with precision $O(\sqrt{n\log n})$, as required.\footnote{Now we see why $N$ was chosen to be $\sqrt{n/\log n}$: the bigger $N$ is, the more points on the curve we have, but then the number of versions of the good sets and their complexity increases, so we have some trade-off. The chosen value of $n$ balances these two sources of errors.}
 \end{proof}

\begin{remark}\label{rem:family}
In this proof we may use bad sets not only from $\mathcal{A}$. Therefore, the set $P_x$ is also close to $T$ (and the same is true for for every family $\mathcal{B}$ that contains $\mathcal{A}$). It would be interesting to find out what are the possible combinations of $P_x$ and $P_x^\mathcal{A}$; as we have seen, it may happen that $P_x$ is maximal and $P_x^\mathcal{A}$ is minimal, but this does not say anything about other possible combinations.  \end{remark}

For the case of Hamming balls the statement of Theorem~\ref{thm:family-curve} has a natural interpretation. To find a simple ball of radius $r$ that contains a given string $x$ is the same as to find a simple string in a radius $r$ ball centered at $x$. So this theorem show the possible behavior of the ``approximation complexity'' function
$$
 r\mapsto \min \{\KS(x')\mid d(x,x')\le r\}
$$
where $d$ is Hamming distance. One should only rescale the vertical axis replacing the log-sizes of Hamming balls by their radii. The connection is described by the Shannon entropy function: a ball in $\mathbb{B}^n$ of radius $r$ has log-size about $nH(r/n)$ for $r\le n/2$, and has almost full size for $r\ge n/2$. For example, error correcting codes (in classical sense, or with list decoding) are example of strings where this function is almost a constant for small values of $r$: it is almost as easy to approximate a codeword as give it precisely (due to the possibility of error correction).

\subsection{Randomness and optimality deficiencies: restricted case}

Not all the results proved for unrestricted descriptions have natural counterparts in the restricted case. For example, one hardly can relate the set $P_x^\mathcal{A}$ with bounded-time complexity (is completely unclear how $\mathcal{A}$ could enter the picture). Still some results remain valid (but new and much more complicated proofs are needed). This is the case for Proposition~\ref{prop:description-shift} and~\ref{prop:improving-descriptions}.

Let again $\mathcal{A}$ be the class of descriptions that satisfies requirements (1)--(3).

\begin{thm}[\cite{vv10}]\label{thm:improving-descriptions-1-gen}
\leavevmode
\begin{itemize}
\item If a string $x$ of length $n$ has an $(i*j)$-description in $\mathcal{A}$, then it has $((i+d+O(\log n))*(j-d+O(\log n)))$-description in $\mathcal{A}$ for every $d\le j$.

\item Assume that $x$ is a string of length $n$ that has at least $2^k$ different $(i*j)$-descriptions in $\mathcal{A}$. Then it has $((i-k+O(\log n))*(j+O(\log n))$-description in $\mathcal{A}$.
\end{itemize}
\end{thm}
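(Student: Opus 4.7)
The plan is as follows. Part 1 is immediate from Proposition~\ref{prop:a-family}(3): starting from a given $(i*j)$-description $A\in\mathcal{A}$ of $x$, apply property (3) to cover $A$ by $\poly(n)\cdot 2^d$ members of $\mathcal{A}$ of size at most $2^{j-d}$, and retain the member containing $x$.

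The real work is Part 2. I will let $\mathcal{D}$ denote the family of all $(i*j)$-descriptions in $\mathcal{A}$ --- that is, $\mathcal{A}$-sets $A$ with $\KS(A)\le i$ and $\#A\le 2^j$. This family is uniformly enumerable from $(n,i,j)$, has $|\mathcal{D}|=O(2^i)$, and every \emph{rich} string --- by which I mean a string lying in at least $2^k$ members of $\mathcal{D}$, as the hypothesis asserts for $x$ --- sits in the set $R$ whose cardinality is at most $O(2^{i+j-k})$ by a double count of the incidences $\sum_{A\in\mathcal{D}}\#A\le |\mathcal{D}|\cdot 2^j$.

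The heart of the argument will be a covering lemma: $R$ can be covered by $N:=\poly(n)\cdot 2^{i-k}$ members of $\mathcal{D}$. For existence, I will assign each $A\in\mathcal{D}$ the uniform fractional weight $2^{-k}$, noting that this is a fractional cover (each rich $y$ accumulates weight $\ge 2^k\cdot 2^{-k}=1$) of total value $O(2^{i-k})$; standard Chv\'atal--Johnson greedy rounding then produces an integer cover of size $O(\log|R|)\cdot O(2^{i-k})=\poly(n)\cdot 2^{i-k}$. To convert existence into the complexity bound the cover has to be made lower semi-computable from $(n,i,j,k)$, since $R$ is only semi-decidable and offline greedy cannot be carried out in finite time. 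My plan is to run an online set-cover procedure with polynomial competitive ratio (e.g.\ Alon--Awerbuch--Azar--Buchbinder--Naor) on the stream in which members of $\mathcal{D}$ appear as enumerated and rich strings arrive when their incidence counter reaches $2^k$; this produces a recursively enumerable family $\mathcal{C}\subseteq\mathcal{D}$ with $|\mathcal{C}|\le N$ covering every rich string in the limit. The normalised indicator $\mathbf{1}[B\in\mathcal{C}]/N$ is then a lower semi-computable semi-measure on $\mathcal{A}$-sets of total mass at most $1$, so by maximality of $\mm$ each $B\in\mathcal{C}$ satisfies $\mm(B)\ge\Omega(1/N)$, i.e.\ $\KP(B)\le\log N+O(1)\le i-k+O(\log n)$ and hence $\KS(B)\le i-k+O(\log n)$.

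Since $x$ is rich, some $B\in\mathcal{C}$ contains it; this $B$ belongs to $\mathcal{A}$, has $\#B\le 2^j$ and $\KS(B)\le i-k+O(\log n)$, even a shade sharper than the required $((i-k+O(\log n))*(j+O(\log n)))$-description. The main obstacle, contrasting with the unrestricted Proposition~\ref{prop:improving-descriptions}, is precisely the loss of a direct splitting construction: inside $\mathcal{A}$ one must replace splitting by covering, and the semi-decidability of the covering target $R$ forces the $\poly(n)$ loss when moving from offline to online greedy --- a loss the $O(\log n)$ slack in the statement comfortably absorbs.
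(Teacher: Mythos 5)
Your Part~1 is exactly the paper's argument (it is Proposition~\ref{prop:a-family}, item~3, restated), and your Part~2 follows the same skeleton as the paper's proof: enumerate all $(i*j)$-descriptions in $\mathcal{A}$, select \emph{on-line} a subfamily of size $\poly(n)\cdot 2^{i-k}$ that covers every string as soon as it has accumulated $2^k$ enumerated descriptions, and then bound the complexity of the selected set containing $x$ by its ordinal number among the selected sets. The gap is in the one step you outsource. The Alon--Awerbuch--Azar--Buchbinder--Naor guarantee is proved for a set system that is \emph{known in advance}, with only the elements arriving online, and it is competitive against an offline optimum that may use any set of the system. In your stream the sets themselves arrive online, and your benchmark --- the fractional cover putting weight $2^{-k}$ on every member of $\mathcal{D}$ --- covers elements partly by sets that are enumerated only \emph{after} those elements have arrived and had to be covered. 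For online set cover with late-arriving sets no polylogarithmic competitiveness against the full offline optimum holds in general (an adversary can withhold the cheap sets until the algorithm has been forced into many expensive choices), so the bound $|\mathcal{C}|\le N$ does not follow from the theorem you cite as stated; this is precisely the point where the paper has to do real work.

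The gap is fixable, and fixing it essentially reproduces the paper's argument. The saving feature of your instance is that the benchmark is ``prefix-feasible'': an element arrives exactly when $2^k$ already-enumerated sets contain it, so the uniform $2^{-k}$ weights restricted to the sets seen so far already cover it at arrival time; you would then have to re-verify that the primal--dual/potential analysis only ever charges to benchmark sets containing the current element that are already revealed, and you must use the \emph{deterministic} version of the algorithm (a randomized selection gives no complexity bound for $\mathcal{C}$ unless you first derandomize or argue existence and search). Alternatively, and much more simply, drop the machinery: mark each arriving $(i*j)$-description independently with probability $\Theta(n)2^{-k}$; with positive probability this marks $O(n2^{i-k})$ sets while covering every string that ever reaches multiplicity $2^k$, and a low-complexity deterministic strategy is then extracted either by the finite-game argument (some winning strategy for the marking player exists and can be found by exhaustive search) or by the explicit dyadic greedy construction --- this is exactly the paper's proof. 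Two further small points: as in the unrestricted case you should first dispose of large parameters (assume $i,j=O(n)$), so that the polynomial overheads and the cost of specifying $(n,i,j,k)$ are genuinely $O(\log n)$; and your semimeasure depends on these parameters, so the clean statement is the conditional bound $\KS(B\cnd n,i,j,k)\le\log N+O(1)$, to which $O(\log n)$ is added for the condition.
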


In fact, the second part uses only condition (1); it says that $\mathcal{A}$ is enumerable. The first part uses also (3). It can be combined with the second part to show that $x$ has also $((i+O(\log n))*(j-k+O(\log n))$-description in $\mathcal{A}$.
\smallskip

Though theorem \ref{thm:improving-descriptions-1-gen} looks like a technical statement, it has important consequences; it implies that the two approaches based on randomness and optimality deficiencies remain equivalent in the case of bounded class of descriptions. The proof technique can be also used to prove Epstein--Levin theorem~\cite{epstein-levin}, as explained in~\cite{shen-survey}; similar technique was used by A.~Milovanov in \cite{milovanov-stacs} where a common model for several strings is considered.

\begin{proof}
The first part is easy: having some $(i*j)$-description for $x$, we can search for a covering by the sets of right size that exists due to condition~(3); since $\mathcal{A}$ is enumerable, we can do it algorithmically until we find this covering. Then we select the first set in the covering that contains $x$; the bound for the complexity of this set is guaranteed by the size of the covering.

The proof of the second statement is much more interesting. In fact, there are two different proofs: one uses a probabilistic existence argument and the second is more explicit. But both of them start in the same way.

Let us enumerate all $(i*j)$-descriptions from $\mathcal A$, i.e., all finite sets that belong to $\mathcal A$, have cardinality at most $2^j$ and complexity at most $i$. For a fixed $n$, we start a selection process: some of the generated descriptions are marked (=selected) immediately after their generation. This process should satisfy the following requirements: (1)~at any moment every $n$-bit string $x$ that has at least $2^k$ descriptions (among enumerated ones) belongs to one of the marked descriptions; (2)~the total number of marked sets does not exceed $2^{i-k}p(n)$ for some polynomial~$p$. Note that for $i\ge n$ or $j\ge n$ the statement is trivial, so we may assume that $i$, $j$ (and therefore $k$) do not exceed $n$; this explains why the polynomial depends only on $n$.

If we have such a strategy (of logarithmic complexity), then the marked set containing $x$ will be the required description of complexity $i-k+O(\log n)$ and log-size $j$. Indeed, this marked set can be specified by its ordinal number in the list of marked sets, and this ordinal number has $i-k+O(\log n)$ bits.

So we need to construct a selection strategy of logarithmic complexity. We present two proofs: a probabilistic one and an explicit construction.

\textsc{Probabilistic proof}. First we consider a finite game that corresponds to our situation. Two players alternate, each makes $2^i$ moves. At each move the first player presents some set of $n$-bit strings, and the second player replies saying whether it \emph{marks} this set or not. The second player loses if after some moves the number of marked sets exceeds $2^{i-k+1}(n+1)\ln 2$ (this specific value follows from the argument below) or if there exists a string $x$ that belongs to $2^k$ sets of the first player but does not belong to any marked set.

Since this is a finite game with full information, one of the players has a winning strategy. We claim that the second player can win. If it is not the case, the first player has a winning strategy. We get a contradiction by showing that the second player has a \emph{probabilistic} strategy that wins with positive probability against any strategy of the first player. So we assume that some (deterministic) strategy of the first player is fixed, and consider the following simple probabilistic  strategy: every set $A$ presented by the first player is marked with probability $p=2^{-k}(n+1)\ln 2$.

The expected number of marked sets is $p2^i=2^{i-k}(n+1)\ln 2$. By Chebyshev's inequality, the number of marked set exceeds the expectation by a factor $2$ with probability less than $1/2$. So it is enough to show that the second bad case (after some move there exists $x$ that belongs to $2^k$ sets of the first player but does not belong to any marked set) happens with probability at most $1/2$.

For that, it is enough to show that for every fixed $x$ the probability of this bad event is at most $2^{-(n+1)}$, and then use the union bound.  The intuitive explanation is simple: if $x$ belongs to $2^k$ sets, the second player had (at least) $2^k$ chances to mark a set containing $x$ (when these $2^k$ sets were presented by the first player), and the probability to miss all these chances is at most $(1-p)^{2^k}$; the choice of $p$ guarantees that this probability is less than $1/2^{-(n+1)}$. Indeed, using the bound $(1-1/x)^x < 1/e$, it is easy to show that $(1-p)^{2^k} < e^{-(n+1)\ln 2}=2^{-(n+1)}$.

The pedantic reader would say that this argument is not formally correct, since the behavior of the first player (and the moment when next set containing $x$ is produced) depends on the moves of the second player, so we do not have independent events with probability $1-p$ each (as it is assumed in the computation).\footnote{The same problem appears if we observe a sequence of  independent coin tossings with probability of success $p$, select some trials (before they are actually performed, based on the information obtained so far), and ask for the probability of the event ``$t$ first selected trials were all unsuccessful''. This probability does not exceed $(1-p)^t$; it can be smaller if the total number of selected trials is less than $t$ with positive probability. This scheme was considered by von Mises when he defined random sequences using selection rules, so it should be familiar to algorithmic randomness people.}  The formal argument considers for each $t$ the event $R_t$: ``after some move of the second player the string $x$ belongs to at least $t$ sets provided by the first player, but does not belong to any marked set''. Then we prove by induction (over $t$) that the probability of $R_t$ does not exceed $(1-p)^t$. Indeed, it is easy to see that $R_t$ in a union of several disjoint subsets (depending on the events happening until the first player provides $t+1$ sets containing $x$), and $R_{t+1}$ is obtained by taking a $(1-p)$-fraction in each of them.

\textsc{Constructive proof}. We consider the same game, but now allow more sets to be marked (replacing the bound $2^{i-k+1}(n+1)\ln 2$ by a bigger bound $2^{i-k}i^2\ln 2$) and also allow the second player to mark sets that were produced earlier (not necessarily at the current move of the first player). The explicit winning strategy for the second player performs in parallel $i-k+\log i$ substrategies (indexed by the numbers $\log (2^k/i),\ldots,i$).

The substrategy number $s$ wakes up once in $2^s$ moves (when the number of moves made by the first player is a multiple of $2^s$).  It considers a family $S$ that consists of $2^s$ last sets produced by the first player, and the set $T$ that consists of all strings $x$ covered by at least $2^k/i$ sets from $S$. Then it selects and marks some elements in $S$ in such a way that all $x\in T$ are covered by one of the selected sets. It is done by a greedy algorithm: first take a set from $S$ that covers maximal part of $T$, then the set that covers maximal number of non-covered elements, etc. How many steps do we need to cover the entire $T$? Let us show that
    $$ (i/2^k)n 2^{s} \ln 2 $$
steps are enough. Indeed, every element of $T$ is covered by at least $2^k/i$ sets from $S$. Therefore, some set from $S$ covers at least $\#T2^k/(i2^s)$ elements, i.e., $2^{k-s}/i$-fraction of $T$. At the next step the non-covered part is multiplied by $(1-2^{k-s}/i)$ again, and after $in2^{s-k}\ln 2$ steps the number of non-covered elements is bounded by
    $$ \#T (1-2^{k-s}/i)^{in2^{s-k}\ln 2} < 2^n (1/e)^{n\ln 2} = 1,$$
therefore all elements of $T$ are covered. (Instead of a greedy algorithm one may use a probabilistic argument and show that randomly chosen $in2^{s-k}\ln 2$ sets from $S$ cover $T$ with positive probability; however, our goal is to construct an explicit strategy.)

Anyway, the number of sets selected by a substrategy number $s$, does not exceed
     $$ in2^{s-k}(\ln 2)2^{i-s} = in2^{i-k}\ln 2, $$
and we get at most $i^2 n 2^{i-k} \ln 2$ for all substrategies.

It remains to prove that after each move of the second player every string $x$ that belongs to $2^k$ or more sets of the first player, also belongs to some selected set. For $t$th move we consider the binary representation of $t$:
   $$t=2^{s_1}+2^{s_2}+\ldots, \text{ where } s_1>s_2>\ldots $$
Since $x$ does not belong to the sets selected by substrategies with numbers $s_1,s_2,\ldots$, the multiplicity of $x$ among the first $2^{s_1}$ sets is less than $2^k/i$, the multiplicity of $x$ among the next $2^{s_2}$ sets is also less than $2^k/i$, etc. For those $j$ with $2^{s_j}<2^k/i$ the multiplicity of $x$ among the respective portion of $2^{s_j}$ sets is obviously less than $2^k/i$. Therefore, we conclude that the total multiplicity of $x$ is less that $i\cdot 2^k/i=2^k$ sets of the first player and the second player does not need to care about~$x$. This finishes the explicit construction of the winning strategy.

Now we can assume without loss of generality that the winning strategy has complexity at most $O(\log(n+k+i+j))$. (In the probabilistic argument we have proved the existence of a winning strategy, but then we can perform the exhaustive search until we find one; the first strategy found will have small complexity.) Then we use this simple strategy to play with the enumeration of all $\mathcal{A}$-sets of complexity less than $i$ and size $2^j$ (or less). The selected sets can be described by their ordinal number (among the selected sets), so their complexity is bounded by $i-k$ (with logarithmic precision). Every string that has $2^k$ different $(i*j)$-descriptions in $\mathcal{A}$, will also have one among the selected sets, and that is what we need.
\end{proof}

As before (for the unrestricted case), this result implies that descriptions with minimal parameters are simple with respect to the data string:
\begin{thm}[\cite{vv10}]\label{thm:improving-descriptions-2-gen}
Let $\mathcal A$ be an enumerable family of finite sets. If a string $x$ of length $n$ has $(i*j)$-description $A\in\mathcal{A}$ such that $\KS(A\cnd x)\ge k$, then $x$ has a $((i-k+O(\log n))*(j+O(\log n)))$-description in $\mathcal{A}$. If the family $\mathcal{A}$ satisfies the condition $(3)$, then $x$ has also a $((i+O(\log n))*(j-k+O(\log n)))$-description in~$\mathcal{A}$.
\end{thm}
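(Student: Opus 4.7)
The argument is a corollary of Theorem~\ref{thm:improving-descriptions-1-gen} obtained by converting the conditional-complexity hypothesis $\KS(A\cnd x)\ge k$ into a counting statement about the number of $(i*j)$-descriptions of $x$ in $\mathcal{A}$. My plan has three short steps.

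First I would verify the counting step. Since $\mathcal{A}$ is enumerable and Kolmogorov complexity is upper semicomputable, from $x$, $n$, $i$, $j$ one can enumerate all finite sets $B\in\mathcal{A}$ with $x\in B$, $\#B\le 2^j$ and $\KS(B)\le i$, i.e., all $(i*j)$-descriptions of $x$ in $\mathcal{A}$. We may assume $i,j=O(n)$, since otherwise the assertion is trivial. If fewer than $2^{k-c\log n}$ such descriptions existed for a suitable constant $c$, then $A$ would be specified, given $x$, by $i$, $j$, $n$ and its ordinal number in this enumeration, giving $\KS(A\cnd x)<k$ and contradicting the hypothesis. Hence $x$ has at least $2^{k-O(\log n)}$ distinct $(i*j)$-descriptions in $\mathcal{A}$.

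Next, feeding this multiplicity lower bound into the second part of Theorem~\ref{thm:improving-descriptions-1-gen} immediately supplies an $((i-k+O(\log n))*(j+O(\log n)))$-description of $x$ in $\mathcal{A}$, which is the first conclusion. Under condition~(3), I would then apply the first (size-for-complexity shift) part of Theorem~\ref{thm:improving-descriptions-1-gen} to the description just obtained with shift parameter $d=k$, converting it into an $((i+O(\log n))*(j-k+O(\log n)))$-description. This is legitimate whenever $k\le j+O(\log n)$; in the degenerate regime $k>j+O(\log n)$ the target log-cardinality is negative and the claim is either vacuous or realized trivially by the small-cardinality description supplied by Proposition~\ref{prop:a-family}(2).

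The substantive work is carried entirely by Theorem~\ref{thm:improving-descriptions-1-gen}, whose selection-game argument is the real obstacle in this line of results; the present statement is only a repackaging of its two parts. Accordingly, the main thing to watch in writing out the proof is the careful accounting of the $O(\log n)$ slack as it propagates through the enumeration estimate and then through the two applications of Theorem~\ref{thm:improving-descriptions-1-gen}, so that the constants combine into the single $O(\log n)$ terms appearing in the statement.
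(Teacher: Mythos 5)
Your proposal is correct and takes essentially the same route the paper intends: the paper states this theorem without a separate proof, deriving it from Theorem~\ref{thm:improving-descriptions-1-gen} exactly as in the unrestricted case, namely by noting that the $(i*j)$-descriptions of $x$ in $\mathcal{A}$ can be enumerated given $x,i,j$ (enumerability of $\mathcal{A}$ plus upper semicomputability of $\KS$), so $\KS(A\cnd x)\ge k$ forces at least $2^{k-O(\log n)}$ such descriptions, after which the two parts of Theorem~\ref{thm:improving-descriptions-1-gen} give the two conclusions. Your bookkeeping (assuming $i,j=O(n)$ and requiring $k\le j+O(\log n)$ for the shift $d=k$) matches the level of precision of the paper's own treatment of these degenerate parameter ranges.
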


This gives us the same corollaries as in the unrestricted case:

\begin{corollary}
Let $\mathcal{A}$ be a family of finite sets that satisfies the conditions (1)--(3). Then for every string $x$ of length $n$ three statements
\begin{itemize}
\item there exists a set $A\in\mathcal{A}$ of complexity at most $\alpha$ with $d(x\cnd A)\le\beta$;
\item there exists a set $A\in\mathcal{A}$ of complexity at most $\alpha$ with $\delta(x,A)\le\beta$;
\item the point $(\alpha,\KS(x)-\alpha+\beta)$ belongs to $P_x^\mathcal{A}$
\end{itemize}
are equivalent with logarithmic precision (the constants before the logarithms depend on the choice of the set $\mathcal{A}$).
\end{corollary}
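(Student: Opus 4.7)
The plan is to prove the three-way equivalence by establishing a cycle of implications, absorbing $O(\log n)$ slack at each step. Two of the three arrows are easy: the second and third statements are essentially the same thing written in two coordinate systems, and $d(x\cnd A)\le\delta(x,A)+O(1)$ is Proposition~\ref{prop:randomness-optimality}. The real content lies in going from small randomness deficiency back to small optimality deficiency inside $\mathcal{A}$.

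First I would handle (second) $\Leftrightarrow$ (third). The direction (third) $\Rightarrow$ (second) is immediate by unwinding the definition of $\delta$: any $B\in\mathcal{A}$ witnessing $(\alpha,\KP(x)-\alpha+\beta)\in P_x^\mathcal{A}$ satisfies $\delta(x,B)\le\beta+O(1)$. For (second) $\Rightarrow$ (third), given $A\in\mathcal{A}$ with $\KP(A)\le\alpha$ and $\delta(x,A)\le\beta$, the pair $(\KP(A),\log\#A)$ satisfies $\KP(A)+\log\#A\le\KP(x)+\beta$; I would apply Proposition~\ref{prop:a-family}(\ref{a3}) with shift $d=\alpha-\KP(A)$, trading complexity for size along slope $-1$, to produce the desired witness near $(\alpha,\KP(x)-\alpha+\beta)$, modulo $O(\log n)$ slack and trivial edge cases handled by $\{x\}$ or $\mathbb{B}^n$. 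The direction (second) $\Rightarrow$ (first) is simply Proposition~\ref{prop:randomness-optimality}.

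The substantive step is (first) $\Rightarrow$ (second). Given $A\in\mathcal{A}$ with $\KP(A)\le\alpha$ and $d(x\cnd A)\le\beta$, set $k=\delta(x,A)-d(x\cnd A)$; this is non-negative up to $O(1)$. By the Levin--G\'acs formula for the complexity of a pair,
\[
k=\KP(A)+\KP(x\cnd A)-\KP(x)=\KP(A\cnd x)+O(\log n),
\]
so $A$ carries substantial information about itself beyond what $x$ reveals: $\KP(A\cnd x)\ge k-O(\log n)$. I would then invoke the first (enumerability-only) half of Theorem~\ref{thm:improving-descriptions-2-gen} to obtain an $\mathcal{A}$-description $B\ni x$ with $\KP(B)\le\KP(A)-k+O(\log n)\le\alpha+O(\log n)$ and $\log\#B\le\log\#A+O(\log n)$. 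A one-line computation then yields
\[
\delta(x,B)\le\delta(x,A)-k+O(\log n)=d(x\cnd A)+O(\log n)\le\beta+O(\log n),
\]
closing the cycle.

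The main obstacle I anticipate is the bookkeeping that the $O(\log n)$ slack depends only on $n$ and on $\mathcal{A}$, not on $\alpha$, $\beta$, or on $A$ itself. One clean way to ensure this is to dispose of the trivial case where $\alpha$ or $\beta$ exceeds $n$ by more than a logarithm (where $\mathbb{B}^n$ or $\{x\}$ already witnesses all three conditions), so that in the nontrivial case every prefix complexity appearing is $O(n)$ and both the symmetry-of-information identity and Theorem~\ref{thm:improving-descriptions-2-gen} contribute only $O(\log n)$ error, with the implicit constants depending only on the fixed family $\mathcal{A}$.
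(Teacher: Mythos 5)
Your plan is correct and follows essentially the route the paper intends (it leaves the corollary's proof implicit, pointing back to the unrestricted-case arguments): the equivalence of the second and third statements via the slope~$-1$ shift of Proposition~\ref{prop:a-family}, the easy implication via Proposition~\ref{prop:randomness-optimality}, and the substantive step from randomness to optimality deficiency via the symmetry-of-information identity $\delta(x,A)-d(x\cnd A)=\KP(A\cnd x)+O(\log n)$ combined with the first (enumerability-only) half of Theorem~\ref{thm:improving-descriptions-2-gen}. The only bookkeeping to add is the one you already flag: restrict to $\alpha,\beta=O(n)$ and note that the shift $d=\alpha-\KP(A)$ is applied only when $d\le\log\#A$ (otherwise upward closure or the singleton $\{x\}$ settles the matter).
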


If we are interested in the uniform statements true for every enumerable family $\mathcal{A}$, the same arguments prove the following result:

\begin{proposition}
Let $\mathcal{A}$ be an arbitrary family of finite sets enumerated by some program $p$. Then for every $x$ of length $n$ the statements
\begin{itemize}
\item there exists a set $A\in\mathcal{A}$ such that $d(x\cnd A)\le \beta$;
\item there exists a set $A\in\mathcal{A}$ such that $\delta(x,A)\le\beta$
\end{itemize}
are equivalent up to $O(\KS(p)+\log\KS(A)+\log n+\log\log\#A)$-change in the parameters.
\end{proposition}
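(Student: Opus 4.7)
I would prove the two implications separately. For the direction $\delta(x,A)\le\beta\Rightarrow d(x\cnd A)\le\beta$, the \emph{same} set $A$ works by Proposition~\ref{prop:randomness-optimality}, which gives $d(x\cnd A)\le\delta(x,A)+O(1)$ without any dependence on $p$, $\KS(A)$, or $\#A$. So this direction is essentially free.

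The substantive direction is: given $A\in\mathcal{A}$ with $d(x\cnd A)\le\beta$, produce some $A'\in\mathcal{A}$ with $\delta(x,A')\le\beta+\mathrm{err}$. The key identity, valid up to a small correction, is
\[
\delta(x,A) - d(x\cnd A) = \KP(A\cnd x),
\]
obtained by unfolding the definitions and applying the Levin--G\'acs formula for the prefix complexity of a pair twice; the correction is $O(\log\KP(A)+\log\KP(x))=O(\log\KS(A)+\log n)$. Setting $k:=\delta(x,A)-d(x\cnd A)$, one then has $\KS(A\cnd x)\ge k - O(\log\KS(A)+\log n)$, so the first half of Theorem~\ref{thm:improving-descriptions-2-gen} applied to the enumerable family $\mathcal{A}$ produces some $A'\in\mathcal{A}$ that is an $((\KP(A)-k+\mathrm{err})*(\log\#A+\mathrm{err}))$-description of $x$. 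Plugging this into the definition of $\delta$ gives $\delta(x,A')\le \delta(x,A)-k+\mathrm{err}=d(x\cnd A)+\mathrm{err}\le\beta+\mathrm{err}$.

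The main obstacle is bookkeeping the error term uniformly in $p$: Theorem~\ref{thm:improving-descriptions-2-gen} is stated for a fixed family, with constants hidden in its $O(\log n)$. I would inspect the proof of that theorem to verify that the only extra data needed to reconstruct $A'$ from its ordinal number in the list of marked descriptions consists of (i) the enumerator $p$, contributing $O(\KS(p))$ bits; (ii) the parameters $\KP(A)$, $\log\#A$, and $k$, contributing $O(\log\KS(A)+\log\log\#A+\log n)$ bits; and (iii) a generic $O(\log n)$ from the selection strategy itself. Summing these contributions gives precisely $O(\KS(p)+\log\KS(A)+\log n+\log\log\#A)$, as required. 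A minor extra point is that the theorem is phrased with plain complexity $\KS$ while the Levin--G\'acs formula is cleanest in prefix complexity $\KP$; the gap between them is $O(\log)$ and is already absorbed in the error term.
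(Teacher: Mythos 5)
Your proposal is correct and follows essentially the same route the paper intends: the paper derives this uniform statement by "the same arguments" as the preceding corollary, i.e., the easy direction via Proposition~\ref{prop:randomness-optimality} and the substantive direction via the identity $\delta(x,A)-d(x\cnd A)=\KP(A\cnd x)$ combined with the first (enumerability-only) half of Theorem~\ref{thm:improving-descriptions-2-gen}, tracking that the selection argument needs only $p$, the parameters $\KS(A)$, $\log\#A$, $k$, and $n$, which is exactly your bookkeeping of the error term.
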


\ver{
\section{Strong models}\label{sec:strong-models}

\subsection{Information in minimal descriptions}

A possible way to bring the theory in accordance to our intuition is to change the definition of ``having the same information''. Although we have not given that definition explicitly, we have adopted so far the following viewpoint: $x$ and $y$ have the same (or almost the same) information if both conditional complexities $\KS(x\cnd y),\KS(y\cnd x)$ are small. If only one complexity, say $\KS(x\cnd y)$, is small, we said that all (or almost all) information contained in $x$ is present in $y$.

Now we will adopt a more restricted viewpoint and say that $x$ and $y$ have the same information if there are short \emph{total} (everywhere defined) programs mapping $x$ to $y$ and vice versa. From this viewpoint we cannot say anymore that a string $x$ and its shortest program $x^*$ have the same information: for example, $x$ may be non-stochastic while $x^*$ is always stochastic, so there is no short total program that maps $x^*$ to $x$ because of Proposition~\ref{prop:stoch-cons}.\footnote{It is worth to mention that  on the other hand, for every string $x$ there is an almost minimal program for $x$ that can be obtained from $x$ by a simple total algorithm~\cite[Theorem 17]{ver2015}.} Let us mention that if $x$ and $y$ have the same information in this new sense, then there exists a simple computable \emph{bijection} that maps $x$ to $y$ (so they have the same properties if the property is defined in the computability language), see~\cite{mezhirov} for the proof.

Formally, let us define the total conditional complexity with respect to a computable function $D$ of two arguments, as
     $$
\KT_D(x\cnd y)=\min\{l(p)\mid D(p,y)=x,\text{ and $D(p,y')$ is defined for all } y'\}.
    $$
(Note that $D$ is not required to be total, but we consider only $p$ such that $D(p,y')$ is defined for all $y'$.)

There is a computable function $D$ such that $\KT_D$ is minimal up to an additive constant. Fixing any such $D$ we obtain the  \emph{total conditional complexity} $\KT(x\cnd y)$. In other way, we may define $\KT(x\cnd y)$ as the minimal plain complexity of a total program that maps $y$ to $x$.

We will think that  $y$  has all (or almost all) the information from $x$ if $\KT(x\cnd y)$ is negligible.  Formally, we write  $x\stackrel{\varepsilon}{\rightarrow}y$ if $\KT(y\cnd x)\le\varepsilon$  and we call $x$ and $y$ \emph{$\varepsilon$-equivalent} and write $x\stackrel{\varepsilon}{\leftrightarrow}y$, if both $\KT(y\cnd x)$ and $\KT(x\cnd y)$ are at most $\varepsilon$.

\begin{proposition}\label{prop:equivalence}
If $x\stackrel{\varepsilon}{\leftrightarrow}y$ then the sets $P_x$ and $P_y$ are in $O(\varepsilon)$ neighborhood of each other.
\end{proposition}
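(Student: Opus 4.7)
The plan is to mimic the proof of Proposition~\ref{prop:stoch-cons}: apply the short total program witnessing $x \stackrel{\varepsilon}{\to} y$ to a description of $x$ in order to obtain a description of $y$ with comparable parameters, and then invoke the symmetric argument.

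More precisely, I would first unpack the hypothesis: since $\KT(y\cnd x)\le\varepsilon$, there is a total program $F$ of plain complexity at most $\varepsilon$ with $F(x)=y$; similarly there is a total $G$ with $G(y)=x$ and $\KS(G)\le\varepsilon$. Note that $\KP(F),\KP(G)\le\varepsilon+O(\log\varepsilon)=O(\varepsilon)$, so the program $F$ can be used as an $O(\varepsilon)$-bit ingredient inside a prefix-free description.

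Next, the key geometric step. Suppose $(i,j)\in P_x$, witnessed by a finite set $A\ni x$ with $\KP(A)\le i$ and $\log\#A\le j$. Form the image set $B=F(A)=\{F(z):z\in A\}$. Since $F$ is total, $B$ is a well-defined finite object, and
$$
y=F(x)\in B,\qquad \#B\le\#A\le 2^{j}.
$$
Moreover $B$ is computed from the pair $(A,F)$, so $\KP(B)\le\KP(A)+\KP(F)+O(1)\le i+O(\varepsilon)$. Hence $(i+O(\varepsilon),j)\in P_y$. The symmetric argument with $G$ shows that for any $(i,j)\in P_y$ we have $(i+O(\varepsilon),j)\in P_x$.

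Finally, to conclude the Hausdorff-style bound claimed in the proposition, observe that both $P_x$ and $P_y$ are upward-closed subsets of $\mathbb{Z}^2$. The inclusions above say that the horizontal $O(\varepsilon)$-shift of $P_x$ is contained in $P_y$, and vice versa, so each set lies in the $O(\varepsilon)$-neighborhood of the other. I do not foresee any real obstacle here: the only thing to watch is the conversion from the plain total complexity $\varepsilon$ to a prefix complexity usable inside $\KP(B)$, but this is absorbed by the $O(\varepsilon)$ slack.
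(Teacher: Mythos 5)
Your proposal is correct and follows essentially the same route as the paper: take an $(i*j)$-description $A$ of $x$, push it forward through the total program witnessing $x\stackrel{\varepsilon}{\rightarrow}y$ to get a description $B=F(A)\ni y$ of no larger cardinality and complexity at most $i+O(\varepsilon)$, and then argue symmetrically. The point you flag about totality (needed to actually produce the list of elements of $B$ from that of $A$) and the absorption of the plain-to-prefix conversion into $O(\varepsilon)$ are exactly the remarks the paper makes.
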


\begin{proof}
Indeed, if $A$ is an $(i*j)$-description of $x$ and $p$ is a total program witnessing  $x\stackrel{\varepsilon}{\rightarrow}y$, then the set $B=\{D(p,x')\mid x'\in A\}$ is an $((i+O(\varepsilon))*j)$-description for $y$. (We need $p$ to be total, as otherwise we cannot produce the list of $B$-elements from the list of $A$-elements and $p$.)
\end{proof}

\subsection{An attempt to separate ``good'' models from ``bad'' ones}

Now we have a more fine-grained classification of descriptions and can try to distinguish between descriptions that were equivalent in the former sense. For example, consider a string $xy$ where $y$ is random conditionally to $x$. Let $A$ be a model for $xy$ consisting of all extensions of $x$ (of the same length). This model looks good (in particular, it has negligible optimality deficiency). On the other hand, we may consider a standard model $B$ for $xy$ of the same (or smaller) complexity. It also has negligible optimality deficiency but looks unnatural. In this section we are interested in the following question: how can we formally distinguish good models like $A$  from bad models like $B$?  We will see that at least for some strings $u$ the value $\KT(A\cnd u)$ can be used to distinguish between good and bad models for $u$. (Indeed, in our example $\KT(A\cnd xy)$ is small, while  $\KT(B\cnd xy)$ can be large.)

\begin{definition}
A set $A\ni x$ is an \emph{$\varepsilon$-strong model} (or \emph{statistic}) for a string $x$ if $\KT(A\cnd x)\le\varepsilon$.
\end{definition}

For instance, the model $A$ discussed above is an $O(\log n)$-strong model for $x$. On the other hand, we will see later that, if $y$ is chosen appropriately, then  no standard description $B$ of the same complexity and log-cardinality as $A$ is  an $\varepsilon$-strong model for $x$, even for $\varepsilon=\Omega(n)$.

The following proposition explains the meaning of a strong model by providing an equivalent definition.
A finite family of sets $\mathcal{A}$ is called \emph{partition} if 
for every $A_1, A_2 \in \mathcal{A}$ 
we have $A_1 \cap A_2 \not= \varnothing \Rightarrow A_1 = A_2$. 
For any partition we can define its complexity.  
The lemma states that strong  statistics are those that belong 
to simple partitions.

\begin{proposition}\label{part}
Assume that $A$ is a model for $x$ that belongs to a partition of complexity $\eps$. Then 
$A$ is an $\eps+O(1)$-strong model for $x$.
Conversely, assume that $A$ is an $\eps$-strong statistic for а string $x$ of length $n$. Then there is  a partition $\mathcal{A}$ 
of complexity at most  $\eps + O(\log n)$ and a model $A'\in\mathcal A$ for $x$ such that: $\# A' \le  \# A$ and 
both $\CT(A \cnd A')$ and $\CT(A' \cnd A)$ are at most $\eps + O(\log n)$. 
\end{proposition}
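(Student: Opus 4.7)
The plan is to handle the two implications separately. For the forward direction, I would use the canonical total program: given any input $y$, decode the $\varepsilon$-bit description of the partition $\mathcal{A}$, enumerate its blocks, and output the (unique, if existent) block of $\mathcal{A}$ containing $y$, returning a default value otherwise. This program is total, has length $\varepsilon + O(1)$, and maps $x$ to $A$, so $\CT(A\cnd x)\le\varepsilon+O(1)$.

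The heart of the proposition is the converse. I would start from the total program $p$ of length at most $\varepsilon$ witnessing $\CT(A\cnd x)\le\varepsilon$; let $f$ denote the total function it computes, so $f(x)=A$. My candidate partition is built from the image of $f$ on strings of length $n$: for each $S\in f(\{0,1\}^n)$ define
$$
B_S = \{y\in S : f(y) = S\},
$$
and take $\mathcal{A} = \{B_S : S \in f(\{0,1\}^n)\}$. These blocks are pairwise disjoint, since $y\in B_S\cap B_T$ forces $f(y)=S=T$; and $\mathcal{A}$ is computable from $(p,n)$, so $\KS(\mathcal{A})\le \varepsilon + O(\log n)$. Setting $A'=B_A$, I get $x\in A'$ (because $x\in A$, $f(x)=A$, and $x$ has length $n$) and $A'\subseteq A$, hence $\#A'\le \#A$.

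For the two total conditional complexity bounds, there are short total programs. To compute $A$ from $A'$, take the lexicographically first element $y$ of the input (treated as a finite set) and return $f(y)$; for $y\in A'=B_A$ this yields $A$, so the program has length $\varepsilon+O(1)$. To compute $A'$ from $A$, iterate through the elements $y$ of the input and retain those with $f(y)=A$, using $p$ as a subroutine; this requires $n$ as auxiliary information and has length $\varepsilon+O(\log n)$. Both programs are made total by returning a default value on malformed inputs. The main obstacle is isolating the right definition of the blocks: the naive equivalence $y\sim z\iff f(y)=f(z)$ can yield a block through $x$ much larger than $\#A$, violating the size requirement, while intersecting with $S$ in the definition of $B_S$ simultaneously enforces $A'\subseteq A$ and preserves the partition property.
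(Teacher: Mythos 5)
Your proof is correct and essentially identical to the paper's: the forward direction uses the same canonical total program, and for the converse the paper also partitions the strings $x'$ with $x'\in p(x')$ according to the value $p(x')$, which yields exactly your blocks $B_S=\{y\in S: f(y)=S\}$, with the same two short total programs witnessing $\CT(A\cnd A')$ and $\CT(A'\cnd A)$. Your restriction of the index sets $S$ to the image $f(\{0,1\}^n)$ is only a cosmetic refinement (it makes the finiteness of the partition explicit).
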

\begin{proof}
Assume that $A$ is a model for $x$ that belongs to a partition $\mathcal A$ of complexity $\eps$. 
The program that maps every given string $x'$ to the set $A'\in \mathcal A$ which $x$ belongs to (if $x'$ belongs to no set in $\mathcal A$,
then the program maps it, say, to the empty set) is total and has length at most $\eps+O(1)$.

Conversely, assume that $A$ is an $\eps$-strong statistic for a string $x$ of length $n$.
Then there is a total program $p$ such that 
$p(x) = A$ and $|p| \le \eps$.  

Consider the set $X$ of all strings $x'$ with $x'\in p(x')$.
Obviously, $x\in X$.
Partition $X$ according to the value of $p(x')$:
strings $x'$ and $x''$ are in the same set of the partition if
$p(x')=p(x'')$. The constructed partition $\mathcal A$ has complexity at most  $\eps + O(\log n)$. 
It includes  the set $A'=\{x'\in X\mid p(x')=A\}$, which 
includes $x$ and can be obtained from $x$ by a total program
of length at most $\eps+O(\log n)$: that program maps a given string $x'$ to the set $\{x''\in X\mid p(x'')=p(x')\}$.

Since $A'\subset A$, we have $ \# A' \le \# A$. It remains to show that
that both $\CT(A \cnd A')$ and 
$\CT(A' \cnd A )$ are less than 
$|p| + O(\log n) = \eps + O(\log n)$.
Indeed, $A'$ can be obtained from $A$ by a total program of length $|p| + O(\log n)$ that maps  
a given set $B$ to $\{x'\in X\mid p(x')=B\}$. On the other hand, $A$ can be obtained from $A'$
by a total program of length $|p| + O(1)$ that for a given set $B$ picks any element $x'$ from $B$  and computes $p(x')$.
If $D$ is empty then that program outputs, say, the empty set. Recall that $A'$ is
non-empty, as it includes $x$.  
\end{proof}

Strong models satisfy an analog of Proposition~\ref{prop:description-shift} (the same proof works):

\begin{proposition}
     \label{prop:description-shift-1}
Let $x$ be a string and $A$ be an $\varepsilon$-strong model for $x$. Let $i$ be a non-negative integer such that $i\le \log\# A$. Then there exists an $\varepsilon+O(\log i)$-strong model $A'$ for $x$ such that $\#A' \le \# A/ 2^i$ and $\KS(A')\le \KS(A)+i + O(\log i)$.
\end{proposition}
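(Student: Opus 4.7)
The plan is to re-run the partition argument from Proposition~\ref{prop:description-shift} and check that the strong-model property is preserved. List the elements of $A$ in lexicographic order and split the list into $2^i$ consecutive blocks of size $\lceil \#A/2^i\rceil$ (with the last block possibly smaller); let $A'$ be the block that contains $x$. By construction $\#A'\le \#A/2^i$, and the plain-complexity bound $\KS(A')\le \KS(A)+i+O(\log i)$ follows exactly as in Proposition~\ref{prop:description-shift}: from a shortest description of $A$, a self-delimiting encoding of $i$ (which costs $O(\log i)$ bits), and the block number (at most $i$ bits), $A'$ is reconstructible.

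The new point is to check that $A'$ is still a strong model for $x$. Let $p$ be a total program of length at most $\varepsilon$ witnessing $\KT(A\cnd x)\le\varepsilon$. Consider a program $q$ that, on any input $y$, first runs $p(y)$ to obtain a finite set $B$, and then: if $y\in B$, splits $B$ lexicographically into $2^i$ blocks and returns the block containing $y$; if $y\notin B$, returns some default value (say, $\{y\}$). Then $q$ is total because $p$ is total and the rest is a finite, always well-defined computation; its description size is $|p|+O(\log i)\le \varepsilon+O(\log i)$, since we only need to append to $p$ a self-delimiting encoding of $i$ together with a constant-size block-selecting routine. On input $x$ we have $p(x)=A$ and $x\in A$, so $q(x)=A'$, which witnesses $\KT(A'\cnd x)\le \varepsilon+O(\log i)$.

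There is no serious obstacle. The only subtle point is the handling of totality in the definition of $\KT$: the program $q$ must be defined on \emph{every} input, not only on those $y$ for which $p(y)$ happens to contain $y$, which is why the explicit default branch is needed. Everything else is the bookkeeping already done for Proposition~\ref{prop:description-shift}.
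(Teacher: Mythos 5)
Your proof is correct and is essentially the argument the paper intends: the paper gives no separate proof for Proposition~\ref{prop:description-shift-1}, saying only that the proof of Proposition~\ref{prop:description-shift} (lexicographic splitting into $2^i$ blocks) carries over. Your explicit check that the composed program remains total --- including the default branch for inputs $y\notin p(y)$ --- is precisely the point implicitly assumed there, so nothing is missing.
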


To take into account the strength  of models, we may consider the set
$$
P_x(\varepsilon)=\{(i,j)\mid x\text{ has an $\varepsilon$-strong $(i*j)$-description}\}.
$$
Obviously, we have
$$
P_x(\varepsilon)\subset P_x = P_x(n+O(1))
$$
for all strings $x$ of length $n$ and for all $\varepsilon$.

If the set $P_x(\varepsilon)$ is not much smaller than $P_x$ for a reasonably small $\varepsilon$, we will say that $x$ is a ``normal'' string and otherwise we call $x$ ``strange''.  More precisely, a string $x$ is called $(\varepsilon,\delta)$-\emph{normal} if $P_x$ is in $\delta$-neighborhood of $P_x(\varepsilon)$. Otherwise, $x$ is called \emph{$(\varepsilon,\delta)$-strange}.

It turns out that there are $\sqrt{n\log n},O(\log n)$-normal strings with any given set $P_x$ that satisfies the conditions of Theorem~\ref{stat-any-curve}. On the other hand, there are $\Omega(n),\Omega(n)$-strange strings of length $n$. We are going to state these facts accurately.

\begin{thm}[\cite{milovanov-csr}]\label{stat-any-curve-1}
Let $k\le n$ be two integers and let $t_0 > t_1 > \ldots > t_k$  be a strictly decreasing sequence of integers such that $t_0\le n$ and $t_k=0$. Then there exists a string $x$ of complexity $k+O(\sqrt{n\log n})$ and length $n+O(\log n)$ for which the distance between both sets $P_x$ and $P_x(O(\log n))$ and the set  $T=\{( i,j)\mid (i\le k)\Rightarrow (j\ge t_i)\}$ is at most $O(\sqrt{n\log n})$.
\end{thm}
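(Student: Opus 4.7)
The plan is to apply Theorem~\ref{thm:family-curve} to the family $\mathcal{A}$ of all \emph{prefix-cylinders}, that is, sets of the form $\{y\in\{0,1\}^n : y[1..k]=u\}$ for arbitrary $n$, $k\le n$, and $u\in\{0,1\}^k$. First I would check that $\mathcal{A}$ satisfies conditions (1)--(3) from Section~\ref{sec:restricted-type}. Enumerability is immediate. The set $\{0,1\}^n$ is itself the cylinder corresponding to the empty prefix, so (2) holds. For (3) there is even no polynomial overhead: a cylinder of size $2^{n-k}$ is the exact disjoint union of $2^{k'-k}$ cylinders of size $2^{n-k'}$ obtained by extending the prefix by $k'-k$ bits.

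The crucial observation is that every cylinder is an $O(\log n)$-strong model for each of its elements. Given integers $n$ and $k$, the total program that on input $y$ outputs the set $\{z\in\{0,1\}^n:z[1..k]=y[1..k]\}$ (truncating or padding $y$ in any fixed way if needed) has length $O(\log n)$, and on input $x$ it produces precisely the cylinder of the $k$-prefix of $x$. Hence $\CT(A\cnd x)=O(\log n)$ for every cylinder $A\ni x$, and consequently $P_x^{\mathcal{A}}\subseteq P_x(O(\log n))$ for every $x$ of length $n$.

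Next I would invoke Theorem~\ref{thm:family-curve} for this $\mathcal{A}$ to produce a string $x$ of length $n+O(\log n)$ and complexity $k+O(\sqrt{n\log n})$ such that $P_x^{\mathcal{A}}$ is $O(\sqrt{n\log n})$-close to $T$. By Remark~\ref{rem:family}, we may run the enumeration of bad sets over the full family of finite sets rather than over $\mathcal{A}$ alone, which forces $P_x$ to be $O(\sqrt{n\log n})$-close to $T$ as well. From the chain of inclusions $P_x^{\mathcal{A}}\subseteq P_x(O(\log n))\subseteq P_x$, the set $P_x(O(\log n))$ is squeezed between $P_x^{\mathcal{A}}$ (which hits every point of $T$ up to $O(\sqrt{n\log n})$) and $P_x$ (which stays within $O(\sqrt{n\log n})$ of $T$ on the other side). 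Hence $P_x(O(\log n))$ is itself $O(\sqrt{n\log n})$-close to $T$.

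The only genuinely new idea beyond what is already developed is the recognition that prefix-cylinders simultaneously meet the structural requirements of Theorem~\ref{thm:family-curve} and admit uniformly short total programs producing them, which makes them automatically $O(\log n)$-strong. The main technical point to watch is the bookkeeping in the squeeze argument and the confirmation that Remark~\ref{rem:family} yields the stated precision for both $P_x$ and $P_x^{\mathcal{A}}$ simultaneously; the substantive combinatorial work is absorbed by Theorem~\ref{thm:family-curve}.
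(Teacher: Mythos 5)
Your proposal is correct and follows essentially the same route as the paper's proof: take the family of cylinders, note that every cylinder is an $O(\log n)$-strong model for each of its elements so that $P_x^{\mathcal A}\subseteq P_x(O(\log n))\subseteq P_x$, apply Theorem~\ref{thm:family-curve} together with Remark~\ref{rem:family} to make both $P_x^{\mathcal A}$ and $P_x$ close to $T$, and conclude by the squeeze argument.
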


\begin{proof}
Consider the family $\mathcal A$ of all cylinders, i.e., the family of all the sets of the form
$\{ur\mid l(r)=m\}$ for different strings $u$ and natural numbers $m$. Sets from this family have the following feature: if $A\ni x$ then $A$ is  an $O(\log n)$-strong model  for $x$. Hence for all strings $x$ we have  $P_x^{\mathcal A}=P_x^{\mathcal A}(O(\log n))$.

By Theorem~\ref{thm:family-curve} and Remark~\ref{rem:family} there is a string $x$ of length $n+O(\log n)$  and complexity $k+O(\sqrt{n\log n})$ such that all sets $P_x, P_x^{\mathcal A},T$ are  $O(\sqrt{n\log n})$-close to each other. Hence all the three sets are close to the set $P_x^{\mathcal A}(O(\log n))$ as well. As the set $P_x(O(\log n))$ includes the latter set and is included in $P_x$, all the three sets are close to the set $P_x(O(\log n))$ as well.
\end{proof}

The next theorem from~\cite{ver2015} shows that ``strange'' strings do exist.

\begin{thm}\label{t1}
Assume that natural numbers $k,n,\eps$ satisfy the inequalities
$$
O(1)\le\eps\le k\le n.
$$
Then there is a string $x$ of length $n$ and complexity $k+O(\log n)$ such that the sets $P_x$ and $P_x(\eps)$ are $O(\log n)$-close to the sets shown on Fig.~\textup{\ref{f6}}. (The set $P_x$ is to the right of the dashed line. The set $P_x(\eps)$ is to the right of the solid line. The difference between the sets
has the shape of a parallelogram.)
\end{thm}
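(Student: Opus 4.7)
The plan is to construct $x$ by a generalization of the antistochastic construction from Theorem~\ref{stat-any-curve}. I would enumerate two families of ``bad'' finite sets that $x$ must avoid. The first family is the standard one used for the antistochastic curve: finite sets of complexity at most $i$ and cardinality at most $2^{n-i}$ for $(i,n-i)$ on the dashed boundary of $P_x$. Avoiding this family forces the boundary of $P_x$ to lie on or above the dashed line. The second, new family captures $\varepsilon$-strong descriptions that we want to rule out inside the parallelogram: using Proposition~\ref{part}, every $\varepsilon$-strong description is a part of a simple partition of complexity at most $\varepsilon+O(\log n)$, so I would enumerate all parts of such partitions whose pairs $(\KS,\log\#)$ lie strictly inside the parallelogram (i.e., below the solid line and above or on the dashed line). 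I would then take $x$ to be the lexicographically first $n$-bit string avoiding every set from both families.

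The existence of such an $x$ and the bound $\KS(x)=k+O(\log n)$ would come from a counting argument. For the first family, the total cardinality of covered strings is $O(2^n/\poly(n))$, as in the proof of Theorem~\ref{stat-any-curve}. For the second family, the number of simple partitions of complexity at most $\varepsilon+O(\log n)$ is at most $2^{\varepsilon+O(\log n)}$, and each partition of $\mathbb{B}^n$ into parts of size at most $2^{j_0}$ (where $j_0$ is the log-cardinality at the diagonal segment of the solid line) contributes at most $2^n$ covered strings overall, but only the parts whose complexity index lies inside the parallelogram contribute, so careful accounting shows that the total coverage stays below, say, $2^n/2$. The lex-first survivor $x$ is then specified by $n,k,\varepsilon$ and a halting-bound witness (the maximal running time of a shortest program of length $\le k$), giving $\KS(x)\le k+O(\log n)$; the matching lower bound comes from the diagonal argument that excluded singletons of complexity less than $k$.

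For the upper bounds matching the stated shape, the dashed boundary of $P_x$ would be traced by natural descriptions arising from the enumeration process (the portions successively covering $x$, as in the second half of the proof of Theorem~\ref{stat-any-curve}). The solid boundary of $P_x(\varepsilon)$ would be traced by explicitly constructing $\varepsilon$-strong descriptions: cylinders on a short prefix of $x$ (which are automatically strong), combined with $\{x\}\cup S$-style modifications and interpolated via Proposition~\ref{prop:description-shift-1}. Together these give the solid curve to $O(\log n)$ precision.

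The hard part will be the counting and calibration for the second family. Parts of simple partitions form a substantially richer class than finite sets of bounded complexity (a single simple partition can have exponentially many parts), and one must enumerate them carefully so that (i) the total number of excluded strings remains small enough to leave a survivor, and (ii) the enumeration is computable from $n,k,\varepsilon$ with $O(\log n)$ overhead, so that $x$ still has complexity $k+O(\log n)$. Calibrating the position of the slope-$-1$ solid segment exactly $\varepsilon$ to the right of the dashed one (modulo $O(\log n)$) is the delicate point that yields the parallelogram shape of $P_x\setminus P_x(\varepsilon)$.
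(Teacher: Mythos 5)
There are two genuine gaps in your plan, and they sit exactly where the paper's proof does its real work. The first concerns the exclusion of strong descriptions. Your second family does not rule out what it must rule out: if $A$ is an $\eps$-strong description of $x$ with parameters inside the parallelogram, Proposition~\ref{part} only yields a part $A'\ni x$ of a simple partition with $\#A'\le\#A$ and $\CT(A'\cnd A)\le\eps+O(\log n)$, so the only available bound on its complexity is $\K(A')\le\K(A)+\eps+O(\log n)$. The overhead is $\eps$, precisely the quantity you cannot afford when $\eps$ is large (the interesting case, e.g.\ $\eps$ linear in $n$): the part may land to the right of $i=k$ or above the solid diagonal, hence outside your family, and avoiding your enumerated sets then does not force the solid curve. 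If you enlarge the family to absorb this overhead, the survivor disappears: for instance, every part of the partition of $\bo^n$ by the first $k$ bits has complexity about $k\le k+\eps$ and $\K+\log\#$ about $n\le n+\eps$, the partition itself is simple, and its parts cover all of $\bo^n$. (Relatedly, your claim that the coverage of the second family stays below $2^n/2$ is asserted rather than proved, and it is sensitive to exactly this calibration.) The paper avoids partitions at this point and uses a different invariant (Lemma~\ref{l1}): the code $[A]$ of any $\eps$-strong description of an $n$-bit string is the image of $\bo^n$ under a total program of length at most $\eps$, hence has an $((\eps+O(\log n))*n)$-description; unlike the part parameters, this property is inherited, with only $O(\log n)$ loss, by the pieces produced by Proposition~\ref{prop:description-shift-1}. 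One then excludes only the corner --- sets $M$ with $\K(M)\le k$, $\log\#M\le n-k-4$, whose code has such a description (total coverage at most $2^{n-3}$) --- and reduces every strong description inside the parallelogram to that corner by chopping.

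The second gap is the left edge of the parallelogram: nothing in your construction supplies the point $(\eps+O(\log n),\,k-\eps)$ of $P_x$. A lexicographically first survivor gives only the exclusions and $\K(x)\approx k$; to separate $P_x$ from $P_x(\eps)$ you must exhibit an ordinary description of $x$ of complexity about $\eps$ and log-cardinality $k-\eps$. The refilling bookkeeping of Theorem~\ref{stat-any-curve} that you invoke does not deliver it here: the candidate corner sets number about $2^k$ and each is far larger than $2^{k-\eps}$, so neither the ``few large bad sets'' count nor the ``small total size of small bad sets'' count applies, and naive refilling yields a witness of complexity about $k$ --- at which point the dashed and solid curves need not differ at all. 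Most of the paper's proof is devoted to exactly this issue: it builds a set $A$ of complexity $\eps+O(\log n)$ and cardinality $2^{k-\eps}$ by a refilling process in which each new version of $A$ is chosen, via a probabilistic covering lemma, to meet every corner candidate enumerated so far in only $O(n)$ strings; when such a candidate is later confirmed to have complexity at most $k$, it kills only $O(n)$ elements of the current $A$, so the number of refills stays near $2^{\eps}$, and $x$ is finally taken to be a non-covered element of the last version of $A$ rather than a global lex-first string. Without a substitute for this mechanism (and for the code-based exclusion above) your argument cannot establish the stated shapes of $P_x$ and $P_x(\eps)$.
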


\begin{figure}[ht]
\begin{center}
\includegraphics[scale=1]{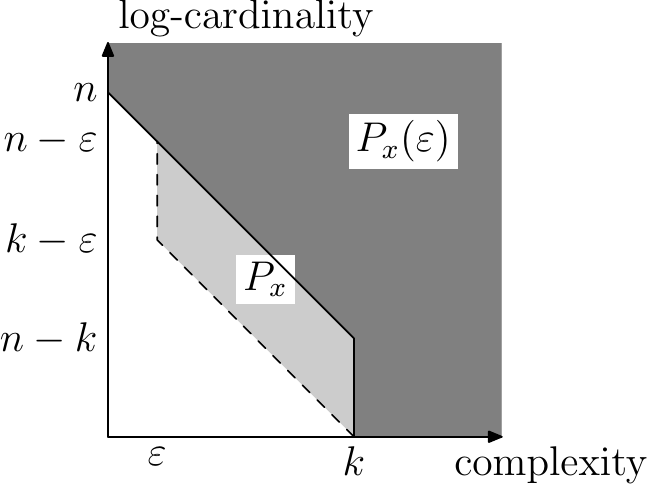}
\caption{The sets $P_x$ and $P_x(\eps)$ for the strange string from Theorem~\ref{t1}, with $O(\log n)$-precision. The set $P_x$ is to the right of the dashed line. The set $P_x(k)$ is to the right of the solid line.}\label{f6}
\end{center}
\end{figure}

We will prove this theorem later in Section~\ref{s:strange}. To illustrate its statement
let $k=2n/3$ and $\eps=n/3$ in Theorem~\ref{t1}. Then the sets $P_x$ and $P_x(n/3)$ are almost $n/3$-apart, since the point $(n/3,n/3)$ is in the $O(\log n)$-neighborhood of $P_x$ while all points from $P_x(n/3)$ are $(n/3-O(\log n))$-apart from $(n/3,n/3)$ (in $l_\infty$-norm). Thus the string $x$ is $(n/3,n/3-O(\log n))$-strange.

\medskip
Recall that we have introduced the notion of a strong model to separate good models from bad ones. We will now present some results that justify this approach.  The following theorem by Milovanov states, roughly speaking, that there exists a string $x$ of length $n$ and a strong model $A$ for $x$ such that the parameters (complexity, log-cardinality) of every strong \emph{standard} model $B$ for $x$ are $\Omega(n)$-far from those of $A$.

\begin{thm}\label{thm:separation}
For some positive $c$ for almost all $k$ there is a string $x$ of length $n=4k$ whose profile $P_x$ is $O(\log n)$-close to the gray set shown on Fig.~\textup{\ref{f4}}
\begin{figure}[h]
\begin{center}
\includegraphics[scale=1]{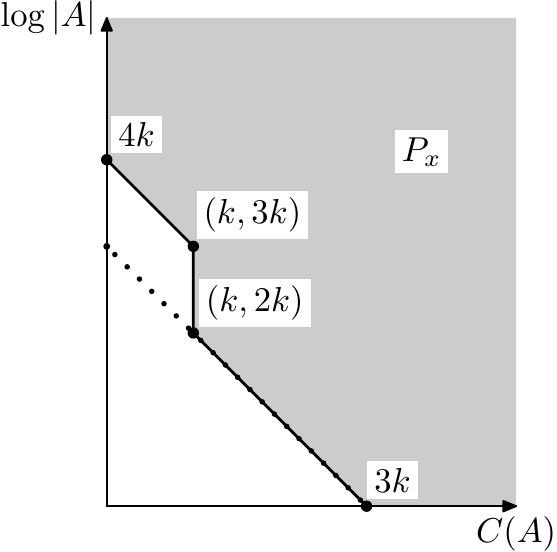}
\end{center}
\caption{The profile $P_x$ of a string $x$ from Theorem \ref{thm:separation}.}
\label{f4}
\end{figure}
such that
\begin{itemize}
\item there is an $O(\log n)$-strong model $A$ for $x$ with complexity $k+O(\log n)$ and log-cardinality $2k$ \textup(that model witnesses the point $(k,2k)$ on the border of $P_x$\textup), but
\item for every $m\ge \KS(x)$ and for every simple enumeration of strings of complexity at most $m$ the standard model $B$ for $x$ obtained from that enumeration is either not strong for $x$ or its parameters are far from the point $(k,2k)$. More specifically, if $B$ is a
model for $x$ obtained from an enumeration provided by some program $q$, then 
at least one of the values 
$\KT(B\cnd x),\KS(q),|\KS(B)-k|,|\log\#B-2k|$ is larger than $ck$.
\end{itemize}
\end{thm}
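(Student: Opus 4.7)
The plan is to construct $x$ explicitly as a padded concatenation of pieces that are random relative to the halting problem, and carry out the separation by tracking mutual information with $\Omega$. Let $n=4k$ and choose a pair $(y,z)\in\{0,1\}^k\times\{0,1\}^{2k}$ with $\KS(y,z\cnd \Omega_n)\ge 3k-O(\log n)$; such a pair exists since only a negligible fraction of pairs fails this bound for any fixed $\Omega_n$. Set $x=y z 0^k$ of length $n$; then $\KS(x)=3k+O(\log n)$ and $I(x:\Omega_n)=O(\log n)$.

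The profile $P_x$ is $O(\log n)$-close to the set in Fig.~\ref{f4}: the sets $\{0,1\}^n$, $\{w 0^k:w\in\{0,1\}^{3k}\}$, $A=\{y w 0^k:w\in\{0,1\}^{2k}\}$ and $\{x\}$ witness the corners $(O(\log n),n)$, $(O(\log n),3k)$, $(k,2k)$ and $(\KS(x),0)$; Proposition~\ref{prop:description-shift} interpolates along the slope-$(-1)$ segment, and the bound $i+j\ge \KS(x)-O(\log n)$ rules out anything below. The cylinder $A$ is an $O(\log n)$-strong model for $x$: the total program that reads the first $k$ bits of its input $v\in\{0,1\}^n$ and outputs the set of all $n$-bit strings with that prefix and suffix $0^k$ sends $x$ to $A$.

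It remains to exclude simple standard models with parameters near $(k,2k)$. Suppose for contradiction that some program $q$ with $\KS(q)\le ck$ enumerates the strings of complexity at most $m$ for some $m\ge\KS(x)$, and that the chunk $B$ containing $x$ satisfies $|\KS(B)-k|\le ck$, $|\log\#B-2k|\le ck$ and $\KT(B\cnd x)\le ck$. From $B$, $q$ and $m$ one recovers the prefix of $\Omega_m$ of length $\KS(B)$: simulate the $q$-enumeration until every element of $B$ has appeared and read off the counter. By Proposition~\ref{prop:omega-equivalence} this prefix is $O(\log n)$-equivalent to $\Omega_k$, so $\KS(\Omega_k\cnd B)\le \KS(q)+O(\log n)$. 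Using $\KS\le\KT+O(1)$ and composing,
$$
\KS(\Omega_k\cnd x)\le \KT(B\cnd x)+\KS(\Omega_k\cnd B)+O(\log n)\le 2ck+O(\log n),
$$
which contradicts the lower bound $\KS(\Omega_k\cnd x)\ge k-O(\log n)$ coming from $I(x:\Omega_n)=O(\log n)$ together with Proposition~\ref{prop:omega-equivalence}, provided $c<1/2$ and $k$ is sufficiently large.

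The main obstacle is the uniform-in-$q$ analogue of Proposition~\ref{prop:std-omega} invoked above: for an arbitrary simple enumeration $q$ the chunk of $x$ must carry the same information as the corresponding prefix of $\Omega_m$, with error bounded linearly in $\KS(q)+\log n$ (rather than only for the canonical enumeration). This is a straightforward adaptation of the canonical proof, but the precision in $\KS(q)$ has to be tracked so that the linear-in-$k$ contradiction above does indeed go through.
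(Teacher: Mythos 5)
Your construction does not produce a string with the profile the theorem requires. Up to the padding, $x=yz0^k$ is just a random string of length $3k$: it has the $(O(\log n)*3k)$-description $\{w0^k\mid w\in\{0,1\}^{3k}\}$, so $P_x$ is $O(\log n)$-close to the triangle $\{(i,j)\mid i+j\ge 3k\}$. The gray set of Fig.~\ref{f4} is instead the profile of the concatenation of an \emph{antistochastic} string $y$ of complexity $k$ and length $2k$ with a $z$ random given $y$ (this is how the paper builds $x$, via Theorem~\ref{rem:add-noise}): for $i\le k$ it requires $i+j\ge 4k$, with a drop of height $k$ at $i=k$. Your profile contains the point $(O(\log n),3k)$, which is about $k$ away from that set, so the first clause of the theorem fails; relatedly, for your $x$ the point $(k,2k)$ is not the minimal-sufficient-statistic point (the MSS for your $x$ has complexity $O(\log n)$), so the statement you prove is a genuinely weaker separation.

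The deeper problem is that the mutual-information route cannot be repaired for a string that does have the Fig.~\ref{f4} profile, because your exclusion argument never uses totality: it only invokes $\KS(B\cnd x)\le \KT(B\cnd x)$ and then contradicts $I(x:\Omega_n)=O(\log n)$. But any $x$ whose profile is close to Fig.~\ref{f4} has no models of complexity $<k-O(\log n)$ with small optimality deficiency, hence by Proposition~\ref{prop:dilemma} it satisfies $I(x:\Omega_n)\ge k-O(\log n)$; moreover, by Propositions~\ref{prop:better-std} and~\ref{prop:std-omega} the canonical standard description of such an $x$ at level $m\approx 3k$ has parameters close to $(k,2k)$, is $O(\log k)$-equivalent to $\Omega_k$, and satisfies $\KS(B\cnd x)=O(\log k)$. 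So for the intended string the right-hand side of your chain $\KS(\Omega_k\cnd x)\le \KT(B\cnd x)+\KS(\Omega_k\cnd B)+O(\log n)$ being small is perfectly consistent, and no contradiction arises: the entire content of the theorem is that these standard models, although \emph{plainly} simple given $x$, are not \emph{strong}, and establishing that $\KT(B\cnd x)$ is large requires a different mechanism — in the paper, Theorem~\ref{thm:step-wise} (a minimal strong sufficient statistic is reachable from any sufficient statistic by a short total program) combined with Lemma~\ref{l4} (a lower bound on $\CT(B\cnd y)$ for standard models conditioned on any string $y$ shorter than $m$). Your separation succeeds only because your $x$ carries no information about $\Omega$, which is precisely what the profile requirement rules out. (Two smaller, fixable points: before applying Proposition~\ref{prop:omega-equivalence} with error $O(\log n)$ you must first show $m=O(k)$, which does follow from the assumed bounds on $\KS(B)$, $\log\#B$ and $\KS(q)$; and the extracted prefix has length about $m-\log\#B$ rather than exactly $k$, so the constants in the final comparison need tracking.)
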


We will prove this theorem in Section~\ref{s:normal}.

\subsection{Properties of strong models}

Once we have decided that non-strong descriptions are bad, it is natural to restrict ourselves to strong descriptions with negligible randomness deficiency (and hence negligible optimality deficiency). 

Consider some $n$-bit string $x$. Assume that $A$ is an $\varepsilon$-strong  description for $x$. Let $u$ be the ordinal number of $x$ in $A$ with respect to some fixed order. Then  $\KT(x\cnd A,u)=O(1)$ and  $\KT(A,u\cnd x)\le \varepsilon+O(1)$ (the latter inequality holds since $\KT(A\cnd x)\le\varepsilon$ and $u$ can be easily found when $x$ and $A$ are known). 
That is, $x\stackrel{\varepsilon+O(1)}{\longleftrightarrow}(A,u)$. As the pairs $(A,u)$ are naturally partitioned into classes, in this way we obtain and alternative proof
of  Proposition~\ref{part}: the image of that partition under the total mapping witnessing $(A,u)\stackrel{O(1)}{\rightarrow}x$ satisfies the second claim of Proposition~\ref{part}.

Assume further that the randomness deficiency of $x$ in $A$ is at most $\varepsilon$. 
As $u$ is random and independent of $A$ (with precision $\varepsilon$; note that $\KS(u\cnd A)\approx \KS(x\cnd A)\ge \log\#A-\varepsilon$), the sets  $Q_{A,i}$ and $Q_{A}$ are $\varepsilon+O(\log n)$-close (Theorem~\ref{prop:add-noise}). On the other hand, the sets  $Q_{A,u}$ and $Q_x$ are $\varepsilon+O(1)$-close by Proposition~\ref{prop:equivalence}. Thus we obtain the first property of strong models:

\begin{proposition}\label{prop:upward}
If both $\KT(A\cnd x)$ and $\log\# A-\KS(x\cnd A)$ are at most  $\varepsilon$, then the sets  $Q_x$ and $Q_{A}$ are $O(\varepsilon+\log l(x))$-close.
\end{proposition}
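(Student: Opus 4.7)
The plan is to factor the comparison between $Q_x$ and $Q_A$ through an intermediate object, namely the pair $(A,u)$, where $u$ is the ordinal number of $x$ in $A$ under a fixed natural ordering (say, lexicographic). First I will show that $x$ and $(A,u)$ are $(\varepsilon+O(1))$-equivalent in the sense of total programs, and then that $(A,u)$ is close to $A$ (in profile) because $u$ is effectively a fresh random string conditional on $A$. Combining the two gives the desired closeness.

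For the first step, I need totality in both directions. The map $(A,u)\mapsto x$ is witnessed by the algorithm ``enumerate $A$ and output its $u$-th element'' (outputting a default value when $u\ge\#A$), which is a total program of $O(1)$ length. The map $x\mapsto(A,u)$ is obtained by composing the total program of length $\le\varepsilon$ witnessing $\KT(A\cnd x)\le\varepsilon$ with the trivial total computation of $u$ from $x$ and $A$. Hence $x\stackrel{\varepsilon+O(1)}{\longleftrightarrow}(A,u)$. Applying Proposition~\ref{prop:stoch-cons} (strengthened by Remark~\ref{rem:cons-f}) in each direction to these short total maps immediately yields that $Q_x$ and $Q_{(A,u)}$ lie in each other's $O(\varepsilon)$-neighborhood.

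For the second step I need to verify the hypothesis of Theorem~\ref{prop:add-noise}, i.e.\ that $u$ is nearly random conditional on $A$. Since $u$ is a binary string of length $l(u)=\lceil\log\#A\rceil$, and since $u$ is $O(1)$-equivalent to $x$ given $A$ (via total maps in both directions), we get
$$
\KS(u\cnd A)=\KS(x\cnd A)+O(1)\ge \log\#A-\varepsilon-O(1)=l(u)-\varepsilon-O(1).
$$
Thus $u$ is conditionally $(\varepsilon+O(1))$-random given $A$, so Theorem~\ref{prop:add-noise} gives that $Q_A$ and $Q_{(A,u)}$ lie in each other's $O(\varepsilon+\log l(x))$-neighborhood. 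Chaining the two approximations yields the theorem.

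The only delicate point I anticipate is making sure every program invoked is genuinely total, since the strong equivalence notion is sensitive to non-totality (cf.\ Remark~\ref{rem:non-total}); but the two maps above admit obvious total implementations (trivial index lookup one way, composition with the assumed total witness the other way), so this is routine rather than a real obstacle. The actual content of the argument is carried by the two earlier results invoked: stochasticity conservation under short total maps, and stochasticity conservation under adjoining a conditionally random string.
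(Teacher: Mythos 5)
Your argument is essentially the paper's own proof: it factors through the pair $(A,u)$, establishes $x\stackrel{\varepsilon+O(1)}{\leftrightarrow}(A,u)$ by total programs, handles $Q_{(A,u)}\approx Q_A$ via Theorem~\ref{prop:add-noise} using $\KS(u\cnd A)\approx\KS(x\cnd A)\ge\log\#A-\varepsilon$, and handles $Q_{(A,u)}\approx Q_x$ via conservation of stochasticity under short total maps (the paper cites Proposition~\ref{prop:equivalence} here, you cite Proposition~\ref{prop:stoch-cons} with Remark~\ref{rem:cons-f}, which is an equally valid way to get the same step). So the proposal is correct and follows the same route as the paper.
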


Assume that $A$ is an $\varepsilon$-strong model for $x$ with negligible randomness deficiency and~$\varepsilon$; for simplicity we ignore these negligible quantities in the sequel. Assume that  $A$ is normal in the sense described above. Then the string $x$ is normal as well. Indeed, for every pair $(i,j)\in P_x$ with $i\le \KS(A)$ the pair $(i,j-\log\# A)$ is in $P_{A}$ (Theorem~\ref{rem:add-noise}; note that $x$ is equivalent to $(A,u)$ and $u$ is random with condition $A$) and hence there is a strong  $(i*(j-\log\# A))$-description $\mathcal{B}$ for $A$. Consider the ``lifting'' of $\mathcal{B}$, that is, the union of all sets from $\mathcal{B}$ that have approximately the same size as $A$. It is a strong $(i*j)$-description for $x$.

It remains to consider pairs $(i,j)\in P_x$ where $i\ge \KS(A)$. Then $i+j\ge \KS(A)+\log\# A=\KS(x)$. Hence the subset of $A$ consisting of all strings $x'$ whose ordinal number in $A$ has the same $i-\KS(A)$ leading bits as the ordinal number of $x$, is a strong $(i*j)$-description for $x$.

It turns out that for minimal models the converse is true as well. A model $A$ for $x$ is called \emph{$(\delta,\varkappa)$-minimal} if there is no model $B$ for $x$ with $\KS(B)\le \KS(A)-\delta$ and $\delta(x,B)\le \delta(x,A)+\varkappa$. The smaller $\delta$ and the larger $\varkappa$ are the stronger the property of $(\delta,\varkappa)$-minimality is.

Recall that an $\varepsilon$-sufficient statistic is a model whose optimality deficiency is smaller than~$\varepsilon$.

\begin{thm}[\cite{milovanov-csr}]
\label{thm:hereditary}
For some value $\varkappa=O(\log n)$ the following holds. Assume that $A$ is an $\eps$-strong 
and $\varepsilon$-sufficient 
statistic for an $(\varepsilon,\varepsilon)$-normal string $x$ of length $n$. Assume also that $A$ is a $(\delta,\varkappa)$-minimal model for $x$. Then $A$ is $(O(\delta+(\varepsilon+\log n)\sqrt n),O(\delta+(\varepsilon+\log n)\sqrt n))$-normal.
\end{thm}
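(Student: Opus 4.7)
The plan is to transfer the $(\varepsilon,\varepsilon)$-normality of $x$ to $A$ through the equivalence $x\stackrel{O(\varepsilon+\log n)}{\longleftrightarrow}(A,u)$, where $u$ is the index of $x$ in $A$. Since $\delta(x,A)\le\varepsilon$, Proposition~\ref{prop:randomness-optimality} gives $d(x\cnd A)\le\varepsilon$, so $u$ is random given $A$ up to precision $\varepsilon$, and by Proposition~\ref{part} the set $A$ is a block of a simple partition $\mathcal{P}$ of binary strings with $\KS(\mathcal{P})\le\varepsilon+O(1)$. Theorem~\ref{prop:add-noise} and Remark~\ref{rem:add-noise} then show that $P_x$ is obtained from $P_A$ by lifting the $j$-coordinate by $\log\#A$ (together with a sloping segment below the diagonal), up to precision $O(\varepsilon+\log n)$.

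Fix a boundary point $(i,j)\in P_A$ with $i\le\KS(A)$. Lifting it to $(i,j+\log\#A)\in P_x$ and using the normality of $x$ produces an $\varepsilon$-strong description $C$ of $x$ with $\KS(C)\le i+O(\varepsilon+\log n)$ and $\log\#C\le j+\log\#A+O(\varepsilon+\log n)$; by Proposition~\ref{part}, $C$ is a block of a simple partition $\mathcal{Q}$ of complexity $O(\varepsilon+\log n)$. To manufacture a strong description of $A$, I would define a partition $\mathcal{R}$ of the meta-space of finite sets of strings by grouping each $\mathcal{P}$-block $A'$ according to the $\mathcal{Q}$-block $Q^{\ast}(A')$ containing the largest number of its elements (with dyadic bucketing of the intersection sizes and lexicographic tie-breaking), and take $D$ to be the $\mathcal{R}$-block containing $A$. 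Since $\mathcal{R}$ depends only on $\mathcal{P}$ and $\mathcal{Q}$, we get $\KT(D\cnd A)\le O(\varepsilon+\log n)$, and whenever $Q^{\ast}(A)=C$ we have $\KS(D)\le\KS(C)+O(\varepsilon+\log n)\le i+O(\varepsilon+\log n)$ and $\#D\lesssim\#C/|A\cap C|$.

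The main obstacle is controlling $|A\cap C|$: if this intersection is much smaller than $\#A$, the cardinality of $D$ explodes. This is where $(\delta,\varkappa)$-minimality enters. The intersection $A\cap C$ is itself a description of $x$ of complexity $\le\KS(A)+i+O(\varepsilon+\log n)$ and log-size $\log|A\cap C|$, so its optimality deficiency is at most $i-(\log\#A-\log|A\cap C|)+O(\varepsilon+\log n)$. When that deficiency is below $\varkappa$, minimality forces $\KS(A\cap C)\ge\KS(A)-\delta$, which yields a lower bound on $|A\cap C|$ and, simultaneously, pushes the entire near-sufficient sector of $P_A$ into an $O(\delta+\varepsilon+\log n)$-neighborhood of $(\KS(A),0)$, a region that is covered by the trivial strong description $\{A\}$. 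The top-left part of $P_A$ is covered by cylinder descriptions of the binary encoding of $A$ (strong because the cut-off position is determined by $A$ itself), and the remaining intermediate portion is handled by the $\mathcal{R}$-construction above, where an alignment failure would again contradict minimality.

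Finally, to cover the whole boundary of $P_A$ uniformly in a single construction, I would run the single-point argument in parallel for $N=\Theta(\sqrt{n/\log n})$ equally spaced points on the boundary, in the style of Theorem~\ref{thm:family-curve}: the admissible strong descriptions of $A$ produced above (of the form $\mathcal{R}(A)$, together with cylinders and $\{A\}$) form an enumerable family of sets of $\mathcal{P}$-blocks, and the usual balancing between the spacing of the points and the per-description complexity overhead yields the precision $O((\varepsilon+\log n)\sqrt{n})$. Adding the $O(\delta)$ contribution from minimality gives the announced bound. The hardest step is maintaining the alignment condition at every chosen boundary point simultaneously while keeping the complexity of each produced description bounded, because the families of partitions $\mathcal{Q}$ associated with different points need not be comparable, and it is precisely here that both minimality of $A$ and the $\sqrt{n}$-type balancing argument of Theorem~\ref{thm:family-curve} are repeatedly invoked.
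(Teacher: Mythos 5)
There is a genuine gap, and it sits exactly at the point you flag as ``the main obstacle'': nothing in your argument forces the strong description $C$ of $x$ (obtained from the normality of $x$ at the lifted point $(i,j+\log\#A)$) to be correlated with $A$ at all. Since $C$ is not known to be simple given $A$, the only bound on the complexity of $A\cap C$ is $\KS(A)+i+O(\cdot)$, and your own deficiency estimate $\delta(x,A\cap C)\le i-(\log\#A-\log|A\cap C|)+O(\varepsilon+\log n)$ reflects this extra $+i$. The way you then invoke minimality is inverted and inconclusive: that deficiency is below $\varkappa$ precisely when $|A\cap C|$ is \emph{small}, and the conclusion $\KS(A\cap C)\ge\KS(A)-\delta$ is perfectly consistent with $A\cap C=\{x\}$ (its complexity may be as large as $\KS(A)+i$), so no contradiction arises and no lower bound on $|A\cap C|$ follows. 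Minimality of $A$ only bites against models of \emph{smaller} complexity and comparable deficiency; to use it you must first exhibit a competing model that is simple given $A$, and an arbitrary strong model of $x$ coming from normality is not such an object.

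The paper closes exactly this hole by anchoring everything to the numbers $\Omega_m$. Minimality is used once, through standard descriptions (Proposition~\ref{prop:better-std}, Proposition~\ref{prop:std-omega}), to show $\KS(\Omega_{\KS(A)}\cnd A)=O(\delta+\log n)$ (Lemma~\ref{prop:min_hereditary}); normality is used not just to produce some strong model of the right parameters but, via the alternating construction of Lemma~\ref{lch}, to produce a strong model $M$ with $\KS(M\cnd\Omega_{\KS(M)})=O(\sqrt n)$ and $\KS(M)\le\KS(A)$. Chaining these (and Proposition~\ref{prop:omega-equivalence}) makes $M$ simple given $A$, so $A\cap M$ has complexity close to $\KS(A)$, and only then does \emph{sufficiency} of $A$ force $\log\#(A\cap M)\approx\log\#A$; totality is then recovered through the partition device of Proposition~\ref{part}, much as in your $\mathcal R$-construction. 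Note also that the $\sqrt n$ loss in the theorem comes from the $O(\sqrt n)$ iterations inside Lemma~\ref{lch}, not from a Theorem~\ref{thm:family-curve}-style balancing over $\Theta(\sqrt{n/\log n})$ boundary points; running your per-point argument in parallel cannot repair it, because the per-point alignment claim is what fails. To fix your proof you would need to replace ``any strong description $C$ given by normality'' with a strong description that is provably simple given $A$, which is precisely the $\Omega$-number mechanism you are missing.
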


\begin{remark}
In the original theorem from~\cite{milovanov-csr} it is claimed only that $A$ is $(O((\delta+\varepsilon+\log n)\sqrt n),O((\delta+\varepsilon+\log n)\sqrt n))$-normal.
However, the arguments from~\cite{milovanov-csr} prove the theorem as stated here.
\end{remark}

This theorem will be proved in Section~\ref{s:hereditary}.
The next theorem states that the total conditional complexity of any strong, sufficient and minimal statistic for $x$ conditioned by any other sufficient statistic for $x$ is negligible.

\begin{thm}[\cite{ver}]\label{thm:step-wise}
For some value $\varkappa=O(\log n)$ the following holds. Assume that $A,B$ are $\varepsilon$-sufficient statistics for a string $x$ of length $n$. Assume also that $A$ is  a $(\delta,\varepsilon+\varkappa)$-minimal statistic for $x$.
Then $\K(A\cnd B) =O(\delta+\log n)$. Moreover, if, additionally, $A$ is an $\varepsilon$-strong model for $x$,
then $\KT(A\cnd B) =O(\varepsilon+\delta+\log n)$. 
\end{thm}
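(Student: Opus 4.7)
The plan for both parts is the same: enumerate, given $B$ and $O(\log n)$ bits of advice encoding $i_A:=\KS(A)$, $j_A:=\log\#A$, $\varepsilon$ and $\varkappa$, a small collection $\mathcal C$ containing $A$, and index $A$ within it. For Part 1, take
\[
\mathcal C:=\{A'\mid \KS(A')\le i_A,\ \log\#A'\le j_A,\ \exists y\in A'\cap B \text{ with } \delta(y,A')\le\varepsilon+\varkappa+O(\log n)\}.
\]
Using time-bounded approximations of complexity, $\mathcal C$ is enumerable from $B$ and the advice, and $A\in\mathcal C$. Thus $\KS(A\cnd B)\le\log|\mathcal C|+O(\log n)$, and the first claim reduces to $|\mathcal C|\le 2^{O(\delta+\log n)}$.

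The key ingredient is Proposition~\ref{prop:improving-descriptions} combined with minimality: if $x$ had more than $2^{\delta+c\log n}$ distinct $(i_A*j_A)$-descriptions with optimality deficiency $\le\delta(x,A)+\varepsilon+\varkappa$, then Proposition~\ref{prop:improving-descriptions} would yield an $((i_A-\delta-c\log n)*j_A)$-description of $x$ with optimality deficiency $\le\delta(x,A)-c\log n$, contradicting the $(\delta,\varepsilon+\varkappa)$-minimality of $A$ when $\varkappa=O(\log n)$ is chosen large enough. The main obstacle is lifting this per-$x$ bound to all of $\mathcal C$ without picking up a spurious $\#B$ factor. I plan to overcome it by a double counting argument: the sufficiency of $B$ forces most of its elements to be ``typical'' (with $\KS(y\cnd B)$ near $\log\#B$), hence to share the structural features of $x$; each $A'\in\mathcal C$ covers many such typical elements of $B$, while dually, for each typical $y\in B$ the same minimality-style bound caps the number of admissible $A'$ passing through $y$ at $2^{O(\delta+\log n)}$.

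For Part 2, the $\varepsilon$-strong hypothesis and Proposition~\ref{part} give a partition $\mathcal A$ of complexity at most $\varepsilon+O(\log n)$ with $A\in\mathcal A$. I will exhibit a total program that, given any finite set $B'$, reconstructs $\mathcal A$ from its short seed, enumerates the blocks of $\mathcal A$ intersecting $B'$, and outputs the one selected by an $O(\delta+\log n)$-bit advice index. Totality in $B'$ is automatic because $\mathcal A$ is a global partition. The blocks of $\mathcal A$ intersecting $B$, once bucketed by log-size, belong to $\mathcal C$, so Part 1 bounds their number by $2^{O(\delta+\log n)}$ and the index fits in $O(\delta+\log n)$ bits. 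The overall total program has length $\varepsilon+O(\delta+\log n)$, which yields $\KT(A\cnd B)=O(\varepsilon+\delta+\log n)$.
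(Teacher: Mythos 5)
Your reduction of the first claim to the bound $\#\mathcal C\le 2^{O(\delta+\log n)}$ is where the argument breaks. The only counting tool you have (Proposition~\ref{prop:improving-descriptions} plus minimality) caps the number of admissible $(i_A*j_A)$-descriptions \emph{containing the particular string $x$}: the $(\delta,\varepsilon+\varkappa)$-minimality hypothesis is a statement about $x$ alone, so there is no analogous cap for other elements $y\in B$, however ``typical'' they are in complexity --- a high-complexity $y\in B$ may well have simple good models of complexity far below $\KS(A)-\delta$, and then the improvement argument produces no contradiction and no bound on the number of admissible $A'$ through $y$. Moreover, even if a per-witness cap held, your double counting still needs every $A'\in\mathcal C$ to contain \emph{many} witnesses in $B$; membership in $\mathcal C$ requires only one witness, and nothing in the deficiency condition prevents $A'\cap B$ from being a single string, so the $\#B$ factor you flag is not actually removed. (A secondary issue: the condition $\delta(y,A')\le\varepsilon+\varkappa+O(\log n)$ is a \emph{lower} bound on $\KS(y)$ and is not enumerable, so ``$\mathcal C$ is enumerable from $B$ plus advice'' needs a genuine argument; an enumerated superset may be much larger.) The paper proves the first claim by a different route that avoids counting inside $B$ altogether: it replaces $A$ and $B$ by standard descriptions $A',B'$ (Proposition~\ref{prop:better-std}), which are $O(\log n)$-equivalent to the prefixes $\Omega_i,\Omega_j$ of the halting counts (Proposition~\ref{prop:std-omega}), uses minimality and sufficiency only to show $i\le j+\delta+O(\log n)$, and then $\K(\Omega_i\cnd\Omega_j)\le\delta+O(\log n)$ by Proposition~\ref{prop:omega-equivalence}, chaining $\K(A\cnd B)\le\K(A\cnd A')+\K(A'\cnd\Omega_i)+\K(\Omega_i\cnd\Omega_j)+\K(\Omega_j\cnd B')+\K(B'\cnd B)$.

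Your Part 2 inherits this gap and adds another: the assertion that the blocks of the partition $\mathcal A$ intersecting $B$ (bucketed by log-size) ``belong to $\mathcal C$'' is unjustified --- a block meeting $B$ need not contain any low-deficiency element of $B$ nor have parameters $\le(i_A,j_A)$, and there can be as many as $\#B$ blocks meeting $B$ (each in one point), so the index you want to supply does not fit in $O(\delta+\log n)$ bits. The paper's proof of the second claim shows the missing idea: with $p$ a total program witnessing $\KT(A\cnd x)\le\varepsilon$, one selects only those values $A'$ for which the fiber $\{x'\in B\mid p(x')=[A']\}$ is \emph{large}, of size at least roughly $\#B\cdot 2^{-\K(A\cnd B)}$; since these fibers are disjoint subsets of $B$, their number is at most about $2^{\K(A\cnd B)}$, and the fact that $A$'s own fiber is that large is exactly where the $\varepsilon$-sufficiency of $B$ (via $\K(x\cnd B)\approx\log\#B$) and the first claim are used. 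Your observation that strength gives a simple partition (Proposition~\ref{part}) and that the selection can be made total is fine, but without the largeness-of-fiber criterion the index cannot be bounded and the intended conclusion $\KT(A\cnd B)=O(\varepsilon+\delta+\log n)$ does not follow.
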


This theorem can be interpreted as follows: assume that we have removed some noise from a given data string $x$ by finding its description $B$ with negligible optimality deficiency. Let $A$ be any ``ultimately denoised'' model for $x$, i.e., a minimal model for $x$ with negligible optimality deficiency. Then $\KS(A\cnd B)$ is negligible (the first part of Theorem~\ref{thm:step-wise}). 
Hence to obtain the ``ultimately denoised'' model for $x$ we do not need $x$: any such model can be obtained from $B$ by a short program. The second part of 
Theorem~\ref{thm:step-wise} shows that any such \emph{strong} model $A$ can be obtained from $B$ by a short \emph{total} program.

\subsection{Strange strings}\label{s:strange}
In this section we prove the existence of strange strings (Theorem \ref{t1}). Then we prove that there are many such strings (Theorem \ref{t3}).

Let $A\mapsto [A]$ denote a computable bijection from the family of finite sets to the set of binary strings. This bijection will be used 
to encode finite sets of strings.  
\begin{lemma}\label{l1}
Assume that $A$ is an $\varepsilon$-strong statistic for a string $x$
of length $n$. Let $y=[A]$ be the code of $A$.
Then $y$ has an $(\varepsilon+O(\log n)),n$-description.
\end{lemma}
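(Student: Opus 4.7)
The plan is quite short: the totality of the program witnessing that $A$ is a strong model is exactly what is needed to enumerate a small finite set containing $[A]$ uniformly from the string $x$'s length.

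First, unfold the definition: since $A$ is an $\varepsilon$-strong statistic for $x$, there is a total program $p$ with $l(p)\le \varepsilon$ such that $D(p,x')$ is defined for every string $x'$ and $D(p,x)=A$. Here $x$ has length $n$. I would then define
\[
S \;=\; \bigl\{\,[D(p,x')]\;\bigm|\; x'\in\{0,1\}^n\,\bigr\}.
\]
Because $p$ is total, every $D(p,x')$ is a bona fide finite set, so its code $[D(p,x')]$ is well-defined, and $S$ is a genuine finite set of strings (no semicomputability issues).

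Next I would verify the three conditions of being an $(\varepsilon+O(\log n),n)$-description of $y=[A]$. (i)~Membership: $y=[A]=[D(p,x)]\in S$ since $x\in\{0,1\}^n$. (ii)~Size: $\#S\le 2^n$, as $S$ is the image of $\{0,1\}^n$ under a (total) function; hence $\log \#S\le n$. (iii)~Complexity: $S$ is computed from the pair $(p,n)$ by running $D(p,\cdot)$ on each $n$-bit string and collecting the codes of the resulting sets. Using a prefix-free encoding of $n$ in $O(\log n)$ bits followed by the bits of $p$, we get $\KS(S)\le l(p)+O(\log n)\le \varepsilon+O(\log n)$.

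The only subtle point — which is really the whole content of the lemma — is step (i)/the construction of $S$ as a \emph{finite object}, and this relies critically on the totality clause in the definition of $\KT$: if $p$ were only required to be defined on the specific argument $x$, the best one could do is enumerate $S$ rather than present it as a finite set, which is not enough (cf.\ the remark after Proposition~\ref{prop:models-to-sets-1} about the difference between finite sets and enumerable sets used as descriptions). Given this, no obstacle remains, and the bound $(\varepsilon+O(\log n),n)$ follows directly.
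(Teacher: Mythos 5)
Your proof is correct and follows essentially the same route as the paper: take the total program $p$ witnessing $\KT(A\cnd x)\le\varepsilon$, form the image set $\{p(x')\mid x'\in\{0,1\}^n\}$, and bound its cardinality by $2^n$ and its complexity by $\varepsilon+O(\log n)$. Your added remark about why totality of $p$ is essential (so the image is a genuine finite set rather than merely enumerable) is a correct and worthwhile clarification of the same argument.
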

\begin{proof}
Let $p$ be a string of length at most $\varepsilon$ 
such that $p(x)=y$ and $p(x')$ is defined for all strings $x'$.
Consider the set $\{p(x') \mid x'\in    \{0,1\}^n \}$.
Its cardinality is at most $2^n$ and complexity at most $\varepsilon + O(\log n)$.
\end{proof} 
\begin{proof}[Proof of Theorem \ref{t1}]
To prove the theorem
it suffices to find a set $A\subset \bo^n$ 
with\\
(a) $\K(A)\le \eps+O(\log n)$, $\log \# A\le k-\eps$\\
which is not covered by the union of sets from the union of the following
three families:\\
(b) the family $\calb$ consisting of all sets  
$B\subset\bo^*$ with   with $\K(B)\le \eps$,  $\log \#B\le n-\eps-4$,\\
(c) the family $\calc$ consisting of all sets $M$ with 
$\K(M)\le k$,  $\log \# M\le n-k-4$
whose code $[M]$ has a $(\eps+O(\log n)),n$-description, and\\
(d) the family $\cald$ consisting of all singletons sets $\{x\}$ where $\K(x)< k$.\\

Indeed, assume that we have such set $A$.
As $x$ we can take any non-covered string in $A$.
Notice that item (a) implies that 
the complexity of $x$ is at most $k+O(\log n)$, and item (d)
implies that it is at least $k$. 
Thus the membership of the pair $(k+O(\log n), O(\log n))$ in $P_x(\eps)$
is witnessed by the singleton $\{x\}$, provided $\eps$ is greater than
the constant from the inequality $\CT(\{x\}\cnd x)=O(1)$. 
The membership of the pair $(O(\log n), n)$ in $P_x(\eps)$ and 
$P_x$ will be witnessed by the set 
of all strings of length $n$, provided $\eps$ is greater than
then constant from the inequality $\CT(\{0,1\}^n\cnd x)=O(1)$. 
The membership of the pair $(\eps+O(\log n), k-\eps)$ in $P_x$ will be witnessed by the set $A$. 

The upper  bounds for $P_x(\eps)$ and $P_x$ follow from (b), (c),  and Lemma~\ref{l1}.
Indeed, item (b) implies that the pair $(\eps,n-\eps-O(1))$ is not in $P_x$ (hence by
Proposition~\ref{prop:description-shift-1} for any $i\le \eps$ the pair   $(i-O(\log n),n-i)$ is not in $P_x$ either).
Item (c) implies that the pair $(k-O(\log n), n-k-O(1))$ is not in $P_x(\eps)$.

Let us show that there is a set $A$ of $n$-bit strings satisfying (a), (b), (c) and (d).
A direct counting reveals
that the family $\calb\cup\calc\cup\cald$ covers at most
$$
2^{\eps+1}2^{n-\eps-4}+2^{k+1}2^{n-k-4}+2^{k}\le2^{n-3}+2^{n-3}+2^{n-4}<2^{n-1}
$$
strings
and hence at least half of all $n$-bit strings are non-covered.
However we cannot let $A$ be any $2^{k-\eps}$-element  
non-covered set of $n$-bit strings, as in that case 
$\K(A)$ could be large.

We first show how to find $A$, as in (a), that is not covered
by $\calb\cup\cald$ (but may be covered by $\calc$). This may be done using the same technique as in the proof of Theorem~\ref{stat-any-curve}. 
To construct $A$ notice that both the families 
$\calb$  and $\cald$ can be enumerated given $k,\eps,n$ 
by running the universal machine $U$ in parallel on all inputs.
We start such an enumeration and construct $A$ ``in several attempts''.
During the construction 
we maintain the list of all strings covered by sets
from $\calb\cup\cald$ 
enumerated so far.
Such strings are called \emph{marked}.
Initially, no strings are marked and
$A$ contains the lexicographic first $2^{k-\eps}$ strings of length $n$.
Each time a new set $B\in\calb$ appears, all its elements 
receive a b-mark and we replace $A$ by any set 
consisting of $2^{k-\eps}$ yet non-marked $n$-bit strings.
Each time a new set $\{x\}$ in 
$D$ appears, the string $x$ receives a
d-mark, but we do not  immediately replace $A$.
However we do that when all strings in $A$ receive a d-mark,
replacing it by any set 
consisting of $2^{k-\eps}$ yet non-marked $n$-bit strings.
The above counting shows that such replacements are always possible.

The last version of $A$ 
(i.e. the version  obtained after the last set  
in $\calb\cup\cald$ have appeared)
is the sought set. 
Indeed, by construction $\# A = 2^{k-\eps}$ 
and $A$ is not covered by sets in $\calb\cup\cald$. 
It remains to verify that $\K(A)\le \eps+O(\log n)$. This follows 
from the fact that $A$ 
is replaced at most $O(2^{\eps})$ times, and hence 
can be identified by the number of its replacements and $\eps,k,n$
(we run the above construction of $A$
and wait until the given number of replacements are made).

Why is $A$ 
replaced at most $O(2^{\eps})$ times? The number of replacements
caused by appearance of a new set $B\in\calb$ is 
at most $2^{\eps+1}$. The 
number of strings with a d-mark is at most $2^{k}$
and hence $A$ can be replaced at most $2^{k}/2^{k-\eps}=2^{\eps}$ times 
due to receiving d-marks.

Now we have to take into account 
strings covered by sets from the family $\calc$. 
To this end modify the construction as follows: put a c-mark
on all strings from each set $C$ enumerated into 
$\calc$ and replace $A$ each time when all its elements
have received c or d marks (or when a new set is enumerated into $\calb$). 

However this modification alone is not enough.
Indeed, up to $\Omega(2^{n})$ strings may receive a c-mark,
and hence $A$ might be replaced up to $\Omega(2^{n-(k-\eps)})$ times due to c-marks.
The crucial modification is the following: each time $A$ is replaced, its new version 
is not just any set of $2^{k-\eps}$ non-marked $n$-bit strings but a carefully
chosen such set. 

To explain how to choose $A$ we 
first represent $\calc$ as an intersection of two families,
$\calc'$ and $\calc''$.
The first family $\calc'$ consists of all sets $M$ with
$\K(M)\le k$ and the second family $\calc''$
of all sets $C$ with $\log \# C \le n-k-4$
whose code $[C]$ has a $(\eps+O(\log n),n)$-description.
The first family is small (less than $2^{k+1}$ sets)
and the second family has only small sets (at most $2^{n-k-4}$-element sets)
and is not very large ($\# \calc'' =2^{O(n)}$). 
Both families can be enumerated given $\eps,k,n$
and, moreover, the sets from  $\calc''$ appear in the enumeration in 
at most $2^{\eps+O(\log n)}$ portions. Due to this property of 
$\calc''$ we can update $A$ each time a new portion of sets in 
$\calc''$ appears---this will increase the number of replacements of $A$
by $2^{\eps+O(\log n)}$, which is OK. 

The crucial change in construction
is the following: each time $A$ is replaced, its new version 
is \emph{a set 
of $2^{k-\eps}$ non-marked $n$-bit strings  that has at most $O(n)$ common strings 
with every set from the part of $\calc''$  
enumerated so far}. (We will show later that such a set always exists.)

Why does this solve the problem? 
There are two types 
of replacements of $A$:  
those caused by enumerating a new set in $\calb$ or a new bunch of sets in 
$\calc''$
and those caused by that all elements in $A$ have received c- or d-marks.
The number of replacement of 
the first type is at most $2^{\eps+O(\log n)}$. 
Replacements of the second type 
are caused by enumerating new singleton sets   
in $\cald$ and by enumerating new sets $C$ in 
$\calc'$ which were enumerated into $\calc''$
on earlier steps. Due to the careful choice of $A$,
when each such set $C$ appears in the enumeration of $\calc'$
it can  mark only 
$O(n)$ strings in the current version of $A$.
The total number of sets in $\calc'$ is at most $2^{k+1}$.
Therefore the total number of events ``a string in the current version of 
$A$ receives a c-mark'' is at most $O(n2^{k})$. 
The total number of d-marks is at most $2^{k}$.
Hence the number of replacements of the second type is at most 
$$
(O(n2^{k})+2^{k})/2^{k-\eps}=O(n2^{\eps}).
$$

Thus it remains to show that we indeed can always 
choose $A$, as described above. This will follow
from a lemma that says that in a large universe
one can always choose a large set 
that has a small intersection with every set from
a given small family of small sets.

\begin{lemma}
Assume that a finite family $\calc$ of 
subsets of a finite universe $U$ is given
and each set in $\calc$ has at most $s$ elements. 
If 
$$
\# \calc \binom{N}{t+1}\left(\frac{s}{|U|-t}\right)^{t+1}<1
$$
then there is an    
$N$-element set $A\subset U$ 
that has at most $t$ common elements with each set in $\calc$. 
\end{lemma}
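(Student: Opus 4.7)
The plan is to prove existence by the probabilistic method: choose $A$ to be a uniformly random $N$-element subset of $U$, estimate from above the probability that $|A\cap C|\ge t+1$ for a fixed $C\in\calc$, then apply the union bound over $\calc$.

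First I would fix $C\in\calc$ and bound $\Pr[|A\cap C|\ge t+1]$ by the expected number of $(t{+}1)$-element subsets of $C$ contained in $A$ (Markov's inequality applied to this counting random variable, which is at least $1$ whenever the intersection has size $\ge t+1$). For a fixed $(t{+}1)$-subset $T\subset U$, the probability that $T\subseteq A$ equals $\binom{|U|-t-1}{N-t-1}/\binom{|U|}{N}=\binom{N}{t+1}/\binom{|U|}{t+1}$ by the standard identity. Summing over the $\binom{|C|}{t+1}\le\binom{s}{t+1}$ possible $T\subset C$ gives
$$
\Pr[|A\cap C|\ge t+1]\;\le\;\binom{s}{t+1}\cdot\frac{\binom{N}{t+1}}{\binom{|U|}{t+1}}.
$$

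Next I would replace the binomial coefficients by the simpler expression in the hypothesis. Using $\binom{s}{t+1}\le s^{t+1}/(t+1)!$ and the lower bound $\binom{|U|}{t+1}\ge (|U|-t)^{t+1}/(t+1)!$ (all $t+1$ factors in the numerator of $\binom{|U|}{t+1}$ are at least $|U|-t$), the quotient is bounded by $\bigl(s/(|U|-t)\bigr)^{t+1}$, so
$$
\Pr[|A\cap C|\ge t+1]\;\le\;\binom{N}{t+1}\left(\frac{s}{|U|-t}\right)^{t+1}.
$$

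Finally I would take the union bound over all $C\in\calc$: the probability that $A$ fails the conclusion (i.e., $|A\cap C|\ge t+1$ for some $C$) is at most $\#\calc\cdot\binom{N}{t+1}\bigl(s/(|U|-t)\bigr)^{t+1}$, which is strictly less than $1$ by hypothesis. Hence some outcome $A$ satisfies $|A\cap C|\le t$ for every $C\in\calc$, as required. There is no real obstacle here; the only mildly delicate point is the clean elementary estimate $\binom{|C|}{t+1}/\binom{|U|}{t+1}\le\bigl(s/(|U|-t)\bigr)^{t+1}$, which is what allows the bound in the hypothesis to be stated in its particularly convenient form.
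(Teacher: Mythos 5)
Your proof is correct and follows essentially the same route as the paper: a uniformly random $N$-element subset $A\subset U$, a first-moment bound on the number of $(t+1)$-element configurations of $A$ landing inside a fixed $C\in\calc$, and a union bound over $\calc$, yielding exactly the quantity in the hypothesis. The only cosmetic difference is that you sum over $(t+1)$-subsets of $C$ and then estimate the ratio of binomial coefficients, whereas the paper sums over $(t+1)$-sets of positions of the sequentially chosen elements and bounds the hypergeometric probability by $\left(\frac{s}{|U|-t}\right)^{t+1}$ directly; these are two parameterizations of the same calculation.
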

\begin{proof}
To prove the lemma we use probabilistic method.
The first element $a_1$ of $A$ is chosen at random 
among all elements in $U$  
with uniform distribution,
the second element $a_2$ is chosen with uniform distribution
among the remaining elements and so forth.

We have to show that the statement of the theorem 
holds with positive probability.
To this end note that for every fixed $C$ in $\calc$ 
and for every fixed set of indexes 
$\{i_1,\dots,i_{t+1}\}\subset\{1,2,\dots,N\}$
the probability that \emph{all}  $a_{i_1},\dots,a_{i_{t+1}}$ 
fall in $C$ is at most 
$\left(\frac{s}{|U|-t}\right)^{t+1}$.
The number of sets of indexes as above is $\binom{N}{t+1}$.
By union bound the probability that a random set $A$ 
does not satisfy the lemma is  
upper bounded by the left hand side of the displayed inequality.
\end{proof}

We apply the lemma for 
$U$ consisting of all non-marked $n$-bit strings,
for $N=2^{k-\eps}$ and for $\calc$ consisting of
all sets in $\calc''$ appeared so far. Thus $s=2^{n-k-4}$, $\# U \ge 2^{n-1}$, $\# \calc =2^{O(n)}$, and we need to show that
for some $t=O(n)$ it holds
$$
2^{O(n)}\binom{2^{k-\eps}}{t+1}
\left(\frac{2^{n-k-4}}{2^{n-1}-t}\right)^{t+1}<1,
$$
which easily
follows from the inequality $\binom{2^{k-\eps}}{t+1}\le 2^{(k-\eps)(t+1)}$.
\end{proof}
Theorem~\ref{t1} does not say anything about how rare are strange strings.
Such strings are rare, as 
for majority of strings $x$ of length $n$ the set $\bo^n$ is a strong MSS for $x$.
A more meaningful question is whether such strings might 
appear with high probability in a statistical experiment. More specifically,    
assume that we sample a string $x$ in a given set $A\subset\bo^n$, where all elements  
are equiprobable. Might it happen that with high probability 
(say with probability  $99\%$) 
$x$ is strange? 
An affirmative answer to this question is given in the following

\begin{thm}\label{t3}
Assume that natural 
$k,n,\eps,\delta$ satisfy the inequalities 
$$
O(1)\le\eps\le k\le n,\quad \delta \le k-\eps.
$$
Then there is set $A\subset\bo^n$ of cardinality $2^{k-\eps}$ and complexity at most $\eps+O(\delta+\log n)$
such that all but $2^{k-\eps-\delta}$ its elements $x$ have complexity $k+O(\delta+\log n)$ and 
the sets $P_x$ and $P_x(\eps)$ are $O(\delta+\log n)$-close to the sets shown on Fig.~\ref{f6}.
\end{thm}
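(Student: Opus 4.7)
To prove Theorem~\ref{t3}, I would perform a quantitative refinement of the construction behind Theorem~\ref{t1}. The idea is to run essentially the same bad-set enumeration and marking machinery, but to replace the current candidate $A$ only when the number of its \emph{marked} elements exceeds $2^{k-\varepsilon-\delta}$, rather than when all of $A$ is covered. This relaxation keeps the complexity of the final $A$ within $\varepsilon+O(\delta+\log n)$, while still guaranteeing that all but $2^{k-\varepsilon-\delta}$ of its elements enjoy the strangeness property. The three families of bad sets are obtained from those used in the proof of Theorem~\ref{t1} by shifting the parameters by $O(\delta+\log n)$:
\begin{itemize}
\item $\calb$: finite sets $B$ with $\K(B)\le\varepsilon$ and $\log\#B\le n-\varepsilon-c(\delta+\log n)$;
\item $\calc$: finite sets $M$ with $\K(M)\le k$ and $\log\#M\le n-k-c(\delta+\log n)$ whose code $[M]$ has an $(\varepsilon+O(\log n),n)$-description;
\item $\cald$: singletons $\{x\}$ with $\K(x)<k-\delta$.
\end{itemize}
A direct count shows that, for a large enough $c$, these three families jointly cover at most an $O(2^{-\delta})$-fraction of $\{0,1\}^n$, so the sparse-set lemma from the proof of Theorem~\ref{t1} guarantees that at every moment a $2^{k-\varepsilon}$-element set of non-marked $n$-bit strings exists with $O(n)$-small intersection with every $\calc''$-set enumerated so far.

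Next, $A$ is built dynamically exactly as in Theorem~\ref{t1}, subject to the relaxed replacement rule. The key count is the total number of versions of $A$. Replacements triggered directly by the appearance of a new $\calb$-set or a new portion in $\calc''$ contribute $2^{\varepsilon+O(\log n)}$ versions (as before). Replacements triggered by accumulated c- and d-marks contribute at most
$$
\frac{O(n\cdot 2^{k})+2^{k-\delta}}{2^{k-\varepsilon-\delta}}\;=\;O(n\cdot 2^{\varepsilon+\delta})
$$
versions: the numerator uses the $O(n)$-bound on c-marks per $\calc'$-set hitting the current $A$ combined with $\#\calc'\le 2^{k+1}$, plus the bound $|\cald|\le 2^{k-\delta}$ on d-marks; the denominator is the new replacement threshold. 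Thus $A$ has $2^{\varepsilon+O(\delta+\log n)}$ versions, and encoding the index of the final version gives $\K(A)\le \varepsilon+O(\delta+\log n)$.

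It remains to check that every non-marked $x\in A$ has the required profile. By construction $A$ is an $(\varepsilon+O(\delta+\log n),k-\varepsilon)$-description of $x$, so the corresponding point lies in $P_x$; also $\{x\}$ is an $O(1)$-strong $(k+O(\delta+\log n),O(1))$-description, and $\{0,1\}^n$ is an $O(1)$-strong $(O(\log n),n)$-description. Proposition~\ref{prop:description-shift} and Proposition~\ref{prop:description-shift-1} fill in the slope-$(-1)$ segments of the dashed and solid curves respectively, while upward closure of $P_x$ handles the vertical segment of the dashed line. For the matching upper bounds: non-membership in $\cald$ gives $\K(x)\ge k-\delta$, hence every $(i,j)\in P_x$ satisfies $i+j\ge k-O(\delta+\log n)$; non-membership in $\calb$ combined with Proposition~\ref{prop:description-shift} excludes $(i,j)\in P_x$ with $i\le\varepsilon$ and $i+j\le n-O(\delta+\log n)$; and non-membership in $\calc$, combined with Lemma~\ref{l1} applied to the shifted description produced by Proposition~\ref{prop:description-shift-1}, excludes every $\varepsilon$-strong $(i,j)\in P_x(\varepsilon)$ with $i\le k-O(\delta+\log n)$ and $i+j\le n-O(\delta+\log n)$. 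Together these inclusions place both $P_x$ and $P_x(\varepsilon)$ in the promised $O(\delta+\log n)$-neighborhood of Figure~\ref{f6}.

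The main obstacle is the tight parameter balance. The replacement threshold $2^{k-\varepsilon-\delta}$ must simultaneously be (i) large enough that the sparse-set lemma can always supply a fresh $A$ with at most that many marked elements; (ii) small enough that the total number of versions of $A$ fits into the $2^{\varepsilon+O(\delta+\log n)}$ budget; and (iii) compatible with the $c$-shifts inside $\calb$ and $\calc$ so that the final profile achieves $O(\delta+\log n)$-precision. Replacing the $\cald$ of Theorem~\ref{t1} by the subfamily $\{\{x\}:\K(x)<k-\delta\}$ is the crucial adjustment: it controls the contribution of d-marks to the replacement count and keeps the complexity bound on a non-marked $x$ within $O(\delta+\log n)$ of $k$, without requiring $k$ to be strictly separated from $n$.
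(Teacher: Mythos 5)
Your proposal is correct and takes essentially the same route as the paper: the paper proves Theorem~\ref{t3} by exactly the modification you describe, namely running the construction of Theorem~\ref{t1} but replacing $A$ only once $2^{k-\eps-\delta}$ of its elements have received c- or d-marks, so that the number of versions of $A$ grows by a factor $2^{\delta}$ and $\K(A)$ by $\delta$. Your extra tweaks to the bad families (shrinking the size bounds in $\calb,\calc$ by $c(\delta+\log n)$ and restricting $\cald$ to $\K(x)<k-\delta$) are harmless but unnecessary --- in particular the change to $\cald$ is not ``crucial'', since the c-marks already force about $2^{\eps+\delta}$ replacements and the budget $\eps+O(\delta+\log n)$ absorbs this; the paper keeps the families of Theorem~\ref{t1} verbatim.
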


Theorem~\ref{t3}
is proved similarly to Theorem~\ref{t1}. The only difference is that we change $A$ each time
when at least $2^{k-\eps-\delta}$ strings in $A$ receive 
c- or d-marks. 
As the result, the number of changes of $A$ will increase $2^\delta$ times and the complexity of $A$ will increase by $\delta$.

\subsection{Strong sufficient statistics}
In this section we prove Theorem \ref{thm:step-wise}.

\emph{The proof of the first claim $\K(A\cnd B)\le O(\delta+\log n)$}.
Recall the notion of a standard description and  Proposition \ref{prop:better-std}. By that 
proposition there exist standard descriptions $A',B' \ni x$ that are ``better'' 
than $A,B$, respectively, i.e. 
\begin{align*}
\delta(x,A') &\le \delta (x, A) + O(\log n),\quad \K(A' \cnd A) = O(\log n),\\
\delta(x,B') &\le \delta (x, B) + O(\log n),\quad \K(B' \cnd B) = O(\log n).
\end{align*} 
 Let $i,j$ denote the complexities of $A',B'$, respectively.
 Ignoring $O(\log n)$ terms we have
 $$
 \K(A\cnd B)\le \K(A\cnd A')+\K(A'\cnd \Omega_i)+\K(\Omega_i\cnd \Omega_j)+\K(\Omega_j\cnd B')+\K(B'\cnd B).
 $$
  We will show that in the right hand side of this inequality, each term is at most $\delta+O(\log n)$.

 For the last term this directly follows from the construction of $B'$.
 The second and forth terms can be estimated as follows.
By Proposition \ref{prop:std-omega}, $A',B'$ are $O(\log n)$-equivalent to $\Omega_i,\Omega_j$, respectively,
thus the second and forth terms are $O(\log n)$.

To upper bound the third term, we first show that  
that $i\le j+\delta+O(\log n)$. By construction of $B'$ we have $\delta(x,B')\le \delta(x,B) + O(\log n)$. Since $B$ is
$\eps$-sufficient, this inequality implies that  $\delta(x,B')\le \eps+O(\log n)$.
On the other hand, $\delta(x,A)\ge -O(\log n)$ and thus $\delta(x,B')\le \delta(x,A) + \eps+O(\log n)$.
Choose the constant $d$ in the statement of the theorem so that this inequality implies that 
$\delta(x,B')\le \delta(x,A) + \eps+\varkappa$. We assume that 
 $A$ is  a $(\delta,\eps+\varkappa)$-minimal statistic for $x$ and hence $\K(B')\ge \K(A)- \delta$,
 that is, $j\ge \K(A)-\delta$. On the other hand, by construction of $A'$ we have $i=\K(A')\le \K(A)+O(\log n)$
 and hence $i\le j+\delta+O(\log n)$. 

By Proposition
\ref{prop:omega-equivalence} 
we have 
$$
\K(\Omega_i\cnd \Omega_j)\le \K(\Omega_i\cnd \Omega_{j+\delta+O(\log n)})+\delta+O(\log n).
$$
On the other hand,  
the  inequality $i\le j+\delta+O(\log n)$ and  Proposition
\ref{prop:omega-equivalence} imply that 
$$
\K(\Omega_i\cnd \Omega_{j+\delta+O(\log n)})=O(\log n).
$$

 It remains to estimate the first term.
 Repeating the argument from the beginning of the last but one paragraph we can show that $\K(A')\ge \K(A)- \delta$,
On the other hand, by construction of $A'$ we have $\K(A'\cnd A)=O(\log n)$.
By Symmetry of information this implies that $\K(A\cnd A')=\delta+O(\log n)$. 
 Indeed, ignoring $O(\log n)$ terms, we have:
 $$
  \K(A\cnd A')=\K(A)+\K(A'\cnd A)-\K(A')=\K(A)-\K(A')\le\delta.
  $$

\medskip

\emph{Proof of the second claim $\KT(A\cnd B)\le O(\eps+\delta+\log n)$} (provided $A$ is a strong models for $x$).
It suffices to we show that  $\KT(A\cnd B)$ is close to $\K(A\cnd B)$  provided $A$ is a strong model for $x$.

Recall that $A\mapsto [A]$ denotes a computable bijection from the family of finite sets to the set of binary strings which is used 
to encode finite sets of strings. 

On the top level the argument is as follows. Let $p$ be a program witnessing $\KT(A\cnd x)\le\eps$.
We first show that $A$ has the following feature: there are many strings $x'\in  B$  
with  $p(x')=[A]$. More specifically, at least $2^{-\K(A\cnd B)}$ fraction of $x'$
from $B$ have this property.  
At most $2^{\K(A\cnd B)}$ sets  $A'$ can have this feature, as each such $A'$ can be identified
by the portion of $x'\in  B$  with  $p(x')=[A']$.
Given $B$ and $p$ we are able to find a list 
of all such $A'$ by means of a short 
total program.  Given $B$, the set $A$ can be identified 
by $p$ and its index in that list.  

Let us proceed to the detailed proof. In the proof, we will ignore terms of order $O(\eps+\log k)$.
First we show that there are many  
$x'\in B$ with $p(x')=[A]$ (otherwise
$B$ could not be a sufficient statistic for $x$). 
Let 
$$
D=\{x'\in B\mid p(x')=[A]\}.
$$
We have 
$$
\K(D \cnd B)\le \K(A \cnd B),
$$
as 
$$
\K(D \cnd B)\le \K(D\cnd A)+\K(A \cnd B)
$$
and $\K(D\cnd A)=\K(D\cnd [A])\le \K(p)+O(1)\le\eps+O(1)$. 

Obviously, $D$ includes $x$, thus given $B$ and $p$ the string $x$ can be identified 
by its index in $D$. Therefore 
$$
\K(x \cnd B)\le\K(D \cnd B)+\log \# D\le\K(A \cnd B)+ \log \# D.
$$
On the other hand, $\K(x \cnd B)=\log \# B$, as
$B$ is $\eps$-sufficient. Hence
$$
\log \# D\ge\log \# B-\K(A \cnd B).
$$ 

Recall that we ignored terms of order $O(\eps+\log k)$.
Actually, we have shown that 
$\log \# D\ge l$ for some
$$
l=\log \# B-\K(A \cnd B)-O(\eps+\log k).
$$
Consider now all $A'$ for which the set $D'$ defined in a similar way has the same lower bound for its cardinality.
That is, consider sets $A'$ with  
$$
\log \# \{x'\in B\mid p(x')=[A']\} \ge l.
$$
Each such $A'$ can be identified
by the portion of $x'\in  B$  with  $p(x')=[A']$. 
Thus there are at most $2^{\K(A\cnd B)+O(\eps+\log k)}$ 
different such $A'$s.
Given $B$ and $\K(A \cnd B),p,\eps$ we are able to find
the list of all such $A'$s. The program that maps $B$ 
to the list of such $A'$s is obviously total. 
Therefore there is a $\K(A \cnd B)+O(\eps+\log k)$-bit  total program 
that maps $B$ to $A$ and
$\KT(A \cnd B)=\K(A \cnd B)+O(\eps+\log k)$.

\subsection{Normal strings and standard descriptions}\label{s:normal}
Here we prove Theorem \ref{thm:separation}, i.e. we 
exhibit an example of a normal string $x$ such that every standard description is not a strong sufficient minimal statistic for $x$.
Our string $x$ will be obtained from an antistochastic string in the sense of Section \ref{sub:trade-off}.

\begin{definition}
A string $x$ of length $n$ and complexity
$k$ is called \emph{$\eps$-antistochastic}
if for all $(m,l)\in P_x$ either
$m>k-\eps$, or $m+l>n-\eps$.  (The profile of an antistochastic string is shown on Fig.~\ref{f22}.)
\end{definition}
\begin{figure}[h]
\begin{center}
\includegraphics[scale=1]{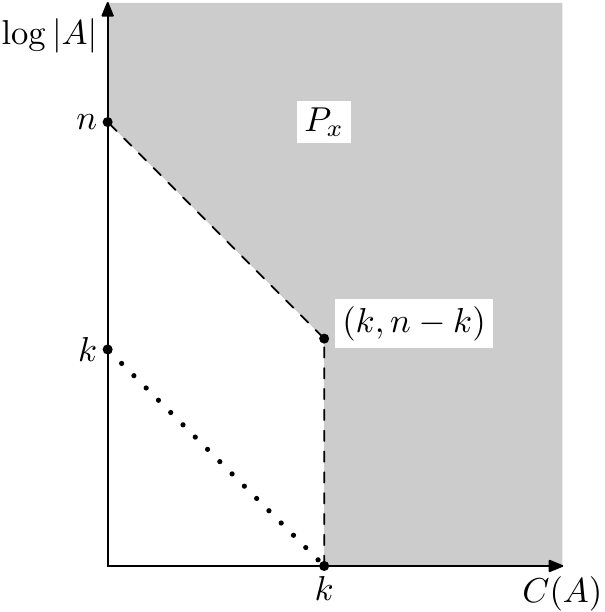}
\end{center}
\caption{The profile of an $\eps$-antistochastic string $x$ of length $n$ and complexity $k$ for a 
small $\eps$.}
\label{f22}
\end{figure}
By Theorem~\ref{stat-any-curve}  for every $k< n$ there exists an $O(\log n)$-antistochastic string of length $n$ and complexity $k + O(\log n)$.
It is easy to see that all antistochastic strings are normal.
Recall that a string $x$ is called $(\varepsilon,\delta)$-normal if $P_x$ is in $\delta$-neighborhood of $P_x(\varepsilon)$.

\begin{proposition} 
Let $x$ be an $\eps$-antistochastic string of length $n$ and complexity $k$. Then $x$ is $(O(\log n),O(\log n) + \eps)$-normal. 
\end{proposition}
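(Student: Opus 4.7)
The plan is to exhibit a single explicit family of strong models whose parameters trace out the boundary of $P_x$ up to the required precision. For each $i \in \{0, 1, \ldots, n\}$, let $u_i$ denote the $i$-bit prefix of $x$ and let
$$
A_i = \{y \in \{0,1\}^n : y \text{ has prefix } u_i\}.
$$
I would first check that $A_i$ is an $O(\log n)$-strong model of $x$: the total program of length $O(\log n)$ (encoding $i$ and $n$) that takes a string $y$ of length $n$, extracts its first $i$ bits, and outputs the resulting cylinder, maps $x$ to $A_i$. Clearly $\log \#A_i = n - i$, and $\KS(A_i) \le \KS(u_i) + O(\log n) \le \min(i,k) + O(\log n)$, since $u_i$ is a string of length $i$ and is also computable from $x$ (whose complexity is $k$). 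Hence every pair $(\min(i,k) + O(\log n),\, n - i)$ belongs to $P_x(O(\log n))$.

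Next I would note that $P_x(\varepsilon')$ is upward-closed in both coordinates (the same strong model $A$ witnesses membership at any larger pair), and take the upward closure of the points obtained above. Parametrising by $l = n - i$, this shows
$$
P_x(O(\log n)) \;\supseteq\; R := \bigl\{(m,l) : m + l \ge n + O(\log n)\bigr\} \,\cup\, \bigl\{(m,l) : m \ge k + O(\log n)\bigr\}.
$$
Geometrically, the prefix family covers the entire diagonal part $m+l \approx n$ (via short prefixes, $i \le k$) and the entire vertical strip at $m \approx k$ down to $l = 0$ (via long prefixes, $i \ge k$).

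Finally I would compare this with $P_x$ using antistochasticity. Every $(m, l) \in P_x$ satisfies $m > k - \eps$ or $m + l > n - \eps$. In either case, the shifted point $(m + \eps + O(\log n),\, l)$ lies in $R$, hence in $P_x(O(\log n))$. Thus $(m,l)$ is within $\ell_\infty$-distance $\eps + O(\log n)$ of $P_x(O(\log n))$, so $P_x$ lies in the $(O(\log n)+\eps)$-neighbourhood of $P_x(O(\log n))$, which is precisely $(O(\log n),\, O(\log n)+\eps)$-normality.

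There is no real obstacle here: the only thing to get right is picking the family of strong models that simultaneously covers both the slope-$(-1)$ portion and the steep-drop portion of the profile of an antistochastic string, and prefix cylinders do both at once because the complexity bound $\min(i,k)$ saturates at $k$ precisely where the profile's boundary turns.
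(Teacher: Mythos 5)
Your proof is correct and follows essentially the same route as the paper: exhibit prefix cylinders (which are trivially $O(\log n)$-strong) whose parameters trace the boundary of the antistochastic profile, the only cosmetic difference being that the paper witnesses the vertical part with the singleton $\{x\}$ while you use long prefixes, whose complexity saturates at $k$ for the same reason. Your explicit bookkeeping of the $\eps$-shift and the upward closure is a faithful (indeed slightly more detailed) version of the paper's argument.
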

\begin{proof}
To prove the claim, it suffices to construct for every point $(i,j)$ on the boundary of the set shown on Figure~\ref{f22}
an $O(\log n)$-strong $(i+O(\log n)*j)$-description $A$ for  $x$. 
If $i\ge k$  then
let $A=\{x\}$. 
Otherwise let 
$A$ be the set of all strings of length $n$ whose
first $i$ bits are the same as those of $x$. 
By construction $\K(A) \le i + O(\log n)$ and 
$\log \# A = n - i=j$.
\end{proof}


\begin{proof}[Proof of Theorem \ref{thm:separation}]
Let $y$ be an 
$O(\log k)$-antistochastic string 
of complexity $k+O(\log k)$ and length $2k$, which exists by Theorem~\ref{stat-any-curve}.
Let $z$ be a string of 
length $2k$ such that $\K(z \cnd y) \ge 2k$. Finally, let $x=yz$
be the concatenation of $y$ and $z$. Obviously, $\K(x) = 3k$ with accuracy $O(\log k)$.

By Theorem~\ref{rem:add-noise}
the set $P_x$ is $O(\log k)$-close to the gray set on Fig. \ref{f4}. From normality of $y$ 
it is not difficult to derive that  $x$ 
is $O(\log k),O(\log k)$-normal.

Let $A=\{yz'\mid l(z')=2k\}$. Then $A$ is an $O(\log k)$-strong statistic for $x$. 
Now we need to show that every standard description whose parameters (complexity, log-cardinality) are close to those of $A$
is not strong.
Let $q$ be a program enumerating all strings of complexity at most $m$. Let $B$ be the standard model for $x$ obtaining from this enumeration. 
We have to show that  for some positive $c$ for almost all $k$ we have
\begin{equation}
\label{req}
\min\{\KT(B\cnd x),\KS(q),|\KS(B)-k|,|\log\#B-2k|\}\ge c k.
\end{equation}

Fix some small positive $c$. For the sake of contradiction 
assume that for infinitely many $k$ there are $m,q,B$ such that 
the inequality (\ref{req}) does not hold. 

On the top level, the argument is as follows. We have assumed 
$\K(B)$ and $\log\#B$ are close to $k$ and $2k$, respectively. The shape of $P_x$
guarantees that in this case $B$ is a sufficient minimal statistic for $x$. 
We have assumed also that $\KT(B\cnd x)$ is small, that is $B$ is a strong statistic for
$x$. Hence we can apply Theorem \ref{thm:step-wise} to $B$ and $A$ and
conclude that $\CT(B\cnd A)$ is small. As $A$ is strongly equivalent to $y$,
the total conditional complexity $\CT(B\cnd y)$ is small as well.
On the other hand we will show that the total 
complexity of any standard statistic $B$ obtained from an enumeration
of strings of complexity at most $m$ conditional to any string $y$
is larger than $\min\{\K(B),m-l(y)\}$. In our case $m$ is at least $\K(x)\approx 3k$,
$l(y)=2k$ and $\K(B)\approx k$ thus $\CT(B\cnd y)\ge k$ with accuracy $O(\log k)$,
which is a contradiction, if $c$ is small enough.

Let us proceed to the detailed proof.
Let us show first that $m=O(k)$. Recall that 
$B$ is a standard description obtained from a list of complexity at most $m$
enumerated by program $q$.  We have seen that in this case 
$$
\KS(B)+\log\#B=m+O(\K(q)+\log m). 
$$
Hence $m\le (k+c k)+(2k+c k)=O(k)$ with accuracy $O(\K(q)+\log m)$.
We have assumed that $\K(q)<c k=O(k)$ and therefore $m=O(k)$.


Now we will apply Theorem~\ref{thm:step-wise}. 
Recall its statement:
\begin{quote}
For some value $\varkappa=O(\log n)$ the following holds. Assume that $A,B$ are $\varepsilon$-sufficient statistics for a string $x$ of length $n$. 
Assume also that $B$ is  a $(\delta,\varepsilon+\varkappa)$-minimal statistic for $x$ and 
$B$ is an $\varepsilon$-strong model for $x$.
Then $\KT(B\cnd A) =O(\varepsilon+\delta+\log n)$. 
\end{quote}
Fix such $\varkappa=O(\log n)=O(\log k)$ (recall that we have $n=4k$). 

The models $A,B$ are $\eps$-sufficient and $B$ is $\eps$-strong for 
$$
\eps=\max\{\delta(x, B),\delta(x,A),\CT(B\cnd x)\}\le 2ck+O(\log k).
$$
Besides, the shape of $P_x$ guarantees that $B$ is 
$\delta,t$-minimal for
\begin{align*}
\delta&=|\K(B)-k|+O(\log k)\le ck+O(\log k), \\
t&=k-\delta(x,B)-O(\log k)\ge k-2ck-O(\log k).
\end{align*}
If $c<1/4$ then for all large enough $k$ we have 
$t\ge \eps+\varkappa$, thus $B$ is $\delta,\eps+\varkappa$-minimal and hence we can apply Theorem~\ref{thm:step-wise}.  
Therefore
$\CT(B \cnd A)\le O(\eps+\delta + \log k)=O(ck+\log k)$
and hence  $\CT(B \cnd y) = O(c k + \log k)$

The contradiction is obtained from the latter inequality and the following lemma. 
\begin{lemma}\label{l4}
Assume that $B$ is any standard model obtained from an enumeration of strings of complexity at most $m$
by a program $q$ and $y$ is any string. Then 
$\CT(B\cnd y)\ge \min\{\K(B),m-l(y)\}-O(\K(q)+\log n)$, where $n=\max\{l(y),m\}$.
\end{lemma}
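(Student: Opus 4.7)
Plan. Let $t=\CT(B\cnd y)$, realised by a total program $p$ with $l(p)=t$ and $p(y)=B$. Write $i=\K(B)$ and $j=\log\#B$; Propositions~\ref{prop:std-pos} and~\ref{prop:std-omega} give $i+j=m+O(\K(q)+\log m)$ and the $O(\log m)$-equivalence of $B$ with the prefix $(\Omega_m)_i$ of $\Omega_m$. The strategy is to establish two separate lower bounds on $t$, one of size $\K(B)$ and one of size $m-l(y)$; since the lemma's right-hand side is their minimum, it is enough to produce each.

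\emph{The $\K(B)$ bound.} Here I would exploit the totality of $p$ to decouple $B$ from $y$. Since $p$ is total, its image $I=\{p(y'):y'\in\{0,1\}^*\}$ is recursively enumerable from $p$ alone. Using $q$ and $m$ one can semi-detect, for each element of $I$, whether it eventually appears in the enumeration of complexity-$\le m$ strings as a completed block at a position divisible by twice its cardinality; the true $B$ does pass this test, and the corresponding $(\Omega_m)_i$ is then read off from the position. Treating $B$ as the first such block of cardinality $2^j$ produced by $p$ (breaking ties using an $O(\log n)$-bit index), I obtain $\K(B)\le t+O(\K(q)+\log n)$, hence $t\ge \K(B)-O(\K(q)+\log n)$.

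\emph{The $m-l(y)$ bound.} This is the main obstacle and uses totality more substantively. From $B=p(y)$ and $q$ I know $(\Omega_m)_i$; what is additionally needed is the remaining $j$ bits of $\Omega_m$. The idea is that, because $p$ is total, the running time $\tau$ of the computation $p(y)$ is itself computable from $(p,y)$, and this $\tau$ --- combined with $(\Omega_m)_i$ and the knowledge that at most $B(m)$ steps of $q$ are relevant --- pins down $\Omega_m$ exactly: one runs $q$ long enough to produce $B$'s block, then uses $\tau$ together with the structure of standard descriptions to certify that no further block of size $\ge 2^j$ is still ``hidden.'' This yields $\K(\Omega_m\cnd y)\le t+O(\K(q)+\log n)$. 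Combined with the conditional incompressibility $\K(\Omega_m\cnd y)\ge m-l(y)-O(\log n)$ (which comes from $\K(\Omega_m)=m+O(1)$ and $\K(y)\le l(y)+O(\log l(y))$), this gives $t\ge m-l(y)-O(\K(q)+\log n)$.

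The hard part of the proof will be the second bound: the naive inequality $\K((\Omega_m)_i\cnd y)\le t+O(\K(q)+\log n)$ together with $\K((\Omega_m)_i\cnd y)\ge i-l(y)-O(\log n)$ only yields $t\ge i-l(y)$, which is $j$ bits weaker than what is needed. What rescues the argument is that totality prevents $p$ from implicitly ``waiting for $\Omega_m$'': any halting computation of $B$ effectively encodes a computable upper bound on the runtime of the enumeration, and combined with the partial prefix coming from $B$ this is enough to reconstruct $\Omega_m$ in full. Once both bounds are in hand, taking their minimum yields the lemma.
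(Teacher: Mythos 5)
Your strategy is structurally wrong, not just incompletely executed: you propose to prove the two inequalities $\CT(B\cnd y)\ge \K(B)-O(\K(q)+\log n)$ and $\CT(B\cnd y)\ge m-l(y)-O(\K(q)+\log n)$ separately and unconditionally, and then ``take their minimum.'' The second inequality is simply false as a standalone statement. Take $y$ to be the empty string and let $B$ be the largest standard block, i.e.\ the first $2^a$ strings enumerated by $q$, where $2^a$ is the leading power of two in the binary expansion of $\Omega_m$: a total program that ignores its input, has $q$, $m$, $a$ hard-wired, and runs the enumeration until $2^a$ strings have appeared (this always halts) witnesses $\CT(B\cnd y)=O(\K(q)+\log m)$, whereas $m-l(y)=m$. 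The same example refutes the inequality $\K(\Omega_m\cnd y)\le \CT(B\cnd y)+O(\K(q)+\log n)$ that your second argument is supposed to produce, since $\K(\Omega_m)=m+O(1)$. The premise behind it is also false: the halting time $\tau$ of $p$ on $y$ carries no information about the enumeration by $q$ --- $p$ may have a short description of $B$ hard-wired and halt long before $q$ has produced most of the list --- so knowing $B$, $\tau$ and a prefix of $\Omega_m$ does not ``certify that no further block is hidden.'' The bound $m-l(y)$ can only appear as one branch of a case distinction, never as an unconditional estimate; the minimum in the statement records which case occurred.

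Your first bound is not established either: among the images of the total program, the test ``appears as a completed block at a position divisible by twice its cardinality'' is necessary but far from sufficient --- the list has $\Theta(2^m)$ elements, hence about $2^{m-j}$ aligned windows of length $2^j=\#B$, and nothing bounds by $\mathrm{poly}(n)$ the number of candidates passing your test, so ``breaking ties with an $O(\log n)$-bit index'' is unjustified; singling out the right window could cost about $\K(B)$ bits, which is circular. The missing idea (and the one the paper uses) is to exploit totality differently: let $p$ be a total program of length $\CT(B\cnd y)+O(\log n)$ mapping $y$ to the \emph{lexicographically first element} $b$ of $B$, and consider the finite set $D=\{p(y')\mid l(y')=l(y)\}$, so that $\K(D)\le |p|+O(\log n)$ and every member of $D$ has complexity at most $\K(D)+l(y)+O(\log n)$. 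If this quantity exceeds $m$, then $\K(D)\ge m-l(y)-O(\log n)$. Otherwise every member of $D$ eventually appears in $q$'s enumeration; since $b\in D$ lies within $2\#B$ positions of the end of the list, $\Omega_m$ is determined by $D$, $q$ and $\log\#B+O(1)$ extra bits, hence $\K(D)\ge m-\log\#B-O(\K(q)+\log m)\ge \K(B)-O(\K(q)+\log m)$, using that $\K(B)+\log\#B\le m+O(\K(q)+\log m)$ for standard models. In either case $\CT(B\cnd y)\ge \K(D)-O(\log n)\ge\min\{\K(B),m-l(y)\}-O(\K(q)+\log n)$, which is exactly the claimed bound.
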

\begin{proof}
Denote by $b$ the lexicographic first element in $B$. We can estimate the total complexity of $b$ conditional to
$y$ as follows: 
$$
\CT(b \cnd y) \le \CT(b\cnd B)+\CT(B\cnd y)+O(\log n).
$$
The first term here is $O(1)$ and  
hence 
$\CT(b \cnd y)\le  \CT(B\cnd y)+O(\log n)$.
Denote by $p$ a total program of this  length   
transforming $y$ to  $b$ and consider the set 
$$
D:= \{p(y') \mid l(y')=l(y) \}.
$$ 
Obviously $\K(D)\le |p|+O(\log n)$.
Thus it suffices to show that 
$\K(D)\ge \min\{\K(B),m-l(y)\}-O(\log n)$.

Every  element from $D$ has complexity  
at most $\K(D) + \log \#D + O(\log n) \le \K(D)+l(y)+O(\log n)$.
If the right hand side of this inequality is larger than $m$ then $\K(D)\ge m-l(y)-O(\log n)$
and the lemma follows.

Therefore we may assume that  all   
elements from $D$ have complexity  
at most $m$.
Run the program
$q$ until it prints all elements from $D$. 
Since $b \in D$ and $b\in B$, there are at most $2\# B$ elements of complexity $m$ 
that have not been printed yet (all the elements of $B$ that are enumerated after $b$ and less than $\# B$ 
that are enumerated after all elements of $B$). So, we can find the list of all strings of 
complexity at most $m$ from $D$, $q$ and some extra
$\log \#B+1$ bits. 
Since this list has complexity  $m - O(\log m)$,
we get
$$
\K(D) + \K(q) + \log \# B \ge m - O(\log m).
$$

Recall that for any standard model obtained from enumeration of strings of 
complexity at most $m$ the sum of its complexity and log-cardinality is at most $m$ plus the complexity
of the enumerating program. As $B$ is a standard model, we have  
 $$ 
 \K(B) + \log \# B \le m + O(\K(q) + \log m).
 $$ 
 Summing these inequalities we get 
$$ 
\K (D) + O( \K (q) + \log m) \ge \K (B) .
$$ 
\end{proof}
Lemma~\ref{l4} thus implies that
$$
\min\{\K(B),m-2k\}-O(\K(q)+\log k)\le \CT(B\cnd y)= O(c k +\log k)
$$ 
Since $\K(B)\ge k-c k$ and $m\ge \K(x)=3k-O(\log k)$, we have
$$
k\le O(c k +\log k).
$$
The constant hidden in this $O(c k +\log k)$ notation is an absolute constant, call it $C$.
Therefore  if $c<1/C$ we will obtain a contradiction.
\end{proof}

\subsection{A strong sufficient minimal model for a normal string is normal}\label{s:hereditary}
Here we prove Theorem \ref{thm:hereditary}. 
We will need two lemmas. We first state the lemmas in an informal way
and then we  outline the proof of the theorem. 
Then we provide rigorous formulations of lemmas and the rigorous proof  of the theorem.

By Propositions \ref{prop:better-std} and \ref{prop:std-omega}
for every $A \ni x$ there is $B \ni x$ (a standard model) 
such $\K(B\cnd \Omega_{\K(B)})\approx0$ and parameters of
$B$  are not worse than those of $A$.
We will need a similar result for normal strings and for strong models.

\begin{lemma}[informal]
\label{prop:min_hereditary}
Assume that $A$ is a minimal statistic for some string. 
Then $ \K(\Omega_{\K (A)} \cnd A)\approx0$.
\end{lemma}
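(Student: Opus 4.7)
The plan is to reduce from an arbitrary minimal statistic $A$ to a standard description of comparable parameters, and then exploit Proposition~\ref{prop:std-omega} which says that standard descriptions are $O(\log n)$-equivalent to the corresponding $\Omega$-numbers. The bridge between $A$ and the world of standard descriptions is provided by Proposition~\ref{prop:better-std}.

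Concretely, I would proceed as follows. Apply Proposition~\ref{prop:better-std} to $A$ (viewed as an $(i * j)$-description of the underlying string $x$, with $i=\K(A)$, $j=\log\#A$) to obtain a standard description $B$ of $x$ with parameters
$$
\K(B)\le \K(A)+O(\log n),\qquad \K(B)+\log\#B\le \K(A)+\log\#A+O(\log n),
$$
and moreover $\K(B\cnd A)=O(\log n)$. In particular $\delta(x,B)\le \delta(x,A)+O(\log n)$, so $B$ is a competitor of $A$ as a model for $x$ with essentially the same optimality deficiency. The hypothesis that $A$ is a $(\delta,\varkappa)$-minimal statistic (with $\varkappa$ larger than the logarithmic slack) therefore forbids $\K(B)<\K(A)-\delta$; combined with the upper bound above this yields $|\K(A)-\K(B)|\le \delta+O(\log n)$.

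Now invoke Proposition~\ref{prop:std-omega}: since $B$ is a standard description of complexity $\K(B)$, it is $O(\log n)$-equivalent to $\Omega_{\K(B)}$, so $\K(\Omega_{\K(B)}\cnd B)=O(\log n)$. Next, Proposition~\ref{prop:omega-equivalence} (together with the remark that $\Omega_k$ has complexity $k+O(1)$, so conditioning on $\Omega_m$ can shift complexity only by $|k-m|+O(\log n)$ bits) gives
$$
\K(\Omega_{\K(A)}\cnd\Omega_{\K(B)})=O(\delta+\log n).
$$
Chaining these inequalities through $B$ and $\Omega_{\K(B)}$,
$$
\K(\Omega_{\K(A)}\cnd A)\le \K(\Omega_{\K(A)}\cnd\Omega_{\K(B)})+\K(\Omega_{\K(B)}\cnd B)+\K(B\cnd A)+O(\log n)=O(\delta+\log n),
$$
which is the precise ``$\approx 0$'' statement: the bound is small whenever the minimality parameter $\delta$ is small.

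The main obstacle is bookkeeping around the minimality hypothesis: one must verify that the slack introduced by Proposition~\ref{prop:better-std} (in both $\K(B)$ and $\delta(x,B)$) is absorbed by the $\varkappa$ of $(\delta,\varkappa)$-minimality, so that $B$ is genuinely a valid competitor. Once this is cleanly arranged, the rest is a routine chain of logarithmic inequalities using the two previously established equivalences ($B\sim\Omega_{\K(B)}$ and $\Omega_{\K(A)}\sim\Omega_{\K(B)}$).
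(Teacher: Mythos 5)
Your proof is correct and is essentially the paper's own argument: it passes from $A$ to a standard description $B$ via Proposition~\ref{prop:better-std}, uses $(\delta,\varkappa)$-minimality to get $\K(B)>\K(A)-\delta$, and then bounds $\K(\Omega_{\K(A)}\cnd A)$ by the same three-term chain $\K(\Omega_{\K(A)}\cnd\Omega_{\K(B)})+\K(\Omega_{\K(B)}\cnd B)+\K(B\cnd A)$, each term handled exactly as in the paper (Propositions~\ref{prop:omega-equivalence} and~\ref{prop:std-omega}). No substantive difference from the paper's proof.
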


  \begin{lemma}[informal]
\label{lch}
For every normal string $x$ and 
every model $A$ for $x$ there exists a \emph{strong} statistic $H$ for $x$
with (1) $\delta (x,H) \lesssim \delta (x,A)$,
(2) $\K (H \cnd \Omega_{\K (H)})\approx0$ and 
(3)  $\K (H) \le \K (A)$.
\end{lemma}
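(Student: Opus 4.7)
The plan is to combine three ingredients: Proposition~\ref{prop:better-std} to pass from $A$ to a standard description of $x$ with no worse parameters; the normality of $x$ to replace that description by a strong one of comparable parameters; and Lemma~\ref{prop:min_hereditary} together with the symmetry of information to ensure the resulting strong model is simple given $\Omega_{\K(H)}$.

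First I would apply Proposition~\ref{prop:better-std} to $A$, producing a standard description $B \ni x$ at some level $m$ with $\K(B) \le \K(A) + O(\log n)$ and $\K(B) + \log\#B = m + O(\log n) \le \K(x) + \delta(x,A) + O(\log n)$. By Proposition~\ref{prop:std-omega}, $B$ is $O(\log n)$-equivalent to $\Omega_{\K(B)}$, so $\K(B \cnd \Omega_{\K(B)}) = O(\log n)$. Thus $B$ already satisfies properties (2) and (3) (and (1) up to an $O(\log n)$ slack), but is in general not strong. Since $(\K(B), \log\#B) \in P_x$ and $x$ is $(\varepsilon,\varepsilon)$-normal, normality furnishes a strong model $H_0 \ni x$ whose parameters lie within $\varepsilon$ of $(\K(B),\log\#B)$.

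Next I would define $H$ as a strong model for $x$ minimizing $\K(\cdot)$ among strong models $H'$ satisfying $\K(H') + \log\#H' \le M + s$, where $M = \K(H_0) + \log\#H_0$ and $s = O(\varepsilon + \log n)$ is a fixed slack; the minimum is attained as $H_0$ itself is a candidate. Clearly $\K(H) \le \K(A) + O(\varepsilon+\log n)$ and $\delta(x,H) \le \delta(x,A) + O(\varepsilon+\log n)$, verifying (1) and (3). The crucial claim is that $H$ is a minimal statistic in the sense of Lemma~\ref{prop:min_hereditary}: if some model $H^\ast$ of $x$ (possibly non-strong) had $\K(H^\ast) < \K(H) - \rho$ with $\delta(x, H^\ast) \le \delta(x, H) + \rho$ for an appropriately chosen $\rho = O(\varepsilon + \log n) < s$, then normality would supply a strong model within $\varepsilon$ of the parameters of $H^\ast$, still satisfying the constraint $\K(\cdot) + \log\#(\cdot) \le M + s$ and having strictly smaller complexity than $H$ — contradicting the minimality of $H$. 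Hence Lemma~\ref{prop:min_hereditary} applies to $H$, yielding $\K(\Omega_{\K(H)} \cnd H) = O(\varepsilon + \log n)$; combined with $\K(\Omega_{\K(H)}) = \K(H) + O(1)$ and the symmetry of information, this gives $\K(H \cnd \Omega_{\K(H)}) = O(\varepsilon + \log n)$, establishing~(2).

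The main obstacle is the bookkeeping of slack parameters in the minimization step. Normality introduces an $\varepsilon$-error, Propositions~\ref{prop:better-std} and~\ref{prop:std-omega} each contribute $O(\log n)$-errors, the minimization constraint admits slack $s$, and symmetry of information adds another $O(\log n)$-error. These must be chosen in a compatible order (with $\rho + O(\varepsilon) < s$) so that minimality of $H$ among strong models translates, via normality, into a minimality strong enough to activate Lemma~\ref{prop:min_hereditary}, while simultaneously keeping the accumulated slack in (1) inside the bound implicit in the informal $\lesssim$ — that is, within the $O(\delta + (\varepsilon + \log n)\sqrt{n})$ announced by Theorem~\ref{thm:hereditary}, where this lemma will ultimately be applied.
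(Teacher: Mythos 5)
Your opening steps (Proposition~\ref{prop:better-std} applied to $A$, then one application of normality to strongify the resulting standard description) coincide with the first round of the paper's construction, but the heart of your argument --- the claim that the complexity-minimal strong model $H$ under the budget $\K(H')+\log\#H'\le M+s$ is $(\rho,\varkappa)$-minimal among \emph{all} models, so that Lemma~\ref{prop:min_hereditary} applies --- has a genuine gap. In your contradiction step you take a hypothetical $H^\ast$ with $\K(H^\ast)\le\K(H)-\rho$ and $\delta(x,H^\ast)\le\delta(x,H)+\varkappa$ and strongify it via normality; the resulting strong model $H^{\ast\ast}$ only satisfies $\K(H^{\ast\ast})+\log\#H^{\ast\ast}\le \K(H)+\log\#H+\varkappa+2\eps$, not $\le M+s$. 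The minimization gives no control on the two-part length of $H$ below the budget (minimizing complexity subject to ``sum $\le M+s$'' may well produce an $H$ that saturates the budget), so $H^{\ast\ast}$ need not belong to the class over which $H$ was chosen minimal, and no contradiction follows. Choosing $\rho+O(\eps)<s$ does not repair this, because what enters the comparison is the sum of $H$, not $M$; the slack is consumed the moment the minimizer drifts to the boundary of the constraint.

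This is not mere bookkeeping. The failure mode is a stretch of the boundary of $P_x$ along which the two-part length grows by only $O(\eps+\log n)$ while the complexity drops substantially: there a single ``improve and strongify'' step can never certify the minimality needed for Lemma~\ref{prop:min_hereditary}, and one is forced to repeat the step. Each repetition costs $O(\alpha+\log n)$ in optimality deficiency, and to bound the number of repetitions one must insist that every round decreases the complexity by some threshold; balancing these two losses is exactly the paper's proof, which alternates normality with Proposition~\ref{prop:better-std}, stops at the first $N$ with $\K(A_N)-\K(B_N)\le\sqrt n$ (hence $N=O(\sqrt n)$ rounds), and then derives $\K(H\cnd\Omega_{\K(H)})=O(\sqrt n)$ directly from the closeness of $H=A_N$ to the standard description $B_N$ (Proposition~\ref{prop:std-omega} plus symmetry of information), not from Lemma~\ref{prop:min_hereditary} at all. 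This balancing is also why the rigorous statement carries error terms $O(\sqrt n)$ and $O((\alpha+\log n)\sqrt n)$; your claimed $O(\eps+\log n)$ bounds in (1)--(3) would be a strictly stronger result, and the proposed one-shot minimality argument does not establish it.
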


Now we sketch the proof of Theorem  \ref{thm:hereditary} using the lemmas.

\begin{proof}[A sketch of proof of Theorem \ref{thm:hereditary}]
Let $x$ be a normal string  and let $A$ be a strong 
sufficient 
and minimal statistic for $x$.
We have to prove that  the profile of $A$ is close to 
its strong profile. 

First we show that w.l.o.g we may assume that $A$ belongs to a simple partition.
Indeed, since $A$ is a strong statistic for $x$, we may apply Proposition \ref{part} to $A$ and $x$.
Let $\mathcal{A}$ be a simple partition and $A_1$ a model from
$\mathcal{A}$ which exists by Proposition \ref{part}. 
As the total conditional complexities $\CT(A_1 \cnd A)$ and  $\CT(A \cnd A_1)$
are small, the profiles 
of $A$ and $A_1$ are close to each other. This also applies to strong profiles. Therefore it suffices to
show that   $A_1$ is normal. 

To this end consider any model $\mathcal G$ (a family of sets) for $A_1$. Our goal is to find a strong model $\mathcal F$ for 
$A_1$ whose parameters (complexity, log-cardinality) are not worse than those of $\mathcal G$. To do that we will
find a model $M_1$ for $x$ such that 
\begin{equation}\label{eqm}
\begin{split}
&\CT(M_1\cnd A_1)\approx 0,\quad
\log\#(M_1\cap A_1)\approx\log\#A_1,\\
&\K(M_1)\le \K(\mathcal G),\quad \K(M)+\log\#M_1\le\K(\mathcal G)+\log\#\mathcal G+\log\#A_1.
\end{split}
\end{equation}
Then we will let 
$$
\mathcal F=\{A'\in \mathcal{A}\mid \log \#(A' \cap M_1) = \log \# (A_1 \cap M_1 )\}
$$
(Here and further by $\log$ we mean the integer part of the binary logarithm.)
The family $\mathcal F$ can be computed from $M_1$, $\mathcal A$ and $\log \#( A_1 \cap M_1)$.
As  $\mathcal{A}$ is simple, we conclude that $\K(\mathcal F) \le \K(M_1)\le \K(\mathcal G)$.

Moreover, $\CT(\mathcal F\cnd M_1)\approx 0$, as
the mapping 
$$
M'\mapsto \{A'\in \mathcal{A}\mid \log \#(A' \cap M') = \log \# (A_1 \cap M' )\}
$$
is total. Since  $\CT(M_1 \cnd A_1) \approx 0$, this implies 
that $\CT(\mathcal F\cnd A_1)\approx 0$, that is,
$\mathcal F$ is a strong model for $A_1$.

Finally $\log \# \mathcal F \le \log \# M_1 - \log \# A_1$, because $\mathcal A$
is a partition and thus it has few sets   
that have $\log \# (A_1 \cap M_1) \approx \log \# A_1$ common elements with $M_1$. 
Thus the sum of complexity and log-cardinality
of $\mathcal F$ is at most 
\begin{align*}
\K(M_1)+(\log\# M_1 - \log \# A_1)&\le
(\K(\mathcal G)+\log\#\mathcal G+\log \# A_1)-\log \# A_1\\
&=\K(\mathcal G)+\log\#\mathcal G.
\end{align*}
Hence $\mathcal F$ is a strong model for $A_1$ whose parameters (complexity, complexity + log-cardinality)
are not worse than required. From Proposition \ref{prop:description-shift-1} it follows that the strong profile of $A_1$ 
includes the point $(\K(\mathcal G),\log\#\mathcal G)$.

How to find a model $M_1$ for $x$ satisfying~\eqref{eqm}?
We will do that in three steps.
On the first step we construct a model $L$ for $x$ such that $\K(L)\le \K(\mathcal G)$ and $\log\#L\le \log\#\mathcal G+\log\#A_1$.
More specifically, we let 
$$
L=\bigcup\{A'\in\mathcal G\mid \log\# A'=\log\#A_1\}.
$$
By construction we have $\K(L)\le \K(\mathcal G)$ and
$\log\#L\le \log\#\mathcal G+\log\#A_1$  (see Fig. \ref{f676}).
\begin{figure}[h]
\begin{center}
\includegraphics[scale=1]{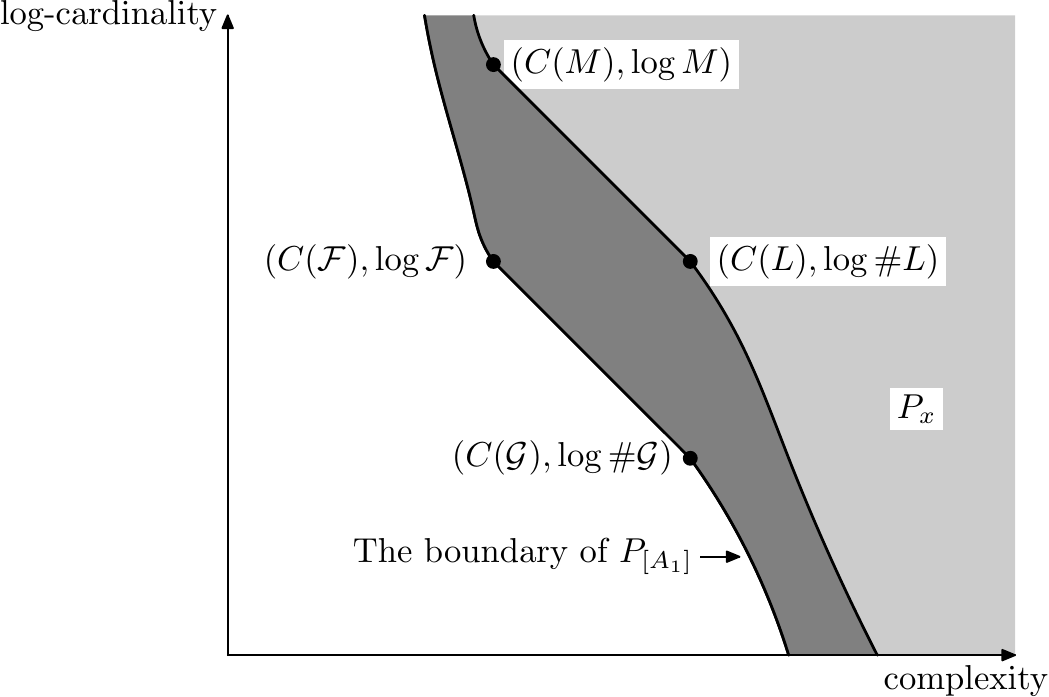}
\end{center}
\caption{The picture shows parameters
  (complexity, log-cardinality)
  of models $\mathcal G,\mathcal F$ (for $A$)
  and $L,M$ (for $x$).}
\label{f676}
\end{figure}

On the second step we find a strong model $M$ for $x$ whose 
parameters (complexity, complexity + log-cardinality) are not worse than those of $L$ and such that 
$\K(M\cnd \Omega_{\K(M)})\approx0$; such model exists by Lemma~\ref{lch}.  
On the third step we find a model $M_1$ with the same parameters as those of $M$
that belongs to a simple partition $\mathcal M$ and such that $\CT(M_1\cnd M)\approx0$; such model exists by
Proposition~\ref{part}.

By construction we have $\K(M_1)\le \K(M)\le\K(L)\le \K(\mathcal G)$.
We have to show that $\CT(M_1\cnd A_1)\approx 0$ 
and $\log\#(M_1\cap A_1)\approx\log\#A_1$.

To prove the first claim we show first that $\K(M \cnd A_1)$ is small. 
Indeed, from $A_1$ we can compute $A$ (by a short program),
from $A$ we can compute $\Omega_{\K(A)}$ (Lemma~\ref{prop:min_hereditary}),
from $\Omega_{\K(A)}$ we can compute 
$\Omega_{\K(M)}$ (indeed, Lemma \ref{lch} guarantees that $\K(M)\le\K(L)\le \K(\mathcal G)$ and w.l.o.g. we may assume that $\K(\mathcal G)\le \K(A)$ as 
$\mathcal G$ is a model for $A$) and then compute $M$ 
(as $\K(M \cnd \Omega_{\K(M)}) \approx 0$ by Lemma \ref{lch}). 
Proposition~\ref{part} guarantees
that $M_1$ is simple given $M$. Since $\K(M_1 \cnd M) \approx 0$ and $\K(M \cnd A_1) \approx 0$, we have 
$\K(M_1 \cnd A_1)\approx0$ as well.

To show the stronger equality $\CT(M_1\cnd A_1)\approx 0$
consider the model
$A_1\cap M_1$ for $x$. We claim that its cardinality cannot be much less than $A_1$. 
Indeed, since $\K(M_1 \cnd A_1)\approx0$, we have $\K (A_1\cap M_1) \le \K(A_1)$.
Obviously, $\# (A_1\cap M_1) \le \# A_1$.
Therefore the parameters of  $M_1 \cap A_1$ are not worse then those of
$A_1$. The model $M_1 \cap A_1$  cannot have much better parameters than $A_1$, since $A_1$ is a sufficient statistic for $x$ (recall that the parameters of $A_1$ are not worse than
those of $A$ and $A$ is assumed to be a sufficient statistic for $x$). 
Hence $\log \# (A_1\cap M_1)  \approx \log \# A_1$.

Recall that $M_1$ belongs to a simple partition $\mathcal M$.  
The model $M_1$ can be computed
by a total program from $A_1$ and its index among all $M'\in\mathcal M$ with
$\log \# (A_1\cap M') \approx \log \# A_1$. As $\mathcal M$ is a partition,
there are few such sets $M'\in \mathcal M$. Hence  $\CT(M_1 \cnd A_1)\approx 0$.
\end{proof}

 Now we provide rigorous formulations and proofs of the used lemmas. 

\smallskip
\textbf{Lemma \ref{prop:min_hereditary}} (rigorous).
For some $\kappa =O(\log n)$ the following holds.
Assume that $A$ is 
a $(\delta, \kappa)$-minimal statistic for a string $x$ of length $n$.
Then $ \K(\Omega_{\K (A)} \cnd A) = O(\delta + \log n)$.

\begin{proof}
Let $B$ be a standard model for $x$ that is an improvement of $A$ existing  by Proposition \ref{prop:better-std}
and hence $\delta(x, B) \le \delta(x, A) + O(\log n)$. 
If the function $\varkappa$ is chosen appropriately then $\K (B) > \K (A) - \delta$.
We can estimate $\K (\Omega_{\K(A)}  \cnd A)$
as follows
$$\K (\Omega_{\K(A)}  \cnd A) \le \K (\Omega_{\K(A)}  \cnd \Omega_{\K(B)}) + \K (\Omega_{\K(B)}  \cnd B) + \K(B \cnd A).$$
Let us show that every term in the right hand side of this inequality is $O(\delta + \log n)$. 
For the third term it holds by construction. The second term is equal to $O(\log n)$ since $B$ is a standard model. For the first term it holds since $\K (A) < \K (B) + \delta$.
\end{proof}

\smallskip
\textbf{Lemma \ref{lch}} (rigorous).
Assume that $A$ is a model for an $\eps, \alpha$-normal string $x$ of length $n$ with
$\eps \le n$, $\alpha < \sqrt{n}/2$. 
Then there is a set $H$ such that:

1) $H$ is an $\eps$-strong statistic for $x$,

2) $\delta (x,H) \le \delta (x,A) + O((\alpha + \log n)\cdot \sqrt{n})$,

3) $\K (H \cnd \Omega_{\K (H)}) = O(\sqrt{n})$, 

4)  $\K (H) \le \K (A)+\alpha$.

\begin{proof}

Consider the sequence $B_0,A_1, B_1, A_2, B_2, \dots $ of statistics for $x$
defined as follows.
Let $B_0=A$.
Then for all $i$ let $A_{i+1}$ be a strong statistic for $x$ 
obtained from $B_i$ by using the assumption that $x$ is $\eps,\alpha$-normal:
$$
\K(A_{i+1})\le \K(B_i)+\alpha, \quad \log\#A_{i+1}\le \log\#B_i+\alpha, \quad \KT(A_{i+1}\cnd x)\le\eps.
$$
 (See Fig. \ref{f5}.)
 
 Let for all $i$ let $B_i$ be a standard description that is the improvement of $A_i$
obtained by Proposition \ref{prop:better-std}:
$$
\delta(B_{i},x)\le \delta(A_i,x)+O(\log n), \quad \K(B_i\cnd A_i)=O(\log n).
$$
 
\begin{figure}[h]
\begin{center}
\includegraphics[scale=1]{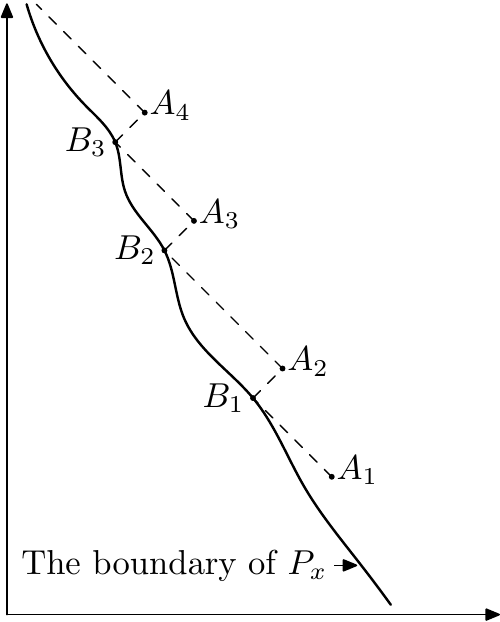}
\end{center}
\caption{Parameters of statistics $A_i$ and $B_i$}
\label{f5}
\end{figure}

Denote by $N$ the minimal integer such that $\K (A_N) - \K (B_N) \le \sqrt{n}$. 
Thus for all $i<N$ we have $\K(B_i)<\K(A_i)-\sqrt n$. On the other hand, the complexity  of 
$A_{i+1}$  is larger than that of $B_i$ by at most $\alpha<\sqrt n/2$.
Therefore for all $i<N$ we have 
$\K(A_{i+1})<\K(A_i)-\sqrt n/2$. Since $\K(A_1)=O(n)$ (recall that $\KT(A_1\cnd x)\le\eps\le n$) and $\K(A_N)\ge 0$,
we have $N = O(\sqrt{n})$. 

Let $H:=A_N$. By construction $H$ is strong (the first claim of the lemma) and $\K(H)\le \K(A_1)\le \K(A)+\alpha$ (the last claim). 
From $N = O(\sqrt{n})$ it follows that the second condition is satisfied. 

It remains to estimate $\K (H \cnd \Omega_{\K(H)})$. To this end we use the following inequality:
$$ 
\K (A_N \cnd \Omega_{ \K (A_N)}) 
\le \K (A_N \cnd B_N)+ 
  \K (B_N \cnd \Omega_{\K (B_N)}) + \K (\Omega_{ \K (B_N)} \cnd \Omega_{ \K (A_N)}).
 $$
We have to show that all terms in the right hand side 
are equal to $O(\sqrt{n})$.
This bound holds for the first term because $N$ was chosen so that 
$\K(A_N)-\K(B_N)\le\sqrt n$.
By construction $\K(B_N\cnd A_N)=O(\log n)$.
Therefore we can estimate $\K (A_N \cnd B_N)$ using the symmetry of information:
$$
\K (A_N \cnd B_N)=\K(A_N)+\K(B_N\cnd A_N)-\K(B_N)=\K(A_N)-\K(B_N)\le \sqrt n
$$
(with logarithmic accuracy).

The second term is $O(\log n)$ by construction (recall that $B_N$ is a standard description).
To estimate the third term note that by construction
we have $\K (B_N \cnd A_N) = O(\log n)$ and hence $\K(B_N) - \K(A_N) = O(\log n)$. 
\end{proof}

\begin{proof}[The proof of Theorem~\ref{thm:hereditary}]
To complete the proof of Theorem~\ref{thm:hereditary} 
we have to verify that all approximate equalities and inequalities used in the sketch of the proof 
hold with accuracy $O(\delta+(\varepsilon+\log n)\sqrt n)$.

In construction of  $A_1$ we use Proposition~\ref{part}. Hence the error terms in the inequalities relating $A$ and $A_1$
are of order $O(\eps+\log n)$. The complexity of the partition is of the same order.

Construction of $L$: the error terms in the inequality $\K(L)\le\K(\mathcal G)$ is of order $O(\log\log\#A_1)=O(\log  n)$ and 
the error term in the second inequality is constant.


In construction of $M$ we use Lemma~\ref{lch} for $\alpha=\eps$ (we can assume that the condition $\eps<\sqrt n/2$ of the lemma is met, as otherwise the statement of the theorem is obvious).
The error terms in  Lemma~\ref{lch} are of order $O((\eps+\log n)\sqrt n)$ and hence the parameters (complexity, complexity + log-cardinality)
of $M$ 
exceed those of $L$ by at most    $O((\eps+\log n)\sqrt n)$. 

Then we use Lemma~\ref{prop:min_hereditary} to estimate $\K(\Omega_{\K(A)}\cnd A)$, 
which is thus of order $O(\delta +\log n)$. Therefore the error term in the equality $\K(M\cnd A_1)\approx0$ is at most
$O(\delta+(\eps+\log n)\sqrt n)$.

In construction of $M_1$ we use Proposition~\ref{part}. Hence both complexities $\KT(M_1\cnd M)$ and  $\KT(M\cnd M_1)$ are $O(\eps+\log n)$.
The complexity of $A_1\cap M_1$ is at most $O(\delta+(\eps+\log n)\sqrt n)$ larger than that of $A_1$ and 
hence the equality  $\log \# (A_1\cap M_1)  \approx \log \# A_1$ holds with accuracy $O(\delta+(\eps+\log n)\sqrt n)$.
This implies that the equality $\KT(M_1\cnd A_1)\approx 0$ holds also with this accuracy.

Construction of $\mathcal F$:  we have seen that the equality $\KT(M_1\cnd A_1)\approx 0$ holds with accuracy $O(\delta+(\eps+\log n)\sqrt n)$.  
The equality $\CT(\mathcal F\cnd M_1)\approx 0$ holds with accuracy $O(\eps+\log n)$, since 
the complexity of $\mathcal A$ is $O(\eps+\log n)$.  The error terms in other inequalities are those from the previous steps and hence
of order $O(\delta+(\eps+\log n)\sqrt n)$.
\end{proof}

 \begin{remark}
 We have used the sufficiency of $A$ only to show that $\log \# (A_1\cap M_1)  \approx \log \# A_1$.
 This conclusion can be derived from a weaker assumption about $\K(A)$ and the profile of $x$. Namely, we 
 can assume that the profile of $x$ does not drop to the right of the point $(\K(A),\log\#A)$.
 More specifically, let  $c$ be the absolute constant such that $\K(M_1\cap A_1)\le \K(A)+c(\delta+(\eps+\log n)\sqrt n)$
 (see the proof). Then we can drop the assumption of $\eps$-sufficiency of $A$ at the expense
 of adding $\log\#A_1-\log\#(M_1\cap A_1)$ to the error term in the conclusion of the theorem.
 In this way we can prove the following version of the theorem:
 {\em 
 For some value $\varkappa=O(\log n)$ and for some constant $c$ the following holds. Assume that $A$ is an $\eps$-strong 
statistic for an $(\varepsilon,\varepsilon)$-normal string $x$ of length $n$. Assume also that $A$ is a $(\delta,\varkappa)$-minimal model for $x$. 
Finally, assume that there is no model $A'\ni x$ with $\K(A')\le \K(A)+c(\delta+(\varepsilon+\log n)\sqrt n)$ and 
$\log\#A'\le\log\# A-\xi$.
Then $A$ is $(O(\xi+\delta+(\varepsilon+\log n)\sqrt n),O(\xi+\delta+(\varepsilon+\log n)\sqrt n))$-normal.}
If $A$ is an $\varepsilon$-sufficient model for $x$ then the last assumption holds $\xi =O(\delta+(\varepsilon+\log n)\sqrt n)$ and hence
this version implies the original one (but not the other way around). 
  \end{remark}

\subsection{The number of strings with a given profile}

In this section we consider the following questions. Let $P$ be a non-empty subset of $\mathbb{N}^2$. How many strings have the profile that is close to $P$? How many \emph{normal}
strings have the profile that is close to $P$? 

We assume that the set $P$ satisfies the necessary conditions from Theorem~\ref{stat-any-curve}, i.e. 
$P$ is an upward close set such that $(a, b +c) \in P$ implies $(a + b,c) \in P$ for all integers $a$, $b$ and $c$.
Thus by Theorem~\ref{stat-any-curve} there is at least one string whose profile is close to $P$ and by Theorem~\ref{stat-any-curve-1} there is at least one normal string whose profile is close to $P$.
To estimate the number of such strings better, we introduce the following three parameters:
\begin{align*}  
k_P &= \min\{ t \mid (t, 0) \in P\},\\
n_P &= \min\{ t \mid (0, t)\in P\},\\
m_P &= \min\{ t \mid (t, k_P -t) \in P\}.
 \end{align*}
The meaning of these numbers is as follows:
if $P$ is close to the profile of some string $x$ then 
$k_P$ is close to its complexity
and $m_P$ is close  to the complexity of a minimal sufficient statistic for $x$.
The parameter $n_P$ can be understood as ``the generalized length'' of $x$ (minimal log-cardinality of a 
set with negligible complexity containing $x$).
  
Note that $m_P\le k_P\le n_P$. Indeed, by definition the pair $(k_P,0)$ is in $P$ hence the set in which $m_P$ is the minimal element is not empty,
thus $m_P$ is well defined and is not larger than $k_P$. The second inequality is implied by 
the property  $(a, b +c) \in P\Rightarrow (a + b,c) \in P$
applied to $a=b=0$ and $c=n_P$.

\begin{thm}
\label{card}
1) There exist at least $2^{k_P - m_P - O(1)}$ strings whose profile is $O(\K(P) + \log n_P)$-close to $P$.

2) There exist at least $2^{k_P - m_P - O(1)}$ strings  that are 
$O(\log n_P)$, $O(\sqrt{n_P \log n_P})$-normal and whose profile is $O(\K(P) + \sqrt{n_P\log n_P})$-close to $P$.
\end{thm}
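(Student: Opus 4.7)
The plan is to read the proofs of Theorem~\ref{stat-any-curve} and Theorem~\ref{stat-any-curve-1} as producing not a single witness but an entire family of them. In both constructions one maintains, for every boundary point $(i,t_i)$ of the target curve, a ``good set'' $A^{i}$ whose final value, after all bad sets have been enumerated, is essentially ``the first $2^{t_i}$ strings of the appropriate length that avoid every bad set''; a single witness is then extracted from one of these sets. The observation is that every element of the set $A:=A^{m_P}$ obtained at the knee of the curve already has the required profile, so that a single run of the original construction yields $2^{k_P-m_P}$ witnesses at once.

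\textbf{Part 1.} Run the proof of Theorem~\ref{stat-any-curve} with $t_i$ equal to the boundary of $P$. At the end of the process $A^{i}$ is the lex-first $2^{t_i}$ non-bad strings of length $n_P+O(\log n_P)$, and its complexity is at most $i+O(\K(P)+\log n_P)$ (bounded, as in the original proof, by the logarithm of the number of refills of $A^{i}$). Set $A:=A^{m_P}$, so that $\#A=2^{k_P-m_P}$. Since the ``refresh to lex-first undeleted strings'' rule is the same for every $i$, the final values are nested: for $i<m_P$ we have $A\subseteq A^{i}$, while for $i>m_P$ we have $A^{i}\subseteq A$. Now fix any $y\in A$. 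Because $y$ is covered by no bad set, $y$ has no $(i*j)$-description with $(i,j)$ strictly below the boundary of $P$, giving $P_y\subseteq P$ up to $O(\K(P)+\log n_P)$. For the reverse inclusion we exhibit an $(i*t_i)$-description of $y$ at every boundary point:
\begin{itemize}
\item If $i<m_P$, the set $A^{i}$ contains $A$ by nesting, hence contains $y$, and has the correct complexity and size.
\item If $i\ge m_P$, apply Proposition~\ref{prop:description-shift} to $A$, chopping it into pieces of size $2^{t_i}$; the piece containing $y$ has complexity at most $m_P+(k_P-m_P-t_i)+O(\log n_P)=i+O(\log n_P)$ because $i+t_i=k_P$ on the diagonal.
\end{itemize}
Hence $P_y\supseteq P$ to the same precision, so all $2^{k_P-m_P}$ elements of $A$ satisfy claim~1.

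\textbf{Part 2.} Repeat the same scheme for the proof of Theorem~\ref{stat-any-curve-1}, which is Theorem~\ref{thm:family-curve} applied to the family $\mathcal{A}$ of prefix cylinders. That construction produces a chain of prefix cylinders $A_1\supseteq A_2\supseteq\cdots\supseteq A_N$ with log-sizes spaced by $O(\sqrt{n_P\log n_P})$ along the upper-left branch. Take as $A$ the cylinder in this chain whose log-size lies in $[k_P-m_P,\,k_P-m_P+O(\sqrt{n_P\log n_P})]$; then $\#A\ge 2^{k_P-m_P-O(1)}$, $\K(A)\le m_P+O(\K(P)+\sqrt{n_P\log n_P})$, and every $y\in A$ is non-bad so that $P_y\subseteq P$ with the advertised precision. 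For each $y\in A$ we produce $O(\log n_P)$-strong descriptions at every boundary point as follows. On the upper-left branch we use the $A_s$ themselves: the prefix cylinder $A_s=\{ur:l(r)=m\}$ is recovered from any element $z\in A_s$ by the short total program that takes the first $l(z)-m$ bits of $z$ and appends all tails of length $m$; this program depends only on $m$, so $\KT(A_s\cnd y)=O(\log n_P)$. On the diagonal branch we shrink $A=\{ur:l(r)=m\}$ by fixing additional bits of $y$: for any $s\le m$ the prefix cylinder $\{u\,(y|_{[l(u)+1..l(u)+s]})\,r:l(r)=m-s\}$ has size $2^{m-s}$, complexity at most $\K(A)+s+O(\log n_P)$, and total conditional complexity $O(\log n_P)$ given $y$ (only $s$ is needed). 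Every description thus produced is $O(\log n_P)$-strong, so $P_y(O(\log n_P))$ traces the boundary of $P$ up to $O(\sqrt{n_P\log n_P})$; together with $P_y\subseteq P$ this gives the required $(O(\log n_P),O(\sqrt{n_P\log n_P}))$-normality.

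\textbf{Main obstacle.} The only non-routine step is the nesting/monotonicity claim for the good sets in the two constructions. For Part~1 one must verify that the refill rule ``take the first $2^{t_i}$ undeleted strings'' yields final values that are nested across varying $i$, which is immediate once one checks that the rule is applied with the same ordering at every boundary point. For Part~2 one must inspect the inductive step of Theorem~\ref{thm:family-curve} to confirm both that the produced cylinders $A_s$ really form a chain (they are explicitly built inside each other) and that the ``fix more bits of $y$'' trick stays inside $\mathcal{A}$ and remains $O(\log n_P)$-strong. Once these points are settled the whole family of $2^{k_P-m_P-O(1)}$ witnesses falls out of a single run of the construction; no new combinatorial argument is required.
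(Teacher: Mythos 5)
Your plan founders on the very first claim, that ``every element of the set $A:=A^{m_P}$ obtained at the knee of the curve already has the required profile.'' In the construction of Theorem~\ref{stat-any-curve} the good set for parameter $i$ is refilled \emph{only when all of its elements have been deleted}; consequently the final version of $A^{m_P}$ is guaranteed to contain just \emph{one} element not covered by bad sets (the minimal surviving string), and nothing more. Indeed, among the bad sets are all singletons of complexity at most $k_P$ (the point $(k_P,0)$ of the curve), and these can delete the elements of $A^{m_P}$ one by one, killing all but a single element without ever triggering a refill. So ``fix any $y\in A$; because $y$ is covered by no bad set'' is unjustified, and a single run of the unmodified construction does not deliver $2^{k_P-m_P}$ witnesses. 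To guarantee that a constant fraction of $A^{m_P}$ survives you must change the refill rule (refill once some fixed fraction is deleted), which is exactly the extra argument used in the proof of Theorem~\ref{t3}; it is not a ``re-reading'' of the existing proof. The second load-bearing claim, the nesting of the final good sets ($A^{m_P}\subseteq A^{i}$ for $i<m_P$), is also false as stated: the final versions are snapshots of ``the first so-many undeleted strings'' taken at \emph{different} times, and deletions occurring between the two refill moments destroy containment (e.g.\ if the undeleted set is $\{1,\dots,6\}$ when $A^i$ is set to its first four elements and later shrinks to $\{3,5,6\}$, the first two undeleted strings $\{3,5\}$ are not contained in $\{1,2,3,4\}$). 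Without nesting, even a genuinely uncovered $y\in A^{m_P}$ other than the minimal one has no certified $(i*t_i)$-descriptions on the upper-left branch, so the inclusion $T\subseteq P_y$ is not established. Part~2 inherits both defects: in Theorem~\ref{thm:family-curve} the chain property holds only within one snapshot, the final $A_N$ is a singleton by design, and elements of an intermediate cylinder may well be covered by bad sets.

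The paper avoids all per-element analysis of the construction by a different mechanism: it builds a \emph{single} string $x$ whose profile is the set $\tilde P$ obtained from $P$ by cutting off the diagonal part (complexity about $m_P$, generalized length $n_P-(k_P-m_P)$), and then appends to $x$ an arbitrary string $y$ of length $k_P-m_P$ with $\KS(y\cnd x)\ge k_P-m_P-1$. Theorem~\ref{rem:add-noise} (conservation of the profile under adding conditionally random noise) shows that every such pair $(x,y)$ has profile $O(\K(P)+\log n_P)$-close to $P$, and counting gives at least $2^{k_P-m_P-1}$ admissible $y$'s; part~2 is the same with Theorem~\ref{stat-any-curve-1} supplying a normal $x$. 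If you want to salvage your route you would have to (i) modify the refill rule so that many elements survive, and (ii) supply, for each survivor separately, good descriptions along the whole boundary — at which point you are essentially reproving the noise-addition theorem in a harder way.
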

\begin{proof}
To prove the first statement of the theorem
consider the following auxiliary set $\tilde{P}$:
\begin{align*}
\tilde{P}
&=\{(i,j)\mid\ i\le m_P,\ (i,j+k_P-m_P)\in P\}\cup
\{(i,j)\mid i\ge m_P\}.
\end{align*}
Note that  the term $k_P - m_P$ is non-negative, since $k_P\ge m_P$.
Our first goal is to construct a string $x$ such that 
$P_x$ is close to $\tilde P$.

To this end we apply Theorem \ref{stat-any-curve} 
to $n=n_P + m_P - k_P$, $k= m_P$ and the 
numbers $t_0,t_1,\dots,t_k$ such that
the line $(0,t_0)$--$(1,t_1)$--$\dots$--$(k,t_k)$ is 
the boundary curve of $\tilde P$. By that theorem 
there is a string $y$ of length $n + O(\log n)$ and complexity $m_P + O(\log n)$ whose 
profile is $O(\K(P) + \log n)$-close to $\tilde{P}$.

Then we consider all strings $y$ of length $k_P - m_P$ such that 
$\K(y\cnd x)\ge k_P - m_P - 1$. By counting arguments there are at least 
$2^{k_P - m_P - 1}$ such strings. 
We claim that 
for each such $y$ the profile of the string $(x,y)$ is 
close to $P$.

Indeed, by Theorem~\ref{rem:add-noise}
the set $P_{(x,y)}$ can be obtained from the set $P_x$ by the following
transformation $\phi$:
$$
\phi(P_y)=
\{(i,j+k_P-m_P)\mid i\le \K(x),\ (i,j)\in P_x\}
\cup
\{(i,j)\mid i> \K(x),\ i+j\ge \K(x,y)\}.
$$
Note that if we replace here $\K(x)$ by $m_P$, $\K(x,y)$ by $k_P$, and
$P_x$ by $\tilde P$, we will obtain
exactly the original set
$P$.

It is easy to verify that the transformation
$P_y\mapsto \phi(P_y)$ is continuous: if sets
$P'$ and $P''$ are in an $\eps$-neighborhood of each other
then $\phi(P')$ and $\phi(P'')$ are also in an
$O(\eps)$-neighborhood of each other.
Besides, if, in the definition of $\phi$, we use some $\eps$-approximations
of $\K(x),\K(x,y)$ in place of $\K(x),\K(x,y)$,
then the resulting set is at most
$O(\eps)$-apart from the original set.
Since 
by construction
$\K(x)\approx m_P$ and $\K(x,y)\approx k_P$
(with accuracy $O(\K(P)+\log n_P)$), the sets $\phi(P_y)$ and
$P$ are at most  $O(\K(P)+\log n_P)$-apart.
Thus, by Theorem~\ref{rem:add-noise} 
the set $P_{(x,y)}$ is close to $P$.

The second statement is proved by a similar argument: we just use Theorem \ref{stat-any-curve-1} (about existence of normal strings with a given profile) instead of Theorem \ref{stat-any-curve}.
\end{proof}

Now we provide an upper bound for the number of strings whose profile is close to $P$.
In the proof we will use standard descriptions and Proposition \ref{prop:std-pos}.
By  Proposition \ref{prop:std-pos} for every standard description $A$ obtained from an enumeration
of strings of complexity at most $k$ the sum of complexity of $A$ and its log-cardinality is not greater than $k + c\log k$ for some constant $c$.
Fix such constant $c$ and let 
$$ 
m_P(\eps):= \min\{ t \mid  (t+\eps, k_P - t + c\log (k_P +2\eps) + \eps) \in P\}.
$$
The definition of $m_P(\eps)$ is similar to that of $m_P$.

Let $L(P, \eps)$ denote the set of all strings whose profile is $\eps$-close to $P$. We will assume that $\eps\le k_P$.
\begin{thm}
\label{uppest}
$\log \# L(P, \eps) \le k_P -m_P(\eps) + 2\eps + O(\log n_P)$.
\end{thm}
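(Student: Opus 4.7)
Plan. I would assign to each $x \in L(P,\eps)$ a small standard description (or a shifted sub-piece of one), and then sum sizes.

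First, since $(k_P,0) \in P$, the $\eps$-closeness of $P_x$ to $P$ yields a point of $P_x$ within $\eps$ of $(k_P,0)$, and Proposition~\ref{prop:description-shift} then gives $\K(x) \le k_P + 2\eps + O(\log n_P)$ for every $x \in L(P,\eps)$. Fix therefore the enumeration of all strings of complexity at most $m^* := k_P + 2\eps + O(\log n_P)$. Every such $x$ lies in a unique standard description $B_x$ of this enumeration, and Proposition~\ref{prop:std-pos} gives $\K(B_x) + \log \#B_x \le m^* + c\log m^*$.

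The main step is to show that for every $x \in L(P,\eps)$ there is a set $B'_x \ni x$ with $\K(B'_x) \ge m_P(\eps)$ (up to an $O(\log n_P)$ slack) and $\log\#B'_x \le k_P - m_P(\eps) + 2\eps + O(\log n_P)$. If $\K(B_x) \ge m_P(\eps)$, then by Proposition~\ref{prop:std-pos} $\log\#B_x \le m^* + c\log m^* - m_P(\eps) \le k_P - m_P(\eps) + 2\eps + O(\log n_P)$, and one sets $B'_x := B_x$. If instead $\K(B_x) < m_P(\eps)$, then the inclusion $(\K(B_x)+\eps, \log\#B_x+\eps) \in P$ (from $\eps$-closeness and the upward closure of $P$) together with the exclusion $(\K(B_x)+\eps,\ k_P - \K(B_x) + c\log(k_P+2\eps) + \eps) \notin P$ (from the definition of $m_P(\eps)$) forces $\log\#B_x > k_P - \K(B_x) + c\log(k_P+2\eps)$, so $B_x$ is a ``large'' standard description. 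Proposition~\ref{prop:description-shift} applied to $B_x$ with shift $s = m_P(\eps) - \K(B_x) + O(\log n_P)$ bits then produces a sub-piece $B'_x \ni x$ of complexity at most $m_P(\eps) + O(\log n_P)$ and log-size at most $k_P - m_P(\eps) + 2\eps + O(\log n_P)$.

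The final accounting sums $\#B'_x$ over the sets just constructed. In the ``small-$\K(B_x)$'' case the $B_x$'s have distinct log-sizes at most $k_P - m_P(\eps) + 2\eps + O(\log n_P)$, and their sizes sum to at most $2^{k_P - m_P(\eps) + 2\eps + O(\log n_P)+1}$ by the geometric-series argument for decreasing powers of~$2$ (cf.\ Proposition~\ref{prop:std-pos}). For the large-$B_x$ case, the sub-pieces $B'_x$ each have size $2^{k_P - m_P(\eps) + 2\eps + O(\log n_P)}$ and their parents are among only $O(\log n_P)$ ``large'' standard descriptions; the total is controlled by noting that the sub-pieces coming from a single parent $B_x$ partition a subset of $B_x$, and by matching parent sizes against the geometric series of standard-description cardinalities.

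The hardest part will be this final accounting for the large-$B_x$ branch: a naive count of ``$\le 2^{m_P(\eps) + O(\log n_P)}$ distinct sub-pieces of the given complexity'' multiplied by ``block size $2^{k_P - m_P(\eps) + 2\eps + O(\log n_P)}$'' loses the $2^{m_P(\eps)}$ savings and yields only $2^{k_P + 2\eps + O(\log n_P)}$; the correct bound requires exploiting both the partition structure supplied by Proposition~\ref{prop:description-shift} and the fact that only logarithmically many parent standard descriptions are ``large'' so that their aggregated contribution remains of order $2^{k_P - m_P(\eps) + 2\eps + O(\log n_P)}$.
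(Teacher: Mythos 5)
Your setup (bounding $\K(x)\le k_P+2\eps+O(\log n_P)$, and the geometric-series accounting for the standard descriptions of complexity at least $m_P(\eps)$) is fine, but the branch you yourself flag as "the hardest part" is where the argument genuinely breaks, and the fixes you gesture at do not repair it. If the standard description $B_x$ at level $m^*=k_P+2\eps+O(\log n_P)$ has complexity below $m_P(\eps)$, then $\#B_x$ is about $2^{m^*-\K(B_x)}$, and chopping it via Proposition~\ref{prop:description-shift} into blocks of size $2^{k_P-m_P(\eps)+2\eps+O(\log n_P)}$ produces roughly $2^{m_P(\eps)-\K(B_x)}$ blocks. Nothing in your construction restricts which of these blocks actually contain elements of $L(P,\eps)$: the covering you build is chosen \emph{per} $x$, so the union of the blocks used can a priori be the whole parent $B_x$, of size close to $2^{k_P+2\eps}$, which destroys the bound. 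The issue is per-parent multiplicity, not the number of large parents, so "only few parents are large" does not help; and merely producing, for each $x$, \emph{some} description with the right parameters is vacuous for counting (the singleton $\{x\}$ already has log-size $0$). To make a covering count work you would have to prove that all elements of $L(P,\eps)$ inside a large parent cluster into about $2^{O(\eps+\log n_P)}$ of its blocks --- but that clustering statement is essentially the theorem itself, and your proposal contains no mechanism for it.

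The paper closes exactly this gap by a different device: it fixes the single object $\Omega_m$ with $m=m_P(\eps)$ and shows (its Lemma~\ref{omp}) that \emph{every} $x\in L(P,\eps)$ satisfies $\K(\Omega_m\cnd x)=O(\log n_P)$. This is obtained by looking at the standard description $A$ of $x$ taken at level $\K(x)$ (not at the top level $m^*$): its parameters give a point of $P_x$, hence a nearby point of $P$, and the minimality in the definition of $m_P(\eps)$ forces $\K(A)\ge m$; since $A$ is simple given $x$ and is equivalent to $\Omega_{\K(A)}$ (Propositions~\ref{prop:std-pos}, \ref{prop:std-omega}, \ref{prop:omega-equivalence}), $\Omega_m$ is simple given $x$. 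Symmetry of information then yields $\K(x\cnd\Omega_m)\le k_P-m_P(\eps)+2\eps+O(\log n_P)$, and counting programs relative to the one fixed condition $\Omega_m$ gives the theorem. In other words, the key missing idea in your proposal is that the elements of $L(P,\eps)$ must all be determined by short programs from a \emph{common} object (equivalently, must all lie near the end of the bounded-complexity lists); without some form of that statement, summing sizes of individually chosen covering sets cannot give a bound better than $2^{k_P+2\eps}$ in the large-parent branch.
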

\begin{proof}
To simplify the notation let $m=m_P(\eps)$.
We will show that for every $x\in L(P,\eps)$ we have 
$\K(x\cnd \Omega_m)\le k_P-m+\eps$ (ignoring logarithmic terms). This bound obviously implies the theorem,
as there are few short programs and each program can map 
$\Omega_m$ only to one string.

First we show that $\K(\Omega_m\cnd x)$ is small for all $x\in L(P,\eps)$:
\begin{lemma}
\label{omp}
For every $x$ from $L(P, \eps)$ we have 
$\K(\Omega_m \cnd x) = O(\log n_P).$ 
\end{lemma}
\begin{proof}
Denote by $k$ be the complexity of $x$.  Since $x$ belongs to $L(P, \eps)$ 
the point $(k_P + \eps, \eps)$ belongs to $P_x$, hence $k=\K(x) \le k_P + 2\eps$.

Let $A$ be the standard description  obtained from the list of all strings of complexity at most $k$. By the choice of $c$ we have $\K(A) + \log 
\# A \le k + c \log k$. 
Hence the pair $(\K(A),k-\K(A)+c\log k )$ is in the profile of $x$.
This implies that the pair
$(\K(A)+\eps,k-\K(A)+c\log (k_P+2\eps)+\eps )$ is in $P$.

Recall that $m$ is defined as the minimal $\K(A)$ satisfying this property, thus we have  
$\K(A) \ge m$. 
By  Proposition \ref{prop:std-pos} we have $\K(\Omega_{\K(A)} \cnd A) = O(\log k)$ 
and hence $\K(\Omega_m \cnd A) = O(\log k)$. 
Also $\K(A \cnd x) = O(\log k)$, as $A$ is a sufficient statistic for $x$. So we have 
$$\K(\Omega_m \cnd x) \le \K(\Omega_m \cnd A) + \K(A \cnd x) + O(\log n_P) = O(\log n_P).$$
\end{proof}

It remains to upper bound  $\K (x \cnd \Omega_m)$ for every  $x\in L(P, \eps)$. 
We will do that by using the Symmetry of information:
 $$
 \K (x \cnd \Omega_m) = \K (x) + \K (\Omega_m \cnd x)-  \K (\Omega_m) + O(\log n_P).
 $$
Recall that $\K(\Omega_m) = m + O(\log m)$ and that $\K(x)\le k_P+2\eps$.
 So, by Lemma~\ref{omp} we get 
$$
\K(x \cnd \Omega_m) \le k_P +2\eps -m + O(\log n_P).
$$
\end{proof}

\subsection{Open questions}

1. Is it true that for every minimal strong sufficient statistic
$A$ and for every strong sufficient statistic $B$ for $x$
we have $\CT(B\cnd A)\approx 0$?
More specifically, is there a constant $c$ such that the following holds true:
Assume that $A,B$ are $\varepsilon$-strong,
$\varepsilon$-sufficient statistics for a string $x$ of length $n$. 
Assume further that there is no $(\eps+c\log n)$-strong $(\eps+c\log n)$-sufficient statistic $A'$ for $x$
with $\K(A')\le \K(A)-\delta$.
Is it true that $\KT(A\cnd B)=O(\eps+\delta+\log n)$ in this case?

2. The same question, but this
time we further assume that $B$ also satisfies the minimality
requirement:
there is no $(\eps+c\log n)$-strong $(\eps+c\log n)$-sufficient statistic $B'$ for $x$
with $\K(B')\le \K(B)-\delta$.

Note that if, in these two questions, we replace total conditional complexity with the plain conditional complexity then the answers are positive and moreover, we do not need to assume that $A,B$ are $\varepsilon$-strong (see 
Theorem~\ref{thm:step-wise}).

3. (Merging strong sufficient statistics.) Assume that $A,B$ are strong sufficient statistics for $x$ that have small intersection compared to the cardinality of at least one of them. Then it is natural to conjecture that there is a strong sufficient  statistic $D$ for $x$ of larger cardinality (=of smaller complexity) that is simple given both $A,B$. Formally, is it true (for some constant $c$) that if $A,B$ are  $\varepsilon$-strong $\varepsilon$-sufficient statistics for $x$, then there is a $c\varepsilon$-strong $c\varepsilon$-sufficient statistic $D$ for $x$ with $\log\#D\ge\log\# A+\log\# B-\log\#(A\cap B)-c(\varepsilon+\log n)$ and $\KT(D\cnd A),\KT(D\cnd B)$ at most $c(\varepsilon+\log n)$? (A motivating example: let $x$ be a random string of length $n$, let $A$ consist of all strings of length $n$ that have the same prefix of length $n/2$ as $x$, and let $B$ consist of all strings of length $n$ that have the same bits with numbers $n/4+1,\dots, 3n/4$ as $x$. In this case it is natural to let $D$ consist of all strings of length $n$ that have the same bits $n/4+1,\dots, n/2$ as $x$, so that $\log\#D=\log\# A+\log\# B-\log\#(A\cap B)$.)
}

\section{Acknowledgments}

We are grateful to several people who contributed and/or carefully read preliminary versions of this survey, in particular, to B. Bauwens, P. G\'acs, A. Milovanov, G. Novikov, A. Romashchenko, P. Vit\'anyi, and to all participants of Kolmogorov seminar in Moscow State University and ESCAPE group in LIRMM. We are also grateful to an anonymous referee for correcting several mistakes.

\end{document}